\newtheorem{theorem}{Theorem}
 \numberwithin{theorem}{chapter}
\newtheorem{axiom}{Axiom}
\newtheorem{assumption}{Assumption}
\newtheorem{claim}{Claim}
\newtheorem{corollary}{Corollary}
\numberwithin{corollary}{chapter}
\newtheorem{definition}{\normalfont\scshape Definition}
 \numberwithin{definition}{chapter}
\newtheorem{exercise}{\normalfont\scshape Exercise}
\newtheorem{hint}{\normalfont\scshape Hint}
 \numberwithin{exercise}{chapter}
\newtheorem{fact}{\normalfont\scshape Fact}
 \numberwithin{fact}{chapter}
\newtheorem{lemma}{Lemma}
\newtheorem{proposition}{Proposition}
\theoremstyle{remark}
\newtheorem{example}{\normalfont\scshape Example}
\numberwithin{example}{chapter}
\newtheorem{property}{Property}
\newtheorem{remark}{\normalfont\scshape Remark}
\numberwithin{remark}{chapter}
\renewenvironment{proof}[1][Proof]{\noindent\textbf{#1.} }{\ \rule{0.5em}{0.5em}}
\newlength\mylen
\renewcommand\cftpartpresnum{Part~}
\newtheorem*{theorem*}{Theorem}
\newtheorem*{axiom*}{Axiom}
\newtheorem*{case*}{Case}
\newtheorem*{claim*}{Claim}
\newtheorem*{conclusion*}{Conclusion}
\newtheorem*{condition*}{Condition}
\newtheorem*{conjecture*}{Conjecture}
\newtheorem*{corollary*}{Corollary}
\newtheorem*{criterion*}{Criterion}
\newtheorem*{definition*}{Definition}
\newtheorem*{example*}{Example}
\newtheorem*{exercise*}{Exercise}
\newtheorem*{lemma*}{Lemma}
\newtheorem*{notation*}{Notation}
\newtheorem*{problem*}{Problem}
\newtheorem*{proposition*}{Proposition}
\newtheorem*{remark*}{Remark}
\newtheorem*{summary*}{Summary}
\newtheorem*{hint*}{\normalfont\scshape Hint}
\renewcommand{\arraystretch}{1.5}
\pgfplotsset{compat=1.10}
\newenvironment{itemize*}%
{\begin{itemize}%
		\setlength{\topsep}{1pt}%
		\setlength{\partopsep}{1pt}%
		\setlength{\itemsep}{1pt}%
		\setlength{\parskip}{1pt}}%
	{\end{itemize}}
\newenvironment{enumerate*}%
{\begin{itemize}%
		\setlength{\topsep}{1pt}%
		\setlength{\partopsep}{1pt}%
		\setlength{\itemsep}{1pt}%
		\setlength{\parskip}{1pt}}%
	{\end{itemize}}
\providecommand{\customgenericname}{}
\newcommand{\newcustomtheorem}[2]{%
	\newenvironment{#1}[1]
	{%
		\renewcommand\customgenericname{#2}%
		\renewcommand\theinnercustomgeneric{##1}%
		\innercustomgeneric
	}
	{\endinnercustomgeneric}
}
\newenvironment{Tabular}{
	\setlength{\tabcolsep}{2pt} 
	\renewcommand{\arraystretch}{1.2} 
	\tabular%
}{\endtabular}
\newcommand{\tikznode}[2]{%
	\ifmmode%
	\tikz[remember picture,baseline=(#1.base),inner sep=0pt] \node (#1) {$#2$};%
	\else
	\tikz[remember picture,baseline=(#1.base),inner sep=0pt] \node (#1) {#2};%
	\fi}
\newcommand{\meet}{\wedge}
\providecommand{\eps}{\varepsilon}
\newcommand{\indep}{\perp \!\!\! \perp}
\DeclareMathOperator*{\argmax}{arg\,max}
\DeclareMathOperator*{\argmin}{arg\,min}
\DeclareMathOperator*{\Var}{\text{Var}}
\newcommand*{\titleGP}{\begingroup
\centering
\vspace*{\baselineskip}
\rule{\textwidth}{1.6pt}\vspace*{-\baselineskip}\vspace*{2pt}
\rule{\textwidth}{0.4pt}\\[\baselineskip]
{\LARGE INFORMATION AND LEARNING \\[0.2\baselineskip]
IN\\[0.6\baselineskip]
ECONOMIC THEORY}\\[0.2\baselineskip]
\rule{\textwidth}{0.4pt}\vspace*{-\baselineskip}\vspace{3.2pt}
\rule{\textwidth}{1.6pt}\\[\baselineskip]
\scshape
(IN PROGRESS AND INCOMPLETE) \\
\vspace*{2\baselineskip}
{\Large ANNIE LIANG \par}
{\itshape annie.liang@northwestern.edu\par}
\endgroup}
\begin{document}



\pagenumbering{gobble}

\titleGP

\clearpage

\noindent \Large{\textbf{Preface}}
\bigskip 
\normalsize

These lecture notes are targeted towards graduate students. Several chapters are also appropriate for an advanced undergraduate audience (e.g., Chapters 1-4 and Chapter 8). Exercises labeled U indicate an advanced undergraduate level, exercises labeled G indicate graduate level, and exercises labeled G$^*$ are either more involved or rely on background knowledge that is not covered in these notes.

Parts of these lecture notes are based on slides by Drew Fudenberg and Xiaosheng Mu. Hershdeep Chopra, Andrei Iakovlev, and Tiago Cardoso Botelho provided very helpful proofreading. I am also grateful to Kyohei Okumura, DJ Thornton, and Jingyi Qiu for comments and corrections.  All errors are mine alone. Please feel free to email  annie.liang@northwestern.edu with any edits and/or suggestions.

\clearpage

\normalsize

\pagenumbering{arabic}
\tableofcontents

\part{\sc{Foundations of Information}}
		
\chapter{Information Partitions and Knowledge}

Section \ref{sec:Partition} introduces the partitional model of information and three definitions of common knowledge.  Sections \ref{sec:AgreetoDisagree} presents \citet{Aumann}'s result that agents cannot agree to disagree. Section \ref{sec:emailgame} presents \citet{Rubinstein}'s email game. Section \ref{sec:pBelief} defines common $p$-belief.

We assume a finite state space in Sections \ref{sec:Partition}-\ref{sec:pBelief} to ease exposition, and discuss in Section \ref{sec:General} how these results extend more generally.

\section{Common Knowledge} \label{sec:Partition}

An unknown state $\omega$ takes values in the finite set $\Omega$. Agents $i\in \mathcal{I}$ share a \emph{common prior} that the state $\omega$ is distributed according  $P \in \Delta(\Omega)$.  Each agent $i$'s \emph{information partition}  $\Pi_i$ is a partition of $\Omega$, with the property that for any realization of the state $\omega$, agent $i$ is informed that the state belongs to $\Pi_i(\omega)$. 

\begin{assumption} Every partition element has strictly positive probability under the prior; that is, $P(\Pi_i(\omega))>0$ for every agent $i \in \mathcal{I}$ and state $\omega \in \Omega$. 
\end{assumption}

\begin{definition}[Knowledge] \label{def:Knowledge} The set of states at which agent $i$ \emph{knows} the event $A \subseteq \Omega$ to be true is 
\[K_i(A)=\{\omega: \Pi_i(\omega)\subseteq A \}.\]
\end{definition}

\noindent No agent can think that an event is true if it is not; that is, $K_i(A) \subseteq A$ for every agent $i$ and event $A$. 

\begin{definition}[Mutual Knowledge] The set of states at which the event $A \subseteq \Omega$ is \emph{mutual knowledge} is
\[K(A)=\bigcap_{i \in \mathcal{I}} \{\omega: \Pi_i(\omega)\subseteq A \}\]
i.e., all agents know $A$ to be true.
\end{definition}

\begin{example} \label{ex:Partition} Suppose the set of states is $\Omega =\{1,2,3,4,5,6\}$, and there are two agents with information partitions 
$
\Pi_1 =\{\{1,2,3\},\{4,5\},\{6\}\}$ and 
$\Pi_2 = \{\{1,2\},\{3,4\},\{5\},\{6\}\}$. Let $A = \{3,4,5,6\}$. Then, the set of states at which agent 1 knows $A$ to be true is $K_1(A) = \{4,5,6\}$, the set of states at which agent 2 knows $A$ to be true is $K_2(A) = \{3,4,5,6\}$, and the set of states at which both agents know $A$ to be true is $K(A)=\{4,5,6\}$.
\end{example}

The knowledge operators $K_i$ and $K$ can be applied to events that themselves represent knowledge or mutual knowledge of a state, thus building up higher-order knowledge (agent 1 knows that agent 2 knows that\dots). 
 
\begin{exercise}[U] Suppose there are two agents indexed to $i=1,2$.
\begin{itemize}
\item[(a)] Prove that  $K_1(K_2(A)), K_2(K_1(A)) \subseteq K(A)$ for every event $A\subseteq \Omega$.
\item[(b)] Provide an example in which $K(A) \nsubseteq K_1(K_2(A))$, demonstrating that even if both players  know an event to be true, either can fail to know that the other knows it. 
\end{itemize}
\end{exercise}

\begin{exercise}[U] Prove that $\neg K_i(\neg K_i(A)) = K_i(A)$ for every event $A\subseteq \Omega$ (where $\neg A$ denotes the complement of $A$.)
\end{exercise}

An implicit assumption is made that all agents know the state space $\Omega$ and the information partitions $(\Pi_i)_{i \in \mathcal{I}}$. This assumption is less strong than it might initially seem, since we can always redefine states and expand the state space to accommodate uncertainty about other players' partitions, as in the following example.

\begin{example} \label{ex:ExpandOmega} Let $\Omega = \{1,2\}$, $\mathcal{I}=\{1,2\}$, and  $\Pi_1 = \Pi_2 = \{ \{1\},\{2\}\}$. Suppose we want to model the situation where agent 1 has uncertainty over whether agent 2's information is the complete partition $\Pi_2$ or the trivial partition $\Pi_2' = \{\{1,2\}\}$. One way to do this is to expand the state space: Define $\widetilde{\Omega}=\Omega \times \{c,t\} = \{\{1,c\},\{1,t\},\{2,c\},\{2,t\}\}$ and revise the agents' information partitions to be 
\begin{align*}
\widetilde{\Pi}_1 & = \{\{(1,c),(1,t)\},\{(2,c),(2,t)\}\} \\
\widetilde{\Pi}_2 & =\{\{(1,c)\},\{(2,c)\},\{(1,t),(2,t)\}\}
\end{align*}
Then, for example, at state $(1,c)$ both agents know $\omega=1$ to be true, but agent 1 does not know whether agent 2 knows it. 
\end{example}

The event $A$ is \emph{common knowledge} at state $\omega$ if both agents know it to be true, know the other to know it to be true, ad infinitum. There are at least three equivalent ways to define this.

\paragraph{The First Definition.} The most direct approach is to recursively define higher-order levels of knowledge. 

\begin{definition}[Common Knowledge, Definition 1] \label{def:CK1} For any event $A \subseteq \Omega$, define $
		\mathscr{A}^1:=\bigcap_{i\in \mathcal{I}}K_i(A)$ to be the set of states at which every agent knows $A$, and recursively define $$\mathscr{A}^k:= \bigcap_{i\in \mathcal{I}}K_i(\mathscr{A}^{k-1})$$ for each $k \geq 2$. (For example, $\mathscr{A}^2$ is the set of states at which every agent knows that every agent knows $A$.) The set of states at which $A$ is \emph{common knowledge} is $ \mathscr{A}^\infty := \bigcap_{n\geq1} \mathscr{A}^n $.
\end{definition}

\begin{exercise}[G] Consider the informational environment of Example \ref{ex:Partition}. Find the smallest value of $k$ with the property that $\mathscr{A}^{k'}=\mathscr{A}^{k'+1}$ for all $k' \geq k$. 
\end{exercise}
		
\paragraph{The Second Definition.} Alternatively, we can define common knowledge using the meet of the players' information partitions. If two partitions $\Pi$ and $\Pi'$ satisfy
\[\Pi'(\omega) \subseteq \Pi(\omega) \quad \forall \omega \in \Omega\]
then we say that $\Pi$ is a \emph{coarsening} of $\Pi'$ (corresponding to weakly less information at every state), and $\Pi'$ is a \emph{refinement} of $\Pi$ (corresponding to weakly more information at every state). If $\Pi'$  a coarsening of both partitions $\Pi_1$ and $\Pi_2$, then it is a \emph{common coarsening} of $\Pi_1, \Pi_2$.

\begin{definition} Let $ \Pi_1 \meet \Pi_2 $ denote the finest common coarsening of $ \Pi_1,\Pi_2 $, i.e., the common coarsening of these partitions that is moreover a refinement of every other common coarsening of $\Pi_1$ and  $\Pi_2$. 	
\end{definition}

\begin{definition} For any sequence of information partitions $(\Pi_1, \dots, \Pi_{\vert \mathcal{I} \vert})$, let $\mathscr{P}_2 =  \Pi_1 \meet \Pi_2 $, and for each $k>2$, recursively define $\mathscr{P}_k = \mathscr{P}_{k-1} \meet \Pi_k$. The  \emph{meet} of $(\Pi_1, \dots, \Pi_{\vert \mathcal{I} \vert})$ is $ \bigwedge_{i \in \mathcal{I}}  \Pi_i \equiv \mathscr{P}_{\vert \mathcal{I} \vert}$.
\end{definition}

\begin{exercise}[G] Prove that for any sequence of information partitions $(\Pi_1, \dots, \Pi_{\vert \mathcal{I} \vert})$, the meet $\mathscr{P}^n$ is invariant to permutations of players indices.
\end{exercise}

\begin{example} \label{example:Ant} Consider Example \ref{ex:Partition}. Stack the two information partitions on top of one another, and suppose an ant is placed on one of the states in an agent's partition (see Figure \ref{fig:Ant}).

\begin{figure}[H]
\begin{center}
\includegraphics[scale=0.85]{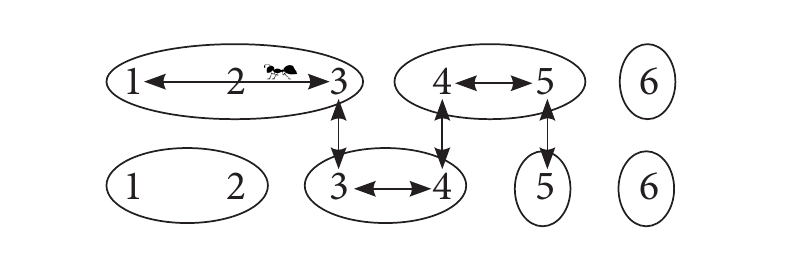}
\caption{$\Pi_1 \meet \Pi_2(\omega)$ includes all states that an ant seeded at $\omega$ can reach.}\label{fig:Ant}
\end{center} 
\end{figure}
\vspace{-4mm}
\noindent  The ant's movements obey two laws: The ant can move from side to side within an information partition element, and it can jump across the players' information partitions along the same state. The ant's full range of motion when seeded at any state $\omega$ then recovers the member of the meet that includes that state. So in this example, we have $ \Pi_1 \meet \Pi_2 = \{\{1,2,3,4,5\},\{6\}\}.$
\end{example}	

\begin{exercise}[G] Formalize the statements in the example above by proving that two points $x'$ and $x''$ belong to the same element of $\bigwedge_{i \in \mathcal{I}}  \Pi_i$ if and only if there is a sequence
$(x_0,x_1,x_2,…,x_n,x_{n+1})$, with $x_0=x'$ and $x_{n+1}=x''$, such that for every $0 \leq m \leq n$,  $x_m$ and $x_{m+1}$ belongs to the same element of $\Pi_i$ for some $i\in \mathcal{I}$.
\end{exercise}

\begin{definition}[Common Knowledge, Definition 2] \label{def:CK2}
An event $ A \subseteq \Omega $ is common knowledge at state $ \omega \in \Omega$ if $\bigwedge_{i \in \mathcal{I}}  \Pi_i(\omega) \subseteq A$.
\end{definition}   

\begin{remark} It is immediate that the set $\Omega$ is common knowledge at every $\omega \in \Omega$.
\end{remark}

\paragraph{The Third Definition.} Our final definition of common knowledge starts from the definition of an \emph{evident} event which, upon its occurrence, is known to all agents.

\begin{definition}[Evident Events] \label{def:Evident} The event $A \subseteq \Omega$ is \emph{evident} (or  \emph{public}) if $A \subseteq K(A)$.
\end{definition}


\begin{definition}[Common Knowledge, Definition 3] \label{def:CK3} The event $A \subseteq \Omega$ is common knowledge at $\omega$ if and only if there is an evident event $E$ such that $\omega \in E$ and $E \subseteq K(A)$.
\end{definition}

\begin{exercise}[G]
Let $\mathcal{I}=\{1,2\}$. Prove that an event $E\subseteq \Omega$ is evident if and only if it is a union of elements of the meet $\Pi_1 \wedge \Pi_2$. 
\end{exercise}

These three definitions of common knowledge are equivalent (see for example \citet{MondererSamet}).

\section{Agreeing to Disagree} \label{sec:AgreetoDisagree}

Often we are interested not only in agents' knowledge (which depends only on the agents' information partitions) but also in agents' posterior beliefs (which depend additionally on the prior $P$). At any state $\omega$ and for any event $A \subseteq \Omega$, agent $i$'s posterior probability of event $A$ is pinned down by Bayes' rule (see Section \ref{sec:Bayes}):
\[P(A \mid \Pi_i(\omega)) = \frac{P(A\cap \Pi_i(\omega))}{P(\Pi_i(\omega))}\]
Our assumption that every partition element has strictly positive prior probability ensures that this expression is well-defined.

One event of interest is the one in which a player's posterior belief takes on a particular value. Fixing an event $A$ and a number $p\in [0,1]$, define $A_p= \{ \omega : P(A \mid \Pi_i(\omega)) = p\}$ to be the set of states at which player $i$ assigns posterior probability $p$ to the event $A$ being true. If player 1 announces that he assigns probability $p$ to $A$, then all other agents know that the state must belong to $A_p$. 

\begin{example} Consider the informational environment of Example \ref{ex:Partition} with a uniform prior on $\Omega$, and define the event $A = \{2,3\}$. Agent 2 has four partition elements, $\{1,2\}$, $\{3,4\}$, $\{5\}$, $\{6\}$, and assigns to $A$ a posterior probability of $1/2$, $1/2$, $0$, and $0$ (respectively) on these partition elements. So the set of states $A_{1/2}$ at which agent 2 assigns probability $1/2$ to event $A$ being true, is $A_{1/2} = \{1,2,3,4\}$.
\end{example}

The following theorem shows that whenever players' posterior beliefs about an event are common knowledge (e.g., because players have publicly announced these beliefs), then these posterior beliefs must be identical. So disagreement cannot be sustained whenever players' beliefs are commonly known. 

\begin{theorem}[\citet{Aumann}] \label{thm:Aumann} Suppose $\mathcal{I}=\{1,2\}$. Fix any state $ \omega\in\Omega $ and event $A \subseteq \Omega$. If it is common knowledge at $ \omega $ that agent 1 assigns (posterior) probability $q_1$ to event $A$, while agent 2 assigns (posterior) probability $q_2$ to the same event, then $q_1=q_2$.
\end{theorem}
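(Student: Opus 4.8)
The plan is to use the characterization of common knowledge via the meet (Definition~\ref{def:CK2}) together with a ``summing over partition cells'' argument. Let $E$ denote the element of $\Pi_1 \meet \Pi_2$ containing $\omega$. By Definition~\ref{def:CK2}, the hypothesis that agent $1$'s posterior on $A$ equals $q_1$ is common knowledge at $\omega$ means $E \subseteq \{\omega' : P(A \mid \Pi_1(\omega')) = q_1\}$, and similarly $E \subseteq \{\omega' : P(A \mid \Pi_2(\omega')) = q_2\}$.

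The key structural fact I would establish first is that $E$ is a disjoint union of elements of $\Pi_1$, say $E = \bigsqcup_{j} \Pi_1^{(j)}$ (this follows because $\Pi_1 \meet \Pi_2$ is a coarsening of $\Pi_1$, so every $\Pi_1$-cell meeting $E$ lies inside $E$); likewise $E$ is a disjoint union of elements of $\Pi_2$. On each cell $\Pi_1^{(j)}$ the common-knowledge hypothesis forces $P(A \mid \Pi_1^{(j)}) = q_1$, i.e.\ $P(A \cap \Pi_1^{(j)}) = q_1 \, P(\Pi_1^{(j)})$. Summing over $j$ and using Assumption~1 (so all probabilities are well-defined and $P(E)>0$) gives
\[
P(A \cap E) \;=\; \sum_j P(A \cap \Pi_1^{(j)}) \;=\; q_1 \sum_j P(\Pi_1^{(j)}) \;=\; q_1 \, P(E).
\]
Running the identical argument with agent $2$'s partition yields $P(A \cap E) = q_2 \, P(E)$. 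Since $P(E) > 0$, we conclude $q_1 = q_2$.

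I expect the main (though modest) obstacle to be the first step: carefully justifying that $E$ decomposes as a disjoint union of $\Pi_i$-cells and that the common-knowledge premise really does pin the conditional probability to $q_i$ on \emph{each} such cell rather than merely on average. This is exactly where Definition~\ref{def:CK2} is used — membership $\omega'\in E$ for every $\omega'$ in a given $\Pi_1$-cell inside $E$, combined with $E \subseteq A_{q_1}$, gives $P(A\mid \Pi_1(\omega'))=q_1$ cell-by-cell. Once that is in hand, the remainder is the routine summation above. One should also note explicitly that the argument never uses $|\mathcal{I}|=2$ in an essential way beyond notation; the two-agent restriction in the statement is just for concreteness.
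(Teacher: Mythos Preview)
Your proposal is correct and follows essentially the same argument as the paper: take the element $E$ of the meet $\Pi_1\meet\Pi_2$ containing $\omega$, decompose it into $\Pi_i$-cells on each of which the posterior equals $q_i$, sum to get $P(A\cap E)=q_i\,P(E)$, and conclude $q_1=q_2$ since $P(E)>0$. The paper's proof is slightly terser about the decomposition step, but the logic and structure are identical, including the remark that the two-agent restriction is inessential.
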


\noindent The result is stated  for two agents, but the proof below directly extends for an arbitrary finite number of players. 
\bigskip

\begin{proof}
	Let $\bf{P}$ be the element of $\Pi_1\meet\Pi_2 $ that contains $ \omega $. Then we can write $ \textbf P = \cup_k \mathcal{P}^k $ where $\mathcal{P}^k $ are elements of $ \Pi_1 $. Since the event  $\{$agent 1's posterior belief is $  q_1 \}$ is common knowledge at $ \omega $, agent 1 must assign probability $  q_1$ to event $A$ at every partition element $ \mathcal{P}^k $. So $ q_1=P(A\cap \mathcal{P}^k)/P(\mathcal{P}^k) $ for each $ k $. This implies $q_1 \cdot P(\mathcal{P}^k) = P(A\cap \mathcal{P}^k) $. Summing over each of player 1's partition elements, we have $ q_1 \sum_k P(\mathcal{P}^k)=\sum_k P(A\cap \mathcal{P}^k) $. Thus  $ q_1\cdot P(\textbf P) = P(A \cap \textbf P)$. But repeating  the same line of logic for player 2, we obtain $ q_2\cdot P(\textbf P) = P(A\cap \textbf P).$ So it must be that $q_1=q_2$.
\end{proof}

\medskip

The following example explains why it is important that players' posterior beliefs are common knowledge and not simply mutual knowledge.

\begin{example}
Let $ \Omega = \{1,2,3,4\} $ with a uniform prior, and define $ \Pi_1 =\{\{1,2\},\{3,4\}\} $ and $ \Pi_2  =\{\{1,2,3\},\{4\}\} $. Choose $A=\{1,4\} $ and $ \omega=2.$ Then agent 1 assigns posterior probability $1/2$ to $A$ while agent 2 assigns  posterior probability $1/3$. Moreover, each agent knows one another's posterior probability. But Theorem \ref{thm:Aumann} is not violated: agent $ 2 $ does not know that agent $ 1 $ knows his posterior probability to be $\frac13 $, so posterior beliefs are mutual knowledge but not common knowledge. \end{example}

The starting hypothesis of Theorem \ref{thm:Aumann}---that individuals have common knowledge of one anothers' beliefs---is strong. \citet{GeanakoplosPolemarchakis} show that the same result obtains under a more realistic process: Communication of posterior beliefs converges to common knowledge of identical posterior beliefs, where this convergence occurs in fewer than $ n_1+n_2 $ steps with $n_i$ the size of agent $i$'s partition. 

\begin{example} Let $\Omega =\{1,2,3,4,5,6,7,8,9\} $ with all states equally likely. There are two agents, Bob and Carly, with information partitions 
\[\Pi_B =\{\{1,2,3\},\{4,5,6\},\{7,8,9\}\}\]
and 
\[\Pi_C =\{\{1,2,3,4\},\{5,6,7,8\},\{9\}\}.\] Suppose the true state is $\omega=1$, and the agents repeatedly communicate their beliefs about the event $A=\{3,4\}$.

 \textbf{Round 1:} Bob's information partition reveals to him that the state belongs to $ \{1,2,3\} $, so he assigns posterior probability $1/3$ to the event $A$. Carly's information partition reveals to her that the state belongs to $ \{1,2,3,4\} $, so she assigns posterior probability $1/2$ to the event $A$. The two agents announce these posterior beliefs. 

\textbf{Round 2:} That Bob assigns probability $1/3$ to $A$ reveals to Carly that Bob was either informed that the state belongs to $\{1,2,3\}$ or informed that the state belongs to $\{4,5,6\}$. But Carly already knew in round 1 that these were the two partition elements that Bob might have been informed of (since she knew the state to be either 1, 2, 3, or 4), and so there is no information for her in this announcement. That Carly assigns probability $1/2$ to $A$ reveals to Bob that agent 2 knows $\{1,2,3,4\}$, but again Bob knew this in round 1. So both agents' posterior beliefs are unchanged. They again announce $1/3$ and $1/2$.
		
And now something interesting happens. That Bob sticks to his original belief of $1/3$ tells Carly that Bob must have observed $\{1,2,3\}$. If  instead Bob observed $ \{4,5,6\} $, then upon hearing that Carly's belief was $1/2$ (and thus learning that Carly observed $ \{1,2,3,4\}$), Bob would have deduced that the state was 4 with certainty, and hence revised his posterior belief of $A$ to 1. So Carly now knows that the state is in $ \{1,2,3\}$ and shares Bob's posterior belief, $ \frac 13 $. The two agents' beliefs have converged, and it is straightforward to show that these beliefs will not move after subsequent communication.
\end{example}

Although agents' beliefs must converge, the  belief that they converge to need not be the belief that agents would have held had they pooled their information:
		
\begin{example} Let $ \Omega = \{1,2,3,4\} $ with each state equally likely. Agents' partitions are given by 
			$ \Pi_1 =\{\{1,2\},\{3,4\}\} $ and 
			$ \Pi_2 =\{\{1,3\},\{2,4\}\} $. 
			Let $ \omega = 1 $ and $ A=\{1,4\} $. Both posteriors are $ 1/2 $ and the process of belief revision converges in one step. But had agents shared their information, they would have learned that $\omega \in \{1,2\}$ and also $\omega \in \{1,3\}$, leading to a (common) posterior belief that $A$ is true with probability 1.
\end{example}

\section{The Email Game} \label{sec:emailgame}

Common knowledge assumptions appear frequently in analyses of strategic environments; for example,  payoffs are assumed to be common knowledge in any complete-information game. Do strategic predictions made under an assumption of common knowledge approximately hold when we relax the assumption of common knowledge? \citet{Rubinstein}'s email game shows that for one formalism of what ``almost common knowledge" means, the answer is no: Strategic predictions can change discontinuously when we move from common knowledge to almost common knowledge. 

In this game, two agents each choose an action from $ \{A,B\}$. There are two possible payoff matrices indexed to $ \{a,b\} $ (depicted below with $a$ on the left and $b$ on the right).  The agents share a common prior that assigns probability  $ 1-p>\frac 12$ to the matrix indexed to $a$.  \\[-12mm]
\begin{center}
		\[\begin{array}{ccc} 
			 & A & B\\
			A& M,M & 0,-L\\
			B&-L,0 & 0,0
		\end{array}
		\hspace{5em}
		\begin{array}{ccc}
			& A & B\\
			A& 0,0 & 0,-L\\
			B&-L,0 & M,M
		\end{array}\]
\end{center}

We assume throughout that $ L>M>0 $. Thus $ (A,A) $ yields higher payoffs for both agents when the payoff parameter is $ a $ while $ (B,B) $ yields higher payoffs when the payoff parameter is $ b $. The action $ A $ is ``safe," in that it never yields a negative payoff.

\medskip

\textbf{Communication Protocol.} Both players have an automated email server, which is the only means by which the players can communicate. Agent $ 1 $ is informed of the payoff parameter. If (and only if) the parameter is $ b $, agent $ 1 $'s email server automatically sends an email to agent $ 2 $ announcing that the parameter is $b$. All emails are independently lost with probability $ \eps>0 $, so the agents' email servers are set up to automatically send back confirmations that emails have been received, and confirmations of confirmations, etc. Each agent $i$'s type is the number of emails that agent $i$'s computer sends, which is privately known to agent $i$.

In the special case $ T_1=T_2=\infty$, there is common knowledge that the parameter is $b$. But if for example $ T_1=2 $, then agent $ 1 $ knows the parameter is $ b $, and agent 1 knows that agent $ 2 $ knows that the parameter is $ b $, but agent $ 1 $ does not know that agent $ 2 $ knows that agent $ 1 $ knows that agent 2 knows that the parameter is $ b $. In general, so long as $T_1$ and $T_2$ are finite, then higher-order knowledge of parameter $b$ must break down at some stage.

\begin{remark} \label{remark:Partition} In the partitional framework of Section \ref{sec:Partition}, we would model this information environment as follows: The state space is
\[ \Omega = \left\{(a,0,0),(b,1,0),(b,1,1),(b,2,1),(b,2,2),\dots \right\}\]
and agents' information partitions are given by
\begin{align*}
\Pi_1 & =\left\{\{(a,0,0)\},\{(b,1,0),(b,1,1)\},\{(b,2,1),(b,2,2)\},\dots\right\} \\
\Pi_2 & =\left\{\{(a,0,0),(b,1,0)\},\{(b,1,1),(b,2,1)\},\dots\right\} 
\end{align*}
where, for example, $T_1=0$ reveals to player 1 the partition element $\{(a,0,0)\}$, while $T_2=0$ reveals to player 2 the partition element $\{(a,0,0),(b,1,0)\}$.
\end{remark}

		 \begin{proposition}
			There is a unique Bayesian Nash equilibrium in which agent $ 1 $ plays $ A $ when the payoff parameter is $ a $. In this equilibrium, both agents play $ A $ independently of the number of messages sent.
		\end{proposition}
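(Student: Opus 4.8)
The plan is to work in the partitional model of Remark~\ref{remark:Partition}, writing $T_1,T_2$ for the two agents' types, and to show by induction on the number of messages that in \emph{any} Bayesian Nash equilibrium in which agent~$1$ plays $A$ at the parameter $a$ (equivalently, at $T_1=0$), \emph{every} type of \emph{every} agent plays $A$. This yields uniqueness; a one-line check that ``always play $A$'' is itself an equilibrium then yields existence. Write $S(k)$ for ``agent~$1$ plays $A$ when $T_1=k$'' and $S'(k)$ for ``agent~$2$ plays $A$ when $T_2=k$''. The hypothesis is $S(0)$, and the goal is the chain
\[
S(0)\ \Longrightarrow\ S'(0)\ \Longrightarrow\ S(1)\ \Longrightarrow\ S'(1)\ \Longrightarrow\ S(2)\ \Longrightarrow\ \cdots .
\]

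First I would isolate the only probabilistic input. Reading off $\Omega$, the information set at $T_1=t\ge 1$ is $\{(b,t,t-1),(b,t,t)\}$, the information set at $T_2=t\ge 1$ is $\{(b,t,t),(b,t+1,t)\}$, and the information set at $T_2=0$ is $\{(a,0,0),(b,1,0)\}$; in each case it has exactly two elements. Since each email is lost independently with probability $\eps$, counting the successful transmissions needed to reach each state shows that an agent at type $t\ge 1$ assigns conditional probability $\tfrac{1}{2-\eps}$ to the element in which the opponent has sent the \emph{smaller} of the two possible numbers of messages, while agent~$2$ at $T_2=0$ assigns probability $\tfrac{1-p}{1-p+p\eps}$ to $(a,0,0)$ (again the ``fewer opponent messages'' element), using $1-p>\tfrac12>p\ge p\eps$. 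The single fact that will be used is that this probability always exceeds $\tfrac12$, and hence exceeds $\tfrac{M}{M+L}$ because $M<L$.

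The inductive step is a two-by-two payoff comparison. In the chain above, the ``fewer opponent messages'' element of the information set of the agent currently being treated always corresponds to an opponent type settled at an earlier link (agent~$1$'s type $0$ when deriving $S'(0)$; agent~$2$'s type $t-1$ when deriving $S(t)$; agent~$1$'s type $t$ when deriving $S'(t)$), so by the inductive hypothesis the opponent plays $A$ there. At that element, playing $A$ pays at least $0$ (it is $M$ in matrix $a$, $0$ in matrix $b$) whereas playing $B$ pays $-L$; at the other element, playing $A$ pays $0$ no matter what the opponent does, while playing $B$ pays at most $M$. If $q>\tfrac{M}{M+L}$ denotes the probability of the ``fewer opponent messages'' element, the expected payoff from $B$ is at most $q(-L)+(1-q)M<0$, strictly below the nonnegative expected payoff from $A$; note this conclusion does not refer to the opponent's action at the other element, which the induction has not yet determined. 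Hence the agent strictly prefers $A$, which advances the chain by one link, and by induction every type plays $A$.

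The principal difficulty I anticipate is organizational rather than conceptual: correctly reading the two conditional distributions off the message-and-confirmation protocol, so that the powers of $1-\eps$ in numerator and denominator match and the ratio collapses to $\tfrac{1}{2-\eps}$, and ordering the alternating induction so that the opponent type invoked at each step has genuinely been handled earlier. Once this is in place, existence is immediate: if the opponent always plays $A$, the same inequalities show every type of each agent strictly prefers $A$, and at $T_1=0$ agent~$1$ compares $M$ (from $A$ in matrix $a$) against $-L$ (from $B$) and again strictly prefers $A$. So ``always play $A$'' is the unique equilibrium in which agent~$1$ plays $A$ at parameter $a$.
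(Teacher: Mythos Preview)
Your proposal is correct and follows essentially the same approach as the paper: an induction on message counts showing that, given the hypothesis $s_1(0)=A$, each successive type strictly prefers $A$ because the posterior weight on the opponent type already pinned down by the induction exceeds $\tfrac12>\tfrac{M}{M+L}$. Your alternating chain $S(0)\Rightarrow S'(0)\Rightarrow S(1)\Rightarrow\cdots$ is a slightly cleaner organization of the same induction the paper runs (the paper groups it as ``all types $<t$ play $A$ implies types $t$ play $A$''), and you add the explicit existence verification that the paper leaves implicit.
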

		 
\begin{proof} Let $s_i : T_i \rightarrow \Delta(\{A,B\})$ denote player $i$'s equilibrium strategy. By assumption, $ s_1(0)=A $. We will show that also $s_2(0)=A$.  Agent 2 of type $T_2=0$ knows that either 
agent $ 1 $'s first message was never sent (the state is $(a,0,0)$), or agent $ 1 $'s first message was sent but lost (the state is $(b,1,0)$). Unconditionally, the probabilities of these states are  $(1-p)$ and $p\eps$. Conditional on $T_2=0$, agent $ 2 $ assigns a posterior probability of $\frac{1-p}{1-p+p\eps}$ to $(a,0,0)$, a posterior probability of $\frac{p\eps}{1-p+p\eps}$ to $(b,1,0)$ and zero probability to all other states.

So agent 2's expected payoff from playing $ A $ is at least \begin{equation} \label{eq:payoffA}
M \cdot \left(\frac{1-p}{1-p+p\eps}\right) + 0\cdot \left(\frac{p\eps}{1-p+p\eps}\right)
\end{equation}
 while agent 2's expected payoff from playing $ B $ is no more than 
 \begin{equation} \label{eq:payoffB}
 (-L) \cdot \left(\frac{1-p}{1-p+p\eps}\right) + M \cdot \left(\frac{p\eps}{1-p+p\eps}\right).
 \end{equation} Since $ 1-p>\frac 12 $ and $ L>M$ by assumption,  (\ref{eq:payoffA}) strictly exceeds (\ref{eq:payoffB}), and so agent 2's strategy must satisfy $s_2(0)=A $.

Now suppose $ s_i(T_i)=A $ for $ i=1,2 $ and all $ T_i<t $. We'll argue that $s_1(t)=s_2(t)=A$. Suppose first that agent 1's computer sends $t$ emails exactly, i.e., $T_1=t$. Since agent 1's computer did not send a $(t+1)$-th email, it must either be that agent $ 1 $'s $ t $-th message was lost  (the state is $(b,t,t-1)$), or that agent $ 1 $'s $ t $-th message was received, but its  confirmation was lost  (the state is $(b,t,t)$). Agent 2's posterior belief conditional on $T_1=t$ then assigns  probability $ z := \frac{\eps}{\eps+(1-\eps)\eps} >\frac 12 $ to $(b,t,t-1)$ and probability $1-z$ to $(b,t,t)$. So the expected payoff to playing $B$ is $z(-L)+(1-z)(M)<0$, while the payoff to playing $A$ is zero. We conclude that agent 1's strategy must satisfy $ s_1(t)=A $, with nearly identical reasoning yielding $s_2(t) = A$. 
\end{proof}

\medskip

This result shows a sharp discontinuity in strategic predictions at common knowledge. That is, $(B,B)$ is an equilibrium when agents have common knowledge of the payoff parameter $b$, but fails to be an equilibrium when players have knowledge of $b$ to arbitrarily high (finite) orders.

Whether this result is surprising depends on how natural we consider the relaxation of common knowledge to be. Rubinstein (1989) argues that ``high $ T_i $" is intuitively like common knowledge. Another view is that these are substantially different, since for arbitrarily small but strictly positive $\eps$ the informational model is the one described in Remark \ref{remark:Partition}, but for $\eps=0$ (corresponding to common knowledge of the state) the set of states with positive ex-ante probability is $\Omega = \{(a,0,0), (b,\infty,\infty)\}$ and the agents' information partitions are complete. So there is a discontinuity in the informational environments as $\eps \rightarrow 0$, and in this sense small $\eps$ may be quite unlike $\eps=0$. 

\section{(Common) $p$-Belief} \label{sec:pBelief}

We now consider an alternative approach to formalizing almost common knowledge, which defines common ``almost-knowledge" in contrast to the above ``almost-common" knowledge.

		\begin{definition} \label{def:pBelief} For any $p \in [0,1]$, say that
		agent $i$ \emph{$ p$-believes} $ A $ at $ \omega $ if $ P(A\mid \Pi_i(\omega))\geq p $. The set of states at which agent $ i $ $ p $-believes $ A $ is $$ \mathcal{B}_i^p(A) = \{\omega: P(A\mid \Pi_i(\omega))\geq p\}.$$
	\end{definition}

\begin{remark} Is the case $ p=1 $ equivalent to knowledge? Suppose $\Omega = \{1,2,3\}$ and the prior is $P = (0,1/2,1/2)$.  Agent 1's partition is $\{\{1,2\},\{3\}\}$ while agent 2's partition is $\{\{1\},\{2\},\{3\}\}$. The state is $\omega=2$. Then according to Definition \ref{def:Knowledge}, agent 2 knows $\{2\}$ but agent 1 does not, while according to Definition \ref{def:pBelief}, both agents have $1$-belief of $\{2\}$. Whether knowledge and 1-belief represent distinct modes of understanding is an interesting philosophical question, but we will not have more to say on it here.
\end{remark}
	
The following construction of \emph{common $p$-belief}, due to \citet{MondererSamet}, is parallel to Definition \ref{def:CK1} for common knowledge.

\begin{definition}[Common $p$-Belief] \label{def:CommonpBelief1}
For any $p\in [0,1]$ and  event $ A \subseteq \Omega$, define $
		\mathscr{A}^1 = \bigcap_{i\in \mathcal{I}} \mathcal{B}_i^p(A)$ to be the set of states at which every agent $p$-believes $A$ to be true, and recursively define
		$\mathscr{A}^k = \bigcap_{i\in \mathcal{I}} \mathcal{B}^p_i(\mathscr{A}^{k-1})$ for every $k\geq 2$. 
		Then $A$ is common $p$-belief at the set of states $\mathscr{A}^\infty = \cap_{n\geq 1} \mathscr{A}^n$. 
\end{definition}

We can also define common $p$-belief by generalizing the definition of an evident event (Definition \ref{def:Evident}) to events that are evident $p$-belief.

		\begin{definition} 
	For any $p\in [0,1]$, the event $A \subseteq \Omega$ is \emph{evident $p$-belief} if $A \subseteq \bigcap_{i \in \mathcal{I}} \mathcal{B}_i^p(A).$
	\end{definition}

\begin{definition} \label{def:CommonpBelief2} For any $p\in [0,1]$, the event $A \subseteq \Omega$ is common $p$-belief at $\omega$ if there exists an evident $p$-belief event $E$ such that
\[\omega \in E \subseteq \bigcap_{i \in \mathcal{I}} \mathcal{B}_i^p(A).\]
\end{definition}

Definitions \ref{def:CommonpBelief1} and \ref{def:CommonpBelief2} are introduced in \citet{MondererSamet} and shown to be equivalent.

\begin{exercise}[G$^*$]
Consider the email game of Rubinstein (1989). Let $P$ denote the common prior on $\Omega$ (as defined in Remark \ref{remark:Partition}), and define $\mathscr{C}^p$ to be the event that agents have common $p$-belief in parameter $b$. For each $\varepsilon \geq 0$, let
\[ \overline{p}(\varepsilon) = \sup_{p \in [0,1]} \{p: P(\mathscr{C}^p) > 0\}\]
be the supremum of the set of values of $p$ such that $\mathscr{C}^p$ has positive ex-ante probability. Is $\overline{p}(0)$  equal to the limit of $\overline{p}(\varepsilon)$ as $\varepsilon \rightarrow 0$? Discuss your answer.  
\end{exercise}

\section{General State Spaces} \label{sec:General}

To show that the preceding insights do not require assumption of a finite state space, we now briefly discuss two generalizations of these ideas. In each case, we begin with a probability space $(\Omega, \Sigma, P)$ where $\Omega$ is a set of states endowed with $\sigma$-algebra $\Sigma$, and $P: \Sigma \rightarrow [0,1]$ is a probability measure. 

\paragraph{The first generalization.} Let each information partition $\Pi_i$ be a partition of $\Omega$, where we require that each partition element is $\Sigma$-measurable and has strictly positive measure under $P$ (see e.g., \citet{MondererSamet}). All of the above definitions and proofs generalize as stated.

\paragraph{The second generalization.} Alternatively, we might model each agent $i$'s information as a sub $\sigma$-algebra of $\Sigma$, denoted by $\Pi_i$.\footnote{That is, $\Pi_i$ is a $\sigma$-algebra and $\Pi_i \subseteq \Sigma$.} One foundation for this approach (which we will examine in detail in subsequent chapters) is that each agent $i$ privately observes a random variable $X_i: \Omega \rightarrow \mathbb{R}$ that is measurable with respect to $\Sigma$. In this case, each agent $i$'s $\sigma$-algebra is $\sigma(X_i)$, the $\sigma$-algebra generated by $X_i$, which is indeed coarser than $\Sigma$. 

The definition of knowledge can be extended as follows.

\begin{definition} \label{def:KnowledgeGeneral} Agent $i$ \emph{knows} the event $A \in \Sigma$ to be true at $\omega$ if there exists some $B \in \Pi_i$ such that $\omega \in B \subseteq A$.
\end{definition}

Common knowledge cannot in general be iteratively constructed (\`{a} la Definition \ref{def:CK1}) using this definition of $K_i$, since the set of states at which agent $i$ knows $A$ to be true may not be $\Sigma$-measurable. Nevertheless, similar to Definition \ref{def:CK2}, we can define $\bigwedge_{i \in \mathcal{I}} \Pi_i$ to be the finest common coarsening of the $\sigma$-algebras $\Pi_1, \dots, \Pi_n$, and say that an event $A$ is common knowledge at $\omega$ if there is an element $A$ of $\bigwedge_{i \in \mathcal{I}} \Pi_i$ such that $\omega \in A$. We can also generalize Definition \ref{def:CK3} as follows:

\begin{definition}  The event $A \in \Sigma$ is \emph{evident} if $A \in \Pi_i$ for every $i \in \mathcal{I}$, i.e., $A$ belongs to every agent's $\sigma$-algebra.
\end{definition} 

\begin{definition}  The event $A \in \Sigma$ is common knowledge at state $\omega$ if there is an evident event $E$ such that $\omega \in E$ and $E \subseteq A$.
\end{definition}

Theorem \ref{thm:Aumann} can also be generalized, although the previous proof does not extend (for example, there is no longer guaranteed to be a unique element of $\Pi_1 \meet \Pi_2$ that contains $\omega$).

\begin{proposition} \label{prop:AumannGeneral} Let $X \in \mathcal{L}^1(\Omega,\Sigma,P)$, and define $Y = \mathbb{E}(X \mid \Pi_1)$, $Z=\mathbb{E}(X \mid \Pi_2)$. If it is common knowledge that $Y=y$ and $Z=z$ at a state $\omega$ with strictly positive probability, then it must be that $y=z$.
\end{proposition}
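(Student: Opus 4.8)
The plan is to abandon the meet-based argument used for Theorem \ref{thm:Aumann} — which, as the remark preceding the proposition notes, breaks down because there need not be a well-defined partition cell containing $\omega$ — and instead argue directly from the evident-event characterization of common knowledge (the general-state-space analogue of Definition \ref{def:CK3}) together with the integral (averaging) form of conditional expectation.

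First I would unpack the hypothesis. By the evident-event definition, common knowledge at $\omega$ that $Y = y$ and $Z = z$ means there is an evident event $E$ with $\omega \in E$ and $E \subseteq \{Y = y\} \cap \{Z = z\}$; and the assumption that $\omega$ has strictly positive probability lets us take $P(E) > 0$. Since $Y = \mathbb{E}(X\mid\Pi_1)$ is $\Pi_1$-measurable and $Z = \mathbb{E}(X\mid\Pi_2)$ is $\Pi_2$-measurable, the sets $\{Y=y\}$ and $\{Z=z\}$ are measurable, and because $E$ is evident we have $E \in \Pi_1$ and $E \in \Pi_2$. Next I would invoke the defining property of the conditional expectation: for every $G \in \Pi_1$, $\int_G X\, dP = \int_G Y\, dP$. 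Taking $G = E$ and using that $Y \equiv y$ on $E$ yields $\int_E X\, dP = y\,P(E)$. Running the identical argument with $Z$ and $G = E \in \Pi_2$ yields $\int_E X\, dP = z\,P(E)$. Equating the two gives $y\,P(E) = z\,P(E)$, and dividing by $P(E) > 0$ gives $y = z$.

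There is no genuinely hard analytic step here; the proof is essentially a two-line computation once the setup is in place. The only point that requires care — and it is exactly the substitute for the "unique partition cell" device of the finite-state proof — is that one must phrase everything through a single evident event $E$ of positive measure and exploit that $E$ lies in \emph{both} $\sigma$-algebras, so that the averaging identity for $\mathbb{E}(X\mid\Pi_i)$ can be applied to the same set for both $i$. One should also be explicit that the conclusion presupposes the witnessing evident event has positive probability: this is automatic in the finite-state model (every state has positive prior probability) but must be carried as a hypothesis in the general case, which is precisely what the phrase ``at a state $\omega$ with strictly positive probability'' is doing.
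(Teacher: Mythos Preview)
Your proposal is correct and is essentially the paper's own proof: both extract an evident event $E \in \Pi_1 \cap \Pi_2$ with $\omega \in E$, $P(E)>0$, on which $Y\equiv y$ and $Z\equiv z$, and then use the defining averaging property of conditional expectation to get $y\,P(E)=\int_E X\,dP=z\,P(E)$. The paper merely writes the same identity as $\mathbb{E}(Y\mathbbm{1}_E)=\mathbb{E}(X\mathbbm{1}_E)=\mathbb{E}(Z\mathbbm{1}_E)$, which is notationally different but substantively identical.
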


\begin{proof} If it is common knowledge that $Y=y$ and $Z=z$, there must exist an event $E \in \Pi_1 \cap \Pi_2$, where $Y$ takes the constant value $y$ on $E$, and $Z$ takes the constant value $z$ on $E$. Let $\mathbbm{1}_E$ denote the indicator variable that takes  value 1 on  $E$. Then
\begin{align*}
y \cdot P(E) & = \mathbb{E}(Y \mathbbm{1}_E) \\
	& = \mathbb{E}(X \mathbbm{1}_E) \\
	& = \mathbb{E}(Z \mathbbm{1}_E) = z \cdot P(E)
	\end{align*}
	using in the second and third equalities that $Y$ and $Z$ are conditional expectations of $X$. Since $P(E) > 0$ (by assumption that $\omega \in E$ has strictly positive probability), it follows that $y=z$ as desired.
\end{proof}

\medskip

This result is in fact more general than Theorem \ref{thm:Aumann}, nesting the previous result as a special case when we choose $X$ to be an indicator function on some set. 

\begin{exercise}[G$^*$] Generalize Proposition \ref{prop:AumannGeneral} by demonstrating that the conclusion still holds if we assume that there is a measurable set of states $B \subseteq \Omega$ with strictly positive probability, where at every $\omega \in B$ it is common knowledge that $Y=y$ and $Z=z$.
\end{exercise}

\section{Additional Exercises}

\begin{exercise}[G] Two spies in an underground organization are stationed at remote locations. Each spy privately observes whether the coast is clear at their location. The spies share a common prior that the coast is clear at each location independently with probability $1/2$. 

\textbf{Communication protocol.} The spies communicate by email with a third party electronic server at their home base.  If and only if the coast is clear at a spy's location, that spy's computer will automatically send a message to the home base with the information that the coast is clear. 

If the home base electronic server receives information from both spies indicating that the coast is clear, then it will automatically send a message to both spies indicating that it has received both messages. (Otherwise, it will send no messages.) As these are dangerous times, each message has only a $1-\varepsilon$ chance of being received (again independent). If either spy receives a message from the home base, that spy will send a reply to the home base confirming receipt. The reply is lost with probability $\varepsilon$, independently of everything that's happened before. So on and so forth. Everything stated above is common knowledge.

Each spy observes the number of messages he has sent, and chooses an action in $\{A,B\}$. If the coast is clear at both locations, then payoffs are given by the \textbf{right} matrix below, and otherwise payoffs are given by the \textbf{left} matrix below. 
\[\begin{array}{ccc} 
			 & A & B\\
			A& M,M & 0,-L\\
			B&-L,0 & 0,0
		\end{array}
		\hspace{5em}
		\setlength{\tabcolsep}{6pt}\renewcommand{\arraystretch}{1.2}
		\begin{array}{ccc}
			& A & B\\
			A& 0,0 & 0,-L\\
			B&-L,0 & M,M
		\end{array}\]
The payoff parameters satisfy $L > 3M > 0$.

\begin{itemize}
\item[(a)] Prove the following analogue of Rubinstein (1989)'s result: Let $T_1 = \mathbb{Z}_+$ and $T_2=\mathbb{Z}_+$ denote the two players' type spaces. There is a unique pure-strategy equilibrium in which both players choose $A$ when the coast is \textbf{not} clear at their location, i.e. $s_1(0)=s_2(0)=A$. In this equilibrium, players choose $A$ for any number of messages sent, i.e. $s_i(t)=A$ for both players $i$ and all $t\in T_i$.

\item[(b)] Suppose instead that $L=2$ while $M=1$, and demonstrate that the result in Part (a) no longer holds by finding some $\varepsilon>0$ and a pair of strategies $(s_1,s_2)$ that constitute a pure-strategy Bayesian Nash equilibrium, where $s_1(0)=s_2(0)=A$ and $s_i(t)=B$ for some player $i$ and type $t\in T_i$. 
\end{itemize}
\end{exercise}

\begin{exercise}[G$^*$] Let $X \in \mathcal{L}^1(\Omega,\Sigma,P)$, and define $Y = \mathbb{E}(X \mid \Pi_1)$, $Z=\mathbb{E}(X \mid \Pi_2)$. Prove that if it is common knowledge that $Y \in A$ and $Z \in B$ at a state $\omega$ with strictly positive probability, then it must be that $A \cap B \neq \emptyset$.

\end{exercise}

\chapter{Bayesian Updating and Beliefs}

Section \ref{sec:Preliminaries} introduces the canonical Bayesian framework and the definition of a signal. Section \ref{sec:Bayes} reviews Bayes' rule and key properties of Bayesian posteriors. Section \ref{sec:Gaussian} provides closed-form expressions for posterior beliefs in the special case of Bayesian updating to normal signals, with applications.

\section{Preliminaries} \label{sec:Preliminaries}

There is a set of \emph{parameters} $\Theta$ endowed with a $\sigma$-algebra $\Sigma$. An agent has a \emph{prior} $p \in \Delta(\Theta)$, where $\Delta(\Theta)$ denotes the set of $\Sigma$-measurable probability measures on $\Theta$. The prior describes the agent's belief at an ``ex-ante" stage in the absence of any information, where what is ex-ante is understood in the context of a specific model.

The focus of this chapter is the object that we will call an \emph{information structure}, \emph{experiment}, or a \emph{signal}, which can be formalized in either of several ways:
\begin{itemize}
\item[(a)] We can define the signal to be a mapping $\sigma: \Theta \rightarrow \Delta(S)$ from the set of parameters to distributions over  a set of signal realizations $S$. See for example \citet{henrique}. 
\item[(b)] We can define a signal to be an $(S,\mathcal{S})$-valued random variable $X$ on an underlying probability space $(\Omega,\Sigma,P)$, where $\Omega = \Theta \times E$ for some set $E$. For example, we might define the signal to be $X=\theta + \varepsilon$ for an $E$-valued noise term $\varepsilon$ that is independent of $\theta$, as we do in Section \ref{sec:Gaussian}. 

\item[(c)]  We can define a signal $S$ to be a finite partition of  $\Omega = \Theta \times [0,1]$, whose elements are non-empty and measurable with respect to the Lebesgue $\sigma$-algebra on $\Omega$. Conditional on parameter $\theta$, the probability of observing $s \in S$ is the Lebesgue measure of $\{x \in [0,1] \mid (\theta,x) \in s\}$. See for example \citet{FrankelKamenica}. 
\end{itemize}

\begin{remark} It is straightforward to see that the first two formalisms nest one another when all the relevant sets are finite. Suppose we are given a prior $p \in \Delta(\Theta)$ and a signal $\sigma: \Theta \rightarrow \Delta(S)$. Define the expanded state space to be $\Omega = \Theta \times S$ and let $P(\theta,s) = p(\theta) \sigma(s \mid \theta)$. Then the random variable $X: \Omega \rightarrow S$ satisfying $X(\theta,s)=s$ is equivalent to $\sigma$ in the sense that posterior beliefs about $\theta$ are the same whether we condition on the realization of $X$ or the realization of $\sigma(\theta)$. In the other direction, if we start with a random variable $X: \Theta \times E \rightarrow S$ and a distribution $P\in \Delta(\Theta \times E)$, then we can define $\sigma: \Theta \rightarrow \Delta(S)$ to satisfy $\sigma(s \mid \theta) = P(X^{-1}(s) \mid \theta)$. The  formalism in (c)  is a special case of (b), where $E=[0,1]$, the random variable $X: \Omega \rightarrow S$ maps each $\omega$ into the partition element of $S$ to which it belongs, and the probability distribution $P$ is the Lebesgue measure. 
\end{remark}

Example families of signals include:

\begin{example}[\citet{Aumann}'s Partitional Information Structures] For each agent $i$, let $\Pi_i$ be a finite partition of $\Theta$ into measurable elements of strictly positive measure. Index these partition elements to $S=\{1,\dots,n\}$ where $n$ is the size of $\Pi_i$. Then let $\sigma$ map each $\theta$ with probability 1 to the index of the partition element to which $\theta$ belongs.
\end{example}

\begin{example}[Finite Information Structures] Suppose $ \lvert \Theta \rvert, \lvert S \rvert<\infty $. Then we can express $\sigma$ as a $\vert \Theta \vert \times \vert S \vert$ matrix where (1) all entries are nonnegative, and (2) all rows sum to 1. For example, suppose a drug is either good (g) or bad (b). The drug is administered to a patient who is either cured (C) or not (N). The patient is cured with probability $3/4$ if the drug is good and with probability $1/4$ if the drug is bad. Then $ \Theta=\{g,b\} $ and $ S=\{C,N\} $ and the information structure is \\[-4mm]
\[\begin{array}{ccc}
		& C & N \\[-1mm]
		g & 3/4 & 1/4 \\
		b & 1/4 & 3/4\end{array}\]
with each row depicting the probability over the signal realizations in the associated state.
\end{example}

\begin{example}[Gaussian Information] \label{ex:Gaussian} The signal is
			$X = \theta + \eps$,
where $\theta \sim \mathcal{N}(\mu_\theta,\sigma_\theta^2)$, $\eps \sim \mathcal{N}\left(0,\sigma_\eps^2\right)$, and $\theta \indep \eps$. 
\end{example}
		
\section{Posterior Beliefs} \label{sec:Bayes}

\subsection{Bayes' Rule} \label{sec:BayesRule}

The agent updates his prior to the realization of the signal using Bayes' rule.

\begin{definition}[Bayes' Rule, Finite Case]  Suppose $\vert \Theta \vert < \infty$. Fix any distribution $P \in \Delta(\Theta)$ and any events $A, B \subseteq \Theta$ where $P(A),P(B) >0$. Then 
\begin{equation} \label{eq:Bayes}
P(A \mid B)  = \frac{P(B \mid A) P(A)}{P(B)}.
\end{equation}
\end{definition}

\begin{remark} Rather than memorizing this formula, it is easier to remember that by the law of total probability, we can rewrite $P(A \cap B)$ as $P(A \mid B) P(B) $ or as $P(B \mid A) P(A)$, so
\[P(A \mid B) P(B) = P(B \mid A) P(A).\]
Dividing through by $P(B)$ yields (\ref{eq:Bayes}).
\end{remark}

\begin{remark} Applying (\ref{eq:Bayes}) twice for the pairs of events $(A,E)$ and $(B,E)$, we have 
\[\frac{P(A \mid E)}{P(B \mid E)} = \frac{P(E \mid A)}{P(E \mid B)} \cdot \frac{P(A)}{P(B)}\] so the relative conditional probabilities of events $A$ and $B$ is determined by their relative probabilities under the prior, $\frac{P(A)}{P(B)}$, and the \emph{likelihood ratio} of  $E$ under $A$ and $B$, $\frac{P(E \mid A)}{P(E \mid B)}$. \emph{Base-rate neglect} is the tendency to falsely equate $\frac{P(A \mid E)}{P(B \mid E)}$ with $\frac{P(E \mid A)}{P(E \mid B)}$, neglecting the prior distribution. This can lead to compelling but inaccurate statistical conclusions.

For example, suppose $\Omega = \{p,n\}$, where $\omega = p$ indicates that an individual is positive for a medical condition while $\omega=n$ indicates that the individual is negative, with $P(\omega = p) = 0.01$.  Let $X \in \{+,-\}$ be the outcome of a test where $P(X = + \mid \omega = p) = 0.95$ and $P(X=+ \mid \omega = n) = 0.05$. Since the likelihood of observing $X=+$ is much higher when the individual has the condition than when he does not---indeed, the likelihood ratio is $\frac{P(X=+ \mid \omega = p)}{P(X=+ \mid \omega = n)} = 19$---it is tempting to conclude from a positive test result that the individual has the condition. But correctly applying Bayes' rule yields that $\frac{P(\omega = p \mid X=+)}{P(\omega = n \mid X=+)}<1$; that is, even with a positive test it is more likely that the individual is negative for the condition.
\end{remark}

A useful rewriting of Bayes' rule is
\begin{equation} \label{eq:BayesFinite}
P(\theta \mid X=x)  = \frac{P(X=x \mid \theta) P(\theta)}{\sum_{\theta' \in \Theta} P(X=x \mid \theta') P(\theta')} \quad \forall \theta \in \Theta
\end{equation}
where the conditional distribution $P(\cdot \mid X=x)$ is precisely the agent's posterior belief upon observing $X=x$. 

\begin{example} \label{ex:BinaryExample} A drug is either effective ($\theta = A$) or not ($\theta=B$), where the prior probability that the drug is effective is $p \in (0,1)$. The signal is 
\[\begin{array}{ccc}
& a & b\\
A & q & 1-q \\
B & 1-q & q
\end{array}\]
for some $q\in (0,1)$. Then upon observing $a$, the agent assigns to $\theta=A$ a posterior probability of
\[\frac{p q}{p q + (1-p) (1-q) } = \frac{1}{1 + \frac{1-p}{p} \left(\frac{1-q}{q}\right)}\]
which exceeds the prior belief of $p$ if and only if $q \geq \frac{1}{2}$. 
\end{example}

More generally, when $\theta$ and $X$ are (not necessarily finite-valued) random variables with densities $f_\theta$ and $f_X$ and conditional densities $f_{\theta \mid X=x}$ and $f_{X \mid \theta=t}$, then the posterior belief given $X=x$ is 
\begin{equation} \label{eq:BayesContinuous}
f_{\theta \mid X=x} (t)  = \frac{f_{X\mid \theta=t} (x ) f_\theta(t)}{\int_{\theta' \in \Theta} f_{X\mid \theta =t'}(x) f_\theta(t') dt'} \quad \forall t \in \Theta.
\end{equation}
Somewhat more generally, we may suppose that the joint distribution of $(\theta,X)$ is such that for every realization $x$ of $X$, there is a (measurable) function $q_x$ satisfying 
\[q_x(A) = \mathbb{E}(\mathbbm{1}_A \mid X=x) \quad \mbox{ for all events $A \subseteq \Theta$}\]
Then this $q_x$ is the posterior belief.

\subsection{Bayes' Plausibility} \label{sec:BayesPlausibility}

Outside of special cases (such as the one we will cover in Section \ref{sec:Gaussian}), posterior beliefs often cannot be expressed in closed-form. Nevertheless, there are certain properties they must satisfy. One important property is that beliefs are a martingale, i.e., the expected posterior is equal to the prior. Intuitively, if you expect to change your mind given more information, then why haven't you done so already?

\begin{fact}[Beliefs are a martingale.] \label{fact:Martingale} Let $p \in \Delta(\Theta)$ denote the agent's prior belief, and choose any event $A$. Then the posterior probability assigned to this event conditional on the realization of random variable $X$ is $\mathbb{E}(\mathbbm{1}_A \mid X)$. By the law of iterated expectations,
\[\mathbb{E}(\mathbb{E}(\mathbbm{1}_A \mid X)) = \mathbb{E}(\mathbbm{1}_A)\]
so the expected posterior probability of $A$ is equal to the prior probability of $A$. Since the event $A$ was arbitrarily chosen, we can conclude that the expected posterior belief is equal to the prior belief. (In the case of a finite state space $\Theta$, choosing $A=\{\theta\}$ yields $\mathbb{E}(p(\theta \mid X)) = p(\theta)$ for every $\theta$.)
\end{fact}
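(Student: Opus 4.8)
The plan is to reduce the statement to a single application of the law of iterated expectations. First I would fix an arbitrary event $A \subseteq \Theta$ and recall, from the discussion of Bayes' rule in the previous subsection, that the agent's posterior probability of $A$ after observing the signal $X$ is exactly the conditional expectation $\mathbb{E}(\mathbbm{1}_A \mid X)$ --- this is the defining property of the posterior kernel $q_x$ identified above, where $q_x(A) = \mathbb{E}(\mathbbm{1}_A \mid X=x)$, and it is what makes the argument go through without assuming $X$ or $\theta$ have densities. Viewing this posterior probability as a random variable in the signal realization, the quantity ``expected posterior probability of $A$'' is by definition $\mathbb{E}\big(\mathbb{E}(\mathbbm{1}_A \mid X)\big)$.

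Second, I would invoke the tower property of conditional expectation, $\mathbb{E}\big(\mathbb{E}(\mathbbm{1}_A \mid X)\big) = \mathbb{E}(\mathbbm{1}_A)$. Since $\mathbb{E}(\mathbbm{1}_A) = P(A) = p(A)$ is precisely the prior probability of $A$, this already yields that the expected posterior probability of $A$ equals its prior probability.

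Third, since $A$ was arbitrary, the (random) posterior measure and the (deterministic) prior measure agree in expectation on every measurable event; because a probability measure on $(\Theta,\Sigma)$ is pinned down by its values on $\Sigma$, this is exactly the assertion that the expected posterior belief equals the prior belief. In the finite-state case one can make this fully concrete by taking $A = \{\theta\}$ for each $\theta \in \Theta$, obtaining $\mathbb{E}(p(\theta \mid X)) = p(\theta)$.

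There is no serious obstacle here; the only points needing care are (i) justifying rather than merely asserting that the posterior probability of $A$ equals $\mathbb{E}(\mathbbm{1}_A \mid X)$, which is where the earlier characterization of the posterior as a regular conditional distribution does the work, and (ii) observing that ``expected posterior $=$ prior'' is a statement about the expectation of a measure-valued object, which I interpret and verify pointwise over events. It is also worth noting that ``martingale'' is being used here in the one-step sense (prior $\to$ posterior); the genuine martingale property, with beliefs indexed by an increasing family of information sets, is the same computation applied along a filtration and follows identically from the tower property.
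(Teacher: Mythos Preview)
Your proposal is correct and follows essentially the same argument as the paper: identify the posterior probability of an event $A$ with $\mathbb{E}(\mathbbm{1}_A \mid X)$, apply the law of iterated expectations, and conclude by letting $A$ range over all events (specializing to singletons in the finite case). Your added remarks on the regular conditional distribution and the one-step-versus-filtration reading of ``martingale'' are accurate elaborations but not needed for the paper's version.
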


Since any signal $X$ induces a distribution $\tau \in \Delta(\Delta(\Theta))$ over posterior beliefs, Fact \ref{fact:Martingale} implies that this distribution must average to the prior.

\begin{definition} \label{def:BayesPlausible} Fixing a prior $p\in \Delta(\Theta)$, say that a distribution of posteriors $\tau$ is \emph{Bayes plausible} if
\[\int_{\Delta(\Theta)} q  d\tau(q) = p\]
i.e. the expected posterior is equal to the prior. We'll use
\[\mathcal{T}(p) \equiv \left\{ \tau \in \Delta(\Delta(\Theta)) \, \mid \, \int q  d\tau(q) = p\right\}\]
to denote the set of Bayes plausible posterior distributions given prior $p$.
\end{definition}

\begin{exercise}[U]
The state space is $\Theta = \{\theta_1,\theta_2\}$ and the prior  is $(\mu,1-\mu)$ for some $\mu \in [0,1]$. The signal structure is
\[\begin{array}{cccc}
& s_1 & s_2  \\
\theta_1 & p & 1-p \\
\theta_2 & q & 1-q 
\end{array}\]
where $p,q \in [0,1]$. What is the distribution over posterior beliefs induced by this signal structure? Verify that the expected posterior belief is equal to the prior belief.
\end{exercise}

Not only are we guaranteed that any signal induces a Bayes-plausible distribution over posterior beliefs, but also any Bayes-plausible distribution over posterior beliefs can be induced by some signal.

\begin{definition} For any signal $X \sim P_X$, let $\tau_X \in \Delta(\Delta(\Theta))$ satisfy
$\tau_X(q) = P_X( \{x:q_x =q \})$. Say that $\tau \in \Delta(\Delta(\Theta))$ is \emph{induced by $X$} if $\tau = \tau_X$.
\end{definition}

\begin{proposition} \label{prop:BayesPlausible} Suppose the prior $p$ belongs to the interior of the set $\Delta(\Theta)$. Then every Bayes-plausible distribution $\tau \in \mathcal{T}(p)$ is induced by some signal $X$.
\end{proposition}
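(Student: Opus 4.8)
The plan is to exhibit, for each Bayes-plausible $\tau$, an explicit signal whose induced distribution over posteriors is $\tau$, and then verify the claim by direct application of Bayes' rule. The construction is cleanest when $\tau$ has finite support, and I would isolate that case first.

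So suppose $\tau = \sum_{j=1}^m \alpha_j \delta_{q_j}$ with the $q_j \in \Delta(\Theta)$ distinct, each $\alpha_j > 0$, and $\sum_j \alpha_j q_j = p$ (this equality is exactly Bayes plausibility, Definition~\ref{def:BayesPlausible}). I would take the signal-realization space $S = \{1,\dots,m\}$ and define the information structure $\sigma\colon \Theta \to \Delta(S)$ by $\sigma(j \mid \theta) = \alpha_j q_j(\theta)/p(\theta)$. This is the one place the hypothesis ``$p$ in the interior of $\Delta(\Theta)$'' is used: it forces $p(\theta) > 0$ for every $\theta$, so the ratio is well-defined; it is visibly nonnegative, and $\sum_{j=1}^m \sigma(j\mid\theta) = p(\theta)^{-1}\sum_j \alpha_j q_j(\theta) = 1$ again by Bayes plausibility, so $\sigma(\cdot\mid\theta)$ is a genuine probability distribution on $S$. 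Passing to formalism (b) via the remark that nests (a) inside (b), this is the random variable $X$ on $\Omega = \Theta\times S$ with joint law $P(\theta,j) = p(\theta)\sigma(j\mid\theta) = \alpha_j q_j(\theta)$.

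Next I would check that $X$ induces $\tau$. Applying Bayes' rule in the form (\ref{eq:BayesFinite}) to $X$: conditional on $X=j$ the posterior at $\theta$ equals $\alpha_j q_j(\theta)\big/\sum_{\theta'\in\Theta}\alpha_j q_j(\theta') = q_j(\theta)$, so the posterior realized on the event $\{X=j\}$ is precisely $q_j$, while $P_X(X=j) = \sum_{\theta}\alpha_j q_j(\theta) = \alpha_j$. Hence $\tau_X$ assigns mass $\alpha_j$ to $q_j$ for each $j$; since the $q_j$ are distinct, $\tau_X = \tau$.

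Finally, for a general (not necessarily finitely supported) $\tau$ --- the part I expect to be the only real obstacle --- I would instead build the signal on the expanded space $\Omega = \Theta \times \Delta(\Theta)$, taking $P$ to be the ``mixture'' measure determined by $P(A\times B) = \int_B q(A)\,d\tau(q)$ for measurable $A\subseteq\Theta$ and $B\subseteq\Delta(\Theta)$, and letting $X$ be the projection onto the second coordinate. Bayes plausibility says the $\Theta$-marginal of $P$ is $\int q\,d\tau(q) = p$, so the given prior is respected, while the $\Delta(\Theta)$-marginal is $\tau$. By construction $P$ disintegrates over $X$ with conditional kernel $q\mapsto q$, so the posterior given $X=x$ is $x$ itself; therefore $q_x = x$ for $\tau$-a.e.\ $x$, and $\tau_X$ --- the law of $X$ pushed forward through $x\mapsto q_x$ --- is just the law of $X$, namely $\tau$. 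The work here is purely measure-theoretic bookkeeping: one must check that $q\mapsto q(A)$ is measurable so that $P$ is well-defined, and that the regular conditional probability of $\theta$ given $X$ really is the identity map. Both are routine but need care when $\Theta$ is infinite; the finite-support argument above, by contrast, is elementary and self-contained.
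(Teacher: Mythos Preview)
Your proposal is correct and follows essentially the same construction as the paper: define the signal realization space to be (an index set for) the support of $\tau$, set $\sigma(x\mid\theta)=q_x(\theta)\tau(q_x)/p(\theta)$, and verify via Bayes' rule that the posterior at realization $x$ is $q_x$ with probability $\tau(q_x)$. Your finite-support case is identical to the paper's argument; for the general case you are in fact more careful than the paper, which writes the same density-style formula $\sigma(x\mid\theta)=q_x(\theta)\tau(q_x)/p(\theta)$ without worrying about whether $\tau(q_x)$ is meaningful for non-discrete $\tau$, whereas your joint-measure construction $P(A\times B)=\int_B q(A)\,d\tau(q)$ handles that cleanly.
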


The proof (demonstrated in \citet{KamenicaGentzkow} and \citet{ShmayaYariv} among others) proceeds by construction. For any distribution $\tau$, index the distinct posterior beliefs in the support of $\tau$ to be $\{q_x\}_{x\in \mathcal{X}}$, where $\mathcal{X}$ may not be finite. Then define $\sigma: \Theta \rightarrow \Delta(\mathcal{X})$ to satisfy
\begin{equation} \label{eq:ConstructSignal}
\sigma(x \mid \theta) = \frac{q_x(\theta) \tau(q_x)}{p(\theta)}
\end{equation}
We have constructed a signal $\sigma$ whose realizations $x$ are identified with posterior beliefs $q_x$, where the conditional distribution over signal realizations mimics Bayes' rule
$p(x \mid \theta) = \frac{p(\theta \mid x) p(x)}{p(\theta)}$, setting $q_x(\theta) = p(\theta \mid x)$ and $\tau(q_x) = p(x)$ . 
 This is a valid signal structure since
\begin{align*}
 \int_{\mathcal{X}} \sigma(x \mid \theta) dx = \int_{\mathcal{X}} \frac{q_x(\theta) \tau(q_x)}{p(\theta)} dx = 1
 \end{align*}
by (\ref{eq:ConstructSignal}) and the definition of Bayes-plausibility. Moreover,
  \begin{align*}
  \frac{\sigma(x \mid \theta) p(\theta)}{\int_\Theta \sigma(x \mid \theta) p(\theta) d\theta}    & = \frac{\sigma(x \mid \theta) p(\theta)}{\tau(q_x) \int_\Theta q_x(\theta) d\theta}  = \frac{\sigma(x \mid \theta) p(\theta)}{\tau(q_x)} =  q_x(\theta)
  \end{align*}
 so $q_x(\cdot)$ is precisely the posterior belief when updating to the signal $\sigma$.
 
 Thus the probability that the posterior belief is $q_x$ is exactly the probability that the realization of the constructed signal $\sigma$ is $x$, so $\tau$ is induced by $\sigma$ as desired.
 
\begin{exercise}[U] Suppose the prior is over $\Theta = \{\theta_1, \theta_2\}$ is $(1/3,2/3)$. Provide a set $S$ and a signal structure $\sigma: \Theta \rightarrow \Delta(S)$ that induces the belief (0,1) with probability 1/3, and the belief (1/2,1/2) with probability 2/3.
\end{exercise}

Together, Fact \ref{fact:Martingale} and Proposition \ref{prop:BayesPlausible} imply:
\begin{corollary} Fix any prior belief $p \in Int(\Delta(\Theta))$. Then a distribution over posteriors $\tau \in \Delta(\Delta(\Theta))$ is induced by some signal if and only if it is Bayes-plausible, i.e., $\tau \in \mathcal{T}(p)$.
\end{corollary}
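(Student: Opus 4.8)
The plan is to prove the two implications separately, since each is essentially a repackaging of a result already in hand.

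For the ``only if'' direction, suppose $\tau$ is induced by a signal $X \sim P_X$, so $\tau = \tau_X$ in the notation of the preceding definition. First I would observe that pushing $P_X$ forward along the map $x \mapsto q_x$ shows $\int_{\Delta(\Theta)} q\, d\tau_X(q)$ equals the expectation of the random posterior belief $q_X$. Then I would invoke Fact \ref{fact:Martingale}: for each event $A$, the posterior probability of $A$ is $\mathbb{E}(\mathbbm{1}_A \mid X)$, and by the law of iterated expectations its expectation is $\mathbb{E}(\mathbbm{1}_A) = p(A)$. Since $A$ was arbitrary, the expected posterior belief equals the prior $p$, i.e.\ $\int q\, d\tau(q) = p$, which is precisely the condition $\tau \in \mathcal{T}(p)$.

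For the ``if'' direction, suppose $\tau \in \mathcal{T}(p)$, i.e.\ $\tau$ is Bayes plausible. Since $p \in \mathrm{Int}(\Delta(\Theta))$, Proposition \ref{prop:BayesPlausible} applies verbatim and produces a signal $X$ with $\tau_X = \tau$, so $\tau$ is induced by $X$. Combining the two directions gives the claimed equivalence.

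I do not expect a genuine obstacle: the corollary is bookkeeping on top of Fact \ref{fact:Martingale} and Proposition \ref{prop:BayesPlausible}. The only point requiring mild care is the measure-theoretic detail in the first direction when $\Theta$ (or the support of $\tau$) is not finite, namely that $x \mapsto q_x$ is measurable so the pushforward $\tau_X$ is well defined and that exchanging the expectation with ``evaluate at $A$'' is legitimate; but this is already built into the setup, where the posterior is taken to be a regular conditional distribution with $q_x(A) = \mathbb{E}(\mathbbm{1}_A \mid X=x)$, so no new argument is needed.
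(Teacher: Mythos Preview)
Your proposal is correct and matches the paper's approach exactly: the paper presents this corollary as an immediate consequence of Fact~\ref{fact:Martingale} (for the ``only if'' direction) and Proposition~\ref{prop:BayesPlausible} (for the ``if'' direction), with no additional argument. Your write-up simply spells out this bookkeeping.
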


\subsection{Application of Bayes' Rule: Incompatibility of Fairness Definitions}

Here we take a detour to demonstrate the power of Bayes' rule. Individuals in a population are each described by a covariate vector  $C \in \mathcal{C}$, a group membership $ G \in \{g_1,g_2\}$, and a type $\theta \in \{0,1\}$. For example, we might interpret $\theta$ as the individual's creditworthiness (whether the individual would pay back a loan if approved), $G$ as a demographic group, and $C$ as the individual's credit history. Across individuals, the random vector $(C,G,\theta)$ is distributed according to $P$, and we use $p_g= P(\theta=1 \mid G=g)$ for the base rate of $\theta=1$ in each group $g$.  A \emph{scoring rule} is any mapping $S: \mathcal{C} \rightarrow \{0,1\}$ that predicts the type given the covariate vector.

\begin{definition}[Equality of False Positives] A scoring rule $S$ has equal false positive rates if
\[P(S=1 \mid \theta=0, G=g_1) = P(S=1 \mid \theta=0, G=g_2)\]
\end{definition}
In words, the probability of being incorrectly assessed to pay back the loan is independent of group membership. Equivalently: $S \indep G \mid \theta=0$, i.e., the score is conditionally independent of group membership given type $\theta=0$.

\begin{definition}[Equality of False Negatives] A scoring rule $S$ has equal false negative rates if
\[P(S=0 \mid \theta=1, G=g_1) = P(S=0 \mid \theta=1, G=g_2)\]
\end{definition}
In words, the probability of being incorrectly assessed to not pay back the loan is independent of group membership. Equivalently: $S \indep G \mid \theta=1$, i.e., the score is conditionally independent of group membership given type $\theta=1$.

\begin{definition}[Calibrated] A score $S$ is calibrated if for each $s\in \{0,1\}$,
\[P(\theta=1 \mid S = s, G=g_1) = P(\theta=1 \mid S=s, G=g_2)\]
\end{definition}
In words, among those assessed to pay back the loan (or, to not pay back the loan), the probability of paying back the loan is independent of group membership. Equivalently: $\theta \indep G \mid S$, i.e.,  type is independent of  group membership conditional on the score.

The following impossibility result demonstrates that (outside of edge cases) these fairness criteria cannot be simultaneously satisfied.

\begin{proposition}[\citet{KMR},\citet{Chouldechova:BD2017}] Suppose $p_{g_1} \neq p_{g_2}$. Then no scoring rule $S$ can simultaneously satisfy calibration, equal false positive rates, and equal false negative rates.
\end{proposition}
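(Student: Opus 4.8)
The plan is to derive a contradiction by showing that the three fairness conditions, combined with $p_{g_1} \neq p_{g_2}$, force an impossible arithmetic identity. For each group $g$, consider the $2\times 2$ table relating the true type $\theta$ to the score $S$. Write $p_g = P(\theta = 1 \mid G=g)$ for the base rate, and let $\mathrm{FPR}_g = P(S=1 \mid \theta=0, G=g)$ and $\mathrm{FNR}_g = P(S=0 \mid \theta=1, G=g)$; equal false positive and false negative rates say $\mathrm{FPR}_{g_1} = \mathrm{FPR}_{g_2}$ and $\mathrm{FNR}_{g_1} = \mathrm{FNR}_{g_2}$. Introduce the positive predictive value $\mathrm{PPV}_g = P(\theta=1 \mid S=1, G=g)$; calibration (at $s=1$) says $\mathrm{PPV}_{g_1} = \mathrm{PPV}_{g_2}$. (One should also handle the boundary cases where $P(S=1\mid G=g)=0$ or $=1$ for some $g$, which is where the ``edge cases'' caveat enters; I would dispose of these separately, noting they force $p_g\in\{0,1\}$ and hence equal base rates.)

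Next I would write down the single algebraic relation tying these four quantities together within a fixed group. Expanding $P(S=1, \theta=1 \mid G=g)$ two ways — as $\mathrm{PPV}_g \cdot P(S=1\mid G=g)$ and as $(1-\mathrm{FNR}_g)\, p_g$ — and doing the same for $P(S=1,\theta=0\mid G=g) = (1-\mathrm{PPV}_g)P(S=1\mid G=g) = \mathrm{FPR}_g (1-p_g)$, one can eliminate $P(S=1\mid G=g)$ by taking the ratio. This yields the classical identity
\[
\frac{1 - \mathrm{PPV}_g}{\mathrm{PPV}_g} \;=\; \frac{\mathrm{FPR}_g}{1 - \mathrm{FNR}_g}\cdot \frac{1 - p_g}{p_g}.
\]
Since the left-hand side is the same for $g_1$ and $g_2$ by calibration, and the factor $\mathrm{FPR}_g/(1-\mathrm{FNR}_g)$ is the same for both groups by equal FPR and equal FNR, we conclude $\frac{1-p_{g_1}}{p_{g_1}} = \frac{1-p_{g_2}}{p_{g_2}}$, hence $p_{g_1} = p_{g_2}$, contradicting the hypothesis. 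The only gap to close is when $1-\mathrm{FNR}_g = 0$ or $\mathrm{PPV}_g \in \{0,1\}$; these again collapse to degenerate configurations in which $p_{g_1}=p_{g_2}$ or the conditional probabilities are undefined, and I would treat them as the excluded edge cases.

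The main obstacle is not the algebra — which is a short application of Bayes' rule and the law of total probability, exactly the machinery of Section~\ref{sec:Bayes} — but the careful bookkeeping of degenerate cases: groups for which $S$ is almost surely constant, or for which one of $\theta=0$, $\theta=1$ has probability zero, so that some conditional probability fails to be defined. A clean write-up should state up front that ``edge cases'' means precisely $p_g \in \{0,1\}$ for some $g$ or $P(S=1\mid G=g)\in\{0,1\}$ for some $g$, verify that outside these cases every conditional probability above is well-defined and the denominators are nonzero, and then run the three-line elimination argument. An alternative packaging uses the equivalent identity $\mathrm{PPV}_g = \frac{p_g(1-\mathrm{FNR}_g)}{p_g(1-\mathrm{FNR}_g) + (1-p_g)\mathrm{FPR}_g}$ and observes it is strictly monotone in $p_g$ when $\mathrm{FPR}_g, 1-\mathrm{FNR}_g$ are fixed and positive; I find the ratio form slightly cleaner but either works.
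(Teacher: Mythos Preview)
Your proposal is correct and follows essentially the same route as the paper: both derive the identity linking $\mathrm{FPR}_g$, $\mathrm{FNR}_g$, $\mathrm{PPV}_g$, and $p_g$ (the paper writes it as $FP_g = \frac{p_g}{1-p_g}\cdot\frac{1-PPV_g}{PPV_g}\cdot(1-FN_g)$, which is your displayed equation rearranged) and then observe that equating the first three quantities across groups forces $p_{g_1}=p_{g_2}$. Your derivation via the two-way expansion of $P(S=1,\theta=j\mid G=g)$ is arguably more direct than the paper's verification-by-reduction-to-tautology, and your explicit bookkeeping of degenerate cases goes beyond what the paper does (it simply invokes ``outside of edge cases'' without spelling them out).
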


\begin{proof} Choose either group $g$ and define $FP_g= P(S=1 \mid \theta=0,G=g)$, $FN_g = P(S=0 \mid \theta=1,G=g)$, and $PPV_g = P(\theta=1 \mid S=1,G=g)$. We'll show that these quantities are related by the following identity:
\begin{equation} \label{eq:Identity}
FP_g = \frac{p_g}{1-p_g} \times \frac{1-PPV_g}{PPV_g} \times  (1-FN_g).
\end{equation}
To simplify notation, let $Q$ denote the joint distribution over $(C,G,S)$ after conditioning on $G=g$. Then, expanding (\ref{eq:Identity}), we have
\[Q(S=1\mid \theta=0) = \frac{Q(\theta=1)}{Q(\theta=0)} \times \frac{Q(\theta=0 \mid S=1)}{Q(\theta=1 \mid S=1)} \times Q(S=1 \mid \theta=1)\]
Multiplying both sides by $Q(\theta = 0)$ and applying Bayes' rule,
\begin{align*}
Q(S=1 , \theta=0) & =  \frac{Q(\theta=0 \mid S=1)}{Q(\theta=1 \mid S=1)} \times Q(S=1, \theta=1)
\end{align*}
Thus, (\ref{eq:Identity}) is equivalent to
\begin{equation} \label{eq:RatioIdentity}
\frac{Q(S=1,\theta=0)}{Q(S=1,\theta=1)} =  \frac{Q(\theta=0 \mid S=1)}{Q(\theta=1 \mid S=1)} 
\end{equation}
Again using Bayes' rule, the RHS can be rewritten
\[\frac{Q(\theta=0 \mid S=1)}{Q(\theta=1 \mid S=1)} = \frac{Q(\theta=0, S=1)/Q(S=1)}{Q(\theta=1, S=1)/Q(S=1)} =  \frac{Q(S=1,\theta=0)}{Q(S=1,\theta=1)}\]
so (\ref{eq:RatioIdentity}) is equivalent to
\[ \frac{Q(S=1,\theta=0)}{Q(S=1,\theta=1)} = \frac{Q(S=1,\theta=0)}{Q(S=1,\theta=1)}\]
and is therefore trivially true.

The identity (\ref{eq:Identity}) holds for both groups $g\in \{g_1,g_2\}$. So if $FP_{g_1}=FP_{g_2}$ (as required by equality of false positive rates), $FN_{g_1}=FN_{g_2}$ (as required by equality of false negative rates), and also $PPV_{g_1}=PPV_{g_2}$ (as required by calibration), it must also hold that $p_{g_1}=p_{g_2}$.
\end{proof}

\section{Gaussian Information} \label{sec:Gaussian}

Gaussian information environments are unusually tractable, since the  posterior belief can be expressed in closed-form. We'll cover the main formulae for Bayesian updating in these environments, and show how these can be used to derive  results in three applications. 

\subsection{Formulae}
We'll start with the simplest case. The state is $\theta \sim \mathcal{N}(\mu ,\sigma_\theta^2)$ and the signal is
$X=\theta + \eps$, where $\eps \sim \mathcal{N}(0,\sigma_\eps^2)$, $\theta \indep \eps$, and $\sigma_\theta^2,\sigma_\eps^2>0$. Then:
 
 \begin{fact} \label{fact:BiVar} The agent's posterior belief about $\theta$ conditional on signal realization $X=x$ is normally distributed with mean 
 \[\mathbb{E}(\theta \mid X=x) = \left(\frac{\sigma_\eps^2}{\sigma_\theta^2+\sigma_\eps^2}\right) \mu + \left(\frac{\sigma_\theta^2}{\sigma_\theta^2+\sigma_\eps^2}\right)x\]
 and variance
 \[Var(\theta \mid X=x) = \frac{\sigma_\theta^2 \sigma_\eps^2}{\sigma_\theta^2+\sigma_\eps^2}.\]
\end{fact}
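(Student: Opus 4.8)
The plan is to apply Bayes' rule in density form, i.e.\ the identity (\ref{eq:BayesContinuous}). The prior density is $f_\theta(t) \propto \exp\!\big(-(t-\mu)^2/(2\sigma_\theta^2)\big)$, and since $X=\theta+\eps$ with $\eps$ independent of $\theta$, the conditional law of $X$ given $\theta=t$ is $\mathcal{N}(t,\sigma_\eps^2)$, so $f_{X\mid\theta=t}(x)\propto \exp\!\big(-(x-t)^2/(2\sigma_\eps^2)\big)$. Multiplying, the posterior density $f_{\theta\mid X=x}(t)$ is, up to a factor depending only on $x$, equal to $\exp\!\big(-\tfrac12\big[(t-\mu)^2/\sigma_\theta^2 + (x-t)^2/\sigma_\eps^2\big]\big)$.

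The key step is to treat the bracketed exponent as a quadratic in $t$ and complete the square. Writing it as $a t^2 - 2bt + (\text{terms not involving } t)$ with $a = 1/\sigma_\theta^2 + 1/\sigma_\eps^2$ and $b = \mu/\sigma_\theta^2 + x/\sigma_\eps^2$, this equals $a(t - b/a)^2 + \text{const}(x)$, so the posterior density is proportional (in $t$) to $\exp\!\big(-\tfrac{a}{2}(t-b/a)^2\big)$. A probability density is uniquely pinned down by its shape together with the requirement that it integrate to one, so it suffices to recognize this as the $\mathcal{N}(b/a,\,1/a)$ density — there is no need to track the $x$-dependent normalizing constant. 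Finally one simplifies: $1/a = \sigma_\theta^2\sigma_\eps^2/(\sigma_\theta^2+\sigma_\eps^2)$, which is the claimed variance, and $b/a = (\mu/\sigma_\theta^2 + x/\sigma_\eps^2)\big/(1/\sigma_\theta^2 + 1/\sigma_\eps^2) = (\sigma_\eps^2\mu + \sigma_\theta^2 x)/(\sigma_\theta^2+\sigma_\eps^2)$, which is the claimed mean.

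An alternative route avoids Bayes' rule altogether: since $X$ is an affine function of the independent normals $\theta$ and $\eps$, the pair $(\theta,X)$ is jointly normal, with $\mathbb{E}X = \mu$, $\mathrm{Var}(X)=\sigma_\theta^2+\sigma_\eps^2$, and $\mathrm{Cov}(\theta,X)=\sigma_\theta^2$. Invoking the standard formula for conditional distributions of a bivariate normal, $\mathbb{E}(\theta\mid X=x) = \mu + \tfrac{\mathrm{Cov}(\theta,X)}{\mathrm{Var}(X)}(x-\mu)$ and $\mathrm{Var}(\theta\mid X=x) = \sigma_\theta^2 - \tfrac{\mathrm{Cov}(\theta,X)^2}{\mathrm{Var}(X)}$, and substituting, gives the same two expressions after rearranging. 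This is quicker but presupposes the bivariate-normal conditioning formula, which is itself typically established by exactly the completing-the-square computation above.

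I expect the only real obstacle to be bookkeeping: isolating the $t$-dependent part of the exponent correctly, carrying out the completion of the square without sign errors, and then algebraically massaging $1/a$ and $b/a$ into the precise forms stated. Conceptually the sole point worth flagging is the justification for discarding the normalizing constant, namely that a normalized density is determined by its functional form in $t$.
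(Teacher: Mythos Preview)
Your proposal is correct. The paper does not actually prove Fact~\ref{fact:BiVar}; it is stated without proof, and the closest the paper comes is an exercise asking the reader to deduce it from the multivariate conditioning formula (Fact~\ref{fact:MultiVar}), which is itself stated without proof. That exercise is precisely your second route. Your primary route---writing the posterior density via Bayes' rule and completing the square in $t$---is more self-contained, since it does not presuppose the bivariate-normal conditioning formula but rather derives it. Both are standard; your first approach has the advantage of being a genuine proof from first principles, while the paper's suggested route is quicker but, as you note, merely relocates the work to establishing Fact~\ref{fact:MultiVar}.
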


A key property worth remembering is that the posterior mean is a convex combination of the prior mean $\mu$ and the signal realization $x$, where the weights are proportional to prior precision and signal precision. Additionally, while the posterior mean depends on the signal realization, the posterior variance is a constant.

Fact \ref{fact:BiVar} is also sometimes written as:
\[(\theta \mid X=x) \sim \mathcal{N}\left(\left(\frac{\tau_\theta}{\tau_\theta + \tau_\eps}\right) \mu + \left(\frac{\tau_\eps}{\tau_\theta + \tau_\eps}\right)x\, , \, \frac{1}{\tau_\theta + \tau_\eps} \right)\]
where $\tau_\theta = 1/\sigma_\theta^2$ is the precision of the prior belief and $\tau_\eps = 1/\sigma_\eps^2$ is the precision of the signal. This restatement makes it  apparent that the posterior precision is the sum of the prior precision and signal precision. 

We can use Fact \ref{fact:BiVar} to derive the distribution of the posterior mean.

\begin{exercise}[U]  Let $\theta \sim \mathcal{N}(\mu,1)$ and define two signals
 \begin{align*}
 Y_1 & = \theta + \eps_1 \\
 Y_2 & = \theta + \eps_2
 \end{align*}
where $\theta$, $\eps_1$, and $\eps_2$ are all independent of one another, and $\eps_1 \sim\mathcal{N}(0,1)$ while $\eps_2\sim \mathcal{N}(0,2)$.
\begin{itemize}
\item[(a)] Solve for the conditional distributions $\theta \mid Y_1 = y$ and $\theta \mid Y_2 = y$.
\item[(b)] What are the values of $(\mu,y)$ for which it the case that $\mathbb{E}(\theta \mid Y_1 = y) > \mathbb{E}(\theta \mid Y_2 = y)?$ Provide intuition for the condition you derive.
\end{itemize}
\end{exercise} 

\begin{exercise}[U] Suppose we write the posterior belief as $\mathcal{N}(\hat{\mu}, \hat{\sigma}^2)$, where $\hat{\mu}$ is a random variable that depends on the realization of the signal $X$. Prove that
\[\hat{\mu} \sim \mathcal{N}\left(\mu, \sigma_\theta^2 - \frac{\sigma_\theta^2 \sigma_\eps^2}{\sigma_\theta^2 + \sigma_\eps^2}\right),\]
i.e. the expected posterior mean is the prior mean, and the variance of the posterior mean is equal to the prior variance ($\sigma_\theta^2$), reduced by the posterior variance, $\left(\frac{\sigma_\theta^2 \sigma_\eps^2}{\sigma_\theta^2 + \sigma_\eps^2}\right)$.\end{exercise}

\noindent This characterization implies that the more informative the signal is, the more variable the posterior mean is. 

\begin{remark} More generally (i.e., for $\theta$ and $X$ that are not necessarily normally-distributed), the law of total variance says that
\[\Var(\mathbb{E}[\theta \mid X]) = \Var(\theta) - \mathbb{E}[\Var(\theta \mid X) ]\]
so the variance of the posterior mean is equal to the difference of the prior variance and the expectation of the posterior variance. 
\end{remark}

Similar closed-forms exist for multivariate Gaussian states and signals. Suppose $Z$ is a $1\times K$ vector distributed according to $\mathcal{N}(\mu, \Sigma)$, where $\Sigma$ has full rank. Partition the vector as follows:
\[\left(\begin{array}{c}
Z_1 \\
Z_2 \end{array}\right) \sim \mathcal{N}\left(\left(\begin{array}{c}
\mu_1 \\ \mu_2 
\end{array}\right), \left(\begin{array}{cc}
\Sigma_{11} & \Sigma_{12} \\
\Sigma_{21} & \Sigma_{22}
\end{array}\right)\right)\]

\begin{fact} \label{fact:MultiVar} The conditional distribution of $Z_1$ given $Z_2=z_2$ is $\mathcal{N}(\hat{\mu},\widehat{\Sigma})$ where
\begin{align*}
\hat{\mu} & = \mu_1 + \Sigma_{12} \Sigma_{22}^{-1}(z_2 - \mu_2) \\
\widehat{\Sigma} &= \Sigma_{11} - \Sigma_{12} \Sigma_{22}^{-1} \Sigma_{21}
\end{align*}
\end{fact}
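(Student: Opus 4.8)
The plan is to use the \emph{Gaussian regression trick}: rather than manipulating the joint density directly, I will exhibit an affine function of $Z_1$ that is independent of $Z_2$, read off its unconditional law, and then undo the affine shift. Write $A := \Sigma_{12}\Sigma_{22}^{-1}$, which is well-defined because $\Sigma$ having full rank forces the principal submatrix $\Sigma_{22}$ to be positive definite, hence invertible. (If $\Sigma_{22}$ were singular, some nonzero vector $v$ would satisfy $v^\top \Sigma_{22} v = 0$, and padding $v$ with zeros would produce a nonzero null vector of $\Sigma$, a contradiction.) Define $W := Z_1 - A(Z_2 - \mu_2)$. Since $(W,Z_2)$ is obtained from $(Z_1,Z_2)$ by an affine map, it is jointly Gaussian.

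First I would compute the cross-covariance: $\mathrm{Cov}(W,Z_2) = \mathrm{Cov}(Z_1,Z_2) - A\,\mathrm{Cov}(Z_2,Z_2) = \Sigma_{12} - \Sigma_{12}\Sigma_{22}^{-1}\Sigma_{22} = 0$. Because $W$ and $Z_2$ are jointly Gaussian and uncorrelated, they are independent — this is the one external fact I would invoke, and it follows from the factorization of the joint characteristic function (or density). Consequently the conditional law of $W$ given $Z_2 = z_2$ does not depend on $z_2$ and equals the unconditional law of $W$, which is $\mathcal{N}\big(\mu_1,\ \Sigma_{11} - \Sigma_{12}\Sigma_{22}^{-1}\Sigma_{21}\big)$: here $\mathbb{E}[W] = \mu_1$ since $\mathbb{E}[Z_2 - \mu_2] = 0$, and $\mathrm{Var}(W)$ simplifies after expanding $\mathrm{Var}(Z_1) - A\,\mathrm{Cov}(Z_2,Z_1) - \mathrm{Cov}(Z_1,Z_2)\,A^\top + A\,\mathrm{Var}(Z_2)\,A^\top$ and using $\Sigma_{22}^\top = \Sigma_{22}$ and $\Sigma_{21} = \Sigma_{12}^\top$; the two middle terms and the last term each collapse to $\Sigma_{12}\Sigma_{22}^{-1}\Sigma_{21}$, leaving exactly $\Sigma_{11} - \Sigma_{12}\Sigma_{22}^{-1}\Sigma_{21}$.

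To finish, I would note that conditioning on $Z_2 = z_2$ makes $Z_1 = W + A(z_2 - \mu_2)$ a deterministic translate of $W$, so $Z_1 \mid Z_2 = z_2$ is Gaussian with mean $\mathbb{E}[W] + A(z_2 - \mu_2) = \mu_1 + \Sigma_{12}\Sigma_{22}^{-1}(z_2 - \mu_2)$ and covariance $\mathrm{Var}(W) = \Sigma_{11} - \Sigma_{12}\Sigma_{22}^{-1}\Sigma_{21}$, which is precisely the claimed pair $(\hat{\mu}, \widehat{\Sigma})$.

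The only genuine subtlety is the implication ``jointly Gaussian and uncorrelated $\Rightarrow$ independent,'' together with the bookkeeping that collapses the four covariance terms into $\widehat{\Sigma}$; everything else is routine linear algebra. An alternative route would be to complete the square in the exponent of the joint density $\propto \exp\big(-\tfrac12 (z-\mu)^\top \Sigma^{-1}(z-\mu)\big)$ using the block-inverse formula for $\Sigma^{-1}$ (whose lower-right block involves the Schur complement $\Sigma_{11} - \Sigma_{12}\Sigma_{22}^{-1}\Sigma_{21}$); this yields the same answer but relocates the main obstacle into recalling or deriving the block-inverse identity, which I find less transparent than the regression argument above.
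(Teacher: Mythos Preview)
Your proof is correct. The paper states this result as a \emph{Fact} without proof (it is presented as a standard formula for Bayesian updating with multivariate Gaussians, to be applied rather than derived), so there is no ``paper's own proof'' to compare against. Your regression-trick argument---defining $W = Z_1 - \Sigma_{12}\Sigma_{22}^{-1}(Z_2-\mu_2)$, verifying $W$ and $Z_2$ are uncorrelated and hence independent, and then reading off the conditional law from the unconditional law of $W$---is the standard clean derivation, and your bookkeeping (the invertibility of $\Sigma_{22}$, the collapse of the four covariance terms, the use of $\Sigma_{21}=\Sigma_{12}^\top$) is all accurate.
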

\noindent Again, the posterior mean depends on the signal realization, but the posterior covariance matrix does not.

\begin{example} Let
$\left(\begin{array}{c}
Z_1 \\
Z_2 \end{array}\right) \sim \mathcal{N}\left(\left(\begin{array}{c}
\mu_1 \\ \mu_2 
\end{array}\right), \left(\begin{array}{cc}
\sigma_1^2 & \rho \sigma_1 \sigma_2 \\
\rho \sigma_1 \sigma_2 & \sigma_2^2
\end{array}\right)\right)$.
Then $(Z_1 \mid Z_2 = z_2) \sim \mathcal{N}(\hat{\mu}, \widehat{\Sigma})$ where
\begin{align*}
\hat{\mu} & = \mu_1 + \rho \frac{\sigma_1}{\sigma_2} (z_2 - \mu_2) \\
\widehat{\Sigma} & = \sigma_1^2 (1-\rho^2)
\end{align*}

\end{example}

\begin{exercise}[U] Let $Z_1 = \theta$ and $Z_2 = X$ where $\theta$ and $X$ are as defined at the beginning of this section. Show that Fact \ref{fact:MultiVar} implies Fact \ref{fact:BiVar}.
\end{exercise}

\noindent Sections \ref{sec:CareerConcerns}-\ref{sec:DataSharing} demonstrate three applications of these Bayesian updating formulae.

\subsection{Application 1: Career Concerns} \label{sec:CareerConcerns}

Our first application is solving the two-period version of \citet{Holmstrom:REStud1999} model of career concerns.

There is a single agent and a manager. The agent has a type $\theta \sim \mathcal{N}(\mu, \sigma_\theta^2)$ that is unknown to both the agent and the manager. In period 1, the agent chooses an effort level $a \in \mathbb{R}_+$ at cost $c(a) = \frac12 a^2$. This effort is not observed by the manager. The agent's type and effort jointly determine the realization of a performance signal
\begin{equation} \label{def:PerformanceSignal}
X = \theta + a + \varepsilon
\end{equation}
where $\theta \indep \eps$ and $\varepsilon \sim \mathcal{N}(0,\sigma_\eps^2)$. 
In period 2, the manager observes the realization of $X$ and forms an expectation about the agent's type.
Since the manager does not observe $a$, this expectation is taken with respect to the manager's possibly misspecified perception about the distribution of $X$ (more soon).  The agent receives the manager's expectation of his type.

For arbitrary $a \in \mathbb{R}_+$, write $\mathbb{E}^a(\theta \mid X)$ for the conditional expectation of $\theta$ with respect to $X= \theta+a+\varepsilon$. If the manager expects the agent to choose effort $a^*$ while the agent in fact chooses effort $a$, then the agent's total expected payoff is
\[\mathbb{E}^a[\mathbb{E}^{a^*}(\theta \mid X)] - c(a),\]
where the inner expectation $\mathbb{E}^{a^*}(\theta \mid X)$ is the manager's expectation of the agent's type, and $\mathbb{E}^a[\mathbb{E}^{a^*}(\theta \mid X)] $ is the agent's expectation of the manager's expectation.

\begin{claim} \label{claim:Holmstrom} There is a unique equilibrium in which the agent chooses effort 
$a^* = \frac{\sigma_\theta^2}{\sigma_\theta^2 + \sigma_\eps^2}$.
\end{claim}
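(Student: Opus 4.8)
The plan is to reduce the fixed-point problem to a single one-variable optimization by computing the agent's objective explicitly as a function of his own effort $a$ and the manager's conjectured effort $a^*$, using the Gaussian updating formula of Fact \ref{fact:BiVar}. The punchline to aim for is that the agent's best response turns out not to depend on $a^*$ at all, so both existence and uniqueness of equilibrium become immediate.

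First I would apply Fact \ref{fact:BiVar}. A manager who conjectures effort $a^*$ treats $X - a^* = \theta + \varepsilon$ as a Gaussian signal of $\theta \sim \mathcal{N}(\mu,\sigma_\theta^2)$ with noise variance $\sigma_\eps^2$, so upon observing $X = x$ her posterior mean is
\[
\mathbb{E}^{a^*}(\theta \mid X = x) = \frac{\sigma_\eps^2}{\sigma_\theta^2 + \sigma_\eps^2}\,\mu + \frac{\sigma_\theta^2}{\sigma_\theta^2 + \sigma_\eps^2}\,(x - a^*).
\]
Next, since the agent actually exerts effort $a$, under his beliefs $X = \theta + a + \varepsilon$ and hence $\mathbb{E}^a[X] = \mu + a$; taking the agent's expectation of the (affine) expression above and using linearity,
\[
\mathbb{E}^a\!\left[\mathbb{E}^{a^*}(\theta \mid X)\right] = \frac{\sigma_\eps^2}{\sigma_\theta^2 + \sigma_\eps^2}\,\mu + \frac{\sigma_\theta^2}{\sigma_\theta^2 + \sigma_\eps^2}\,(\mu + a - a^*).
\]

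Then I would substitute into the agent's payoff $\mathbb{E}^a[\mathbb{E}^{a^*}(\theta\mid X)] - \tfrac12 a^2$ and collect the terms that depend on $a$: the agent solves $\max_{a \ge 0}\ \frac{\sigma_\theta^2}{\sigma_\theta^2+\sigma_\eps^2}\,a - \tfrac12 a^2$, everything else being a constant in $a$ (involving $\mu$ and $a^*$). This objective is strictly concave, and its unconstrained maximizer $\frac{\sigma_\theta^2}{\sigma_\theta^2+\sigma_\eps^2} \in (0,1)$ is strictly positive and therefore feasible, so the agent's unique best response is $a = \frac{\sigma_\theta^2}{\sigma_\theta^2+\sigma_\eps^2}$, independently of $a^*$. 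Equilibrium requires the manager's conjecture to be correct, i.e.\ $a^* = a$, and the displayed value is the only fixed point; hence there is a unique equilibrium and in it the agent chooses $a^* = \frac{\sigma_\theta^2}{\sigma_\theta^2+\sigma_\eps^2}$.

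There is no serious obstacle; the one thing to be careful about is keeping the manager's conjectural operator $\mathbb{E}^{a^*}$ distinct from the agent's true operator $\mathbb{E}^a$ and applying Fact \ref{fact:BiVar} to the correctly shifted signal $X - a^*$. The point worth flagging is economic rather than technical: the marginal return to effort is exactly the weight $\frac{\sigma_\theta^2}{\sigma_\theta^2+\sigma_\eps^2}$ that the manager's posterior mean places on the signal, and because that weight does not depend on the conjecture, the usual best-response fixed-point argument collapses to a single equation.
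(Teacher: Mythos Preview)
Your proposal is correct and follows essentially the same route as the paper: apply Fact~\ref{fact:BiVar} to compute the manager's posterior mean, take the agent's expectation, and observe that the resulting objective in $a$ is strictly concave with maximizer $\frac{\sigma_\theta^2}{\sigma_\theta^2+\sigma_\eps^2}$ independent of $a^*$. Your presentation is in fact slightly more explicit than the paper's about why the best response being constant in $a^*$ delivers both existence and uniqueness, and about feasibility of the unconstrained optimum.
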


\begin{corollary}
Equilibrium effort $a^*$ is decreasing in $\sigma_\eps^2$ (i.e., it is less valuable to manipulate a noisier signal) and is increasing in $\sigma_\theta^2$ (i.e., it is more valuable to manipulate information about a more uncertain unknown).
\end{corollary}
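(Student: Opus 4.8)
The plan is to treat the corollary as a direct consequence of the closed-form expression $a^* = \frac{\sigma_\theta^2}{\sigma_\theta^2 + \sigma_\eps^2}$ established in Claim~\ref{claim:Holmstrom}, and simply verify the two monotonicity claims by differentiation (or by an equivalent algebraic rewriting). Since the equilibrium effort is given explicitly, no equilibrium analysis is needed here; everything reduces to the behavior of a smooth function of the two variance parameters on the domain $\sigma_\theta^2, \sigma_\eps^2 > 0$.

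First I would establish monotonicity in $\sigma_\eps^2$. Holding $\sigma_\theta^2$ fixed and writing $a^*$ as a function of $s := \sigma_\eps^2$, I would compute
\[
\frac{\partial a^*}{\partial s} = \frac{\partial}{\partial s}\left(\frac{\sigma_\theta^2}{\sigma_\theta^2 + s}\right) = -\frac{\sigma_\theta^2}{(\sigma_\theta^2 + s)^2} < 0,
\]
where the strict inequality uses $\sigma_\theta^2 > 0$ and that the denominator is a positive square. This shows $a^*$ is strictly decreasing in $\sigma_\eps^2$, confirming the first claim.

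Next I would establish monotonicity in $\sigma_\theta^2$. The cleanest route is to rewrite $a^* = \frac{\sigma_\theta^2}{\sigma_\theta^2 + \sigma_\eps^2} = \frac{1}{1 + \sigma_\eps^2/\sigma_\theta^2}$; since $\sigma_\eps^2/\sigma_\theta^2$ is strictly decreasing in $\sigma_\theta^2$ (for $\sigma_\eps^2 > 0$), the denominator $1 + \sigma_\eps^2/\sigma_\theta^2$ is strictly decreasing and hence $a^*$ is strictly increasing in $\sigma_\theta^2$. Equivalently, holding $\sigma_\eps^2$ fixed and writing $t := \sigma_\theta^2$, one checks
\[
\frac{\partial a^*}{\partial t} = \frac{(t + \sigma_\eps^2) - t}{(t + \sigma_\eps^2)^2} = \frac{\sigma_\eps^2}{(t + \sigma_\eps^2)^2} > 0,
\]
again strict because $\sigma_\eps^2 > 0$. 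This confirms the second claim.

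I do not anticipate a genuine obstacle, since the result is an immediate corollary of an explicit formula; the only thing to be careful about is the strictness of the inequalities, which relies precisely on the standing assumption $\sigma_\theta^2, \sigma_\eps^2 > 0$ (if either variance were permitted to be zero, the corresponding derivative would vanish and monotonicity would degrade to weak monotonicity). I would close by noting the economic interpretation already supplied in the statement: the $\sigma_\eps^2$ derivative captures that manipulating a noisier signal is less rewarding, while the $\sigma_\theta^2$ derivative captures that effort is more valuable when the underlying type is more uncertain, since the signal then receives more weight in the manager's inference.
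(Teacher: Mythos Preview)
Your proposal is correct and matches the paper's approach: the paper states the corollary immediately after the closed-form $a^* = \frac{\sigma_\theta^2}{\sigma_\theta^2 + \sigma_\eps^2}$ and provides no separate proof, treating both monotonicity claims as self-evident from the formula. Your differentiation argument simply makes explicit what the paper leaves implicit.
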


We'll now prove Claim \ref{claim:Holmstrom}. Equilibrium effort $a^*$ must satisfy the first-order condition
\begin{equation} \label{eq:FOC}
\left.\frac{\partial \mathbb{E}^a[\mathbb{E}^{a^*}(\theta \mid X)]}{\partial a}\right|_{a = a^*}  = a^*
\end{equation}
equating the marginal value of increasing effort (over $a^*$) to the marginal cost of increasing effort (over $a^*$). Applying Fact \ref{fact:BiVar}, the manager's expectation of $\theta$ with respect to the de-biased signal $X-a^* = \theta +\eps$ is
\[
\mathbb{E}^{a^*}(\theta \mid X) = \frac{\sigma_\theta^2}{\sigma_\theta^2 + \sigma_\eps^2} (X-a^*) + \frac{\sigma_\eps^2}{\sigma_\theta^2 + \sigma_\eps^2} \mu
\]
The agent's expectation of this expectation (with respect to $X=\theta+a+\eps$) is 
\begin{align*}
\mathbb{E}^a\left[\mathbb{E}^{a^*}(\theta \mid X)\right] 
 = \mu + \frac{\sigma_\theta^2}{\sigma_\theta^2 + \sigma_\eps^2} (a-a^*)
\end{align*}
So (\ref{eq:FOC}) implies that equilibrium effort is $a^* = \frac{\sigma_\theta^2}{\sigma_\theta^2 + \sigma_\eps^2}$. (Uniqueness follows from strict concavity of the agent's payoff function.)

\begin{exercise}[U]
Consider the model described in this section, and set $\sigma_\theta^2 = \sigma_\eps^2 = 1$. 
\begin{itemize}
\item[(a)] Suppose that in addition to the worker's performance signal, the firm (through collection of additional data about the worker's type) is able to separately observe a signal
\[S=\theta + \delta\]
where $\theta$ and $\delta$ are jointly normal and independent, and $\delta \sim \mathcal{N}(0,1)$. The firm's expectation about the worker's type is based both on $S$ as well as on the worker's performance signal $X$ (as defined in (\ref{def:PerformanceSignal})). Solve for the worker's equilibrium action and compare it with the previous solution $a^* = \frac{\sigma_\theta^2}{\sigma_\theta^2 + \sigma_\eps^2}$. Does the worker exert more or less effort in equilibrium? Provide intuition for your result.
\item[(b)] Suppose that the firm instead acquires data that allows it to more accurately monitor the performance shocks that the worker experiences (e.g., whether the worker had a rough day, or had help at work). Formally, the firm observes
\[S=\varepsilon + \delta\]
where $\varepsilon$ and $\delta$ are jointly normal and independent, and $\delta \sim \mathcal{N}(0,1)$.  The firm's expectation about the worker's type is based both on $S$ as well as on the worker's performance signal $X$ (as defined in (\ref{def:PerformanceSignal})).  Solve for the worker's equilibrium action and compare it with the previous solution $a^* = \frac{\sigma_\theta^2}{\sigma_\theta^2 + \sigma_\eps^2}$. Does the worker exert more or less effort in equilibrium? Provide intuition for your result.
\end{itemize}

\end{exercise}

\begin{exercise}[G] Consider a variation on \citet{Holmstrom:REStud1999}'s career concerns model, in which the type $\theta$ and noise term $\eps$ are correlated. Specifically, the type is decomposed as $\theta = \theta_1 + \theta_2$, the signal is $X= \theta + \varepsilon + a$, and we suppose that
 \begin{align*}
 \theta_2 = \alpha \theta_1 + z \\
  \eps = \beta \theta_1 + w
  \end{align*} where
$\alpha,\beta \in \mathbb{R}$ are known constants, and $\theta_1 \sim \mathcal{N}(\mu_\theta, \sigma_\theta^2)$, $z\sim \mathcal{N}(0,\sigma_z^2)$, and $w\sim \mathcal{N}(0,\sigma_w^2)$ are mutually independent and unknown to both the agent and the manager. 
\begin{itemize}
\item[(a)] Solve for equilibrium effort. How does this compare to Claim \ref{claim:Holmstrom} in the special case $\alpha=\beta=0$?
\item[(b)] Suppose $\alpha,\beta>0$. How does equilibrium effort change in the parameters $\alpha$ and $\beta$? Provide intuition.
\end{itemize}
\end{exercise}

\subsection{Application 2: Linear-Quadatic Coordination Games}

Our second application is solving for equilibrium in a two-agent linear-quadratic coordination game \citep{MorrisShin}.

Let $\theta\sim\mathcal{N}(\mu,\sigma_\theta^2)$ be an unknown state. Each agent $i=1,2$ receives a private signal about the state
\[X_i = \theta + \eps_i\]
where $\eps_i \sim \mathcal{N}(0,\sigma_\eps^2)$ is independent of the state and across agents. Each agent chooses an action $a_i \in \mathbb{R}$ given their signal realization $x_i$. Agent $i$'s payoff is
\[U_i(a_1, a_2) = - (1-\beta) (a_i - \theta)^2 - \beta (a_i - a_j)^2\]
where $\beta \in (0,1)$ controls how much the agent cares about matching the state versus matching the other agent's action.

We'll solve for a symmetric linear Bayesian Nash equilibrium $(a^*_1, a^*_2)$ in which each agent's strategy satisfies \begin{equation} \label{eq:Eq}
a_i^*(x_i) = c x_i + \kappa 
\end{equation}
for some constants $c, \kappa \in \mathbb{R}$. Let's first conjecture that such an equilibrium exists. Given agent $j$'s strategy $a_j(x_j) = c x_j + \kappa$, agent $i$'s expected payoff (conditional on $X_i=x_i$) is
\[\mathbb{E}[ -(1-\beta)(a_i - \theta)^2 - \beta(a_i - (cX_j + \kappa))^2 \mid X_i=x_i] \]
Taking a derivative with respect to $a_i$, agent $i$'s best reply is
\[ a_i^*(x_i) = (1-\beta) \mathbb{E}(\theta \mid X_i=x_i) + \beta ( c\mathbb{E}( \theta \mid X_i=x_i) + \kappa ).\]
Plugging in the expression for $\mathbb{E}(\theta \mid X_i = x_i)$ from Fact \ref{fact:BiVar}, and matching coefficients with (\ref{eq:Eq}), we have $
c = \frac{\sigma_\theta^2 (1-\beta)}{\sigma_\eps^2 + \sigma_\theta^2(1-\beta)}$ and $\kappa = \frac{\sigma_\eps^2}{\sigma_\eps^2+\sigma_\theta^2(1-\beta)} \mu$. Thus a symmetric linear equilibrium exists in which each agent $i$ chooses
\begin{equation} \label{eq:EqAction}
a_i^*(x_i) = \frac{\sigma_\theta^2 (1-\beta)}{\sigma_\eps^2 + \sigma_\theta^2(1-\beta)}  x_i +  \frac{\sigma_\eps^2}{\sigma_\eps^2+\sigma_\theta^2(1-\beta)} \mu
\end{equation}
\citet{MorrisShin} further show that this is the unique pure-strategy equilibrium.

Suppose we interpret the common prior $\mathcal{N}(\mu, \sigma_\theta^2)$ as informed by a public signal, where a more informative signal implies a smaller $\sigma_\theta^2$. Then we see from (\ref{eq:EqAction}) that the more informative the public signal is, the less weight agents place on their private signal.

\subsection{Application 3: Data Sharing} \label{sec:DataSharing}
Our final application is an example from \citet{AcemogluMakhdoumiMalekianOzdaglar} regarding why  online platforms don't compensate users for the data that they give up.

There is a single platform and two agents $i=1,2$ with types distributed 
\[\begin{pmatrix} 
\theta_1 \\ \theta_2
\end{pmatrix} \sim \mathcal{N} \left( \begin{pmatrix} 
0 \\ 0
\end{pmatrix}, \begin{pmatrix} 
1 & \rho \\ \rho & 1
\end{pmatrix}\right) \]
Each agent $i$ privately observes the realization of a signal $X_i = \theta_i + \eps_i$, where $\eps_i \sim \mathcal{N}(0,1)$ is independent across agents and independent of both types.

 The platform chooses a payment $p_i$ to offer to each agent $i$ for sharing their data. After receiving these offers, each agent $i$ chooses whether to share $(a_i=1)$ or withhold $(a_i=0)$ their signal realization. Write $X_{\bold{a}}$ for the signals shared under action profile $\bold{a}=(a_1,a_2)$. For example, if $\bold{a}=(1,0)$, then $X_{\bold{a}} = X_1$, while if $\bold{a}=(1,1)$, then $X_{\bold{a}}=(X_1,X_2)$.

Each agent $i$'s payoff is determined by the platform's posterior uncertainty about his type, a privacy parameter $v \in \mathbb{R}_+$, and his payment via
\[u_i(\bold{a},\bold{p})= v \cdot \Var(\theta_i \mid X_{\bold{a}}) + p_i \cdot \mathbbm{1}(a_i = 1). \]
The platform's payoff is $u_P(\bold{a},\bold{p})=-u_1(\bold{a},\bold{p})-u_2(\bold{a},\bold{p})$. So the agents prefer for the platform to be more uncertain about their types, while the platform prefers to be less uncertain.

We'll show that when agent types are sufficiently correlated, i.e., $\rho$ is large, then the platform can induce both agents to share their data at a lower total payment than what is required to induce exactly one agent to share.

Let's first solve for payment vectors $(p_1,p_2)$ given which it is an equilibrium for both agents to share their signals. Suppose agent $j$ chooses to share. Then if agent $i$ does not share, the platform's belief about $\theta_i$ is updated only to $X_j$. Since
\[\begin{pmatrix} 
\theta_i \\ X_j
\end{pmatrix} \sim \mathcal{N} \left( \begin{pmatrix} 
0 \\ 0
\end{pmatrix}, \begin{pmatrix} 
1 & \rho \\ \rho & 2
\end{pmatrix}\right) \]
the platform's posterior variance of $\theta_i$ is $1-\rho^2/2$ (by Fact \ref{fact:MultiVar}). So agent $i$'s payoff is
$
v \cdot \left(1-\rho^2/2\right)
$. 
If agent $i$ does share, then beliefs about $\theta_i$ are further updated to the signal 
$X_i$, and (by Fact \ref{fact:BiVar}) the platform's posterior variance of $\theta_i$ reduces to $\frac{2-\rho^2}{4-\rho^2}$. So agent $i$'s payoff is 
$
v \cdot \left(\frac{2-\rho^2}{4-\rho^2}\right) + p_i.
$ Thus, 
agent $i$'s best reply to $a_j=1$ is to share if and only if
\[p_i \geq v\cdot \left( \frac{(2-\rho^2)^2}{2(4-\rho^2)}\right)\]
and the action profile $(a_1,a_2)=(1,1)$ is an equilibrium if the above display holds for both agents $i$. The minimum total payment is twice the right-hand-side, i.e., $v\cdot \left( \frac{(2-\rho^2)^2}{(4-\rho^2)}\right)$.

Let's now solve for payment vectors $(p_1,p_2)$ given which it is an equilibrium for exactly one agent to share his data. Without loss, fix $a_2=0$. If agent 1 chooses $a_1=0$, then the platform's uncertainty about $\theta_1$ is its prior uncertainty, 1, so agent $1$'s payoff is $v$. If agent $1$  chooses $a_1=1$, then the platform's belief about $\theta_1$ updates to the signal 
$X_1$. Applying Fact \ref{fact:BiVar}, the platform's posterior variance about $\theta_1$ is $1/2$ and so agent $1$'s payoff is $v \cdot (1/2) + p_1.$ Thus, $a_1=1$ is a best reply to $a_2=0$ if and only if $p_1 \geq v/2.$
So the platform can induce (exactly) one agent to share if it offers one agent a payment of at least $v/2$ (which is accepted) and another a payment of no more than $v/2$ (which is rejected), at a total payment of $v/2$.

When $\rho^2 \geq \frac{7-\sqrt{17}}{4} \approx 0.71$, then $v \cdot \left( \frac{(2-\rho^2)^2}{(4-\rho^2)}\right) < v/2$, so the platform pays less to induce two users share than one. Intuitively, each agent's choice to share their data exerts a negative externality on other agent: When both users share, each of their signals is less valuable in view of the signal revealed by the other. Agents paid their marginal value thus receive lower compensation, and in a limiting version of this model with a growing number of agents, the amount of compensation needed to induce all agents to share vanishes to zero.

\section{Additional Exercises}

\begin{exercise}[U] Suppose $\theta \sim \mathcal{N}(0, \sigma_\theta^2)$ and 
\begin{align*}
Y_1 & = \theta + b + \eps_1 \\
Y_2 &  = b + \eps_2
\end{align*}
where $\theta$, $b$, $\eps_1$, and $\eps_2$ are all independent of one another, $b \sim \mathcal{N}(0, \sigma_b^2)$, $\eps_1 \sim \mathcal{N}(0,\sigma_1^2)$, and $\eps_2 \sim \mathcal{N}(0, \sigma_2^2)$. We can interpret $Y_1$ as a biased signal about $\theta$ and $Y_2$ as a signal about the size of the bias.

Your friend says: ``The only value of $Y_1$ and $Y_2$ for learning about $\theta$ is to provide information about the size of $b$. Since $Y_1 - Y_2$ is an unbiased signal about $b$, it is equally valuable to learn the outcome of $Y_1-Y_2$ as it is to learn the pair of signals $(Y_1,Y_2)$."

Show that your friend is wrong: The distribution of $\theta \mid Y_1, Y_2$ is different from the distribution of
$\theta \mid Y_1 - Y_2$. Also provide an intuition explaining to your friend the error in their reasoning.

\end{exercise}

\begin{exercise} Consider the two-player game described in Section \ref{sec:DataSharing}, but suppose that the types are distributed
\[\left(\begin{array}{c}
X_1 \\
X_2 \end{array}\right) \sim \mathcal{N} \left(\left( \begin{array}{c}
0 \\
0
\end{array} \right), \left(\begin{array}{cc}
2 & 1 \\
1 & 1 
\end{array}\right)\right)\]
As in Section \ref{sec:DataSharing}, let $a_1$ and $a_2$ denote the actions of players 1 and 2, where an action of `0' means that the player does not share their data, while `1' means that they do. \begin{itemize}
\item[(a)] Suppose player 2 chooses $a_2 = 0$. What is player 1's payoff from choosing $a_1=0$ and what is player 1's payoff from choosing $a_1 = 1$? Provide a condition that characterizes when it is the case that player 1's best reply is to share data (i.e., $a_1=1$). 
\item[(b)] Suppose player 2 chooses $a_2 = 1$. What is player 1's payoff from choosing $a_1=0$ and what is player 1's payoff from choosing $a_1 = 1$? Provide a condition that characterizes when it is the case that player 1's best reply is to share data (i.e., $a_1=1$).
\item[(c)] Suppose player 1 chooses $a_1 = 0$. What is player 2's payoff from choosing $a_2=0$ and what is player 2's payoff from choosing $a_2 = 1$? Provide a condition that characterizes when it is the case that player 2's best reply is to share data (i.e., $a_2=1$). 
\item[(d)] Suppose player 1 chooses $a_1 = 1$. What is player 1's payoff from choosing $a_2=0$ and what is player 2's payoff from choosing $a_2 = 1$? Provide a condition that characterizes when it is the case that player 2's best reply is to share data (i.e., $a_2=1$). 
\item[(e)] Suppose $v=1$. For what set of values of $(p_1,p_2)$ is it the case that:
\begin{itemize}
\item[(i)] $(a_1,a_2)=(1,1)$ is a Nash equilibrium?
\item[(ii)] $(a_1,a_2)=(1,0)$ is a Nash equilibrium?
\item[(iii)] $(a_1,a_2)=(0,1)$ is a Nash equilibrium?
\item[(iv)] $(a_1,a_2)=(0,0)$ is a Nash equilibrium?
\end{itemize}
\item[(f)] Again let $v=1$. What is the smallest total payment the firm must make to induce an equilibrium where both players share their data? (That is, what is the smallest sum $p_1 + p_2$ such that $(a_1,a_2)=(1,1)$ is an equilibrium given the payment profile $(p_1,p_2)$?) Comment on whether it is the case that one player receives the higher payment, and why this answer makes sense. 
\item[(g)] Again let $v=1$. Suppose there is a firm 1 and firm 2, where firm 1 only interacts with player 1, and firm 2 only interacts with player 2. What is the smallest amount $p_1$ that firm 1 must pay to induce player 1 to choose $a_1=1$? What is the smallest amount $p_2$ that firm 2 must pay to induce player 2 to choose $a_2=2$? Compare the sum of these values $p_1 + p_2$ to your answer from part (f).
\end{itemize}

\end{exercise}

\begin{exercise}[G] \label{ex:Average} Suppose $\theta$ is normally distributed. For each $i=1, \dots, n$, let $X_i = \theta + \eps_i$ where $\eps_i$ is independent of $\theta$, the vector $(\eps_1, \dots, \eps_n)$ is jointly normal, and the signals $X_1, \dots, X_n$ are exchangeable. Define $\overline{X} = \frac1n (X_1 + \dots + X_n)$. Prove that $\theta \mid X_1, \dots, X_n$ is identical in distribution to $\theta \mid \overline{X}$. 

\begin{hint*} Recall that $\mathbb{E}(\theta \mid X)$ minimizes    $\mathbb{E}[ (\hat{\theta} - \theta)^2]$ among all $\sigma(X)$-measurable random variables $\hat{\theta}$.

\end{hint*}

\end{exercise}

\begin{exercise}[G] Consider two processes of social learning about an unknown state $\theta \sim \mathcal{N}(0, 1)$. \\

\textbf{Scenario 1:} At $t=0$, a single agent privately observes the signal 
\[Y = \theta + \delta, \quad \delta \sim \mathcal{N}(0,1/\tau)\]
where $\theta$ and $\delta$ are independent of one another, and the precision $\tau \in \mathbb{R}_+$ is a known constant. The agent chooses an action $y$ and receives the payoff $-\mathbb{E}[(y-\theta)^2]$. At $t=1$, each of $n$ agents, indexed by $i$, privately observes a signal 
\[X_i=\theta + \varepsilon_i, \quad \varepsilon_i \sim \mathcal{N}(0,1)\]
as well as the action $y$ of the first agent. The error terms $\varepsilon_i$ are independent across agents, and independent of $\theta$ and $\delta$. Each agent $i$ from this generation then takes an action $a_i$ to maximize the payoff $-\mathbb{E}[(a_i - \theta)^2]$. At $t=2$, you arrive, observe the actions $(a_1, \dots, a_n)$ of the preceding generation (but \emph{not} the action of the first agent), and choose an action $a^*$ with payoff $-\mathbb{E}[(a^* - \theta)^2]$. \\

\textbf{Scenario 2:} At $t=1$, each of $m$ agents observes a private signal 
\[Z_i=\theta + \eta_i, \quad \eta_i \sim \mathcal{N}(0,1)\]
where the error terms $\eta_i$ are independent across agents and of $\theta$. Each agent $i$ takes an action $b_i$ with payoff $-\mathbb{E}[(b_i - \theta)^2]$. At $t=1$, you arrive, observe the actions $(b_1, \dots, b_m)$ of the preceding generation, and choose an action $a^*$ with payoff $-\mathbb{E}[(a^* - \theta)^2]$. \\

Characterize the function $h(n)$ such that your expected payoff is higher in scenario 1 if and only if $m<h(n)$. As clearly as you can, write out an intuition for this result.

\begin{hint*} Use the fact given in Exercise \ref{ex:Average}.
\end{hint*}

\end{exercise}

\chapter{Properties of Information}

Many economic settings involve an unknown type or quality $\theta$ and a signal $X$ about $\theta$, where both $\theta$ and $X$ are ordered (i.e., there are ``better" qualities $\theta$ and ``higher" signal realizations $X$). In these settings, we might think that higher realizations of $X$  are good news about $\theta$---for example, that higher test scores suggest higher ability or that better reviews for a product suggest higher quality. These positive inferences are not in general justified, requiring assumptions on the joint distribution of $(\theta,X)$.

Section \ref{sec:DefinePD} presents three useful definitions of positive dependence between random variables, which are applied to our motivating problem (inference about $\theta$ from observation of a signal $X$) in Section \ref{sec:Relationship}. Section \ref{sec:LL} presents an example of the kind of counterintuitive result that can obtain when these properties are not imposed on the informational environment.
 
\section{Definitions} \label{sec:DefinePD}

\subsection{Monotone Likelihood Ratio Property}

Consider two random variables $Z$ and $\widetilde{Z}$ with distributions $F$ and $\widetilde{F}$ that admit densities $f$ and $\tilde{f}$. To simplify exposition, all densities mentioned in this chapter are assumed to be everywhere positive.

\begin{definition} \label{def:LRDominance} The distribution $F$ \emph{likelihood-ratio dominates} the distribution $\widetilde{F}$ if\footnote{The assumption that densities are everywhere strictly positive allows us to define the monotone likelihood ratio property in terms of likelihood ratios. More generally, we can consider a distribution $F$ to likelihood-ratio dominate another distribution $\widetilde{F}$ if $f(z)\tilde{f}(z') \geq f(z')\tilde{f}(z)$ for all $z > z'$.} 
\[\frac{f(z)}{f(z')} \geq \frac{\widetilde{f}(z)}{\widetilde{f}(z')} \quad \quad \mbox{for all $z > z'$}\]
\end{definition}
\noindent Intuitively, moving up in the likelihood-ratio dominance order renders higher realizations of $z$ more likely relative to lower realizations.

This definition is often specialized to conditional densities in the following way. Suppose $\theta$ and $X$ are real-valued random vectors defined on the same probability space with densities $f_\theta$ and $f_X$ and conditional densities $f_{\theta \mid X}$ and $f_{X \mid \theta}$.

\begin{definition} \label{def:MLRP} The family of conditional densities $\{f_{X \mid \theta}(\cdot \mid \theta)\}_{\theta \in \Theta}$ have the \emph{monotone likelihood ratio property} (MLRP) if for every $x>x'$ and $\theta > \theta'$, \begin{equation} \label{eq:MLRP}
\frac{f_{X\mid \theta} (x\mid \theta)}{f_{X\mid \theta}(x' \mid \theta)}  \geq \frac{f_{X\mid \theta}(x \mid \theta')}{f_{X\mid \theta}(x' \mid \theta')}.
\end{equation}
If the inequality above holds strictly at every $x>x'$, then we say that $\{f_{X \mid \theta}(\cdot \mid \theta)\}$ have the \emph{strict} monotone likelihood ratio property. 
\end{definition}

\begin{remark} If $\{f_{X \mid \theta}(\cdot \mid \theta)\}$ satisfy MLRP, then $\{f_{\theta \mid X}(\cdot \mid X)\}$ also satisfy MLRP. To see this, observe that by Bayes' rule, (\ref{eq:MLRP}) can be rewritten
\[ \frac{f_{\theta \mid X} (\theta\mid x) f_X(x)}{f_\theta(\theta)} \frac{f_\theta(\theta)}{f_{\theta \mid X}(\theta\mid x') f_X(x')}  \geq \frac{f_{\theta \mid X}(\theta'\mid x)f_X(x)}{f_\theta(\theta')} \frac{f_\theta(\theta')}{f_{\theta \mid X}(\theta'\mid x')f(x')}\]
which simplifies to the condition that $\{f_{\theta \mid X}(\cdot \mid X)\}$ have the monotone likelihood ratio property.
\end{remark}

\medskip

In the special case of an additive signal $X = \theta + \eps$, where $\eps$ is independent of $\theta$ and has density $f_\eps$,
\[\frac{f_{\theta\mid X} ( \theta \mid x)}{ f_{\theta \mid X}( \theta' \mid x)} = \frac{f_\eps(x-\theta)}{f_\eps(x-\theta')}\]
so the MLRP condition in (\ref{eq:MLRP}) becomes
\[\frac{f_\eps(x-\theta)}{f_\eps(x'-\theta)} \geq \frac{f_\eps(x-\theta')}{f_\eps(x'-\theta')} \quad \quad \mbox{for every $x>x'$ and $\theta > \theta'$},\]
i.e., for every $\theta > \theta'$, the function $\frac{f_\eps(x-\theta)}{f_\eps(x-\theta')}$ is nondecreasing in $x$. It turns out that this is precisely the condition that $f_\eps$ is log concave.

\begin{definition} A function $f$ that maps a convex set into the positive reals is \emph{log-concave} if the function $\ln f$ is concave. 
\end{definition}

\begin{proposition}[\citet{SaumardWellner}] A density function $f$ on $\mathbb{R}$ is log-concave if and only if for every $\theta > \theta'$, the ratio
$\frac{f(x-\theta)}{f(x-\theta')}$
is a non-decreasing function of $x$.
\end{proposition}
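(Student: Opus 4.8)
The plan is to take logarithms, absorb the two parameters into a single shift, and then invoke the classical characterization of concavity through non-increasing difference functions. Write $g := \ln f$; this is a well-defined real-valued function on $\mathbb{R}$ because $f$ is everywhere strictly positive (the standing assumption of this chapter). Since $\ln$ is a strictly increasing continuous bijection of $(0,\infty)$ onto $\mathbb{R}$, the statement ``$f$ is log-concave'' is by definition ``$g$ is concave,'' and for any fixed $\theta>\theta'$ the ratio $\dfrac{f(x-\theta)}{f(x-\theta')}$ is non-decreasing in $x$ if and only if its logarithm $g(x-\theta)-g(x-\theta')$ is non-decreasing in $x$.

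Next I would remove the parameters. Fix $\theta>\theta'$, put $h:=\theta-\theta'>0$, and substitute $u:=x-\theta'$, so that $x-\theta=u-h$ and $u$ ranges over all of $\mathbb{R}$ as $x$ does. Then $x\mapsto g(x-\theta)-g(x-\theta')$ is non-decreasing precisely when $u\mapsto g(u-h)-g(u)$ is non-decreasing on $\mathbb{R}$, equivalently (writing $v:=u-h$) when $v\mapsto g(v+h)-g(v)$ is non-increasing on $\mathbb{R}$. As $(\theta,\theta')$ ranges over all pairs with $\theta>\theta'$, the step $h$ ranges over all of $(0,\infty)$. So it suffices to prove the following: \emph{$g$ is concave if and only if, for every $h>0$, the map $v\mapsto g(v+h)-g(v)$ is non-increasing on $\mathbb{R}$.}

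For the ``only if'' direction of this equivalence, concavity of $g$ makes the secant slope $\tfrac{1}{h}\bigl(g(v+h)-g(v)\bigr)$ a non-increasing function of $v$ (a standard monotonicity property of the difference quotients of a concave function), which is exactly the claim. For the ``if'' direction, take any $a<b$, apply the hypothesis with step $h:=\tfrac{b-a}{2}$ at the two points $v=a$ and $v=\tfrac{a+b}{2}$, and use $a+h=\tfrac{a+b}{2}$ and $\tfrac{a+b}{2}+h=b$ to obtain $g\!\left(\tfrac{a+b}{2}\right)-g(a)\ge g(b)-g\!\left(\tfrac{a+b}{2}\right)$; that is, $g$ is midpoint-concave. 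Since $f$ is a density it is (Lebesgue) measurable, hence so is $g=\ln f$, and a measurable midpoint-concave function on $\mathbb{R}$ is concave (Sierpiński's theorem: midpoint-concavity already forces the concavity inequality along all rational convex combinations, and measurability then excludes the pathological non-linear additive solutions). Chaining the equivalences — $f$ log-concave $\iff$ $g$ concave $\iff$ the difference condition on $g$ $\iff$ the ratio condition on $f$ — finishes the proof.

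The one genuinely delicate step is that last upgrade from midpoint-concavity to full concavity: the difference condition, applied to an arbitrary function, yields only midpoint-concavity (a non-linear additive function satisfies it yet is nowhere concave), so one must actually exploit that a probability density is automatically measurable. Everything else is routine: the logarithm is monotone, and the reparametrization $u=x-\theta'$, $h=\theta-\theta'$ is just bookkeeping. If one is content to cite the difference characterization of concave functions as standard, the proof collapses essentially to the substitution step alone.
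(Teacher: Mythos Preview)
Your proof is correct. The paper does not supply its own proof of this proposition---it states the result with a citation to \citet{SaumardWellner} and moves on---so there is nothing to compare against. Your argument is clean: the substitution $u=x-\theta'$, $h=\theta-\theta'$ reduces the claim to the difference characterization of concavity, and you correctly flag that the upgrade from the difference condition (equivalently, Wright concavity) to full concavity genuinely requires a regularity hypothesis, which measurability of a density supplies via Sierpi\'nski's theorem. Your counterexample of a non-linear additive function is apt: such a $g$ satisfies $g(v+h)-g(v)=g(h)$ identically in $v$, hence the difference condition, yet is not concave.
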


\noindent Thus, in any model where (1) $X= \theta + \eps$,  (2) the noise term $\eps$ is independent of $\theta$, and (3) $\eps$ has a log-concave density, we can be guaranteed that $\{f_{\theta \mid X}(\cdot \mid x)\}$ has the monotone likelihood ratio property (no matter the distribution of $\theta$).

Many distributions have log-concave densities---for example, normal distributions, exponential distributions, the uniform distribution over any convex set, the logistic distribution, and the extreme value distribution. But others do not---for example, the Pareto distribution and Cauchy distribution. See \citet{SaumardWellner} or \citet{BagnoliBergstrom} for other examples and properties of log-concave distributions.

\subsection{Affiliation} 

Let $Z_1, \dots, Z_n$ be real-valued random variables taking values in $\mathbb{R}^n$ and admitting joint density $f$, which again we'll assume to be everywhere strictly positive. For any $z,z' \in \mathbb{R}^n$, let $z \wedge z'$ (``z meet z') denote the component-wise minimum of $z$ and $z'$, and $z \vee z'$ (``z join z') denote the component-wise maximum, i.e.,
\begin{align*}
z \vee z' & = (\max(z_1,z'_1), \dots, \max(z_n,z'_n)) \\
z \wedge z' & = (\min(z_1,z'_1), \dots, \min(z_n,z'_n))
\end{align*}

\begin{definition} \label{def:Affiliation} The variables $Z_1, \dots, Z_n$ are \emph{affiliated} if 
\begin{equation} \label{eq:Affiliated}
f(z \vee z')f(z \wedge z') \geq f(z)f(z')
\end{equation}
 for all $z,z' \in \mathbb{R}^n.$
\end{definition}

\noindent This condition loosely says that larger realizations of any one variable make larger realizations of the other variables more likely. Figure \ref{fig:Affiliation} depicts this relationship for two binary variables.

\begin{figure}[H]
\begin{center}
\includegraphics[scale=0.4]{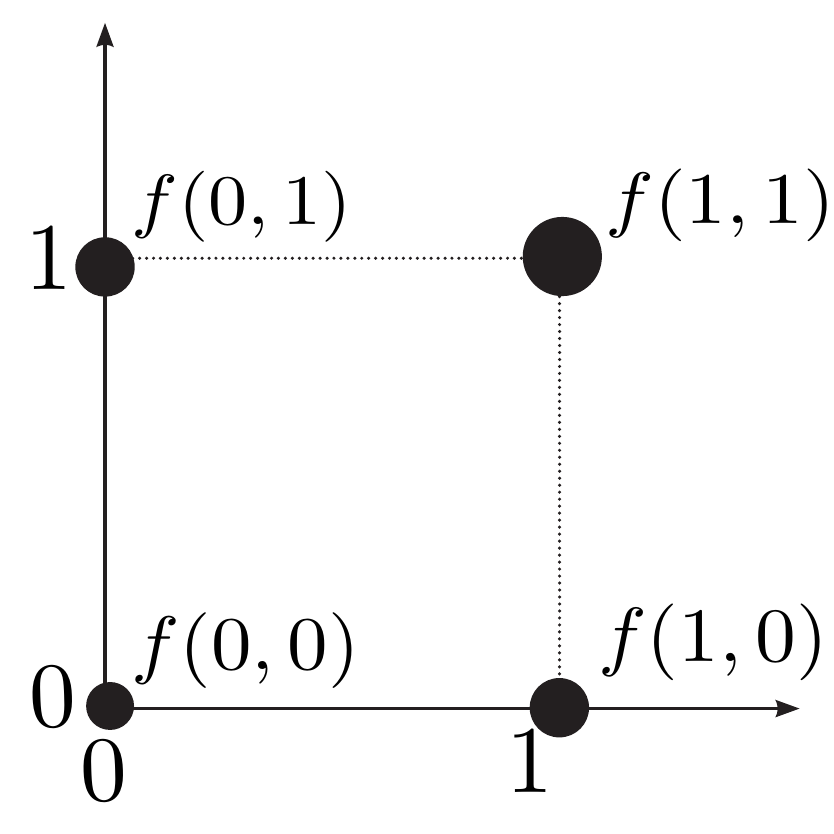}
\end{center}
\caption{Two binary variables with joint density $f$ are affiliated if $f(1,1)f(0,0)\geq f(1,0)f(0,1)$.} \label{fig:Affiliation}
\end{figure}

\begin{remark} If $Z_1, \dots, Z_n$ are mutually independent, then they are affiliated.
\end{remark}

Besides Definition \ref{def:Affiliation}, there are several equivalent ways to characterize affiliation. The first follows by taking logs of both sides of (\ref{eq:Affiliated}).

\begin{proposition} $Z_1, Z_2, \dots, Z_n$ are affiliated if and only if $f$ is log-supermodular, i.e.
\[\log f(z \vee z') + \log f(z \wedge z') \geq \log f(z) + \log f(z')\]
for all $z,z'$.
\end{proposition}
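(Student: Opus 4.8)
The plan is to observe that this equivalence is an immediate consequence of the monotonicity of the logarithm together with the standing assumption (made at the start of the subsection) that the joint density $f$ is everywhere strictly positive. Because $f(z) > 0$ for every $z \in \mathbb{R}^n$, each of the four quantities $f(z \vee z')$, $f(z \wedge z')$, $f(z)$, and $f(z')$ is a strictly positive real number, so all four logarithms appearing in the claimed inequality are well-defined and finite. This positivity hypothesis is the only thing that needs care; it is precisely what makes $\log f$ a legitimate object, and without it the statement would not even be well-posed.

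First I would record the elementary fact that $\log\colon (0,\infty) \to \mathbb{R}$ is a strictly increasing bijection, so that for any positive reals $a,b,c,d$ one has $ab \geq cd$ if and only if $\log(ab) \geq \log(cd)$. Applying this with $a = f(z \vee z')$, $b = f(z \wedge z')$, $c = f(z)$, $d = f(z')$, and then using $\log(xy) = \log x + \log y$, the affiliation inequality \eqref{eq:Affiliated} for a fixed pair $(z,z')$ is seen to be equivalent to
\[\log f(z \vee z') + \log f(z \wedge z') \geq \log f(z) + \log f(z').\]

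Finally I would note that this equivalence is pointwise in $(z,z')$, so quantifying over all $z,z' \in \mathbb{R}^n$ yields that $Z_1,\dots,Z_n$ are affiliated in the sense of Definition \ref{def:Affiliation} if and only if $\log f$ satisfies the displayed inequality for all $z,z'$, i.e., $f$ is log-supermodular. There is no genuine obstacle in the argument; the proof is simply "take logarithms of both sides," and the only substantive point is the appeal to everywhere-positivity of $f$ to guarantee that the logarithms make sense.
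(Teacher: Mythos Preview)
Your proposal is correct and takes essentially the same approach as the paper, which simply notes that the equivalence ``follows by taking logs of both sides of (\ref{eq:Affiliated}).'' Your write-up is more detailed in making explicit the role of the everywhere-positivity assumption on $f$, but the argument is identical.
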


\begin{proposition} \label{prop:SecondDerivative} Suppose the joint density $f$ is twice-differentiable. Then $Z_1, Z_2, \dots, Z_n$ are affiliated if and only if $\frac{\partial^2 \log f}{\partial z_i z_j} \geq 0$.
\end{proposition}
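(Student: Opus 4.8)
The plan is to reduce the statement to a purely analytic fact about one twice‑differentiable function and then settle that via calculus. By the preceding proposition, $Z_1,\dots,Z_n$ are affiliated if and only if $g:=\log f$ is \emph{supermodular}, i.e.\ $g(z\vee z')+g(z\wedge z')\ge g(z)+g(z')$ for all $z,z'\in\mathbb{R}^n$. So it suffices to prove: a twice‑differentiable $g$ is supermodular if and only if $\partial^2 g/\partial z_i\,\partial z_j\ge 0$ for every $i\ne j$. I would prove this by routing through the two‑coordinate special case in both directions.

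For the direction ``supermodular $\Rightarrow$ cross‑partials nonnegative'', fix a pair $i\ne j$, freeze the remaining coordinates at arbitrary values, and let $h(s,t)$ be the resulting function of $(z_i,z_j)=(s,t)$. Applying supermodularity to the two points obtained by letting $z_i\in\{s,s'\}$ and $z_j\in\{t,t'\}$, with $s'>s$ and $t'>t$, yields the mixed second difference inequality $h(s',t')-h(s',t)-h(s,t')+h(s,t)\ge 0$. Dividing by $(s'-s)(t'-t)>0$ and letting $s'\to s$, $t'\to t$ recovers $\partial^2 h/\partial s\,\partial t\ge 0$ (writing the iterated difference quotient as the mixed partial at an intermediate point by two applications of the mean value theorem).

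For the converse, suppose all mixed second partials are nonnegative. First I would record the two‑variable case: with $h$ as above and $s\le s'$, $t\le t'$, the fundamental theorem of calculus gives
\[
h(s',t')-h(s',t)-h(s,t')+h(s,t)=\int_s^{s'}\!\!\int_t^{t'}\frac{\partial^2 h}{\partial u\,\partial v}(u,v)\,dv\,du\ \ge\ 0,
\]
so $g$ is supermodular in each pair of coordinates with the rest held fixed. Then I would upgrade to full supermodularity. Given $z,z'$, set $a=z\wedge z'$, $b=z\vee z'$ and $U=\{i: z_i\ge z'_i\}$; since $a_i=b_i$ on coordinates where $z_i=z'_i$, one can write $z=a^U$ and $z'=a^{U^c}$, where $a^W$ denotes the point whose $i$‑th coordinate is $b_i$ for $i\in W$ and $a_i$ otherwise. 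The target $g(b)+g(a)\ge g(z)+g(z')$ is then the statement that the set function $\phi(W):=g(a^W)$ is supermodular, applied at $W=U$, $W'=U^c$. I would establish supermodularity of $\phi$ in the ``nondecreasing marginal values'' form: for $W\subseteq W'$ and $k\notin W'$, $\phi(W'\cup\{k\})-\phi(W')\ge\phi(W\cup\{k\})-\phi(W)$. This follows by inserting the elements of $W'\setminus W$ into $W$ one at a time; each single insertion of a coordinate $j$ changes the comparison by exactly the two‑coordinate mixed second difference of $g$ in coordinates $j$ and $k$ (with all others frozen), which is $\ge 0$ by the displayed identity. Telescoping over the insertions gives nondecreasing marginal values, and a second telescoping (over the elements of one set not in the other) upgrades this to $\phi(W'\cup W)+\phi(W'\cap W)\ge \phi(W')+\phi(W)$, hence $g(b)+g(a)\ge g(z)+g(z')$.

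I expect the main obstacle to be exactly this last reduction — from pairwise (two‑coordinate) supermodularity to full supermodularity — which carries the genuinely lattice‑theoretic content of the proposition (it is essentially Topkis's characterization). The delicate bookkeeping is to set up the reindexing $W\mapsto a^W$ and to verify that a one‑element update to $W$ really does convert the marginal‑value comparison into an honest two‑variable second‑order difference of $g$, on the nose; once that is in place, everything else is either the preceding proposition or the elementary integral identity above.
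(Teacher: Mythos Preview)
Your proposal is correct. For the ``only if'' direction you do exactly what the paper does: freeze all but two coordinates, use log-supermodularity to obtain the mixed second-difference inequality, and pass to the limit to recover nonnegativity of the cross partial (the paper takes one limit to conclude that $\partial \log f/\partial z_1$ is nondecreasing in $z_2$; you take both limits---these are equivalent). The paper leaves the ``if'' direction as an exercise, so there is nothing to compare there; your argument via the integral identity plus the Topkis-style telescoping is sound. It is worth noting that the telescoping step---upgrading from pairwise (two-coordinate) supermodularity to full supermodularity---is exactly the content of the pairwise characterization stated immediately after this proposition in the paper (Proposition~\ref{prop:Pairwise}), which is itself left as an exercise. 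So your proof effectively establishes that result along the way; conversely, if one is willing to invoke Proposition~\ref{prop:Pairwise}, the ``if'' direction follows directly from your displayed integral identity without the set-function bookkeeping.
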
 

We show the only if direction below, leaving the if direction for an exercise.

\begin{proof}
Without loss let $i=1$ and $j=2$. Choose any $z_1,z'_1, z_2,z_2' \in \mathbb{R}$ where $z_1 > z_1'$ and $z_2 > z_2'$. Suppose $Z_1,Z_2, \dots, Z_n$ are affiliated. Then by definition
\[\log f(z_1, z_2, z_{-12}) - \log f(z_1', z_2, z_{-12}) \geq \log f(z_1, z_2', z_{-12}) - \log f(z_1', z_2', z_{-12})\]
where $z_{-12}$ is shorthand for $(z_3, \dots, z_n)$. Rewrite $z_1$ as $z_1' +\eps$ and divide both sides by $\eps$. Taking the limit as $\eps \rightarrow 0$, we have
\begin{align*}
\lim_{\eps \rightarrow 0} & \left(\frac{\log f(z_1' +\eps, z_2, z_{-12}) - \log f(z_1', z_2, z_{-12})}{\eps}\right) \\
& \quad \quad \quad \quad \quad \geq \lim_{\eps \rightarrow 0} \left(\frac{\log f(z_1'+\eps, z_2', z_{-12}) - \log f(z_1', z_2', z_{-12})}{\eps}\right)
\end{align*}
so $\frac{\partial \log f}{\partial z_1}$ is increasing in $z_2$, as desired. 
\end{proof}

\begin{exercise}[G] Prove the `if' direction of Proposition \ref{prop:SecondDerivative}: If the joint density $f$ is twice-differentiable and satisfies $\frac{\partial^2 \log f}{\partial z_i z_j} \geq 0$, then $Z_1, Z_2, \dots, Z_n$ are affiliated.
\end{exercise}

The next characterization simplifies (\ref{eq:Affiliated}) to a pairwise condition.  Specifically, for any $(Z_i,Z_j)$ and any realization of the remaining variables $Z_{-ij}$, higher realizations of $Z_i$ must imply higher realizations of $Z_j$.

\begin{proposition} \label{prop:Pairwise}
$Z_1, \dots, Z_n$ are affiliated if and only if 
\begin{equation} \label{eq:Pairwise}
f(z_i,z_j,z_{-ij}) f(z_i',z_j',z_{-ij}) \geq f(z'_i,z_j,z_{-ij}) f(z_i,z_j',z_{-ij})
\end{equation}
for every pair of distinct indices $i$, $j$, and every  $z_i > z_i'$, $z_j > z_j'$, and $z_{-ij} \in \mathbb{R}^{n-2}$.
\end{proposition}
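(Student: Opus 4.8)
The plan is to prove the two directions separately; the content is entirely in the ``if'' direction, which amounts to the classical fact (Topkis) that supermodularity of a function on a product of chains can be verified pairwise.

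\emph{Only if.} This is an immediate specialization of Definition~\ref{def:Affiliation}. Fix distinct indices $i,j$, scalars $z_i>z_i'$ and $z_j>z_j'$, and $z_{-ij}\in\mathbb{R}^{n-2}$, and apply (\ref{eq:Affiliated}) to the two points $a=(z_i,z_j',z_{-ij})$ and $b=(z_i',z_j,z_{-ij})$. Since $a$ and $b$ agree on every coordinate outside $\{i,j\}$, we have $a\vee b=(z_i,z_j,z_{-ij})$ and $a\wedge b=(z_i',z_j',z_{-ij})$, so (\ref{eq:Affiliated}) reads precisely as (\ref{eq:Pairwise}).

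\emph{If.} Since $f>0$, taking logarithms shows that (\ref{eq:Affiliated}) is equivalent to log-supermodularity of $f$, and that (\ref{eq:Pairwise}) says exactly that $g:=\log f$ is supermodular in each pair of coordinates with the other coordinates held fixed. I would first establish a one-coordinate lemma: if $g$ is pairwise supermodular, then for any index $k$, any $z_k>z_k'$, and any $w\ge w'$ (componentwise) describing the remaining $n-1$ coordinates, $g(z_k,w)-g(z_k',w)\ge g(z_k,w')-g(z_k',w')$. To prove it, transform $w'$ into $w$ by raising one coordinate at a time; each such move leaves $g(z_k,\cdot)-g(z_k',\cdot)$ weakly larger by the pairwise inequality (\ref{eq:Pairwise}) (in logged form) applied to $k$ and the coordinate being raised, and telescoping over the moves yields the lemma. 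Then, given arbitrary $z,z'$, set $B_1=\{k:z_k\ge z_k'\}$ and $B_2=\{k:z_k<z_k'\}$, and write $z=(u_1,\ell_2)$, $z'=(\ell_1,u_2)$ with $u_1\ge\ell_1$ on $B_1$ and $u_2\ge\ell_2$ on $B_2$; since $z\vee z'=(u_1,u_2)$ and $z\wedge z'=(\ell_1,\ell_2)$, the target (\ref{eq:Affiliated}) becomes $g(u_1,u_2)+g(\ell_1,\ell_2)\ge g(u_1,\ell_2)+g(\ell_1,u_2)$. I would prove this by raising the $B_1$-block from $\ell_1$ to $u_1$ one coordinate at a time: at each step the increment of $g$ is at least as large when the $B_2$-block equals $u_2$ as when it equals $\ell_2$, which is exactly the one-coordinate lemma with ``remaining coordinates'' given by the partially-updated $B_1$-entries together with the $B_2$-block (using $u_2\ge\ell_2$). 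Summing the step inequalities telescopes to the claim.

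The difficulty is organizational rather than conceptual: the argument is a pair of nested telescoping passes — an outer pass over the coordinates of $B_1$ and, inside the lemma, an inner pass over the coordinates slid from $w'$ to $w$ — so the care is in indexing the intermediate points so that exactly one coordinate moves per step and (\ref{eq:Pairwise}) applies verbatim; the edge cases ($B_1$ or $B_2$ empty, or coordinates with $z_k=z_k'$) are trivial. I would also remark that when $f$ is in addition twice differentiable the ``if'' direction follows more quickly from Proposition~\ref{prop:SecondDerivative}, since (\ref{eq:Pairwise}) for the pair $(i,j)$ forces $\partial^2\log f/\partial z_i\,\partial z_j\ge 0$; the telescoping argument is what covers the general case of a merely positive density.
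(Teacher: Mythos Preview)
Your proof is correct. The paper does not actually prove this proposition---it is left as Exercise~3.2---so there is no paper argument to compare against. Your approach (passing to $g=\log f$ and running the standard Topkis telescoping: first a one-coordinate lemma via a sweep over the ``remaining'' coordinates, then an outer sweep over the $B_1$-block) is exactly the classical route, and your bookkeeping is right: at the outer step raising coordinate $k_m\in B_1$, the relevant comparison vectors $w=(v_{-k_m},u_2)$ and $w'=(v_{-k_m},\ell_2)$ indeed satisfy $w\ge w'$, so the lemma applies, and the telescoped sums collapse to $g(u_1,u_2)-g(\ell_1,u_2)\ge g(u_1,\ell_2)-g(\ell_1,\ell_2)$ as needed. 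The remark tying the smooth case back to Proposition~\ref{prop:SecondDerivative} is accurate and a nice touch.
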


\begin{exercise}[G] Prove Proposition \ref{prop:Pairwise}.
\end{exercise}

This pairwise characterization immediately implies the following characterization, which says that $(Z_1, \dots, Z_n)$ are affiliated if and only if for every pair of variables $i,j$, and every realization of $z_{-ij}$, the family of conditional densities $\{f(\cdot \mid z_j, z_{-ij})\}_{z_j \in \mathbb{R}}$ has the monotone-likelihood ratio property.

\begin{proposition} \label{prop:A_MLRP}$Z_1, \dots, Z_n$ are affiliated if and only if
\begin{equation} \label{eq:A_MLRP}
f(z_i \mid z_j, z_{-ij})f(z'_i \mid z'_j, z_{-ij}) \geq f(z_i \mid z'_j, z_{-ij})f(z'_i \mid z_j, z_{-ij})
\end{equation}
 for every pair of distinct indices $i$, $j$, and every $z_i > z_i'$, $z_j > z_j'$, and $z_{-ij} \in \mathbb{R}^{n-2}$.
\end{proposition}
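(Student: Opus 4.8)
The plan is to reduce (\ref{eq:A_MLRP}) to the pairwise inequality (\ref{eq:Pairwise}) of Proposition \ref{prop:Pairwise} by writing each conditional density as the joint density divided by a marginal and observing that the marginal factors cancel. Fix distinct indices $i,j$, and write $g(z_j, z_{-ij})$ for the marginal density of $(Z_j, Z_{-ij})$, obtained by integrating $f$ over the $i$-th coordinate. By the standing convention of this chapter all the densities in sight---including $g$ and the conditional densities $f(\cdot \mid z_j, z_{-ij})$---are everywhere strictly positive (hence finite), and $f(z_i \mid z_j, z_{-ij}) = f(z_i, z_j, z_{-ij})/g(z_j, z_{-ij})$ holds at every point, so the manipulations below involve no measure-theoretic caveats.

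Substituting this expression into both sides of (\ref{eq:A_MLRP}): the left-hand side becomes
\[
\frac{f(z_i, z_j, z_{-ij})\, f(z'_i, z'_j, z_{-ij})}{g(z_j, z_{-ij})\, g(z'_j, z_{-ij})},
\]
while the right-hand side becomes
\[
\frac{f(z_i, z'_j, z_{-ij})\, f(z'_i, z_j, z_{-ij})}{g(z'_j, z_{-ij})\, g(z_j, z_{-ij})}.
\]
The two denominators are the same strictly positive number, so multiplying through by it shows that (\ref{eq:A_MLRP}) is equivalent to
\[
f(z_i, z_j, z_{-ij})\, f(z'_i, z'_j, z_{-ij}) \;\geq\; f(z'_i, z_j, z_{-ij})\, f(z_i, z'_j, z_{-ij}),
\]
which is exactly inequality (\ref{eq:Pairwise}). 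Since this equivalence is valid for every choice of distinct $i,j$ and every $z_i > z'_i$, $z_j > z'_j$, $z_{-ij} \in \mathbb{R}^{n-2}$, Proposition \ref{prop:Pairwise} then yields that (\ref{eq:A_MLRP}) holds for all such choices if and only if $Z_1, \dots, Z_n$ are affiliated, which is the claim.

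This is essentially bookkeeping, and the only step that needs care is the cancellation of the marginals: on each side of (\ref{eq:A_MLRP}) the two conditioning events are $(z_j, z_{-ij})$ and $(z'_j, z_{-ij})$, each occurring exactly once, so both sides carry the identical denominator $g(z_j, z_{-ij})\, g(z'_j, z_{-ij})$. It is this symmetry, together with the strict positivity that licenses the division, that makes the reduction go through; I expect no genuine obstacle beyond getting that bookkeeping right.
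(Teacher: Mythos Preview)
Your proof is correct and follows exactly the same approach as the paper, which simply states that (\ref{eq:Pairwise}) and (\ref{eq:A_MLRP}) are equivalent by Bayes' rule and then invokes Proposition \ref{prop:Pairwise}. You have merely spelled out in detail the cancellation of marginals that the paper compresses into the phrase ``by Bayes' rule.''
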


\begin{proof} The displays in (\ref{eq:Pairwise}) and (\ref{eq:A_MLRP}) are equivalent to one another by Bayes' rule, so Proposition \ref{prop:Pairwise} implies Proposition \ref{prop:A_MLRP}. 
\end{proof}

\medskip

Operations that preserve affiliation include:

\begin{proposition}[Monotone Functions] Suppose $Z_1, \dots, Z_n$ are affiliated, and the functions $g_i: \mathbb{R} \rightarrow \mathbb{R}$, $1\leq i \leq n$, are either all nondecreasing or all nonincreasing. Then the variables $g_1(Z_1), \dots, g_n(Z_n)$ are affiliated.
\end{proposition}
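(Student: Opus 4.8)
The plan is to reduce to the case where all $g_i$ are nondecreasing, and then verify the defining inequality (\ref{eq:Affiliated}) for the transformed vector directly via a change of variables.

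\emph{The reduction.} If all $g_i$ are nonincreasing, write $g_i = -h_i$ with each $h_i$ nondecreasing. If we can show $(h_1(Z_1),\dots,h_n(Z_n))$ is affiliated, then $(g_1(Z_1),\dots,g_n(Z_n))$ is obtained from it by negating every coordinate, and negation of all coordinates preserves affiliation: the map $w \mapsto -w$ swaps $\vee$ and $\wedge$ componentwise and has unit Jacobian, so for $W=(h_1(Z_1),\dots,h_n(Z_n))$ and any $v,v'$,
\[ f_{-W}(v\vee v')\,f_{-W}(v\wedge v') = f_W\big((-v)\wedge(-v')\big)\,f_W\big((-v)\vee(-v')\big) \ge f_W(-v)\,f_W(-v') \]
by affiliation of $W$. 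So it suffices to treat nondecreasing $g_i$. For the transformed variables to admit a density (as assumed throughout the chapter), take each $g_i$ to be strictly increasing — hence a bijection of $\mathbb{R}$ onto an interval $I_i$ with strictly increasing inverse $\phi_i := g_i^{-1}$ — and regular enough (e.g.\ absolutely continuous) for the change-of-variables formula to apply.

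\emph{The main computation.} Write $W = (g_1(Z_1),\dots,g_n(Z_n))$ and $\Phi(w) = (\phi_1(w_1),\dots,\phi_n(w_n))$ for $w \in I_1\times\cdots\times I_n$. By change of variables, $f_W(w) = f_Z(\Phi(w)) \prod_{i=1}^n \phi_i'(w_i)$ almost everywhere. The key point is that $\Phi$ is a lattice homomorphism: since each $\phi_i$ is increasing, $\max(\phi_i(w_i),\phi_i(w_i')) = \phi_i(\max(w_i,w_i'))$ and likewise for $\min$, so $\Phi(w\vee w') = \Phi(w)\vee\Phi(w')$ and $\Phi(w\wedge w') = \Phi(w)\wedge\Phi(w')$. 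Hence
\[ f_W(w\vee w')\,f_W(w\wedge w') = f_Z\big(\Phi(w)\vee\Phi(w')\big)\,f_Z\big(\Phi(w)\wedge\Phi(w')\big)\cdot\prod_{i=1}^n \phi_i'\big(\max(w_i,w_i')\big)\,\phi_i'\big(\min(w_i,w_i')\big). \]
Since $\{\max(w_i,w_i'),\min(w_i,w_i')\}=\{w_i,w_i'\}$ for each $i$, the product of Jacobian factors equals $\prod_i \phi_i'(w_i)\phi_i'(w_i')$; and affiliation of $Z$ (Definition \ref{def:Affiliation}) applied at $\Phi(w),\Phi(w')$ gives $f_Z(\Phi(w)\vee\Phi(w'))f_Z(\Phi(w)\wedge\Phi(w')) \ge f_Z(\Phi(w))f_Z(\Phi(w'))$. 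Combining, $f_W(w\vee w')f_W(w\wedge w') \ge f_W(w)f_W(w')$, which is (\ref{eq:Affiliated}) for $W$. (Alternatively one can run the same substitution inside the MLRP characterization of Proposition \ref{prop:A_MLRP}, where the Jacobian factors cancel outright.)

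\emph{Main obstacle.} The inequality is not the difficulty; regularity is. The density-based formulation forces the $g_i$ to be strictly monotone (so that $W$ has a density at all) and to be smooth enough for the change-of-variables formula. The fully general statement — allowing genuinely weakly monotone $g_i$, for which $W$ may have atoms — is standard (Milgrom–Weber) but is cleanest via the density-free characterization of affiliation through the FKG-type inequality on increasing events, under which coordinatewise monotone transformations are transparently harmless, or by approximating each $g_i$ by strictly increasing maps. I would append a remark to this effect rather than redevelop that machinery here.
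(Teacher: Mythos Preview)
The paper states this proposition without proof, so there is no paper argument to compare against. Your proof is correct for the case you actually handle: strictly increasing, sufficiently regular $g_i$. The reduction from nonincreasing to nondecreasing via simultaneous negation is clean, and the core computation is right --- the key observation that coordinatewise increasing maps are lattice homomorphisms ($\Phi(w\vee w')=\Phi(w)\vee\Phi(w')$) is exactly what makes the Jacobian factors pair off and the affiliation inequality transfer.

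You are also right to flag the regularity gap, and you diagnose it accurately. The proposition as stated allows weakly monotone $g_i$, under which $W$ need not have a density at all, so the chapter's density-based Definition~\ref{def:Affiliation} does not even apply as written. This is a tension in the paper's own setup rather than a flaw in your argument. Your suggested remedies --- approximate by strictly increasing maps, or pass to the event-based FKG characterization --- are the standard fixes; a one-line remark pointing to Milgrom--Weber is appropriate here given the expository level of the notes. If you want the approximation route to be fully self-contained, you would need to say a word about why the limiting inequality survives (e.g.\ via weak convergence of the joint law and lower semicontinuity of the relevant functionals), but that level of detail would be out of place in this chapter.
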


\begin{proposition}[Subsets] Suppose $Z_1, \dots, Z_n$ are affiliated and let $A \subseteq \{1, \dots, n\}$ be any subset of these variables. Then the variables $(Z_i)_{i \in A}$ are affiliated.
\end{proposition}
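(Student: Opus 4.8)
The plan is to reduce the claim to marginalizing out a single coordinate, and then to verify the pairwise form of affiliation (Proposition~\ref{prop:Pairwise}) for the resulting marginal density. Since the joint density of $(Z_i)_{i\in A}$ is obtained from $f$ by integrating out the coordinates in $\{1,\dots,n\}\setminus A$ and, by Tonelli's theorem, these may be removed one at a time, it suffices to prove: if $f$ on $\mathbb{R}^n$ is affiliated then its marginal $g(z_1,\dots,z_{n-1}) = \int_{\mathbb R} f(z_1,\dots,z_{n-1},t)\,dt$ is affiliated on $\mathbb{R}^{n-1}$ (the general case follows by induction and relabeling; the intermediate marginals are again everywhere positive). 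By Proposition~\ref{prop:Pairwise}, to show $g$ affiliated I fix a pair of indices---relabel so they are $1,2$---fix values $w=(w_3,\dots,w_{n-1})$ for the remaining coordinates, fix $a>a'$ and $b>b'$, and must show $g(a,b,w)\,g(a',b',w)\ge g(a',b,w)\,g(a,b',w)$. Writing $\phi_{\pm\pm}(t)$ for $f$ evaluated at the corresponding corner (e.g.\ $\phi_{+-}(t)=f(a,b',w,t)$), this reads
\[\Big(\int \phi_{++}\Big)\Big(\int\phi_{--}\Big)\ \ge\ \Big(\int\phi_{-+}\Big)\Big(\int\phi_{+-}\Big).\]

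Next I would extract from affiliation of $f$ three facts about these four one-variable functions, each an instance of Proposition~\ref{prop:Pairwise} applied to $f$ with the indicated coordinates frozen: (A) the ``diagonal'' inequality $\phi_{++}(t)\phi_{--}(t)\ge \phi_{-+}(t)\phi_{+-}(t)$ for every $t$ (pair $(1,2)$, freezing coordinates $3,\dots,n$ at $(w,t)$); (B1) $\phi_{++}/\phi_{-+}$ is nondecreasing in $t$ (pair $(1,n)$, freezing coordinate $2$ at $b$ and coordinates $3,\dots,n-1$ at $w$); and (B2) $\phi_{++}/\phi_{+-}$ is nondecreasing in $t$ (pair $(2,n)$, freezing coordinate $1$ at $a$). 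All densities being positive, these ratios are well defined, and (B1)--(B2) say equivalently that $\phi_{-+}/\phi_{++}$ and $\phi_{+-}/\phi_{++}$ are nonincreasing in $t$.

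Finally, from (A) we get $\phi_{--}\ge \phi_{-+}\phi_{+-}/\phi_{++}$ pointwise, so it is enough to show $\big(\int\phi_{++}\big)\big(\int \phi_{-+}\phi_{+-}/\phi_{++}\big)\ge \big(\int\phi_{-+}\big)\big(\int\phi_{+-}\big)$. Let $\mu$ be the probability measure with $d\mu(t)=\phi_{++}(t)\,dt\big/\!\int\phi_{++}$; dividing the desired inequality by $\big(\int\phi_{++}\big)^2$ turns it into
\[\int \frac{\phi_{-+}}{\phi_{++}}\cdot\frac{\phi_{+-}}{\phi_{++}}\,d\mu\ \ge\ \int \frac{\phi_{-+}}{\phi_{++}}\,d\mu\,\cdot\int \frac{\phi_{+-}}{\phi_{++}}\,d\mu,\]
i.e.\ $\mathrm{Cov}_\mu\!\big(\phi_{-+}/\phi_{++},\ \phi_{+-}/\phi_{++}\big)\ge 0$. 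Since both arguments are nonincreasing functions of $t$, hence similarly ordered, this is Chebyshev's correlation inequality: for similarly ordered $X,Y$, $\tfrac12\iint (X(t)-X(s))(Y(t)-Y(s))\,d\mu(t)d\mu(s)=\mathbb E_\mu[XY]-\mathbb E_\mu[X]\mathbb E_\mu[Y]\ge 0$. This closes the argument. The main obstacle is precisely this last step---choosing the reference measure $\mu$ and lining up the monotonicity directions from (B1)--(B2) so that the correlation inequality applies; the rest is bookkeeping with Proposition~\ref{prop:Pairwise}. One could instead deduce the displayed integral inequality directly from the Ahlswede--Daykin four functions theorem, or simply cite that affiliation, being equivalent to multivariate total positivity of order $2$, is preserved under marginalization.
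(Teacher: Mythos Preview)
Your proof is correct. The paper states this proposition without proof, listing it among standard closure properties of affiliation, so there is no ``paper's own proof'' to compare against. Your argument---reducing to marginalizing out one coordinate, invoking the pairwise characterization (Proposition~\ref{prop:Pairwise}), and then combining the pointwise diagonal inequality with Chebyshev's correlation inequality under the reference measure $d\mu \propto \phi_{++}\,dt$---is a clean and self-contained way to establish the result. The only implicit assumption worth flagging is that the one-dimensional marginals $\int f(\cdot,t)\,dt$ are finite at every point (not merely a.e.), so that $\mu$ is a genuine probability measure; this is consistent with the paper's standing convention that all densities are everywhere positive and well-defined. Your closing remark that one could alternatively invoke the Ahlswede--Daykin four functions theorem, or the known fact that MTP$_2$ is preserved under marginalization, is also accurate and would shorten the argument at the cost of citing a heavier tool.
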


\begin{proposition}[Order Statistics] For each $1\leq i \leq n$, let $z^{(i)}$ denote the $i$-th largest realization among $(z_1, \dots, z_n)$. Then the variables $(Z^{(1)}, \dots, Z^{(n)})$ are affiliated.
\end{proposition}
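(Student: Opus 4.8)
The plan is to work directly with the joint density of the sorted vector and show it satisfies the defining inequality of affiliation (Definition~\ref{def:Affiliation}), i.e.\ that it is log-supermodular. As is standard for this result, the statement is to be read with $(Z_1,\dots,Z_n)$ \emph{exchangeable} --- for instance i.i.d.\ --- in addition to affiliated; some such symmetry is genuinely needed (already for $n=2$, independent but non-identically-distributed coordinates give a counterexample). So assume $(Z_1,\dots,Z_n)$ has an everywhere-positive, symmetric, log-supermodular joint density $f$.

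First I would identify the density $g$ of $(Y_1,\dots,Y_n):=(Z^{(1)},\dots,Z^{(n)})$. Let $C=\{y\in\mathbb{R}^n : y_1\ge y_2\ge\cdots\ge y_n\}$ be the order cone. Since $f$ is a bounded density, ties occur with probability zero, so for $y$ in the interior of $C$ the preimage of $y$ under the sorting map consists of the $n!$ distinct permutations of $y$; hence $g(y)=\sum_{\pi\in S_n} f(\pi y)=n!\,f(y)$ by symmetry of $f$. On the complement of $C$ we have $g=0$, and the boundary of $C$ is Lebesgue-null, so $g = n!\, f\cdot\mathbbm{1}_C$ almost everywhere.

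The key geometric step is that $C$ is a \emph{sublattice} of $\mathbb{R}^n$: if $y,y'\in C$ then $y\vee y'\in C$ and $y\wedge y'\in C$, since $y_k\ge y_{k+1}$ and $y'_k\ge y'_{k+1}$ force $\max(y_k,y'_k)\ge\max(y_{k+1},y'_{k+1})$ and $\min(y_k,y'_k)\ge\min(y_{k+1},y'_{k+1})$ for every $k$. Granting this, log-supermodularity of $g$ is immediate. For arbitrary $y,y'\in\mathbb{R}^n$: if $y\notin C$ or $y'\notin C$ then $g(y)g(y')=0$ and there is nothing to prove; and if $y,y'\in C$ then also $y\vee y',\,y\wedge y'\in C$, so
\[
g(y\vee y')\,g(y\wedge y') \;=\; (n!)^2\, f(y\vee y')\,f(y\wedge y') \;\ge\; (n!)^2\, f(y)\,f(y') \;=\; g(y)\,g(y'),
\]
the inequality being precisely log-supermodularity of $f$, i.e.\ affiliation of $Z$. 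Hence $g$ is log-supermodular and $(Z^{(1)},\dots,Z^{(n)})$ is affiliated.

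I do not expect a serious obstacle once the hypothesis is pinned down: the real content is the one-line sublattice observation, plus the bookkeeping that turns the symmetric density of $Z$ into $n!\,f$ on the cone. The points needing care are (i) stating the exchangeability assumption, without which the claim fails, and (ii) the routine measure-zero handling of ties and of the boundary of $C$. One could instead avoid computing $g$ explicitly by routing through the pairwise/MLRP characterization of affiliation (Proposition~\ref{prop:A_MLRP}), but the direct density argument above is shorter.
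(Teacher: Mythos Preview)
The paper states this proposition without proof (it is listed among several preservation properties of affiliation, none of which are proved), so there is no approach to compare against. Your argument is correct and is the standard one: compute the order-statistic density as $n!\,f\cdot\mathbbm{1}_C$ on the order cone $C$, observe that $C$ is a sublattice, and inherit log-supermodularity from $f$.

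You are also right to flag the missing hypothesis. As the paper states it, the proposition is false: for $n=2$ with $Z_1,Z_2$ independent and non-identically distributed, the order-statistic density on the cone is $h(y_1,y_2)=f_1(y_1)f_2(y_2)+f_2(y_1)f_1(y_2)$, and a short computation gives
\[
h\,h_{12}-h_1 h_2 \;=\; -\,a(y_1)^2 a(y_2)^2\,(b/a)'(y_1)\,(b/a)'(y_2),
\]
writing $a=f_1$, $b=f_2$. Whenever $f_2/f_1$ is strictly monotone this is strictly negative, so $\partial^2\log h/\partial y_1\partial y_2<0$ and affiliation fails. Hence some symmetry assumption (exchangeability of $(Z_1,\dots,Z_n)$, equivalently a permutation-invariant $f$) is genuinely required, exactly as you say. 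The only cosmetic point is that the resulting $g$ is zero off $C$ and so violates the chapter's standing ``everywhere positive density'' convention; this is harmless because the affiliation inequality is checked in product form and is trivially satisfied whenever either argument lies outside $C$.
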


\begin{exercise}[G] Show that affiliation is not preserved under arbitrary linear combinations of affiliated variables by constructing an example of random variables $Z_1,Z_2,Z_3$ where $(Z_1,Z_2,Z_3)$ are affiliated but $(Z_1+Z_2,Z_3)$ are not.
\end{exercise}

\subsection{First-Order Stochastic Dominance}

Again consider two real-valued random variables, a parameter $\theta$ and a signal $X$, defined on the same probability space with joint distribution $F$. In many applications we may expect a higher signal realization to lead to a higher inference about the unknown parameter. We now formalize `higher inference' as a first-order stochastic dominance shift in the posterior belief. 

\begin{definition} A distribution $F$ \emph{first-order stochastically dominates} $\widetilde{F}$, which we denote by $F \geq_{FOSD} \widetilde{F}$, if 
\[\int u(\theta) dF(\theta) \geq \int u(\theta) d\widetilde{F}(\theta)\]
for every nondecreasing function $u: \mathbb{R} \rightarrow \mathbb{R}$. Equivalently, $F(\theta) \leq \tilde{F}(\theta)$ at every $\theta \in \Theta$.
\end{definition}
\noindent If $u$ is interpreted as a utility function over money, then a monetary gamble distributed according to $F$ is preferred over one distributed according to $\tilde{F}$ by every agent that prefers more money over less, regardless of the specific shape of the agent's utility function. We can use this definition to compare conditional beliefs about $\theta$.

\begin{definition} \label{def:FOSDProperty} Say that $F$ has the \emph{FOSD property} if $F_{\theta \mid X}(\cdot \mid X=x) \geq_{FOSD} F_{\theta \mid X}(\cdot \mid X=x')$ for all $x > x'$.
\end{definition} 

\citet{Milgrom} proposed a closely related property, which is imposed on conditional distributions $F_{X \mid \theta}$ rather than joint distributions $F$. (This is analogous to considering a signal $\sigma: \Theta \rightarrow \Delta(S)$ without fixing a prior on $\Theta$.)

\begin{definition} Say that a signal realization $x$ is \emph{more favorable than}  signal realization $x'$ if for every  prior distribution $F_\theta \in \Delta(\Theta)$, the posterior distribution $F_{\theta \mid X}(\cdot \mid x)$ first-order stochastically dominates the posterior distribution $F_{\theta \mid X}(\cdot \mid x')$.
\end{definition}

That is, $x$ is more favorable than $x'$ if observing the realization $x$ leads to a FOSD-higher posterior belief about $\theta$ (compared to observing $x'$). If $x$ is more favorable than $x'$ for all $x>x'$, then we have a stronger version of the FOSD property (given in (\ref{def:FOSDProperty})) that holds not only for the specific joint distribution $F$, but for all joint distributions $F$ that are generated by $F_{X \mid \theta}$ and some choice of prior $F_\theta$.

\begin{example} Recall that in the normal-updating setting with $\theta \sim \mathcal{N}(\mu, \sigma_\theta^2)$, $X= \theta + \eps$, $\eps \sim \mathcal{N}(0, \sigma_\eps^2)$, and $\theta \indep \eps$, the agent's posterior belief about $\theta$ conditional on $X$ is 
\[\mathcal{N}\left(\frac{\sigma_\theta^2}{\sigma_\eps^2 + \sigma_\theta^2} X + \frac{\sigma_\eps^2}{\sigma_\eps^2 + \sigma_\theta^2} \mu , \frac{\sigma_\eps^2 \sigma_\theta^2}{\sigma_\eps^2 + \sigma_\theta^2}\right).\]
This distribution is increasing (in the FOSD order) in the realization of $X$ for all parameters $\mu$ and $\sigma_\theta^2$. So $x$ is more favorable than $x'$ for every pair $x>x'$. 
\end{example}

\section{How They are Related} \label{sec:Relationship}

Let $\theta$ and $X$ be real-valued random vectors defined on the same probability space. We'll use $F$ to denote their joint distribution, and assume throughout that the densities $f_\theta$ and $f_X$ and conditional densities $f_{\theta \mid X}$ and $f_{X \mid \theta}$ exist. In this setting, our main properties from above are:
\begin{itemize}
\item[] \quad \textbf{A:} $(X,\theta)$ are affiliated.
\item[] \quad \textbf{MLRP:} $\{f_{X \mid \theta}(\cdot \mid \theta)\}$ satisfies MLRP.
\item[] \quad \textbf{FOSD:} For all $x > x'$, $F(\cdot \mid X=x) \geq_{FOSD} F(\cdot \mid X=x')$
\item[] \quad \textbf{MF:} For all $x > x'$, $x$ is more favorable than $x'$
\end{itemize}

\noindent These properties are related in the following way:

\[\mbox{\textbf{(A)}} \quad \Longleftrightarrow \quad \mbox{\textbf{(MLRP)}} \quad  \Longleftrightarrow \quad \mbox{\textbf{(MF)}}\quad  \Longrightarrow \quad \mbox{\textbf{(FOSD)}}\]
where the one-directional implication from (MF) to (FOSD) is strict. See \citet{Castro} for an example of a distribution satisfying (FOSD) but not (MLRP).

\begin{remark} (MLRP) is equivalent to (MF) but strictly stronger than (FOSD). Thus if a  joint distribution $F$ satisfies (MLRP) then it must satisfy (FOSD), but $F$ can satisfy (FOSD) and fail (MLRP). On the other hand, a conditional distribution $F_{X \mid \theta}$ that satisfies (FOSD) for every completion to a joint distribution $F$ (i.e., for every choice of prior $F_\theta$) must also satisfy (MLRP). So "FOSD for every prior" is equivalent to MLRP, while ``FOSD for some prior" is weaker.
\end{remark}

 We've already established the equivalence between (A) and (MLRP) in Proposition \ref{prop:A_MLRP}. Since (FOSD) is necessary for (MF), clearly (MF) implies (FOSD).  The following result proves equivalence of (MLRP) and (MF).

\begin{proposition}[\citet{Milgrom}] $x$ is more favorable than $x'$ if and only if for every $\theta > \theta'$,
\begin{equation} \label{eq:Milgrom_MLRP}
\frac{f_{X\mid \theta} (x\mid \theta)}{f_{X\mid \theta}(x' \mid \theta)}  \geq \frac{f_{X\mid \theta'}(x \mid \theta')}{f_{X\mid \theta'}(x' \mid \theta')}
\end{equation}
\end{proposition}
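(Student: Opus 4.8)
The plan is to prove the two directions separately. For the ``if'' direction I would deduce the FOSD shift in the posterior from a Chebyshev/FKG-type correlation inequality valid for \emph{every} prior; for the ``only if'' direction I would argue by contraposition, exhibiting a single two-point prior under which the posterior comparison fails. Throughout it is convenient to rewrite (\ref{eq:Milgrom_MLRP}): cross-multiplying by the positive quantity $f_{X\mid\theta}(x'\mid\theta)/f_{X\mid\theta}(x\mid\theta')$ shows it is equivalent to
\[
\frac{f_{X\mid\theta}(x\mid\theta)}{f_{X\mid\theta}(x'\mid\theta)} \;\geq\; \frac{f_{X\mid\theta}(x\mid\theta')}{f_{X\mid\theta}(x'\mid\theta')} \qquad\text{for all }\theta>\theta',
\]
i.e.\ the likelihood ratio $w(\theta):=f_{X\mid\theta}(x\mid\theta)/f_{X\mid\theta}(x'\mid\theta)$ is nondecreasing in $\theta$.

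For the ``if'' direction, assume $w$ is nondecreasing, fix an arbitrary prior $F_\theta\in\Delta(\Theta)$, and fix any nondecreasing $u:\mathbb{R}\to\mathbb{R}$. By Bayes' rule,
\[
\int u(\theta)\,dF_{\theta\mid X}(\theta\mid x) = \frac{\int u(\theta) f_{X\mid\theta}(x\mid\theta)\,dF_\theta(\theta)}{\int f_{X\mid\theta}(x\mid\theta)\,dF_\theta(\theta)},
\]
and likewise with $x'$. Introducing the probability measure $\nu$ proportional to $f_{X\mid\theta}(x'\mid\theta)\,dF_\theta(\theta)$, and writing $\mathbb{E}_\nu$ for expectation under $\nu$, the inequality $\int u\,dF_{\theta\mid X}(\cdot\mid x)\geq \int u\,dF_{\theta\mid X}(\cdot\mid x')$ we must establish becomes $\mathbb{E}_\nu[uw]/\mathbb{E}_\nu[w]\geq \mathbb{E}_\nu[u]$, i.e.\ $\mathrm{Cov}_\nu(u,w)\geq 0$. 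This holds because $u$ and $w$ are both nondecreasing (apply the standard fact that $\int\!\int (u(s)-u(t))(w(s)-w(t))\,d\nu(s)\,d\nu(t)\geq 0$). Hence $F_{\theta\mid X}(\cdot\mid x)\geq_{FOSD} F_{\theta\mid X}(\cdot\mid x')$ for every prior, so $x$ is more favorable than $x'$.

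For the ``only if'' direction I would argue the contrapositive. Suppose $w$ is not nondecreasing, so there are $\theta>\theta'$ with $f_{X\mid\theta}(x\mid\theta)/f_{X\mid\theta}(x'\mid\theta) < f_{X\mid\theta}(x\mid\theta')/f_{X\mid\theta}(x'\mid\theta')$. Take the two-point prior $F_\theta$ that puts mass $p\in(0,1)$ on $\theta$ and $1-p$ on $\theta'$. For a realization $\tilde x$, the posterior assigns to the higher type $\theta$ the probability
\[
q_{\tilde x} = \frac{p\,f_{X\mid\theta}(\tilde x\mid\theta)}{p\,f_{X\mid\theta}(\tilde x\mid\theta) + (1-p)\,f_{X\mid\theta}(\tilde x\mid\theta')},
\]
and comparing posterior odds, $q_x<q_{x'}$ precisely because of the displayed strict inequality. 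Since both $F_{\theta\mid X}(\cdot\mid x)$ and $F_{\theta\mid X}(\cdot\mid x')$ are supported on $\{\theta',\theta\}$ and the former puts strictly less mass on $\theta$, the former does not first-order stochastically dominate the latter, so $x$ is not more favorable than $x'$, completing the contrapositive.

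I expect the points needing care to be: (i) the correlation inequality in the ``if'' direction — one must note that $\nu$ is a well-defined probability measure (using positivity of the conditional densities and the integrability that already makes the posteriors well-defined) and then invoke the two-variable Chebyshev/association bound; and (ii) justifying that the two-point prior in the ``only if'' direction is admissible — it is, since ``more favorable'' quantifies over all of $\Delta(\Theta)$, not only priors with positive densities, and if one insisted on the latter the same contradiction follows by replacing the atoms with narrow positive bumps around $\theta,\theta'$ and passing to the limit.
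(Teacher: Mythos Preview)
Your proof is correct. For the ``only if'' direction you use essentially the same two-point prior construction as the paper (the paper takes equal weights, you allow arbitrary $p\in(0,1)$, but either works and the logic is identical). For the ``if'' direction, however, you take a genuinely different route: the paper fixes a threshold $\theta^*$ and integrates the MLRP inequality twice---once over $\tilde\theta>\theta^*$ and once over $\theta\le\theta^*$---to obtain the CDF comparison $F(\theta^*\mid x)\le F(\theta^*\mid x')$ directly. Your argument instead introduces the tilted measure $\nu$, rewrites the posterior-mean comparison as $\mathbb{E}_\nu[uw]/\mathbb{E}_\nu[w]\ge\mathbb{E}_\nu[u]$, and invokes the Chebyshev correlation inequality $\mathrm{Cov}_\nu(u,w)\ge 0$ for comonotone functions. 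This is more conceptual and connects the result directly to the ``association'' perspective on positive dependence discussed elsewhere in the chapter; indeed the correlation inequality you need is exactly Exercise~3.6 of the paper (for bounded $u$, which suffices for FOSD). The paper's computation is more self-contained but also more mechanical. One minor quibble: the ``cross-multiplying'' step in your opening paragraph is superfluous---(\ref{eq:Milgrom_MLRP}) is already literally the statement that $w(\theta)\ge w(\theta')$, and the apparent difference between your display and the original is purely notational ($f_{X\mid\theta'}$ versus $f_{X\mid\theta}$ evaluated at $\theta'$).
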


\begin{proof} 
We will first show that if (\ref{eq:Milgrom_MLRP}) is satisfied at every $\theta > \theta'$, then $x$ must be more favorable than $x'$. Fix any prior $F_\theta$ and parameter $\theta^* \in \Theta$. If $F_\theta(\theta^*) \in \{0,1\}$ then the conclusion is trivially reached. So suppose $F_\theta(\theta^*) \in (0,1)$.

For any $\theta \leq \theta^*$ and $\tilde{\theta}>\theta^*$, (\ref{eq:Milgrom_MLRP}) implies
\[\frac{f (x\mid \tilde{\theta})}{f(x \mid \theta)}  \geq \frac{f(x' \mid \tilde{\theta})}{f(x' \mid \theta)}\]
where we omit subscripts on the densities here and elsewhere in the proof to ease notation.
Integrating over all $\tilde{\theta}$ such that $\tilde{\theta}>\theta^*$ (with respect to the prior distribution $F_\theta$), we obtain
\[\frac{\int_{\tilde{\theta} > \theta^*} f (x\mid \tilde{\theta}) dF_\theta(\tilde{\theta})}{f(x \mid \theta)}  \geq \frac{\int_{\tilde{\theta} > \theta^*}  f(x' \mid \tilde{\theta}) dF_\theta(\tilde{\theta})}{f(x' \mid \theta)}\]
or equivalently
\[\frac{f(x \mid \theta)}{\int_{\tilde{\theta} > \theta^*} f (x\mid \tilde{\theta}) dF_\theta(\tilde{\theta})}  \leq \frac{f(x' \mid \theta)}{\int_{\tilde{\theta} > \theta^*}  f(x' \mid \tilde{\theta}) dF_\theta(\tilde{\theta})}.\]
Integrating over all $\theta$ such that $\theta\leq\theta^*$, we obtain
\[\frac{\int_{\theta \leq \theta^*} f(x \mid \theta) dF_\theta(\theta)}{\int_{\tilde{\theta} > \theta^*} f (x\mid \tilde{\theta}) dF_\theta(\tilde{\theta})}  \leq \frac{\int_{\theta\leq\theta^*} f(x' \mid \theta) dF_\theta(\theta)}{\int_{\tilde{\theta} > \theta^*}  f(x' \mid \tilde{\theta}) dF_\theta(\tilde{\theta})}.\]
Recall that $f(x \mid \theta) f(\theta) = f(\theta \mid x)f(x)$, so the above display implies
\[\frac{\int_{\theta \leq \theta^*} f(\theta \mid x) d\theta}{\int_{\tilde{\theta} > \theta^*} f(\tilde{\theta} \mid x) d\tilde{\theta}}  \leq \frac{\int_{\theta\leq\theta^*} f(\theta \mid x') d\theta}{\int_{\tilde{\theta} > \theta^*}  f(\tilde{\theta}\mid x') d\tilde{\theta}}\]
or more simply
\[\frac{F(\theta^* \mid x)}{1-F(\theta^* \mid x)}  \leq \frac{F(\theta^* \mid x')}{1-F(\theta^* \mid x')} \]
Since $\frac{y}{1-y}$ is a strictly increasing function in $y$, we have
$F(\theta^* \mid x) \leq F(\theta^* \mid x')$
as desired. 
\medskip

In the other direction, we will show that if $x$ is more favorable than $x'$, then (\ref{eq:Milgrom_MLRP}) holds everywere. Consider any two parameter values $\theta > \theta'$, and let $F_\theta$ be a prior distribution supported on these two points with equal probability on each. 

Since by assumption  $x$ is more favorable than $x'$, then $F(\theta' \mid x) \leq F(\theta' \mid x')$, implying
\[\frac{F(\theta'\mid x)}{ 1- F(\theta' \mid x)} \leq  \frac{F(\theta'\mid x')}{ 1- F(\theta'\mid x')}\]
or equivalently
\[\frac{f(\theta'\mid x')}{ f(\theta \mid x')} \geq  \frac{f(\theta'\mid x)}{f(\theta\mid x)}.\]
Applying Bayes' rule again, we can rewrite the above as
$\frac{f(x \mid \theta)}{f(x' \mid \theta)} \geq \frac{f(x \mid \theta')}{f(x' \mid \theta')}$, which is the desired conclusion. \end{proof}

\begin{remark} \citet{Milgrom}'s result is not precisely the proposition above, but instead the equivalence between strict MLRP (as defined in Definition \ref{def:MLRP}) and a definition of ``more favorable" that replaces FOSD with strict FOSD.

Specifically, say that $F$ strictly first-order stochastically dominates $\widetilde{F}$ if $F(\theta) \leq \widetilde{F}(\theta)$ everywhere with strict inequality at some $\theta$. (Equivalently, 
$\int u(\theta) dF(\theta) > \int u(\theta) d\widetilde{F}(\theta)$ for every strictly increasing function $u: \mathbb{R} \rightarrow \mathbb{R}$.)  Say that $x$ is strictly more favorable than $x'$ if for every prior distribution $F_\theta$, the posterior distribution $F_{\theta \mid X}(\cdot \mid x)$ strictly first-order stochastically dominates $F_{\theta \mid X}(\cdot \mid x')$. Then, by substituting strict inequalities in place of weak inequalities in the proof above where appropriate, we can conclude that $\{f_{X\mid \theta}(\cdot \mid \theta)\}$ satisfies strict MLRP if and only if $x$ is strictly more favorable than $x'$.\footnote{Indeed, the same proof demonstrates a stronger (if slightly more cumbersome to state) result: If and only if $\{f_{X\mid \theta}(\cdot \mid \theta)\}$ satisfies strict MLRP, then $F_{\theta \mid X}(\theta \mid x) < F_{\theta \mid X}(\theta \mid x')$ at every $\theta$ such that $0<F(\theta)<1$.}
\end{remark}

We conclude by briefly summarizing other notions of positive dependence and placing the above properties relative to these.

\begin{itemize}
\item[]  \textbf{Positive covariance (C):} $Cov(X,\theta) \geq 0$
\item[] \textbf{Positive quadrant dependence (QD):} $Cov(g(X),h(\theta)) \geq 0$ for all non-decreasing functions $g$ and $h$
\item[]  \textbf{Association (As):} $Cov(g(X,\theta),h(X,\theta))\geq 0$ for all non-decreasing functions $g$ and $h$
\item[] \textbf{Left-Tail Decreasing (LT):} For all $x$, $F_{X \mid \theta}(X \leq x \mid \theta \leq t)$ is non-increasing in $t$, and for all $t$, $F_{\theta \mid X}(\theta \leq t \mid X \leq x)$ is non-increasing in $x$.
\item[]  \textbf{Inverse Hazard Rate Decreasing (IHR):} For all $x$, $F_{X \mid \theta} (x\mid t)/f_{X \mid \theta}(x\mid t)$ is non-increasing in $t$, and for all $t$, $F_{\theta \mid X}(t \mid x)/f_{\theta \mid X}(t\mid x)$ is non-increasing in $x$.
\end{itemize}

\noindent These properties are extensively studied in, for example, \citet{Lehmann}, \citet{Esary}, \citet{Castro}, and Chapter 3 of \citet{Balakrishna2009}. The following chain of implications is summarized in \citet{Castro}:

\begin{theorem} \emph{(A)} $\Longleftrightarrow$ \emph{(MLRP)} $\Longrightarrow$ \emph{(IHR)}
$\Longrightarrow$ \emph{(FOSD)} $\Longrightarrow$ \emph{(LT)} $\Longrightarrow$ \emph{(As)}  $\Longrightarrow$ \emph{(QD)}  $\Longrightarrow$ \emph{(C)}

\end{theorem}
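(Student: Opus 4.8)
The plan is to verify the chain link by link, reading it as one equivalence followed by a sequence of one-directional implications. The equivalence $\textbf{(A)} \Leftrightarrow \textbf{(MLRP)}$ is already in hand from Proposition \ref{prop:A_MLRP}, and we already know $\textbf{(MLRP)} \Rightarrow \textbf{(FOSD)}$ through $\textbf{(MF)}$; the new work is to interpolate $\textbf{(IHR)}$ and then to push the chain rightward through $\textbf{(LT)}$, $\textbf{(As)}$, $\textbf{(QD)}$, $\textbf{(C)}$. The two rightmost links are immediate: $\textbf{(As)} \Rightarrow \textbf{(QD)}$ because a nondecreasing function of $X$ alone (resp. of $\theta$ alone), viewed as a function of the pair $(X,\theta)$, is nondecreasing coordinatewise, so $\textbf{(QD)}$ is the special case of $\textbf{(As)}$ with $g$ depending only on $X$ and $h$ only on $\theta$; and $\textbf{(QD)} \Rightarrow \textbf{(C)}$ by taking $g$ and $h$ to be the (nondecreasing) identity maps.

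For $\textbf{(MLRP)} \Rightarrow \textbf{(IHR)}$ I would argue directly. Writing $F_{X\mid\theta}(x\mid t)=\int_{-\infty}^{x} f_{X\mid\theta}(s\mid t)\,ds$, the assertion that $F_{X\mid\theta}(x\mid t)/f_{X\mid\theta}(x\mid t)$ is nonincreasing in $t$ amounts, for $t>t'$, to $\int_{-\infty}^{x}\bigl[f_{X\mid\theta}(s\mid t)f_{X\mid\theta}(x\mid t')-f_{X\mid\theta}(s\mid t')f_{X\mid\theta}(x\mid t)\bigr]\,ds\le 0$, and the bracketed integrand is $\le 0$ for every $s\le x$ by the MLRP inequality applied to $x>s$ and $t>t'$. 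The $\theta\mid X$ statement follows from the same computation, using the remark that MLRP of $\{f_{X\mid\theta}\}$ forces MLRP of $\{f_{\theta\mid X}\}$. For $\textbf{(IHR)} \Rightarrow \textbf{(FOSD)}$, fix $x>x'$ and put $G(t)=F_{\theta\mid X}(t\mid x)$, $H(t)=F_{\theta\mid X}(t\mid x')$, so $G'$ and $H'$ are the corresponding strictly positive conditional densities; the $\theta\mid X$ part of $\textbf{(IHR)}$ gives $G(t)H'(t)\le H(t)G'(t)$, i.e. $\tfrac{d}{dt}\log\bigl(G(t)/H(t)\bigr)\ge 0$, so $G/H$ is nondecreasing, and since $G(t),H(t)\to 1$ as $t\to\infty$ the ratio has limit $1$, hence $G(t)\le H(t)$ for all $t$, which is exactly $F_{\theta\mid X}(\cdot\mid x)\ge_{FOSD}F_{\theta\mid X}(\cdot\mid x')$.

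For $\textbf{(FOSD)} \Rightarrow \textbf{(LT)}$, observe that $\textbf{(FOSD)}$ says $P(\theta\le t\mid X=s)$ is nonincreasing in $s$ for each $t$. For $x''>x'$, $P(\theta\le t\mid X\le x'')$ is the convex combination of $P(\theta\le t\mid X\le x')$ and $P(\theta\le t\mid x'<X\le x'')$ with weights $P(X\le x')/P(X\le x'')$ and its complement; the first of these is an average of $P(\theta\le t\mid X=s)$ over $s\le x'$ and the second an average over $s\in(x',x'']$, so by monotonicity in $s$ the second is $\le P(\theta\le t\mid X=x')\le$ the first, whence the combination is $\le P(\theta\le t\mid X\le x')$. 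This gives that $F_{\theta\mid X}(\theta\le t\mid X\le x)$ is nonincreasing in $x$; the $X\mid\theta$ statement is symmetric.

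The one link demanding genuine work, and where I expect the main obstacle, is $\textbf{(LT)} \Rightarrow \textbf{(As)}$: association concerns covariances of functions of \emph{both} coordinates and does not reduce to a pointwise density inequality. The route I would follow is the classical one from the theory of positive dependence: approximate $(X,\theta)$ by discrete random variables that inherit $\textbf{(LT)}$, establish $\textbf{(As)}$ in the discrete case by an induction that builds the pair from its conditionals (the Esary–Proschan–Walkup argument, exploiting that tail-monotone conditioning preserves the covariance inequality), and pass to the limit; alternatively one cites \citet{Esary}, \citet{Lehmann}, or \citet{Castro}, where precisely this implication is carried out. Given the expository purpose, I would state this step with a reference and present in full only the elementary links above.
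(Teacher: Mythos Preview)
The paper does not prove this theorem at all: it is stated as a summary from the literature, with the chain attributed to \citet{Castro} and further references to \citet{Lehmann}, \citet{Esary}, and \citet{Balakrishna2009}. Your proposal therefore goes well beyond the paper by sketching each link directly, and your instinct that $\textbf{(LT)}\Rightarrow\textbf{(As)}$ is the one step that genuinely needs the Esary--Proschan--Walkup machinery (or a citation) is exactly right.

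Your arguments for $\textbf{(MLRP)}\Rightarrow\textbf{(IHR)}$, $\textbf{(IHR)}\Rightarrow\textbf{(FOSD)}$, and the three rightmost links are correct and efficient. One point deserves care in $\textbf{(FOSD)}\Rightarrow\textbf{(LT)}$: the paper defines $\textbf{(FOSD)}$ one-directionally (only $F_{\theta\mid X}(\cdot\mid x)$ is assumed stochastically increasing in $x$), whereas its $\textbf{(LT)}$ is stated two-directionally. Your averaging argument cleanly delivers the $\theta\mid X\le x$ half, but the closing remark ``the $X\mid\theta$ statement is symmetric'' would require $F_{X\mid\theta}(\cdot\mid t)$ to be stochastically increasing in $t$, which is not what $\textbf{(FOSD)}$ gives you---stochastic increasingness is not a symmetric notion. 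Since only the one-directional left-tail-decreasing property is actually needed to reach $\textbf{(As)}$ (association being symmetric in $(X,\theta)$), the cleanest fix is to carry the chain through with that half alone and drop the symmetry claim; alternatively, one can give a separate argument for the $X\mid\theta\le t$ half, but it does not follow ``by symmetry'' from what you have.
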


Thus the standard properties of affiliation and MLRP are in fact strong, implying all of the other properties but not in general implied by them. These properties are equivalent to one another in the special case in which the two variables are jointly normal. 

\begin{exercise}[G] Suppose $(X_1, \dots, X_n)$ are jointly normal and exchangeable, where $\sigma^2=\Var(X_i)$ for each $i$, and $\rho=Cov(X_i,X_j)$ for each pair of indices $i,j$. Prove that these variables are  affiliated if and only if $\rho \geq 0$.

\begin{hint*} Use the fact given in Exercise \ref{ex:Average}.
\end{hint*}

\end{exercise}

\section{When These Conditions Fail} \label{sec:LL}

An example from \citet{LagzielLehrer} demonstrates the kind of counterintuitive result that can hold in settings where (A) and (MLRP) fail. 

An editor chooses which papers to publish. Papers have unknown quality graded on a 9-point scale (A+, A, A-, B+, B, B-, C+,C,C-), whose prior distribution is given in Figure \ref{fig:ImpactDistr}.
\begin{figure}[H] 
				\centering
				\includegraphics[scale=.3]{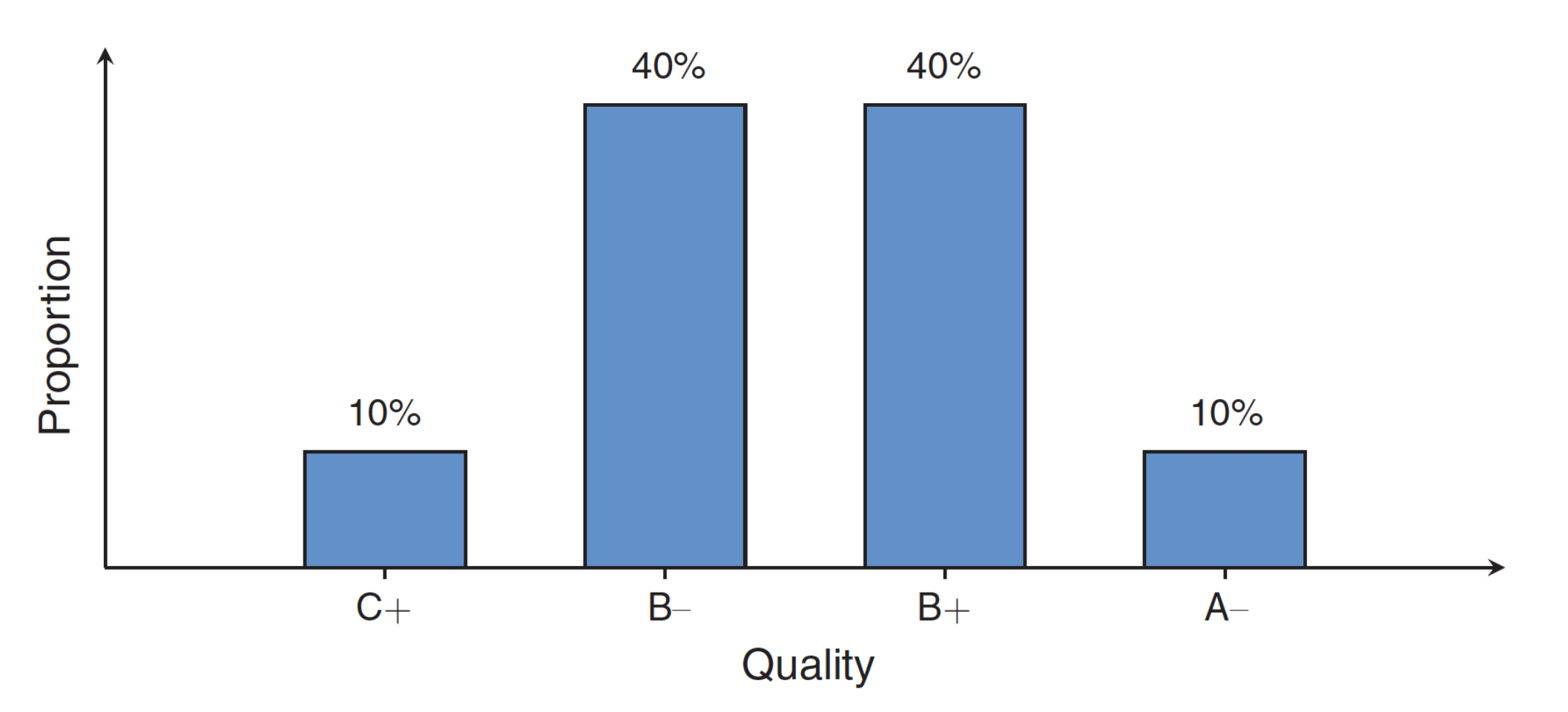}
				\caption{Distribution of Papers' Quality} \label{fig:ImpactDistr}
\end{figure}

The editor learns about quality via a noisy refereeing process, which generates an unbiased signal $X$ about the paper. The realization of $X$ is equal to the true quality with probability 0.8, and otherwise exactly two levels higher or lower than the true quality (each with probability 0.1).
The distribution of $X$ is reported in Figure \ref{fig:XDistr}:
\begin{figure}[H]
				\centering
				\includegraphics[scale=.35]{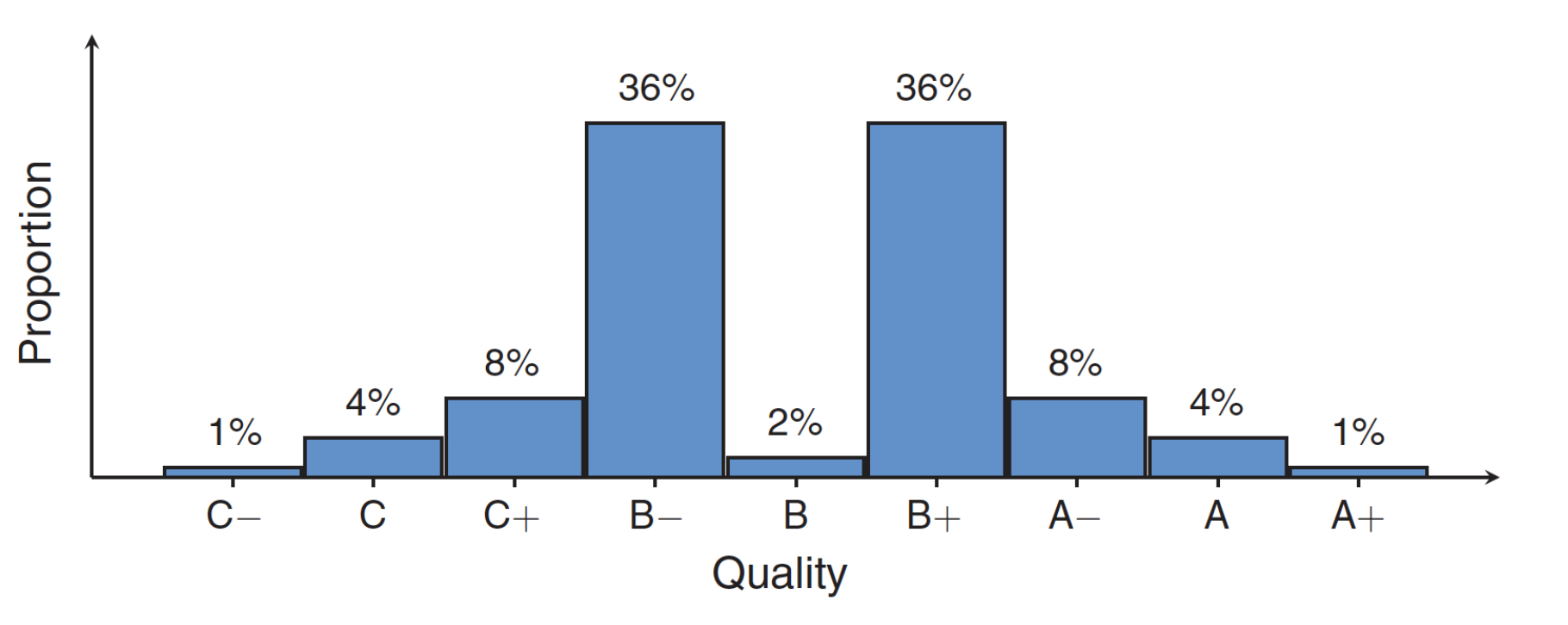}
				\caption{Distribution of Referee Signal} \label{fig:XDistr}
\end{figure}
The editor chooses a threshold and accepts all papers whose expected quality (given the referee's report) exceeds this threshold.  Intuitively, we may expect that the editor faces a tradeoff between publishing more papers versus publishing higher quality papers, where a higher threshold corresponds to publishing fewer but higher quality papers. 

But observe that if the editor chooses to publish only papers with an expected quality that (weakly) exceeds $A$ (i.e., the top-rated 5\% of papers), then the expected value of the published work is close to $B+$. If the editor lowers the bar to $A-$ (i.e., the top-rated 13\%), then the expected value of the published work \emph{increases} to $A-$. Not only are more papers published, but their expected quality is higher. 

In this example, we have $\mathbb{E}(\theta \mid X=x) <\mathbb{E}(\theta \mid X=x')$ even while $x>x'$, so clearly the posterior belief at $x'$ does not first-order stochastically dominate the posterior belief at $x$. \citet{ChambersHealy} demonstrate an even stronger reversal by constructing signals such that the posterior belief at the lower signal realization first-order stochastically dominates the posterior belief at the higher signal realization. Notably, their result relies on natural-seeming signals that satisfy various reasonable properties. 

\begin{theorem}
For every non-degenerate, bounded $\theta$ there exists a signal structure $X$ and two signal realizations $x'>x$ such that $f(\theta \mid X=x')$ is strictly first-order stochastically dominated by $f(\theta \mid X=x)$. Furthermore, $X$ can be chosen to have the following properties: i) $X$ is an additive signal structure, and ii) $e:=X-\theta$ is mean-zero, symmetric, quasiconcave, and has bounded support.
\end{theorem}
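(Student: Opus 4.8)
The plan is to construct a single additive signal whose noise has a ``tall central spike sitting on a long flat plateau'' shape, and then to compare the realization that lands exactly on top of $\theta$'s support with a much larger realization that falls into the plateau. Intuitively, the first realization tilts the posterior toward the top of the support, while the second is essentially uninformative (every point of the support is explained equally well by the flat part of the noise), so the posterior at the \emph{larger} realization collapses back to the prior, which sits strictly FOSD-below the posterior at the smaller realization.

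The first step is a normalization. Since $\theta$ is bounded and non-degenerate, $m:=\inf\operatorname{supp}\theta<M:=\sup\operatorname{supp}\theta$; replacing $\theta$ by $(\theta-m)/(M-m)$ merely rescales an eventual additive noise by the factor $M-m>0$ (preserving symmetry, mean zero, quasiconcavity, and bounded support) and pushes any strict FOSD reversal forward through an increasing affine map, so it suffices to handle $\operatorname{supp}\theta\subseteq[0,1]$ with $\inf=0$, $\sup=1$. Write $\mu$ for the law of $\theta$; note $\mu([0,\tfrac12))>0$ (as $0\in\operatorname{supp}\mu$) and $\pi:=\mu([\tfrac12,1])\in(0,1)$ (as $1\in\operatorname{supp}\mu$). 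I then let $e$ have the even density $g$ with $g\equiv a$ on $[-\tfrac12,\tfrac12]$, $g\equiv b$ on $\tfrac12<|e|\le 2$, and $g\equiv 0$ on $|e|>2$, where $0<b<a$ and $a+3b=1$ (e.g.\ $a=\tfrac58$, $b=\tfrac18$): this $g$ is symmetric, mean zero, nonincreasing in $|e|$ (hence quasiconcave), compactly supported on $[-2,2]$, and a density. Set $X=\theta+e$ with $e\perp\theta$, and take $x=1$ and $x'=2$.

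The two posteriors are then read off directly. At $x'=2$: for every $\theta\in[0,1]$ we have $2-\theta\in[1,2]$, so $g(2-\theta)=b$; hence $f_X(2)=b>0$ and the posterior at $X=2$ equals the prior $\mu$. At $x=1$: $g(1-\theta)=a$ for $\theta\in[\tfrac12,1]$ and $g(1-\theta)=b$ for $\theta\in[0,\tfrac12)$, so $f_X(1)=a\pi+b(1-\pi)=:Z>0$ and the posterior $\nu$ at $X=1$ is $\mu$ reweighted by $a$ on $[\tfrac12,1]$ and by $b$ on $[0,\tfrac12)$, renormalized by $Z$. It then remains to verify that $F_\nu(t)\le F_\mu(t)$ for all $t$ with strict inequality somewhere, i.e.\ that $\nu$ strictly first-order stochastically dominates $\mu$ — equivalently, the posterior at $x'=2$ is strictly FOSD-dominated by the posterior at $x=1$, as required, and $e=X-\theta$ has all the listed properties (on $[-2(M-m),2(M-m)]$ after undoing the normalization). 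For $t<0$ and $t\ge1$ the two CDFs agree; for $t\in[0,\tfrac12)$ one gets $F_\nu(t)=bF_\mu(t)/Z<F_\mu(t)$ because $Z>b$ (strict whenever $F_\mu(t)>0$, which holds for $t>0$); for $t\in[\tfrac12,1)$ a one-line rearrangement gives $Z\,F_\mu(t)-\big(b(1-\pi)+a(F_\mu(t)-(1-\pi))\big)=(1-\pi)(a-b)\big(1-F_\mu(t)\big)>0$ since $1-\pi>0$, $a>b$, and $F_\mu(t)<1$.

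The only place with genuine content is this last inequality on $[\tfrac12,1)$: I must confirm that multiplying the upper tail of the prior by $a$ and its lower tail by the smaller constant $b$ produces a \emph{uniform} first-order shift — that the two CDFs never cross — which is exactly what the displayed identity checks. Everything else (the affine normalization, the verification of the qualitative properties of $g$, and the observation that a far-away realization in the flat plateau simply returns the prior) is routine bookkeeping.
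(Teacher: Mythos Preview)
Your construction is correct. Note first that the paper does not supply its own proof of this theorem---it is stated and attributed to \citet{ChambersHealy}---so there is no in-text argument to compare against. Judged on its own terms, your argument is complete: the affine normalization is harmless; the step density $g$ (height $a$ on $[-\tfrac12,\tfrac12]$, height $b$ on $\tfrac12<|e|\le 2$, with $0<b<a$ and $a+3b=1$) is a bona fide symmetric, mean-zero, quasiconcave density with compact support; at $x'=2$ the likelihood $g(2-\theta)$ is constant on $[0,1]$, returning the prior; at $x=1$ the likelihood takes only the two values $a$ and $b$, so the posterior is the prior reweighted by $a$ on $[\tfrac12,1]$ and $b$ on $[0,\tfrac12)$; and your identity
\[
Z\,F_\mu(t)-\bigl(b(1-\pi)+a(F_\mu(t)-(1-\pi))\bigr)=(1-\pi)(a-b)\bigl(1-F_\mu(t)\bigr)
\]
indeed follows from $Z-a=(b-a)(1-\pi)$ and gives the strict CDF inequality on $[\tfrac12,1)$ because $1\in\operatorname{supp}\mu$ forces $F_\mu(t)<1$ there. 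Your side claim that $F_\mu(t)>0$ for all $t>0$ is also correct: since the support is closed, $0\in\operatorname{supp}\mu$, so every open interval $(-t,t)$ has positive $\mu$-mass, and this mass sits in $[0,t)\subseteq[0,t]$.

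The idea---a tall spike on a long flat plateau, so that a far-out realization is uninformative while a realization at the edge of the spike tilts the posterior upward---is essentially the mechanism \citet{ChambersHealy} exploit, and your two-level step density is arguably the cleanest way to execute it.
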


\noindent See \citet{Heinsalu} for a strengthening of \citet{LagzielLehrer}'s example using this result, in which lowering the threshold not only increases the expected quality, but results in a quality distribution for published papers that first-order stochastically dominates the one that would obtain at the higher threshold.

\section{Additional Exercises}

\begin{exercise}[G] Let $Z_1, \dots, Z_n$ be affiliated and let $h: \mathbb{R}^n \rightarrow \mathbb{R}$ be any function that is nondecreasing in each of its coordinates. Prove that the function
\[\mathbb{E}(h(Z_1, \dots, Z_n) \mid Z_1 = z_1)\]
is nondecreasing in $z_1$.
\end{exercise}

\begin{exercise}[G] Let $X$ be any real-valued random variable  and let $f: \mathbb{R} \rightarrow \mathbb{R}$ and $g: \mathbb{R} \rightarrow \mathbb{R}$ be bounded nondecreasing functions. Prove that $Cov(f(X),g(X)) \geq 0$. (Do not apply the FKG inequality.)
\begin{hint*} There are at least two short proofs, one that uses Fubini's theorem and the fact that $\mathbb{E}(XY) = \mathbb{E}(X)\mathbb{E}(Y)$ for any independent random variables $X$ and $Y$, and another which relies entirely on elementary (if not obvious) arguments.
\end{hint*}
\end{exercise}

\begin{exercise}[G]
Ann and Bob share the same prior $p$ over an unknown real-valued state $\theta$, and observe a common realization of the signal $X$, but disagree about the distribution of $S$. Ann believes that $X = \theta + \eps$, where $\theta \indep \eps$ and $\eps $ is a real-valued noise term with density $f_\eps$. Bob believes that $X = \theta + \eps + \Delta$ for some $\Delta > 0$. That is, Ann perceives Bob as adding $\Delta$ to the realization of the signal, while Bob perceives Ann as subtracting $\Delta$ from the realization of the signal. 
 
Let $f_A$ denote the joint density of $(\theta, X)$ according to Ann's model and $f_B$ denote the joint density according to Bob's model, with $\mathbb{E}^A$ and $\mathbb{E}^B$ denoting their respective expectation operators. Impose the monotone likelihood-ratio property on $\{f_A(\cdot \mid \omega)\}$, that is,
\[\frac{f_A(x' \mid \theta')}{f_A(x \mid \theta')} \geq \frac{f_A(x' \mid \theta)}{f_A(x \mid \theta)} \quad \forall x'>x, \theta'>\theta\]
\begin{itemize}
\item[(a)] Prove that $\{f_B(\cdot \mid \theta)\}$ also satisfies MLRP. 
\item[(b)] Prove that $\mathbb{E}^A[\mathbb{E}^B[\theta \mid X]]$ is decreasing in $\Delta$, and interpret this result. 
\item[(c)] Suppose that Ann and Bob now additionally observe a common vector of iid signals $(Y_1, Y_2, \dots, Y_N)$ where each $Y_i = \theta + \delta_i$ with $\theta \indep \delta_i$ and $\delta_i$ are iid across signals. Prove that
\[\mathbb{E}^A[\mathbb{E}^B[\theta \mid X, Y_1, \dots, Y_N]] \leq \mathbb{E}^A[\mathbb{E}^B[\theta \mid X, Y_1, \dots, Y_N, Y_{N+1}]]\]
for every $N\geq 1$. Again, interpret the result.
\end{itemize}
\end{exercise}

\chapter{Comparing Information I: The Blackwell Order}

When an agent has access to a choice between multiple signals, we may desire to order these signals based on how informative they are. Intuition can guide us on how to define such an ordering in specific cases, for example:
\begin{itemize}
\item Adding noise to a signal decreases its informativeness.
\item Observing the realization of $(X,Y)$ is more informative than observing the realization of $X$ alone.
\end{itemize}
Any informativeness ordering should satisfy these properties, but there are different ways to generalize from here. One approach is to fix a decision problem and characterize the instrumental value of the signal for that decision problem. Alternatively, we could look for a universal informativeness ordering over signals that holds for all decision problems (as will be the focus of this chapter). Yet another approach is to quantify the ``signal content" contained within the signal based on the physical difficulty of producing or processing that information (see Chapter \ref{sec:CostofInformation}). 

In this chapter we introduce the Blackwell partial order on signals, which considers one signal to be more informative than another if it is more useful for all decision problems. If $\sigma$ dominates $\sigma'$ in this Blackwell order, we will say that  \emph{$\sigma$ is more informative than $\sigma'$} or that \emph{$\sigma$ Blackwell-dominates $\sigma'$}. The following sections demonstrate five perspectives on this order, culminating in Blackwell's theorem (establishing their equivalence) and the proof of this theorem.

\section{Garblings}

We may consider a signal to be more informative than another if the latter is a noised-up version of the former.

\begin{definition}[Markov matrix] A matrix $M$ is a \emph{Markov matrix} if its entries are nonnegative and its rows sum to 1. 
\end{definition}
Recall that when the set of states and the set of signal realizations are finite, we can represent any signal as a Markov matrix.
    \begin{definition}[Garblings, Finite Version] \label{def:Garbling} Markov matrix $ P $ is a \emph{garbling} of  Markov matrix $ Q $  if there exists a Markov matrix $ M $ s.t. $ QM=P $
\end{definition}

\begin{example} \label{ex:Garbling} 
Let $\Theta = \{\theta_1,\theta_2\}$ and consider the signals
			\[P = \left(\begin{array}{cc}
		 3/4 & 1/4 \\
		 1/4 & 3/4\end{array}\right)
		\quad \quad \quad Q = \left(\begin{array}{cccc}
		 9/16 & 3/16 & 3/16 & 1/16 \\
		 1/16 & 3/16 & 3/16 & 9/16 \end{array}\right)
		\]
where as usual the rows are indexed to states and the columns are indexed to signal realizations. Then since
				\[ \underbrace{\left(\begin{array}{cccc}
		 9/16 & 3/16 & 3/16 & 1/16 \\
		 1/16 & 3/16 & 3/16 & 9/16 \end{array}\right)}_{Q}
		 \underbrace{\left(\begin{array}{cc}
			1 & 0 \\  
			1 & 0 \\
			0 & 1 \\
			0 & 1 \\
			\end{array}\right)}_{M} = \underbrace{\left(\begin{array}{cc}
		 3/4 & 1/4 \\
		 1/4 & 3/4\end{array}\right)}_{P}\]
		 where $M$ is a Markov matrix, we can conclude that $P$ is a garbling of $Q$. 
		 
		 This example has a particularly nice intuition. Label the possible realizations of the first information structure $P$ as $s_1$ and $s_2$, and consider the signal which is two independent realizations of $P$.  The set of possible realizations of this new signal is then $\{ s_1s_1, s_1s_2,s_2s_1,s_2s_2\}$ with the conditional distributions over these realizations given precisely by $Q$. So observing $P$ is statistically equivalent to observing $Q$ and forgetting the second realization. Clearly then $P$ is less informative than $Q$.
		 		 \end{example}

\begin{exercise}[U] The state space is $\Theta = \{\theta_1,\theta_2\}$ and two signals are described by the following signal structures
\[P= \left(\begin{array}{cccc}
& s_1 & s_2 & s_3 \\
\theta_1 & 2/3 & 1/3 & 0\\
\theta_2 & 0 & 1/3 & 2/3
\end{array}\right)\]

\[Q = \left(\begin{array}{ccc}
& \tilde{s}_1 & \tilde{s}_2 \\
\theta_1 & 1/6 & 5/6 \\
\theta_2 & 5/6 & 1/6
\end{array}\right)\]
Show that $Q$ is a garbling of $P$. (How does this exercise relate to Example \ref{ex:Garbling}?)
\end{exercise}

\begin{exercise}[G]
For any $q \in [1/2,1]$, define the matrix
\[S_q = \left(\begin{array}{cc}
q & 1-q \\
1-q & q \end{array}\right)\]
Suppose $q,q' \in [1/2,1]$ with $q>q'$. Prove that $S_{q'}$ is a garbling of $S_{q'}$.
\end{exercise}

More generally, we can replace the Markov matrix $M$ in Definition \ref{def:Garbling} with a Markov kernel.

\begin{definition}[Garblings, General Version]
	The signal $ \sigma':\Theta \to\Delta(S') $ is a \emph{garbling} of the signal $ \sigma:\Theta\to\Delta(S) $ if there exists a Markov kernel $ \gamma:S\to \Delta (S') $ such that
	$$ \sigma'(s'\mid \theta) = \int_{s\in S} \gamma (s'\mid s) \sigma(s\mid \theta) ds	$$
	\end{definition}
			 
		 \begin{example} Let $\theta$, $\eps$, and $\delta$ be independent real-valued random variables with densities $f_\theta$, $f_\eps$, and $f_\delta$. Then the signal 
	$X = \theta + \eps + \delta $
	is a garbling of
	$Y = \theta+\eps$, 
since
	\[f_{X \mid \theta}(x \mid t) = \int_{y \in \mathbb{R}} f_\delta (x - y) f_{Y \mid \theta} (y \mid t )  dy\]
where $f_\delta$ is a Markov kernel.
\end{example}

\begin{example} Consider an arbitrary finite set $\Theta$ and let $I$ be the $\vert \Theta \vert \times \vert \Theta \vert$ identity matrix. Then for any set of signal realizations $S$ and any $\vert \Theta \vert \times \vert S \vert$ Markov matrix $Q$, we have $IQ=Q$, so $Q$ is a garbling of $I$.
\end{example}

\begin{exercise}[U] Is it possible for $P$ and $Q$ to both be garblings of one another if $P \neq Q$? Provide an example if so, and otherwise prove that it is not possible.
\end{exercise}

\begin{remark} \label{remark:Garbling} Let $X$ and $X'$ respectively denote the random realizations of the signals $\sigma$ and $\sigma'$. Then $\theta$, $X$, and $X'$ are random variables which can be defined on a common probability space. The property that $\sigma'$ is a garbling of $\sigma$ does not however pin down the joint distribution of $(\theta,X,X')$. What it guarantees is that there is a way of generating these variables such that $\theta$ is independent of $X'$ conditional on $X$, in which case $\theta \mid X$ is identical in distribution to $\theta \mid X, X'$.\footnote{First draw the state $\theta$, then draw $X$ according to its conditional distribution, and finally draw $X'$ according to the garbling kernel $\gamma$, independent of $\theta$.} Other ways of generating these variables---still consistent with $\sigma'$ being a garbling of $\sigma$---can yield different relationships.

For example, suppose $\theta \sim \mathcal{N}(0,1)$ while
\begin{align*}
X&=\theta + \eps_1 \\
X'&=\theta + \eps_2
\end{align*}
where $\eps_1 \sim \mathcal{N}(0,1)$ and $\eps_2 \sim \mathcal{N}(0,2)$ are both independent of $\theta$. Then clearly the latter signal is a garbling of the former. If we further assume that $\eps_2 = \eps_1 + \delta$ where $\delta \sim \mathcal{N}(0,1)$ is an independent noise term, then the following statements are true:
\begin{itemize}
\item $X'$ is independent of $\theta$ conditional on $X$.
\item $X'$ is not independent of $X$ conditional on $\theta$ (since they are further related through the common component $\eps_1$).
\end{itemize}
On the other hand, if we assume that $\eps_1$ and $\eps_2$ are independent, then the statements above are reversed:
\begin{itemize}
\item $X'$ is not independent of $\theta$ conditional on $X$ (since $X'$ provides additional information about $\theta$ beyond what is revealed by $X$).
\item $X'$ is independent of $X$ conditional on $\theta$.
\end{itemize}
Thus in general, the assumption that two signals are related by a garbling does not imply either conditional independence statement given above.
\end{remark}

\section{Decision Problems} \label{sec:DecisionProblem}

Our next two definitions are based on the instrumental value of the signal for decision problems.

\begin{definition} A decision problem is any pair $\bold{D} = (A,u)$ where $A$ is an action set and $u: A \times \Theta \rightarrow \mathbb{R}$ is a payoff function.
\end{definition}

\noindent The full decision problem is described as follows. Fix a prior $p \in \Delta(\Theta)$ and a signal $\sigma: \Theta \rightarrow \Delta(S)$.
\begin{enumerate}
\item The agent chooses a strategy $\alpha: S \rightarrow A$.
\item The state $\theta \sim p$ and signal realization $s \sim \sigma(\cdot \mid \theta) $ are realized, and the agent takes action $\alpha(s)$. The agent's payoff is $u(\alpha(s),\theta)$.
\end{enumerate}

Without the benefit of further information, the best expected payoff the agent can achieve is
\begin{equation} \label{payoff:NoInfo}
\sup_{a \in A} \mathbb{E}\left[u(a,\theta)\right]
\end{equation}
With the benefit of the signal, the agent can achieve an expected payoff of
\begin{equation} \label{payoff:Info}
\sup_{\alpha: S \rightarrow A} \mathbb{E}\left[u(\alpha(s),\theta)\right] = \mathbb{E} \left[ \sup_{a \in A} \mathbb{E}\left[u(a,\theta) \mid s \right]\right]
\end{equation}
where we abuse notation on the LHS by using $s$ to denote the random variable which is the realization of the signal. On the RHS, the inner expectation is with respect to uncertainty about $\theta$ (conditional on the realization of $s$) and the outer expectation is with respect to uncertainty about $s$. One measure of the value of the signal is the difference in these expected payoffs, i.e., 
\begin{align*}
V_{\bold{D},p}(\sigma) \equiv \mathbb{E} \left[ \sup_{a \in A} \mathbb{E}\left[u(a,\theta) \mid s \right]\right]  - \sup_{a \in A} \mathbb{E}\left[u(a,\theta)\right]
\end{align*}
where $\bold{D}=(A,u)$ is the decision problem and $p \in \Delta(\Theta)$ is the agent's prior. 

\begin{remark} It is without loss to assume the use of pure strategies above, but in the subsequent development of the Blackwell order it will be useful to replace $a$ with a mixed strategy $\alpha \in \Delta(A)$ in (\ref{payoff:NoInfo}) and $\alpha$ with a stochastic map $\alpha: S \rightarrow \Delta(A)$ in (\ref{payoff:Info}).
\end{remark}

\begin{example} Suppose $\Theta = \{\theta_1, \theta_2\}$ with a uniform prior $p$. The decision problem is $(A,u)$ where $A = \{a_1,a_2\}$ and the utility function $u: A \times \Theta \rightarrow \mathbb{R}$ assigns a payoff of 1 when the action matches the state, and zero otherwise. The signal $\sigma$ is
\[\begin{array}{ccc}
& s_1 & s_2 \\
\theta_1 & q & 1-q \\
\theta_2 & 1-q & q
\end{array}\]
where $q > 1/2$. Then the agent's ex-ante payoff is maximized by choosing the strategy $\alpha$ that maps $s_1$ to action $a_1$ and $s_2$ to action $a_2$, with an expected payoff of $q$. In the absence of information the agent's best payoff is $1/2$, so $V_{\mathbf{D},p}(\sigma) = q-1/2$.
\end{example}

\begin{example} Suppose $\Theta = \mathbb{R}$ with a prior $\theta \sim \mathcal{N}(0,\sigma_\theta^2)$. The decision problem is $(A,u)$ where $A = \mathbb{R}$ and $u(a,\theta) = -(a-\theta)^2$. The signal is $X = \theta+\eps$ where $\eps \sim \mathcal{N}(0, \sigma_\eps^2)$ is independent of $\theta$. Then the agent's ex-ante payoff is maximized by choosing the strategy $\alpha(x) = \mathbb{E}(\theta \mid X=x)$, with an expected payoff of
\[\mathbb{E}_X\left[ - (\mathbb{E}(\theta \mid X) - \theta)^2\right] = \mathbb{E}_X\left[-\Var(\theta \mid X)\right] = -\frac{\sigma_\theta^2 \sigma_\eps^2}{\sigma_\theta^2 + \sigma_\eps^2}\]
using Fact \ref{fact:BiVar} in the final equality (and in particular, the property that posterior variance is independent of the signal realization). In the absence of information the agent's best payoff is $-\Var(\theta) = -\sigma_\theta^2$, so $V_{\mathbf{D},p}(X) =   \frac{\sigma_\theta^2 \sigma_\eps^2}{\sigma_\theta^2 + \sigma_\eps^2} - \sigma_\theta^2 = \frac{\sigma_\theta^4}{\sigma_\theta^2 + \sigma_\eps^2}$.
\end{example}

In any specific decision problem, a signal that is informative (in the sense of moving the agent's beliefs about $\theta$) may nevertheless have no instrumental value, as demonstrated in the following exercise.

\begin{exercise}[U] \label{ex:Meyer} Suppose $\Theta = \{1,2\}$ and let the prior $p$ assign equal probability to either state. Consider the decision problem $(A,u)$ with $A= \{1,2\}$ and $u(a,\theta)=\mathbbm{1}(a = \theta)$. Let $\sigma_P$ and $\sigma_Q$ respectively be the two signals described by $P$ and $Q$ in Example \ref{ex:Garbling}. Show that $V_{\bold{D},p}(\sigma_P) = V_{\bold{D},p}(\sigma_Q)$. That is, the second independent observation of signal $P$ has no value to the agent over the first.
\end{exercise}

\subsection{Uniformly Better}

We'll say that a signal is more informative than another if it is more useful in every decision problem and for every prior belief.

\begin{definition} \label{def:MoreInformative}
	The signal $ \sigma $ is more informative than $ \sigma' $ if $V_{\bold{D},p}(\sigma) \geq V_{\bold{D},p}(\sigma')$ for every decision problem $\bold{D}$ and every prior $p$. 
\end{definition}

This is a strong condition, and we generally won't be able to order signals in this way. 

\begin{exercise}[U] Let $ \Theta = \{\theta_1,\theta_2,\theta_3\}$ with a uniform prior $p$. Let $A=\{a_1,a_2\} $ and consider two utility functions: Let $u: A \times \Theta \rightarrow \mathbb{R}$ take value 1 if $ (a,\theta)\in \{ (a_1,\theta_1), (a_2, \theta_2), (a_2, \theta_3)\}$, and value 0 otherwise.  Let $u': A \times \theta \rightarrow \mathbb{R}$ take value 1 if $(a,\theta)\in \{ (a_1,\theta_1), (a_2, \theta_2), (a_1, \theta_3)\}$, and value 0 otherwise. Consider the two information structures

\begin{center}

	$\sigma$: \begin{tabular}{ccc}
		& $ s_1 $ & $ s_2 $\\
		$ \theta_1 $& $ 1 $ 	& $ 0 $\\
		$ \theta_2 $& $ 0 $ 	& $ 1 $\\
		$ \theta_3 $& $ 0 $ 	& $ 1 $\\
	\end{tabular} \quad   \quad \quad \quad
	$\sigma'$: \begin{tabular}{ccc}
		& $ s_1 $ & $ s_2 $\\
		$ \theta_1 $& $ 1 $ 	& $ 0 $\\
		$ \theta_2 $& $ 0 $ 	& $ 1 $\\
		$ \theta_3 $& $ 1 $ 	& $ 0 $\\
	\end{tabular}
	\end{center}
Show that $V_{\bold{D},p}(\sigma) > V_{\bold{D},p}(\sigma')$ where $\bold{D}=(A,u)$, but  $V_{\bold{D}',p}(\sigma) > V_{\bold{D}',p}(\sigma')$  where $\bold{D}' = (A,u')$, i.e. the agent prefers the first information given payoffs $u$ and the second given payoffs $u'$.
\end{exercise}

The definition of uniformly better varies both the decision problem and also the prior, but the additional flexibility due to arbitrary priors is not substantial:

\begin{exercise}[G] Prove that if there is a full-support prior $p_0 \in \Delta(\Theta)$ such that
\[V_{\bold{D},p_0}(\sigma) \geq V_{\bold{D},p_0}(\sigma') \quad \mbox{for every decision problem $\bold{D}$}\]
then $\sigma$ is more informative than $\sigma'$.
\end{exercise}

\subsection{Feasible Actions}

Our third definition says that a signal is more informative if observing the realization of the signal allows the agent to more effectively tailor his action to the state. 

\begin{definition} \label{def:Feasible}
	Fix any action set $A$. A conditional distribution over actions $ d:\Theta \to \Delta(A) $ is \emph{feasible under $ \sigma: \Theta \rightarrow \Delta(S)$} if there exists a mapping $\alpha :S\to \Delta(A) $ such that
	$$
	d(a\mid \theta) = \int_{s\in S} \alpha(a\mid s) \sigma(s\mid \theta) ds$$
 We'll use $ \Lambda_\sigma (A)$ to denote the set of all feasible distributions under $\sigma$ given action set $A$.
\end{definition}

When $\sigma$ is a fully revealing signal (e.g., $\sigma: \Theta \rightarrow \Delta(\Theta)$ satisfying $\sigma(\theta \mid \theta) = 1$ for every $\theta$), then every mapping $d: \Theta \rightarrow \Delta(A)$ is feasible under $\sigma$. (Simply set $\alpha=d$.) When $\sigma$ is uninformative---for example, a constant---then $\Lambda_\theta(A)$ consists of all mappings $d: \Theta \rightarrow \Delta(A)$ that take each state into the same distribution over actions. Larger sets $\Lambda_\sigma(A)$ allow the agent more flexibility in tailoring his action to the state, and in this sense are more valuable.

\begin{remark} Observe that $\alpha$ is itself a Markov kernel, so $d$ can be interpreted as a garbling of $\sigma$ where $A$ is the set of signal realizations.
\end{remark}

\section{Dispersion of Posterior Beliefs}

Our final perspective adopts the view on a signal introduced in Section \ref{sec:Bayes}, where a signal is identified with the distribution over posterior beliefs that it induces. We consider the dispersion of these posterior beliefs. Given an uninformative signal, the agent's posterior is deterministically equal to the agent's prior, so there is no dispersion. And if the signal reveals the state directly, then the posterior belief is a point mass on the true state, which ``maximally varies" depending on the realization of the signal. 

We may expect more informative signals to be associated with more dispersed beliefs, but the measure of dispersion is important. For example, using variance to measure dispersion yields a complete order on signals, which cannot possibly be equivalent to the (strict) partial order described in the previous definitions. Below we define two alternative measures for dispersion---mean-preserving spreads and dominance in the convex order---which will turn out to again characterize the previous partial order on signals.

\subsection{Mean-Preserving Spreads}

\begin{definition} A distribution of posterior beliefs $F \in \Delta(\Delta(\Theta))$ is a \emph{mean-preserving spread} of another distribution $\widetilde{F}$ if there exist $\Delta(\Theta)$-valued random variables $Z, \widetilde{Z}$ satisfying the following conditions:
\begin{enumerate}
\item $Z \sim F, \widetilde{Z} \sim \widetilde{F}$
\item $\mathbb{E}(Z \mid \widetilde{Z}) = \widetilde{Z}$ (thus in particular $\mathbb{E}(Z) = \mathbb{E}(\widetilde{Z})$)
\end{enumerate}
\end{definition}

\noindent The name ``mean-preserving spread" reflects that each realization of $\widetilde{Z}$ is spread out into a random $Z$ with the same mean. When $Z$ and $\widetilde{Z}$ are both real-valued, then the second condition can also be stated as
$Z = \widetilde{Z} + \eps$ for some random variable $\eps$ satisfying $\mathbb{E}(\eps \mid \widetilde{Z}) = 0$. 

\begin{example} Consider the two signals
			\[P = \left(\begin{array}{cc}
		 3/4 & 1/4 \\
		 1/4 & 3/4\end{array}\right)
		\quad \quad \quad Q = \left(\begin{array}{cccc}
		 9/16 & 3/16 & 3/16 & 1/16 \\
		 1/16 & 3/16 & 3/16 & 9/16 \end{array}\right)
		\]
		from Example \ref{ex:Garbling}, where the set of states is $\Theta = \{\theta_1,\theta_2\}$. Let the agent's prior be uniform over these states. Then the agent has two possible posterior beliefs after observing $P$, $(3/4,1/4)$ and $(1/4,3/4)$, which are equally likely. We will write the distribution of posterior beliefs as
\[F_P = 1/2 \cdot (3/4,1/4) + 1/2 \cdot (1/4,3/4).\]
Under $Q$, the distribution of posterior beliefs is instead
\[F_Q = 5/16 \cdot (9/10,1/10) + 3/8 \cdot (1/2,1/2) + 5/16 \cdot (1/10,9/10).\]
We will now show that $F_Q$ is a mean-preserving spread of $F_P$. Let $\widetilde{Z}$ be a random variable satisfying $\widetilde{Z} \sim F_P$ and construct the random variable $Z$ given $\widetilde{Z}$ as follows:
\begin{itemize}
\item If $\widetilde{Z} = (1/4,3/4)$ then $Z=(1/10,9/10)$ with probability $5/8$ and $Z=(1/2,1/2)$ with probability $3/8$.
\item If $\widetilde{Z} = (3/4,1/4)$ then $Z = (9/10,1/10)$ with probability $5/8$ and $Z= (1/2,1/2)$ with probability $3/8$.
\end{itemize}
Then $\mathbb{E}(Z \mid \widetilde{Z}) = \widetilde{Z}$ and also $Z \sim F_Q$, so  $F_Q$ is a mean-preserving spread of $F_P$ as desired. This construction is depicted in Figure \ref{fig:MPS}. 
\begin{figure}[H]
\begin{center}
\includegraphics[scale=0.6]{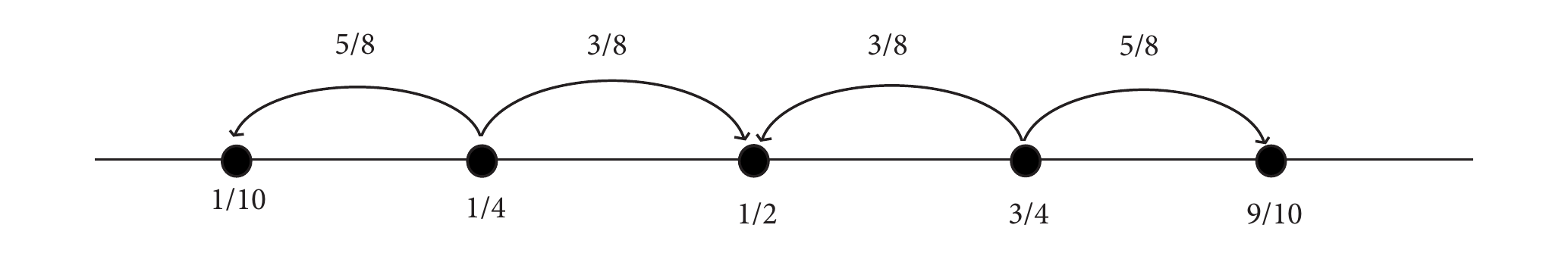}
\end{center}
\caption{Depiction of the mean-preserving spread, where the numbers represent the probability of state $\theta_1$.} \label{fig:MPS}
\end{figure}
\end{example}

\begin{exercise}[G] Let $\Theta = \{\theta_1, \theta_2\}$ and consider the two signals
			\[P = \left(\begin{array}{cc}
		 2/3 & 1/3 \\
		 1/4 & 3/4\end{array}\right)
		\quad \quad \quad Q = \left(\begin{array}{cccc}
		 1/3 & 1/2 & 1/6 \\
		 1/8 & 1/2 & 3/8 \end{array}\right)
		\] Define $F_P$ to be the distribution of posterior beliefs induced by $P$ and $F_Q$ to be the distribution of posterior beliefs induced by $Q$. Prove that $F_P$ is a mean-preserving spread of $F_Q$.
\end{exercise}

\begin{exercise}[G] 
Suppose $Y_1, Y_2, \dots, Y_n$ are independent and identically distributed random variables, and define $\overline{Y}_n = \frac1n \sum_{i=1}^n Y_i$ to be their sample average. Let $n' < n$ and define $\overline{Y}_{n'} = \frac{1}{n'} \sum_{i=1}^{n'} Y_i$. Prove that the distribution of $\overline{Y}_{n'}$ is a mean preserving spread of the distribution of $\overline{Y}_n$.
\end{exercise}

\subsection{Convex Order} \label{sec:ConvexOrder}
\noindent Another partial order of dispersion is the following:

\begin{definition} A distribution of posterior beliefs $ F \in \Delta(\Delta(\Theta))$  \emph{dominates} another distribution $G$ \emph{in the convex order} if for every continuous convex function $h: \Delta (\Theta ) \rightarrow \mathbb{R}$,
\[\int_{\Delta(\Theta)} h(p) dF(p) \geq \int_{\Delta(\Theta)} h(p) dG(p)\]
\end{definition}
\noindent This implies that $F$ and $G$ have the same mean (choosing $h(p)=p$) and that $F$ has the larger variance (choosing $h(p) = \| p \|^2$). 

You may recall the concept of \emph{second order stochastic dominance}:

\begin{definition} For any lotteries $F$ and $G$, $F$ \emph{second-order stochastically dominates} $G$ if and only if 
\[\int_{\Delta(\Theta)} u(p) dF(p) \geq \int_{\Delta(\Theta)} u(p) dG(p)\]
for every nondecreasing and concave utility function $u$.
\end{definition}

Dominance in the convex order is stronger than SOSD. 

\begin{exercise}[G] Prove that if $F$ dominates $G$ in the convex order, then $G$ second order stochastically dominates $F$.
\end{exercise}

The converse is not in general true. 

\begin{example} 
Let $G$ be a distribution uniform on $[1,2]$ and let $F$ be a point mass at zero. Then $G$ second order stochastically dominates $F$ but $F$ does not dominate $G$ in the convex order. 
\end{example}

Intuitively, second-order stochastic dominance confounds changes in the dispersion of the distribution with shifts in the distribution, while dominance in the convex order isolates the former comparison.

\section{Blackwell's Theorem and Proof} \label{sec:BlackwellProof}

We now state and prove \citet{Blackwell}'s theorem, which demonstrates equivalence of these five definitions. For the proof we will work with finite sets (in particular assuming finite $\Theta$) but several parts of the proof extend more generally. 

\begin{theorem}
The following are equivalent:
	\begin{enumerate} 
		\item $ \sigma' $ is a garbling of $ \sigma $. 

				\item $ \sigma $ is more informative than $ \sigma' $.
				
		\item $ \Lambda_\sigma (A) \supseteq \Lambda_{\sigma'} (A) $ for every finite action set $A$.

		\item For any prior on $\Theta$, if we define $F$ and $F'$ to be the distributions of posterior beliefs induced by $\sigma$ and $\sigma'$ (under this prior), then $F$ is a mean-preserving spread of $F'$.
\item For any prior on $\Theta$, if we define $F$ and $F'$ to be the distributions of posterior beliefs induced by $\sigma$ and $\sigma'$ (under this prior), then $F$ dominates $F'$ in the convex order.

	\end{enumerate}

\end{theorem}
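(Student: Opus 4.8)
The plan is to prove statements 1, 2, 3 equivalent via $1 \Leftrightarrow 3$, $3 \Rightarrow 2$, and $2 \Rightarrow 1$, and then to fold in statements 4 and 5 through the additional links $1 \Rightarrow 4 \Rightarrow 5 \Rightarrow 2$. Every link except $2 \Rightarrow 1$ is short; that single step is the real content, and I would prove it with a finite-dimensional separating-hyperplane argument.

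\textbf{The routine links among 1, 2, 3.} For $1 \Rightarrow 3$, take a feasibility witness $\alpha':S'\to\Delta(A)$ for $d\in\Lambda_{\sigma'}(A)$ and compose it with the garbling kernel $\gamma:S\to\Delta(S')$; the composition $\alpha(a\mid s)=\int_{s'}\alpha'(a\mid s')\gamma(s'\mid s)\,ds'$ is a Markov kernel and witnesses $d\in\Lambda_\sigma(A)$. For $3 \Rightarrow 1$, apply statement 3 with the finite action set $A=S'$: the map $d(a\mid\theta)=\sigma'(a\mid\theta)$ lies in $\Lambda_{\sigma'}(S')$ via the identity strategy, hence in $\Lambda_\sigma(S')$, and any witness $\alpha:S\to\Delta(S')$ is by definition a garbling kernel from $\sigma$ to $\sigma'$. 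For $3 \Rightarrow 2$, note that --- allowing mixed strategies, which is without loss --- the optimal expected payoff in a decision problem $\mathbf{D}=(A,u)$ under signal $\sigma$ and prior $p$ equals $\sup_{d\in\Lambda_\sigma(A)}\sum_{\theta,a}p(\theta)\,d(a\mid\theta)\,u(a,\theta)$, the supremum of a linear functional of $d$ over $\Lambda_\sigma(A)$; since $\Lambda_\sigma(A)\supseteq\Lambda_{\sigma'}(A)$ and the no-information benchmark $\sup_a\mathbb{E}_p[u(a,\theta)]$ is the same for both signals, $V_{\mathbf{D},p}(\sigma)\geq V_{\mathbf{D},p}(\sigma')$.

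\textbf{The crux: $2 \Rightarrow 1$.} Write $\sigma$ and $\sigma'$ as row-stochastic matrices of sizes $|\Theta|\times|S|$ and $|\Theta|\times|S'|$, and let $G(\sigma)=\{\sigma M: M \text{ row-stochastic}\}$, a compact convex subset of $\mathbb{R}^{|\Theta|\times|S'|}$ (a continuous linear image of a polytope). If $\sigma'$ were not a garbling of $\sigma$, then $\sigma'\notin G(\sigma)$, so strict separation yields a matrix $C$ with $\langle C,\sigma'\rangle>\max_M\langle C,\sigma M\rangle$. Now form the decision problem with action set $A=S'$, uniform prior $p$, and payoffs $u(a,\theta)=|\Theta|\cdot C_{\theta a}$ (the factor $|\Theta|$ cancels the uniform prior weights). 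Under $\sigma'$, the identity strategy yields expected payoff $\langle C,\sigma'\rangle$; under $\sigma$, the best achievable expected payoff equals $\max_M\langle C,\sigma M\rangle$ by linear programming over the polytope of mixed strategies. Subtracting the common no-information value gives $V_{\mathbf{D},p}(\sigma)<V_{\mathbf{D},p}(\sigma')$, contradicting statement 2. The points requiring care are that $G(\sigma)$ is genuinely closed and bounded so strict separation of the point $\sigma'$ from it is available, the absorption of the prior weights into $u$, and the fact that mixed strategies realize exactly the row-stochastic matrices $M$.

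\textbf{Folding in 4 and 5.} For $1 \Rightarrow 4$, fix a prior $p$ and realize $(\theta,X,X')$ with $\theta\sim p$, $X\mid\theta\sim\sigma(\cdot\mid\theta)$, and $X'\mid X\sim\gamma(\cdot\mid X)$ drawn independently of $\theta$ given $X$; then $\theta\indep X'\mid X$, so the induced posteriors satisfy $q_{X'}=\mathbb{E}[q_X\mid X']$ by iterated expectations, and since $q_{X'}$ is measurable with respect to $X'$ this gives $\mathbb{E}[q_X\mid q_{X'}]=q_{X'}$ --- exactly the defining property of a mean-preserving spread, with $F$ the law of $q_X$ and $F'$ the law of $q_{X'}$. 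For $4 \Rightarrow 5$, conditional Jensen applied to a continuous convex $h$ and the representing variables $Z,\widetilde Z$ gives $\int h\,dF=\mathbb{E}[h(Z)]\geq\mathbb{E}\big[h(\mathbb{E}[Z\mid\widetilde Z])\big]=\int h\,dF'$. For $5 \Rightarrow 2$, the value function $V(q)=\sup_{a\in A}\sum_\theta q(\theta)u(a,\theta)$ is convex --- and continuous for finite $A$ --- and $V_{\mathbf{D},p}(\sigma)=\int V\,dF-V(p)$, $V_{\mathbf{D},p}(\sigma')=\int V\,dF'-V(p)$, so convex-order dominance of $F$ over $F'$ yields $V_{\mathbf{D},p}(\sigma)\geq V_{\mathbf{D},p}(\sigma')$. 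This closes all the loops. The main obstacle is $2 \Rightarrow 1$: turning the abstract separating functional into an honest decision problem, and arranging the geometry (compactness, mixed strategies) so that separation translates into a strict payoff advantage. An alternative route that avoids separation is to prove $2 \Rightarrow 5$ directly --- by approximating a continuous convex function on $\Delta(\Theta)$ from below by finite maxima of affine functions, which are exactly value functions of decision problems --- and then $5 \Rightarrow 4 \Rightarrow 1$ using Strassen's theorem to extract the martingale coupling; I find the finite-dimensional separation route more self-contained.
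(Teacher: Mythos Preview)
Your proof is correct and follows essentially the same route as the paper: the same compositions for $1\Leftrightarrow 3$, the same separating-hyperplane construction for the hard direction (you frame it as $2\Rightarrow 1$ with $A=S'$ fixed in advance, the paper frames it as the contrapositive of $2\Rightarrow 3$ with a general $A$, but the argument is identical), and the same chain $1\Rightarrow 4\Rightarrow 5\Rightarrow 2$ via the Markov coupling, conditional Jensen, and convexity of the value function.
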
 

Several proofs exist for different parts of this result (see e.g., \citet{Blackwell} and \citet{LeshnoSpector}). Our proof of the equivalence of (1)-(3) below is based on \citet{henrique}, which presents a particularly simple and elegant argument. 
			
			\bigskip
			
			\begin{proof} Throughout, given  stochastic mappings $ \alpha:X\to \Delta(Y) $ and $ \beta:Y\to \Delta(Z) $, let	\[
		\beta \circ \alpha (z\mid x) \equiv \sum_{y\in Y} \beta(z\mid y)\alpha (y\mid x) \quad \quad \forall x \in X, z \in Z.
		\]

($ 1\Rightarrow 3$) 1 implies existence of a mapping $\gamma: S \rightarrow \Delta(S')$ such that
		$\gamma \circ \sigma = \sigma'$,
		as illustrated below:
		\begin{center}
	
		\begin{Tabular}{l}
			\begin{tikzpicture}
			\matrix (m) [matrix of math nodes,row sep=3em,column sep=4em,minimum width=2em]
			{
				\Theta & S \\
				S' & \\};
			\path[-stealth]
			(m-1-1) edge node [left] {$\sigma'$} (m-2-1)
			edge node [above] {$\sigma$} (m-1-2)

			(m-1-2) edge node [below] {$ \gamma $} (m-2-1);
			\end{tikzpicture}
			
		\end{Tabular}
		\end{center}
\noindent Consider any action set $A$ and mapping $\alpha': S' \rightarrow \Delta(A)$, where
		$d = \alpha' \circ \sigma'$
		is a feasible distribution under $\sigma'$. Define
		$\alpha = \alpha' \circ \gamma$. 
		Then
		\[\alpha \circ \sigma = (\alpha' \circ \gamma) \circ \sigma = \alpha' \circ (\gamma \circ \sigma) = \alpha' \circ \sigma' = d\]
		using  associativity of the operation $\circ$. So $d$ is feasible also under $\sigma$, as depicted in the figure below: 	
		\begin{center}
	
		\begin{Tabular}{l}
			\begin{tikzpicture}
			\matrix (m) [matrix of math nodes,row sep=3em,column sep=4em,minimum width=2em]
			{
				\Theta & S \\
				S' & A \\};
			\path[-stealth]
			(m-1-1) edge node [left] {$\sigma'$} (m-2-1)
			edge node [above] {$\sigma$} (m-1-2)
			(m-2-1.east|-m-2-2) edge node [below] {$\alpha'$} (m-2-2)
			(m-1-2) edge node [right] {$\alpha$} (m-2-2)
			edge node [below] {$ \gamma $} (m-2-1);
			\end{tikzpicture}
			
		\end{Tabular}
		\end{center}
\medskip
	($ 3\Rightarrow 1 $) Let the action set be $S'$ and define $\alpha'$ to be the identity mapping $id_{S'}: S' \rightarrow \Delta(S')$ which satisfies $ id_{S'}(s')=\delta_{s'} $ for all $s' \in S'$ (where $\delta_{s'}$ denotes a point mass at $s'$). By 3, there must exist some $\alpha: S \rightarrow \Delta(S')$ such that
		\[\alpha \circ \sigma =  id_{S'} \circ \sigma'\]
		The RHS reduces to $\sigma'$ since for any $s' \in S'$,
		\[ id_{S'} \circ \sigma'(s' \mid \theta)  = \sum_{s \in S'}  id_{S'}(s' \mid s) \sigma'(s \mid \theta) = \sigma'(s' \mid \theta).\]
		Thus $\alpha \circ \sigma = \sigma'$. But this implies that $\sigma'$ is a garbling of $\sigma$, as depicted below.
		\begin{center}
		\begin{Tabular}{l}
			\begin{tikzpicture}
			\matrix (m) [matrix of math nodes,row sep=3em,column sep=4em,minimum width=2em]
			{
				\Theta & S \\
				S' & S' \\};
			\path[-stealth]
			(m-1-1) edge node [left] {$\sigma'$} (m-2-1)
			edge node [above] {$\sigma$} (m-1-2)
			(m-2-1.east|-m-2-2) edge node [below] {$id_{S'}$} (m-2-2)
			(m-1-2) edge [dashed] node [right] {$\alpha$} (m-2-2);
			\end{tikzpicture}
			
		\end{Tabular}
		\end{center}
		
		\medskip
	($3 \Rightarrow 2$) Clear.
		\medskip
		
	($2 \Rightarrow 3$) Suppose 3 fails. Then there is a finite action set $A$ and a vector $\lambda' \in \Lambda_{\sigma'}(A)$ such that $ \lambda'\not \in \Lambda_\sigma(A)$. The set $ \Lambda_\sigma $ is a compact and convex subset of $ \mathbb{R}^{\vert \Theta \vert \times \vert A \vert} $ (you will be asked to prove this in Exercise \ref{ex:LsA}). Thus by the Separating Hyperplane Theorem, there exists a vector $ v\in \mathbb{R}^{\vert \Theta \vert \times \vert A \vert} $ such that for all $\lambda \in \Lambda_\sigma(A)$,
		\begin{equation} \label{eq:SeparatingHyperplane}
		\sum v(a,\theta)\lambda(a, \theta) < \sum v(a,\theta)\lambda'(a, \theta)
		\end{equation}
		as depicted in Figure \ref{fig:SeparatingHyperplane}.
\begin{figure}[H]
\begin{center}
\includegraphics[scale=0.4]{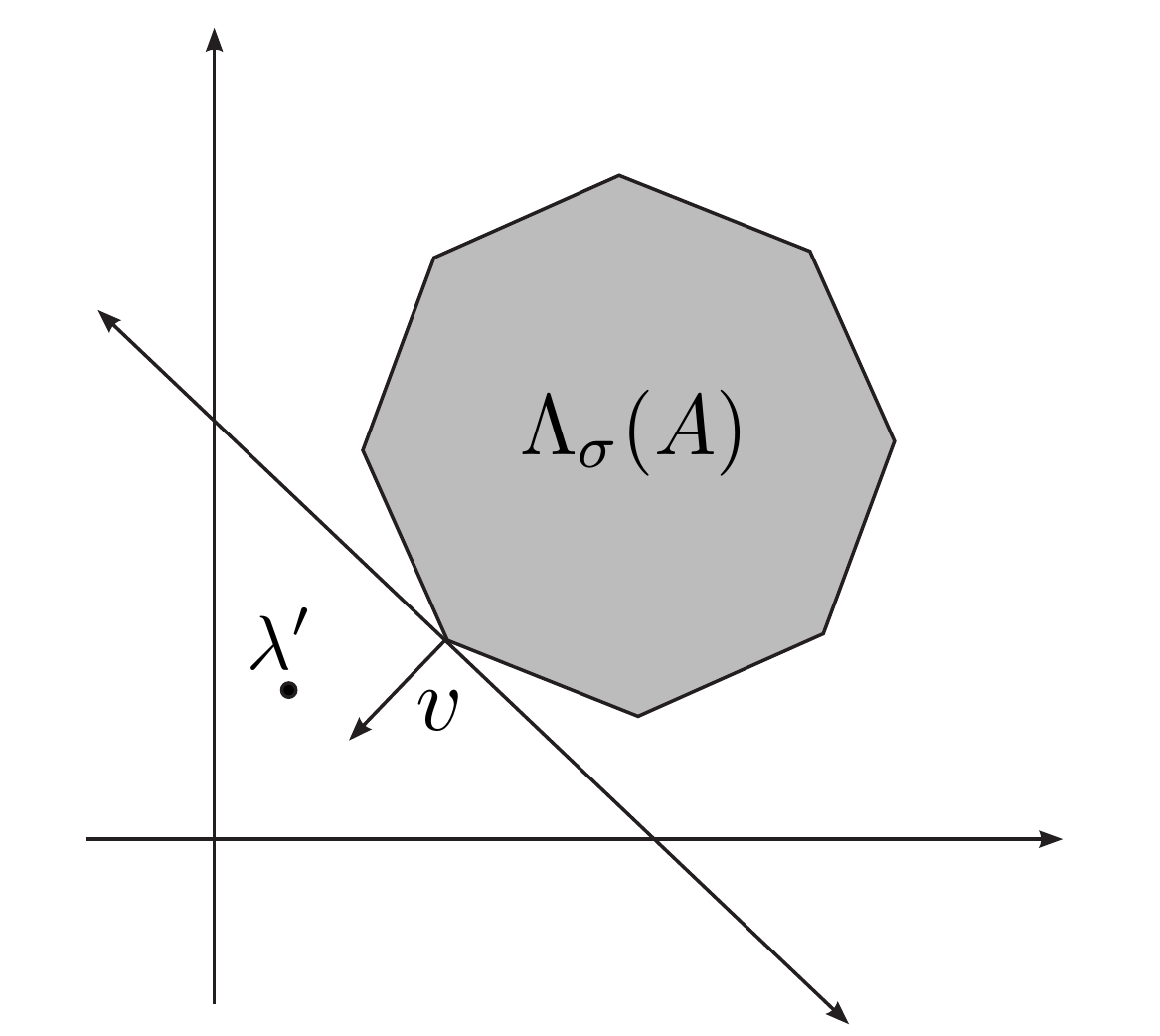}
\end{center}
\caption{Separation of $\lambda'$ from $\Lambda_\sigma(A)$.} \label{fig:SeparatingHyperplane}
\end{figure}
\noindent Consider an agent with a uniform prior $p$ on $\Theta$ and utility function $v$, and define $d(a\mid \theta) \equiv \lambda(a, \theta)$ and $d'(a \mid \theta) \equiv \lambda'(a,\theta)$. Then
\begin{align*}
\sup_{\alpha: S \rightarrow \Delta(A)} \sum_{a,s,\theta} v(a,\theta) \alpha(a \mid s)p(\theta,s)  & = \sup_{\alpha: S \rightarrow \Delta(A)} \frac{1}{\vert \Theta \vert} \sum_{\theta, a,s} \sigma(s \mid \theta) \alpha(a \mid s) v(a,\theta) \\
& = \sup_{d \in \Lambda_\sigma(A)} \frac{1}{\vert \Theta \vert} \sum_{\theta,a} d(a \mid \theta) v(a,\theta) \\
& < \frac{1}{\vert \Theta \vert} \sum_{\theta,a} d'(a \mid \Theta)v(a, \theta)
\end{align*}
using (\ref{eq:SeparatingHyperplane}) in the final inequality. Thus there is a decision problem and a prior for which an agent can achieve a strictly higher payoff by conditioning on $\sigma'$ rather than on $\sigma$, and so 2 fails.

\medskip
($1 \Rightarrow 4$) Let $X$ and $X'$ respectively denote the random realizations of the signals $\sigma$ and $\sigma'$. Since by assumption $\sigma'$ is a garbling of $\sigma$, we can generate $\theta$, $X$, $X'$ in a way such that $X'$ is independent of $\theta$ conditional on $X$ (see Remark \ref{remark:Garbling}).  On this probability space,  define $Z$ to be the random posterior belief of $\theta$ given $X$, i.e., the distribution of $\theta \mid X$, and define $Z'$ to be the random posterior belief of $\theta$ given $X'$, i.e., the distribution of $\theta \mid X'$. We need to show that $\mathbb{E}[Z \mid Z'] = Z'$. 

For any realization $\theta_i$ of $\theta$, define $Z_i \equiv \mathbb{E}[\mathbbm{1}_{\theta_i} \mid X] =  \mathbb{E}[\mathbbm{1}_{\theta_i} \mid X, X']$ (where the second equality is due to independence of $\theta$ and $X'$ conditional on $X$) and  define $Z'_i \equiv \mathbb{E}[\mathbbm{1}_{\theta_i} \mid X']$.
Then
\begin{align}
\mathbb{E}[Z_i \mid X'] & = \mathbb{E}[[\mathbbm{1}_{\theta_i} \mid X,X'] \mid X' ] \nonumber \\
& = \mathbb{E}[\mathbbm{1}_{\theta_i} \mid X' ] \nonumber \\
& = Z'_i \label{eq:reduceZ}
\end{align}
where the second equality follows from the law of iterated expectations (henceforth abbreviated to L.I.E.). Moreover,
\begin{align*}
\mathbb{E}[Z_i \mid Z'] & = \mathbb{E}[ \mathbb{E}[Z_i \mid X',Z'] \mid Z']  && \mbox{by L.I.E.}\\
& = \mathbb{E}[\mathbb{E}[Z_i \mid X'] \mid Z'] &&  \mbox{since $Z'$ is a function of $X'$} \\
& = \mathbb{E}[Z'_i \mid Z']  && \mbox{using (\ref{eq:reduceZ})}\\
& = Z'_i
\end{align*}
Repeating this argument for every $\theta_i$, we have the desired result. \\[2mm]
\medskip
($4 \Rightarrow 5$) Suppose $F$ is a MPS of $F'$ with associated random variables $Z$ and $Z'$ satisfying $\mathbb{E}(Z \mid Z') = Z'$. Then for any continuous and convex function $h: \Delta(\Theta) \rightarrow \mathbb{R}$, 
\begin{align*}
\int_{\Delta(\Theta)} h(p) dF(p) & = \mathbb{E}[h(Z)] \\
& = \mathbb{E}[\mathbb{E}[h(Z) \mid Z']] && \mbox{by L.I.E.}\\
& \geq \mathbb{E}[h(\mathbb{E}[Z \mid Z'])] && \mbox{by Jensen's inequality}\\
& = \mathbb{E}[h(Z')] && \mbox{by assumption of MPS}\\
& = \int_{\Delta(\Theta)} h(p) dF'(p)
\end{align*}
So $F$ dominates $F'$ in the convex order.

\medskip

($5 \Rightarrow 2$) Fix any action set $A$ and utility function $u$, and define $h: \Delta(\Theta) \rightarrow \mathbb{R}$ by
\[h(p) = \max_{a \in A} \sum_{\theta \in \Theta} p(\theta) u(a,\theta)\]
to be the maximum achievable payoff under belief $p$. The function $h$ is the pointwise maximum of linear functions, and hence it is continuous and convex. Letting $p\sim F$ denote the agent's posterior belief, the maximum \emph{ex-ante} payoff is
\[\int_{\Delta(\Theta)} h(p)dF(p)\]
So dominance in the convex order implies  ``more valuable." \end{proof}

\section{Additional Exercises}

\begin{exercise}[U] The state space is $\Theta = \{\theta_1,\theta_2,\theta_3\}$ and an agent's prior belief is $p=(1/2,1/4,1/4)$. The agent chooses from actions in the set $A = \{a_1,a_2,a_3\}$ and has the payoff function
\[u(a,\theta) = \left\{\begin{array}{cl}
1 & \mbox{ if } a=\theta \\
0 & \mbox{ otherwise }
\end{array} \right.\]

\begin{enumerate}
\item[(a)] What is the highest expected payoff that the agent can achieve?
\item[(b)] Now suppose the agent gets to see the outcome of the following signal structure before choosing an action:
\[\begin{array}{ccc}
& s_1 & s_2 \\
\theta_1 & 0 & 1\\
\theta_2 & 1/2 & 1/2 \\
\theta_3 & 1 & 0 
\end{array}\]
\begin{itemize}
\item[(i)] Suppose the realized signal outcome is $s_1$. Solve for the agent's posterior belief and optimal action.
\item[(ii)] Suppose the realized signal outcome is $s_2$. Solve for the agent's posterior belief and optimal action.
\item[(iii)] What is the agent's best expected payoff (where the expectation is taken prior to the realization of the signal outcome)?
\end{itemize}
\end{enumerate}
\end{exercise} 

\begin{exercise}[U] (This problem is based on \citet{Meyer}.) Consider the setting of Example \ref{ex:Meyer}. It turns out that we can make the second realization of $P$ strictly valuable again by biasing it in favor of the more likely signal realization. That is, let the realizations of $P$ be denoted $s_1$ and $s_2$, where
	\[P = \left(\begin{array}{cc}
		 3/4 & 1/4 \\
		 1/4 & 3/4\end{array}\right)\]
		 and modify the second signal in the following way: If the first realization is $s_1$, then the second signal realization is determined by
		 	\[Q_1 = \left(\begin{array}{cc}
		 3/4 + c& 1/4-c \\
		 1/4 + c & 3/4-c \end{array}\right) \]
		 and if the first realization is $s_2$, the second signal realization is determined by
		 
\[		 Q_2 = \left(\begin{array}{cc}
		 3/4 - c& 1/4 + c \\
		 1/4 - c & 3/4 + c \end{array}\right)\]
where in both cases the realization of the second signal is independent of the first conditional on the state. 
		 
\begin{itemize}
\item[(a)] Show that for any $c \in (0,1/4]$, the value of observing this second (biased) signal is strictly positive.
\item[(b)] Solve for the size of the bias $c \in (0,1/4]$ that leads to the highest expected payoff for the agent.
\end{itemize}
\end{exercise}

\begin{exercise}[G] \label{ex:LsA} Let the sets $A$, $\Theta$, and $S$ be finite, and prove that the set $\Lambda_\sigma(A)$ (from Definition \ref{def:Feasible}) is compact and convex for every $\sigma: \Theta \rightarrow \Delta(S)$.
\end{exercise}

\begin{exercise}[G] Consider two random variables
$
X=\theta+\varepsilon$ and
$Y=\theta +\varepsilon'$, where $\theta$, $\eps$, and $\eps'$ are mutually independent. 
\begin{itemize}
\item[(a)] Suppose that $\theta \sim \mathcal{N}(0,1)$ and $\varepsilon,\varepsilon' \in \mathbb{R}$ are distributed
$(\varepsilon, \varepsilon') \sim \mathcal{N}\left(\mu, \Sigma \right).$
Prove that $X$ and $Y$ are Blackwell comparable for all mean vectors $\mu$ and covariance matrices $\Sigma$. 
\item[(b)] Suppose that 
\[\theta \sim \mathcal{N}\left(\left(\begin{array}{c} 0 \\ 0 \end{array} \right), \left(\begin{array}{cc} 1 & 0 \\ 0 & 1 \end{array}\right) \right)\]
and $\varepsilon,\varepsilon' \in \mathbb{R}^2$ are distributed
$(\varepsilon, \varepsilon') \sim \mathcal{N}\left(\mu, \Sigma \right).$
Prove that $X$ and $Y$ are not always Blackwell ranked by demonstrating a pair $(\mu,\Sigma)$ such that $X$ allows for a strictly higher expected payoff for one decision problem, and $Y$ allows for a strictly higher expected payoff given another. 
\end{itemize}
\end{exercise}

\begin{exercise}[G]  In each of the following parts, determine whether the statement is true or false and prove your claim in either case. 
\begin{itemize}
\item[(a)] The state $\theta$ belongs to $\{\theta_1,\theta_2\}$ and the two signals are defined as 
\begin{align*}
X = \theta + \eps_1, \quad \eps_1 \sim U([-1/2,1/2]) \\
\widetilde{X} =  \theta + \eps_2, \quad \eps_2 \sim U([-1/3,1/3]) 
\end{align*}
where $U$ denotes the uniform distribution. The signals $X$ and $\widetilde{X}$ can be Blackwell ranked. 

\item[(b)] The state $\theta$ belongs to $\{0,1/3,2/3,1\}$ and the two signals are defined as 
\begin{align*}
X = \theta + \eps_1, \quad \eps_1 \sim U([-1/2,1/2]) \\
\widetilde{X} =  \theta + \eps_2, \quad \eps_2 \sim U([-1/3,1/3]) 
\end{align*}
The signals $X$ and $\widetilde{X}$ can be Blackwell ranked. 
\end{itemize}

\end{exercise}

\begin{exercise}[G] (This problem is based on \citet{BrooksFrankelKamenica2}.)  Consider the following strengthening of the Blackwell order. Let $\theta$, $X$, and $X'$ be random variables defined on the same probability space $(\Omega, \Sigma, P)$. 

\begin{definition} Say that $X$ \emph{strongly Blackwell dominates} $X'$ if $(X,\widetilde{X})$ Blackwell dominates $(X',\widetilde{X})$ for every random variable $\widetilde{X}$ also defined on $(\Omega, \Sigma, P)$. 
\end{definition}

\noindent Clearly a necessary condition is for $X$ to Blackwell dominate $X'$ (choose $\widetilde{X}$ to be null information). A sufficient condition is for the realization of $X'$ to be known from the realization of $X'$, i.e., for the distribution of $X'\mid X$ to be degenerate for every realization of $X$ (what \citet{BrooksFrankelKamenica} call the \emph{refinement order}). Provide an example in between, namely a signal $X$ that strongly Blackwell dominates $X'$, where the realization of $X'$ is not known from $X$.
\end{exercise}

\chapter{Comparing Information II: Cost of Information} \label{sec:CostofInformation}

So far we have considered decision problems in which the signal informing the agent's decision is given exogenously. In many economic applications, agents can acquire information at a cost and thereby control the signal that they observe. The full problem the agent faces is often specified as
\[\max_{\sigma: \Theta \rightarrow \Delta(S)} \int_{\Delta(\Theta)} \max_{a \in A} \mathbb{E}_q[u(a,\theta)] d\tau_\sigma(q) - \mbox{cost of acquiring $\sigma$}\]
where $\tau_\sigma$ denotes the distribution over posterior beliefs induced by signal $\sigma$. 

This chapter discusses how to model the cost of information, and is divided into two sections. Section \ref{sec:PriorDependent} considers \emph{prior-dependent} cost functions that are a function both of the agent's prior $p \in \Delta(\Theta)$ and of the signal $\sigma: \Theta \rightarrow \Delta(S)$. Section \ref{sec:PriorIndependent} considers \emph{prior-independent} cost functions that depend only the signal $\sigma$. The former are often interpreted as costs of information processing while the latter are often associated with a physical or exogenous cost of producing information. Both approaches draw from information theory, and we review relevant concepts in Section \ref{sec:InformationTheory}.

Two useful benchmarks to keep in mind are the following. 

\begin{example}[Binary] \label{ex:BinaryCost} The unknown state $\theta$ is equally likely to take the value 0 or 1, and the agent chooses an action $a \in \{0,1\}$ with payoff $u(a,\theta) = \mathbbm{1}(a=\theta)$. This action is based on the signal
\[\begin{array}{ccc}
& s=0 & s=1 \\
\theta=0 & \varphi & 1-\varphi  \\
\theta=1 & 1-\varphi  & \varphi 
\end{array}\]
where the agent chooses $\varphi \in [0,1]$.
\end{example}

\begin{example}[Gaussian] \label{ex:GaussianCost}
An agent chooses an action $a \in \mathbb{R}$  and receives the payoff $-(a-\theta)^2$, where $\theta \sim \mathcal{N}(\mu, \sigma_\theta^2)$ is an unknown state. This action is based on a signal $X= \theta + \eps$ where $\eps \sim \mathcal{N}(0, \sigma_\eps^2)$, and the signal noise $\sigma_\eps^2$ is chosen by the agent. 
\end{example}

\section{Information Theoretic Preliminaries} \label{sec:InformationTheory}

This section reviews the definitions of entropy and KL divergence. 

\subsection{Entropy} \label{sec:Entropy}

First assume a finite set of states $\Theta$ with $n \equiv \vert \Theta \vert$, and consider beliefs $p=(p_1, \dots, p_n)$ defined over this set.

\begin{definition}[\citet{shannon}] \label{def:Entropy}
Let $\Theta = \{\theta_1, \dots, \theta_n\}$ for any $n<\infty$. The \emph{entropy} of belief $p \in \Delta(\Theta)$ is
\[H(p) = - \sum_{\theta \in \Theta} p(\theta) \ln(p(\theta)) = \mathbb{E}_{\theta \sim p}[-\ln(p(\theta))]\]
where $0\ln0 = 0$.
\end{definition}

\begin{remark} Entropy is also sometimes defined as a function of the random variable rather than its distribution, i.e., $H(\theta) =  \mathbb{E}[-\ln(p(\theta))].$
\end{remark}

\medskip

Entropy is a quantification of uncertainty in a distribution. The higher the entropy of the distribution, the more information is contained in the realization of a random variable it governs. (Entropy is also often interpreted as the ``surprise factor" of the outcome.)

\begin{example} \label{ex:BinaryEntropy} Suppose $\Theta = \{\theta_1, \theta_2\}$. The entropy of any belief $(q,1-q)$ is 
\begin{equation} \label{eq:BinaryEntropy}
H(q) = -q\ln(q) - (1-q) \ln(1-q).
\end{equation}
This curve is depicted in Figure \ref{fig:Entropy} below. It is concave, minimized at the two degenerate distributions $(0,1)$ and $(1,0)$, and maximized at the uniform distribution $(1/2,1/2)$.
\begin{figure}[H] 
\begin{center}
\includegraphics[scale=0.45]{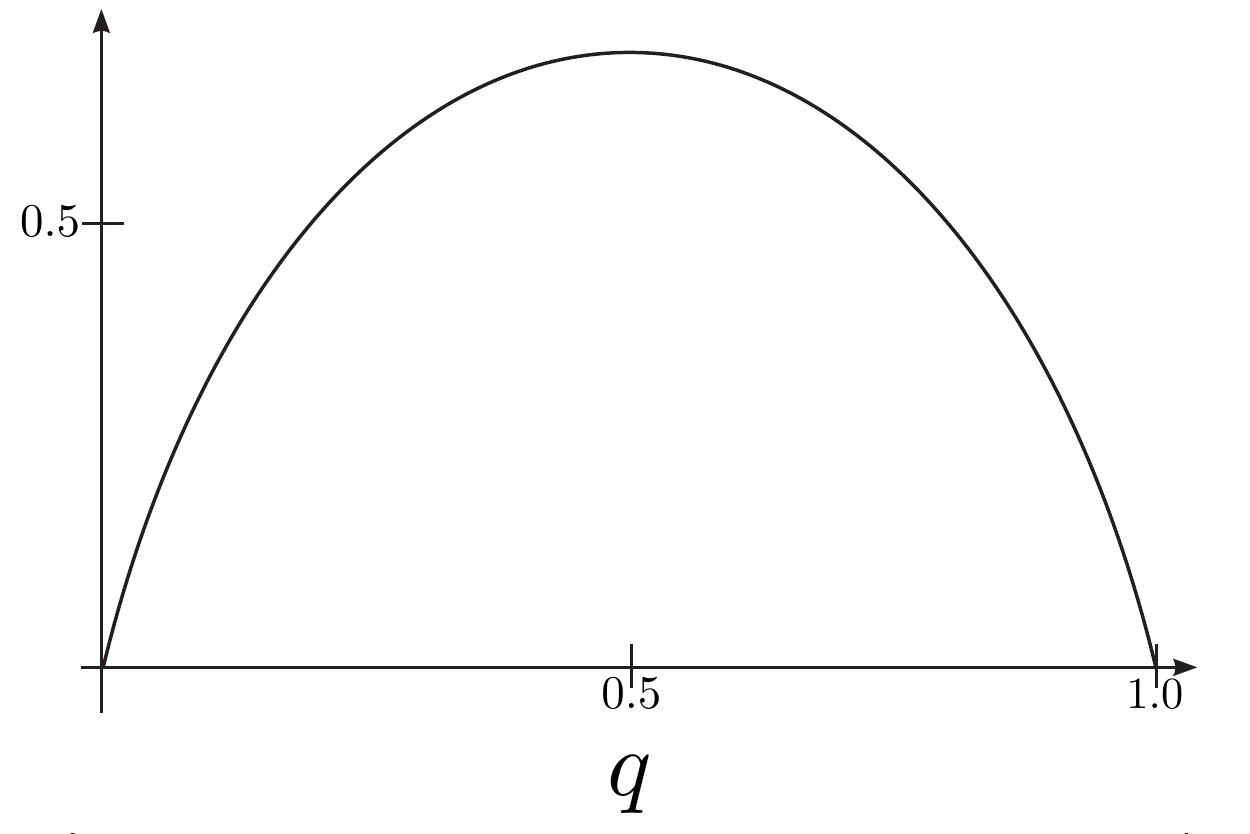}
\caption{Plot of the entropy of the distribution $(q,1-q)$ as $q$ varies in $[0,1]$.} \label{fig:Entropy}
\end{center}
\end{figure}
\end{example}

Several key properties of entropy are collected below. 

\medskip
\begin{property}[Maximal Value] \label{propEntr:Max} $H(p) \leq H\left(\frac{1}{n}, \dots, \frac{1}{n}\right)$ for every $n<\infty$ and $p \in \Delta(\{\theta_1, \dots, \theta_n\})$; that is, entropy is maximized at the uniform distribution.
\end{property}

\medskip

\begin{property}[Probability Zero States] $H(p) = H(p_1, \dots, p_n,0)$ for every $n<\infty$ and $p \in \Delta(\{\theta_1, \dots, \theta_n\})$; that is, entropy is unchanged by an expansion of the state space to include probability-zero outcomes.
\end{property}

\medskip

\begin{property}[Continuity] $H$ is continuous with respect to all of its arguments.
\end{property} 

\medskip

\begin{property}[Chain Rule] \label{propEntr:Chain} Suppose $(X,Y) \in \mathcal{X} \times \mathcal{Y}$ with $\mathcal{X} = \{x_1, \dots, x_n\}$ and $\mathcal{Y} = \{y_1, \dots, y_m\}$, where the joint distribution of $(X,Y)$ is denoted $p$, the marginal distribution of $X$ is  $p_X$, and the conditional distribution of $Y$ given $X$ is $p_{Y \mid X}$. Then
\[H(p) = H(p_X) + \sum_{i=1}^n p_X(x_i) H(p_{Y \mid X=x_i})\]
or more simply
\[H(X,Y) = H(X) + H(Y \mid X)\]
where $H(X,Y) \equiv H(p)$ is the entropy of the joint distribution, $H(X) \equiv H(p_X)$ is the entropy of of the marginal distribution of $X$, and $$H(Y \mid X) \equiv \sum_{i=1}^n p_X(x_i) H(p_{Y \mid X=x_i})$$ is the expected entropy of the conditional distribution of $Y$ given $X$, also known as the \emph{conditional entropy} of $Y$ given $X$.
\end{property}

\begin{remark} In the special case where $X$ and $Y$ are independent, Property \ref{propEntr:Chain} implies $H(X,Y) = H(X) + H(Y)$.
\end{remark}

\medskip

\begin{property}[Nonnegativity] \label{propEntr:Positive} $H(p)\geq 0$ for all distributions $p$.
\end{property}

\medskip

\begin{property}[Degenerate Distributions] \label{propEntr:Degenerate} $H(p)=0$ for all degenerate distributions $p$.
\end{property}

\medskip

\begin{property}[Concavity] \label{propEntr:Concave} $H$ is concave.
\end{property}

\medskip

\begin{property}[Relabelling of States] \label{propEntr:Relabel} $H(p_1, \dots, p_n) = H(p_{\pi(1)}, \dots, p_{\pi(n)})$ for any bijection $\pi$ from $\{1,\dots, n\}$ to itself; that is, entropy is invariant to a relabelling of states. 
\end{property}

\medskip

\begin{property}[Information Reduces Uncertainty] \label{propEntr:Information} $H(Y\mid X) \leq H(Y)$ with equality if and only if $X$ and $Y$ are independent; that is, conditioning on information reduces expected entropy.
\end{property}

Properties \ref{propEntr:Max}-\ref{propEntr:Chain} constitute a set of necessary and sufficient conditions for the form of $H$ given in (\ref{def:Entropy}), up to rescaling.

\begin{proposition}[\citet{Khinchin}] Let $H(p_1, \dots, p_n)$ be a function defined for any $n \in \mathbb{Z}_{+}$ and for all values $p_1, \dots, p_n$ satisfying $p_i \geq 0$ for each $i=1, \dots, n$ and $\sum_{i=1}^n p_i = 1$. Then $H$ satisfies Properties \ref{propEntr:Max}-\ref{propEntr:Chain} if and only if 
\[H(p_1, \dots, p_n) = -\lambda \sum_{i=1}^n p_i \ln(p_i)\]
for some constant $\lambda >0$.\footnote{Recalling that $\log_b(x) = \frac{\log_a(x)}{\log_a(b)}$ for any two bases $a, b >0$, changing the logarithm to a different basis simply rescales the measure. Choice of base $2$ and of base $e$ are both common.}
\end{proposition}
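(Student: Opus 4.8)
The plan is to follow Khinchin's classical route: first pin down $H$ on uniform distributions, then bootstrap to all rational-probability distributions using the Chain Rule, and finally pass to arbitrary distributions by continuity. Write $L(n) := H\big(\tfrac1n,\dots,\tfrac1n\big)$ for the entropy of the uniform distribution on $n$ atoms. The first step is to extract two structural facts about $L$ from the hypotheses. \emph{Monotonicity:} padding the uniform distribution on $n$ points with one probability-zero atom leaves the entropy at $L(n)$ (Probability Zero States), but this distribution lives on $n+1$ points and so is dominated in entropy by $(\tfrac1{n+1},\dots,\tfrac1{n+1})$ by Property~\ref{propEntr:Max}; hence $L(n)\le L(n+1)$. \emph{Multiplicativity:} realize a uniform draw from $\{1,\dots,mn\}$ as a two-stage experiment --- draw a block $X$ uniformly among $m$ blocks, then $Y$ uniformly among the $n$ members of the chosen block --- so that the pair $(X,Y)$ is uniform on $mn$ outcomes; the Chain Rule (Property~\ref{propEntr:Chain}) then gives $L(mn) = L(m) + \sum_{i=1}^m \tfrac1m\,L(n) = L(m)+L(n)$.

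Second, I would invoke the standard number-theoretic lemma that a nondecreasing $L:\mathbb{Z}_{\ge1}\to\mathbb{R}$ with $L(1)=0$ and $L(mn)=L(m)+L(n)$ must equal $\lambda\ln n$ for a constant $\lambda\ge0$. This is the one genuinely clever step: fixing $m,n\ge2$ and a large exponent $k$, pick $j$ with $m^{j}\le n^{k}<m^{j+1}$; taking logarithms gives $\tfrac{j}{k}\le\tfrac{\ln n}{\ln m}<\tfrac{j+1}{k}$, while monotonicity plus additivity give $jL(m)\le kL(n)<(j+1)L(m)$, so $\tfrac{j}{k}\le\tfrac{L(n)}{L(m)}<\tfrac{j+1}{k}$ (when $L(m)>0$); letting $k\to\infty$ forces $L(n)/\ln n = L(m)/\ln m =: \lambda$, and $\lambda\ge0$ by monotonicity.

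Third, extend to rational distributions and then to all distributions. Given $p=(p_1,\dots,p_n)$ with $p_i=g_i/g$, $g_i\in\mathbb{Z}_{\ge1}$, $g=\sum_i g_i$ (after discarding any zero coordinates via Probability Zero States), let $X\sim p$ and, conditional on $X=i$, let $Y$ be uniform on $g_i$ labels; then $(X,Y)$ is uniform on $g$ outcomes, so $H(X,Y)=L(g)=\lambda\ln g$, while the Chain Rule gives $H(X,Y)=H(p)+\sum_i p_i L(g_i)=H(p)+\lambda\sum_i p_i\ln g_i$. Subtracting, $H(p)=\lambda\ln g-\lambda\sum_i p_i\ln g_i = -\lambda\sum_i p_i\ln p_i$. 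Since the rational distributions are dense in the simplex and both $H$ (Continuity) and $p\mapsto-\lambda\sum_i p_i\ln p_i$ (with the convention $0\ln0=0$) are continuous, they agree on all of $\Delta(\{\theta_1,\dots,\theta_n\})$, for every $n$. It remains only to observe that $\lambda>0$ rather than merely $\lambda\ge0$: the degenerate solution $\lambda=0$ corresponds to $H\equiv0$, which does satisfy Properties~\ref{propEntr:Max}--\ref{propEntr:Chain}, so it must be excluded by the implicit convention that $H$ is nonconstant (equivalently, some normalization such as $H(\tfrac12,\tfrac12)>0$ is assumed), whereupon $\lambda = H(\tfrac12,\tfrac12)/\ln2>0$.

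I expect the main obstacle to be the functional-equation lemma of the second step --- the squeeze comparing powers $m^{j}$ and $n^{k}$ --- together with the careful bookkeeping in the third step: checking that the two-stage construction really does produce the uniform joint law on $g$ outcomes, and that the Chain Rule is applied with the correct conditional entropies (in particular, atoms with $g_i=1$ contribute $L(1)=0$, consistent with Probability Zero States). Once $L(n)=\lambda\ln n$ is established, the remainder is routine.
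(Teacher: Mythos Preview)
Your argument is the classical Khinchin derivation and is correct. The paper does not supply its own proof of this proposition; it states the result with attribution to \citet{Khinchin} and moves on, so there is nothing to compare your route against beyond noting that you have reproduced the standard proof. Your observation about $\lambda=0$ is well taken: as the properties are stated in the paper (with Property~\ref{propEntr:Max} a weak inequality), the identically zero function satisfies all four, so strictly speaking the conclusion should read $\lambda\ge0$, with $\lambda>0$ requiring an implicit nontriviality convention.
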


Properties \ref{propEntr:Positive}, \ref{propEntr:Degenerate}, and \ref{propEntr:Relabel} are immediate from the functional form of entropy. Property \ref{propEntr:Concave} (concavity) follows because $-x\log(x)$ is concave, and the sum of concave functions is concave. (In fact, the same argument shows that entropy is \emph{strictly} concave, so Property \ref{propEntr:Max} can be strengthened to the statement that the uniform distribution is the unique maximum.) The following exercise asks you to prove that entropy satisfies Property \ref{propEntr:Information}.

\begin{exercise}[G] Suppose $(X,Y) \in \mathcal{X} \times \mathcal{Y}$ with $\vert \mathcal{X} \vert = n$ and $\vert \mathcal{Y} \vert = m$, where $p_X$ and $p_Y$ denote the marginal distributions of $X$ and $Y$, and $p_{Y \mid X}$ denotes the conditional distribution of $Y$ given $X$. Let  $H(Y) \equiv H(p_Y)$ be the entropy of of the marginal distribution of $Y$, and $H(Y \mid X) \equiv \sum_{i=1}^n p_X(x_i) H(p_{Y \mid X=x_i})$ be the conditional entropy of $Y$ given $X$. Prove that 
$H(Y \mid X) \leq H(Y)$.
\end{exercise}

\citet{shannon} defines a continuous version of entropy.

\begin{definition}
The entropy of probability density $p$ on $\Theta \subseteq \mathbb{R}$ is
\[H(p) = - \int_{\theta \in \Theta} p(\theta) \ln(p(\theta)) d\theta \]
\end{definition}

\begin{example}
Recall that the normally distributed variable $\theta \sim \mathcal{N}(\mu,\sigma^2)$ has density $p(\theta) = \frac{1}{\sigma \sqrt{2\pi}}e^{-\frac12 \left(\frac{\theta-\mu}{\sigma}\right)^2}$. The entropy of this distribution is
\begin{align}
\mathbb{E}\left[ -\ln\left(\frac{1}{\sigma \sqrt{2\pi}}e^{-\frac12 \left(\frac{\theta-\mu}{\sigma}\right)^2}\right)\right] & = -\ln\left(\frac{1}{\sigma \sqrt{2\pi}}\right) + \frac{1}{2\sigma^2}\mathbb{E}\left[(\theta - \mu)^2\right] \nonumber \\
& = \frac12 \ln\left(2\pi\sigma^2\right) + \frac12 \label{eq:GaussianEntropy}
\end{align}
using in the second equality that $\mathbb{E}[(\theta-\mu)^2] = \sigma^2$. So entropy and variance order normal distributions in the same way.

\end{example}

\subsection{Kullback-Liebler Divergence} \label{sec:KL}

The \emph{Kullback-Liebler Divergence (KL divergence)}, also known as \emph{relative entropy}, quantifies how different two distributions are.

\begin{definition}[KL-Divergence] \label{def:KL}
Let $\Theta = \{\theta_1, \dots, \theta_n\}$ for any $n<\infty$, and let $p,q \in \Delta(\Theta)$. Then the KL divergence from $q$ to $p$ is 
\[D(p \| q) = \sum_{\theta \in \Theta} p(\theta) \ln\left(\frac{p(\theta)}{q(\theta)}\right) = \mathbb{E}_{\theta \sim p}\left[\ln\left(\frac{p(\theta)}{q(\theta)}\right)\right]\]
where $0\ln 0 =0$.
\end{definition}

\begin{example}[Binary] Let $\Theta = \{\theta_1,\theta_2\}$ with $(p,1-p)$ and $(q,1-q)$ be two distributions on this set. Then
\[D(p \| q) = p \ln\left(\frac{p}{q}\right) + (1-p)\ln\left(\frac{1-p}{1-q}\right).\]
Intuitively, larger log likelihood ratios $\ln\left(\frac{p}{q}\right)$ and $\ln\left(\frac{1-p}{1-q}\right)$ reflect distributions that are more different. KL divergence aggregates these log likelihood ratios by weighting them with respect to their probabilities under a reference distribution, which is chosen to be either of $p$ or $q$. 
\end{example}

\begin{example}[Gaussian] \label{ex:GaussianKL} Let $p$ and $q$ denote two Gaussian densities with common variance $\sigma$ and different means $\mu_p$ and $\mu_q$. Then
\begin{align*}
D(p \| q) &=  \mathbb{E}_{\theta \sim p}\left[\ln\left(\frac{e^{-\frac12\left(\frac{\theta-\mu_p}{\sigma}\right)^2}}{e^{-\frac12\left(\frac{\theta-\mu_q}{\sigma}\right)^2}}\right)\right] \\
& = \frac{\mu_q^2 - \mu_p^2}{2\sigma^2} - \frac{\mu_q - \mu_p}{\sigma^2}\cdot \mathbb{E}_{\theta \sim p}(\theta) = \frac{(\mu_q - \mu_p)^2}{2\sigma^2}
\end{align*}
So as we might expect, the further the two means, the larger the KL divergence between the two distributions.
\end{example}

\noindent KL divergence is not in general symmetric (with Example \ref{ex:GaussianKL} being a notable exception) and hence it is not a metric. Other key properties of the KL divergence include:

\begin{property}[Nonnegativity] \label{propKL:Nonnegative} $D(p \| q) \geq 0$ for all $p,q\in \Delta(\Theta)$, with equality if and only if $p=q$.
\end{property}

\noindent To prove this, observe that 
\begin{align*}
-D(p \| q) & = \mathbb{E}_{\theta \sim p}\left[ \ln\left(\frac{q(\theta)}{p(\theta)}\right) \right] \\
& \leq \ln \left( \mathbb{E}_{\theta \sim p}\left[ \frac{q(\theta)}{p(\theta)} \right]\right) && \mbox{by Jensen's inequality} \\
& = \ln(1) = 0 && \mbox{since $\sum_{\theta \in \Theta} p(\theta) \left(\frac{q(\theta)}{p(\theta)}\right) =1$}
\end{align*}

\begin{property}[Additivity for Independent Distributions] Suppose $p_1\in \Delta(\mathcal{X}_1)$ and $p_2 \in \Delta(\mathcal{X}_2)$ are independent distributions, with $p(x_1,x_2)=p_1(x_1)p_2(x_2)$. Likewise suppose $q_1 \in \Delta(\mathcal{X}_1)$ and $q_2 \in \Delta(\mathcal{X}_2)$ are independent distributions with $q(x_1,x_2)=q(x_1)q(x_2)$. Then
\[D(p \| q) = D(p_1 \| q_1) + D(p_2 \| q_2),\]
i.e., KL divergence is additive for independent distributions.
\end{property}

This property follows from straightforward algebra:
\begin{align*}
D(p \| q) &  = \sum_{x_1 \in \mathcal{X}_1} \sum_{x_2 \in \mathcal{X}_2} p(x_1,x_2) \ln \left(\frac{p(x_1,x_2)}{q(x_1,x_2)}\right) \\
& = \sum_{x_1 \in \mathcal{X}_1} \sum_{x_2 \in \mathcal{X}_2} p_1(x_1)p_2(x_2)\ln \left(\frac{p_1(x_1)p_2(x_2)}{q_1(x_1)q_2(x_2)}\right) \\
& = \sum_{x_2 \in \mathcal{X}_2} p_2(x_2) \left(\sum_{x_1 \in \mathcal{X}_1} p_1(x_1) \ln \left(\frac{p_1(x_1)}{q_1(x_1)}\right)\right) \\
& \quad  + \sum_{x_1 \in \mathcal{X}_1} p_1(x_1) \left(\sum_{x_2 \in \mathcal{X}_2} p_2(x_2) \ln \left(\frac{p_2(x_2)}{q_2(x_2)}\right)\right) = D(p_1 \| q_1) + D(p_2 \| q_2)
\end{align*}
where independence is invoked in the second equality. 

\begin{property}[Convexity] $D$ is convex: For any two pairs $(p,q)$ and $(p',q')$, and any $\alpha \in [0,1]$, we have
\[D\left(\alpha p + (1-\alpha) p' \| \alpha q + (1-\alpha) q' \right) \leq \alpha D(p \| q) + (1-\alpha) D(p' \| q')\]
\end{property}

\begin{exercise}[G] Prove the above property using the following fact:
\begin{fact}[Log-Sum Inequality] Let $a_1, \dots a_n$ and $b_1, \dots b_n$ be nonnegative real numbers. Then
\[\sum_{i=1}^n a_i \ln\left(\frac{a_i}{b_i}\right) \geq \left(\sum_{i=1}^n a_i\right) \ln\left(\frac{\sum_{i=1}^n a_i}{\sum_{i=1}^n b_i}\right).\]
\end{fact}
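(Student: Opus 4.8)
The plan is to deduce the Log-Sum Inequality from the convexity of the scalar function $\varphi(t) = t\ln t$ on $[0,\infty)$ (with the convention $\varphi(0)=0$), via Jensen's inequality. The only analytic input needed is that $\varphi$ is convex: $\varphi''(t) = 1/t > 0$ for $t>0$, and $\varphi$ extends continuously to $\varphi(0)=0$ since $\lim_{t\to 0^+} t\ln t = 0$. Everything else is bookkeeping.

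Write $A = \sum_{i=1}^n a_i$ and $B = \sum_{i=1}^n b_i$, and suppose first that every $a_i$ and $b_i$ is strictly positive (so in particular $A,B>0$). I would take the weights $\lambda_i = b_i/B$, which are nonnegative and sum to $1$, and evaluation points $t_i = a_i/b_i$. Jensen's inequality applied to the convex $\varphi$ gives $\varphi\!\left(\sum_i \lambda_i t_i\right) \leq \sum_i \lambda_i \varphi(t_i)$. The argument on the left simplifies to $\sum_i \lambda_i t_i = \frac1B\sum_i a_i = A/B$, so the left side is $(A/B)\ln(A/B)$, while the right side equals $\frac1B\sum_i a_i\ln(a_i/b_i)$. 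Multiplying through by $B>0$ yields $A\ln(A/B)\leq \sum_i a_i\ln(a_i/b_i)$, which is precisely the claim.

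A second, arguably more fitting route reduces directly to the nonnegativity of KL divergence established immediately above. Normalizing $p_i = a_i/A$ and $q_i = b_i/B$ gives two probability distributions, and using $\ln(a_i/b_i) = \ln(p_i/q_i) + \ln(A/B)$ one finds $\sum_i a_i\ln(a_i/b_i) = A\,D(p\,\|\,q) + A\ln(A/B)$. Since $D(p\,\|\,q)\geq 0$, the inequality is immediate. There is no circularity here: that nonnegativity was itself obtained directly from Jensen, not from convexity of $D$, which is the property the present fact will be used to prove.

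The main (if minor) obstacle is the degenerate boundary cases, which I would dispatch in a short opening paragraph so that the computation above runs on strictly positive quantities. Any index with $a_i=0$ contributes $0$ to the left side (convention $0\ln 0 = 0$) and can be deleted; if some $b_i=0$ while $a_i>0$, then the term $a_i\ln(a_i/b_i)$ is $+\infty$ and the inequality holds trivially. Removing the $a_i=0$ terms and handling any $b_i=0,\,a_i>0$ term separately leaves exactly the all-positive case with $A,B>0$ treated above.
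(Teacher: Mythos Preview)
The paper does not prove the Log-Sum Inequality; it is stated without proof as a tool for the exercise on convexity of KL divergence. Your Jensen argument with $\varphi(t)=t\ln t$, weights $\lambda_i=b_i/B$, and points $t_i=a_i/b_i$ is the standard proof and is correct; the alternative reduction to $D(p\|q)\geq 0$ is also valid and, as you note, non-circular since the paper derived that nonnegativity directly from Jensen.

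One small looseness in your boundary paragraph: when you delete an index with $a_i=0$ but $b_i>0$, the left side is unchanged and $A$ is unchanged, but $B$ shrinks, so the right side $A\ln(A/B)$ \emph{increases}. You should say explicitly that proving the inequality for the reduced family therefore implies it for the original (the reduced right side dominates the original right side), rather than suggesting the two problems are identical. With that one sentence added, the degenerate-case bookkeeping is complete.
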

\end{exercise}

There is a close relationship between KL divergence and entropy. First, the entropy of a distribution $p \in \Delta(\Theta)$ with $n \equiv \vert \Theta \vert < \infty$ can be rewritten directly in terms of KL divergence:
\[H(p) = \ln n - D(p \| U)\]
where $U$ denotes the uniform distribution on $\Theta$. Thus, the larger the KL divergence from the uniform distribution to $p$, the lower the entropy of $p$. This is proved by observing that
\begin{align*} 
\ln n - D(p \| U) &= \ln n - \sum_{\theta \in \Theta} p(\theta) \ln\left(\frac{p(\theta)}{1/n}\right) \\
& = \sum_{\theta \in \Theta} p(\theta) (\ln n - \ln\left(n p(\theta) \right)) && \mbox{since $\sum_{\theta \in \Theta} p(\theta) =1$} \\
& = -\sum_{\theta \in \Theta} p(\theta) \ln(p(\theta)) = H(p) 
\end{align*}

\begin{remark} Together with Property \ref{propKL:Nonnegative}, the above relationship implies that entropy is maximized at the uniform distribution (Property \ref{propEntr:Max}).
\end{remark}

KL divergence cannot be rewritten directly in terms of entropy, although
\[D(p \| q) = - \sum_{\theta \in \Theta} p(\theta) \ln\left(q(\theta)\right) - H(p)\]
where $-\sum_{\theta \in \Theta} p(\theta) \ln\left(q(\theta)\right)$ is the \emph{cross-entropy} of distribution $q$ relative to $p$. 

\section{Prior-Dependent Costs} \label{sec:PriorDependent}

Returning to the question of how to model the cost function, we begin with \emph{prior-dependent} cost functions. Dependence on the prior belief means that the cost of absorbing the information content of a signal varies with what the agent already knows. This feature may be justified if we view the cost of information as an information processing or cognitive cost: For example, processing a news article about a proposed tax change may be relatively easy for someone who already understands this tax change well, but cognitively taxing for someone who does not. 
 
It will be convenient to represent signals as distributions over posterior beliefs, as in  Section \ref{sec:BayesPlausibility}. Following Definition \ref{def:BayesPlausible}, we use $\mathcal{T}(p)$ to denote the set of Bayes plausible distributions given prior $p$, and we further define
\[\mathcal{S} = \{(p, \tau) : p \in \Delta(\Theta), \tau \in \mathcal{T}(p)\}\]
to be the domain of prior beliefs and Bayes plausible distributions. The cost functions in this section will take the form $C: \mathcal{S} \rightarrow \mathbb{R}$.

\subsection{Uniform Posterior Separability} \label{sec:ReductionUncertainty}

One popular class of cost functions are those that are \emph{uniformly posterior separable}.

\begin{definition}[\citet{CaplinDean2013,CaplinDeanLeahy2022}] \label{def:UPS} The cost function $C: \mathcal{S} \rightarrow \mathbb{R}$ is \emph{uniformly posterior separable} (henceforth UPS) if there is a strictly concave function $\Phi$ such that
\[C(p,\tau) = \Phi(p) - \mathbb{E}_{q \sim \tau}[\Phi(q)] \quad \forall (p,\tau) \in \mathcal{S}.\]
\end{definition}
We can interpret this cost of information as the expected reduction of uncertainty, where $\Phi: \Delta(\Theta) \rightarrow \mathbb{R}$ measures how uncertain the belief is.

\begin{remark} The cost of ``no information" is zero, since $\Phi(p) - \mathbb{E}_{q \sim \delta_p}[\Phi(q)]=\Phi(p)-\Phi(p)=0$ (with $\delta_p$ denoting the degenerate distribution at the prior $p$).
\end{remark}

\begin{remark} Concavity of $\Phi$ guarantees that uncertainty decreases in expectation when more information is received. Together with Bayes plausibility of $\tau$, this further implies that UPS cost functions are everywhere positive:
\begin{align*}
\Phi(p) - \mathbb{E}_{q \sim \tau}[\Phi(q)] & \geq  \Phi(p) - \Phi(\mathbb{E}_{q \sim \tau}[q]) && \mbox{by Jensen's inequality}\\
& = \Phi(p) - \Phi(p) && \mbox{by Bayes plausibility of $\tau$} \\
& = 0 
\end{align*} 
\end{remark}

\begin{remark} UPS cost functions are consistent with the Blackwell order. That is, let $\sigma$ and $\sigma'$ be arbitrary signals where $\sigma$ Blackwell dominates $\sigma'$. Fix any prior $p$, and let $\tau_\sigma$ and $\tau_{\sigma'}$ denote the distributions over posteriors that are induced by  $\sigma$ and $\sigma'$. Then for any UPS cost function $C$, we have $C(p,\tau_\sigma) \geq C(p,\tau_{\sigma'})$ since 
\[C(p,\tau) = \int (\Phi(p) - \Phi(q))d\tau(q)\]
where $\Phi(p) - \Phi(q)$ is convex in $q$, and $\tau_\sigma$ dominates $\tau_{\sigma'}$ in the convex order (see the characterization of the Blackwell order in Section \ref{sec:ConvexOrder}). 
\end{remark}

The leading specification of $C$ is the expected reduction of the entropy of the agent's belief.

\begin{example}[Entropy Reduction] Let $H$ be the entropy function given in Definition \ref{def:Entropy}. Then define
\begin{equation} \label{def:EntropyCost}
C_{\text{Ent}}(p, \tau) = H(p) - \mathbb{E}_{q \sim \tau}[H(q)] \quad \forall (p,\tau) \in \mathcal{S}
\end{equation}
to be the expected reduction in the entropy of the agent's belief. 
\end{example}

Initially proposed as an information cost in \citet{Sims2003}, this cost function is a cornerstone of the rational inattention literature \citep{CaplinDean2013,CaplinDean2015,HebertWoodfordAER,HebertLaO}.  Various conceptual foundations for entropic costs and uniformly posterior separable cost functions (as well as the broader class of posterior separable cost functions discussed in Section \ref{sec:PosteriorSeparable}) can be found in \cite{CaplinDean2013}, \citet{MatejkaMcKay2015}, \citet{MorrisStrack}, \citet{HebertWoodford}, \citet{BloedelZhong}, and \citet{Denti2022} among others.

\begin{example} In the setting of Example \ref{ex:BinaryCost}, we have
\[C_{\text{Ent}}(p,\tau_\varphi) = -\ln\left(\frac{1}{2}\right)  + \left(\varphi\ln(\varphi) + (1-\varphi) \ln(1-\varphi)\right)\]
where $\tau_\varphi$ denotes the distribution over posterior beliefs induced by the signal indexed to $\varphi$.
The cost of the signal is largest when $\varphi\in \{0,1\}$ (corresponding to a fully revealing signal) and smallest when $\varphi=1/2$ (corresponding to an uninformative signal).
\end{example}

\bigskip

Besides entropy, another natural choice of $\Phi$ is variance.

\begin{example}[Variance Reduction] Let 
\begin{equation} \label{def:VarCost}
C_{\text{Var}}(p, \tau) = \Var(p) - \mathbb{E}_{q \sim \tau}[\Var(q)]
\end{equation}
be the expected reduction in the variance of the agent's belief.
\end{example}

\begin{exercise}[G] Prove that variance is strictly concave, so $C_{\text{Var}}$ is a UPS cost function.
\end{exercise}

\begin{example} Consider the setting of Example \ref{ex:GaussianCost} (where we use $\tau_{\sigma_\eps^2}$ to denote the distribution over posterior beliefs induced by observing the signal $X=\theta +\eps$, $\eps \sim \mathcal{N}(0,\sigma_\eps^2)$). Applying (\ref{eq:GaussianEntropy}),
\begin{align*}
C_{Ent}(p,\tau_{\sigma_\eps^2}) & = \left(\frac12 \ln(2\pi\sigma_\theta^2) + \frac12 \right) -  \left(\frac12 \ln\left(2\pi\left(\frac{\sigma_\theta^2\sigma_\eps^2}{\sigma_\theta^2 + \sigma_\eps^2}\right)\right) + \frac12 \right)\\
& = \frac12 \ln\left(\frac{\sigma_\theta^2 + \sigma_\eps^2}{\sigma_\eps^2}\right)
\end{align*}
while
\begin{align*}
C_{Var}(p,\tau_{\sigma_\eps^2}) & = \sigma_\theta^2 - \frac{\sigma_\theta^2 \sigma_\eps^2}{\sigma_\theta^2 + \sigma_\eps^2} = \frac{\sigma_\theta^4}{\sigma_\theta^2 + \sigma_\eps^2}.
\end{align*}
For every fixed prior variance $\sigma_\theta^2$, both cost functions are strictly decreasing in the noise variance $\sigma_\eps^2$, and thus correspond to different cardinal representations of the same ordering over signals. One interesting contrast is that $C_{Ent}(p,\sigma_\eps^2) \rightarrow \infty$ as $\sigma_\eps^2 \rightarrow 0$, while $C_{Var}(p,\sigma_\eps^2) \rightarrow \sigma^2$. That is, the cost of information using $C_{\Var}$ is bounded above by the agent's prior uncertainty, while entropy cost is unbounded.
\end{example}

\subsection{Decision-Theoretic Foundations}

The function $\Phi$ is interpreted in the previous section as a ``pure" measure of uncertainty,  without reference to why this uncertainty matters. Parallel to Section \ref{sec:DecisionProblem}'s assessment of the value of information using decision problems, \citet{FrankelKamenica} microfound the function $\Phi$ as measuring the instrumental loss of uncertainty for a specific decision problem.

\begin{definition} For any belief $q \in \Delta(\Theta)$ and decision problem $\mathcal{D}=(A,u)$, let
\[\Phi_{\mathcal{D}}(q) = \mathbb{E}_q\left[\max_{a \in A} u(a, \theta)\right] - \max_{a \in A} \mathbb{E}_q\left[u(a,\theta)\right].\]
\end{definition}
The first term of this expression is the agent's best expected payoff when conditioning his action directly on the realized state (which is random and distributed according to the agent's belief $q$). The second term is the best expected payoff that the agent with belief $q$ can achieve given no additional information on which to condition his action. Thus $\Phi_{\mathcal{D}}$ quantifies the agent's payoff loss from not knowing a state which is distributed according to $q$. 

\begin{definition}[\citet{FrankelKamenica}] Say that $\Phi: \Delta(\Theta) \rightarrow \mathbb{R}$ is \emph{valid} if there is a decision problem $\mathcal{D}$ such that $\Phi=\Phi_{\mathcal{D}}$. 
\end{definition}

Any function $\Phi$ that is concave and takes value zero at degenerate distributions (i.e., satisfies Properties \ref{propEntr:Degenerate} and \ref{propEntr:Concave}) can be microfounded using a decision problem in this way. 

\begin{proposition}[\citet{FrankelKamenica}] \label{prop:FK} $\Phi: \Delta(\Theta) \rightarrow \mathbb{R}$ is valid if and only if it satisfies Properties \ref{propEntr:Degenerate} and \ref{propEntr:Concave}.
\end{proposition}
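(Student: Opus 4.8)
The plan is to check both directions straight from the definition of $\Phi_{\mathcal{D}}$, with all of the substance concentrated in the ``if'' direction.

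\emph{Necessity.} Suppose $\Phi = \Phi_{\mathcal{D}}$ for some decision problem $\mathcal{D} = (A,u)$. At a degenerate belief $\delta_\theta$ both terms in the definition of $\Phi_{\mathcal{D}}$ reduce to $\max_{a\in A} u(a,\theta)$, so $\Phi(\delta_\theta)=0$ and Property~\ref{propEntr:Degenerate} holds. For Property~\ref{propEntr:Concave}, note that $q \mapsto \mathbb{E}_q[\max_{a} u(a,\theta)]$ is linear in $q$, while $q \mapsto \max_{a} \mathbb{E}_q[u(a,\theta)]$ is a pointwise maximum of linear functions of $q$ and hence convex (the same observation used to prove $5 \Rightarrow 2$ in Blackwell's theorem); their difference is therefore concave.

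\emph{Sufficiency.} Given a concave $\Phi$ with $\Phi(\delta_\theta)=0$ for every $\theta$, I would let the actions index the affine functions that dominate $\Phi$. Set $A = \{L : \Delta(\Theta)\to\mathbb{R} \text{ affine},\ L \geq \Phi \text{ on } \Delta(\Theta)\}$, write each $L\in A$ as $L(p) = \sum_\theta p(\theta)\, c^L_\theta$ with $c^L_\theta = L(\delta_\theta)$, and define $u(L,\theta) = -c^L_\theta$. Because $\Phi$ is concave it equals the lower envelope of the affine functions above it, i.e.\ $\inf_{L\in A} L(q) = \Phi(q)$ for every $q\in\Delta(\Theta)$; evaluating this at $q = \delta_\theta$ and using $\Phi(\delta_\theta)=0$ gives $\inf_{L\in A} c^L_\theta = 0$ for every $\theta$. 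Hence $\sup_{L\in A} u(L,\theta) = 0$ for every $\theta$, so $\mathbb{E}_q[\max_{L} u(L,\theta)] = 0$, while $\sup_{L\in A} \mathbb{E}_q[u(L,\theta)] = -\inf_{L\in A} L(q) = -\Phi(q)$. Substituting into the definition yields $\Phi_{\mathcal{D}}(q) = 0 - (-\Phi(q)) = \Phi(q)$, so $\Phi$ is valid.

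The one genuinely external input is the dual description used above: a concave $\Phi$ equals the infimum of the affine functions majorizing it. This holds once $\Phi$ is suitably regular (upper semicontinuity suffices, and on the compact simplex this is automatic when $\Phi$ is continuous, covering entropy and variance), and I would cite it rather than reprove it. Two cosmetic caveats: $A$ here is infinite, so ``$\max$'' in the definition of $\Phi_{\mathcal{D}}$ must be read as ``$\sup$'' — this is unavoidable, since with a finite action set $\Phi_{\mathcal{D}}$ is piecewise linear whereas entropy is not; and when $\Phi$ is differentiable one can instead take $A=\Delta(\Theta)$ with $u$ the strictly proper scoring rule whose information function is $-\Phi$ (so $\max_a\mathbb{E}_q[u(a,\theta)]=-\Phi(q)$ and $\mathbb{E}_q[\max_a u(a,\theta)]=\mathbb{E}_q[-\Phi(\delta_\theta)]=0$), which is the same construction in different clothing. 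Everything past the dual representation is bookkeeping, so I do not anticipate a real obstacle.
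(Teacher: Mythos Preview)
Your proof is correct and follows essentially the same route as the paper's. The paper isolates the representation step as a separate lemma (every convex $V$ on $\Delta(\Theta)$ can be written as $\sup_a \mathbb{E}_q[u(a,\theta)]$), proving it by indexing actions to beliefs and assigning each belief its supporting hyperplane; you instead take \emph{all} affine majorants of $\Phi$ as actions and cite the concave dual representation directly. These are the same construction up to whether you carry the tight majorants or all of them, and both proofs finish identically: evaluate the envelope at $\delta_\theta$ to get $\max_a u(a,\theta)=0$ from Property~\ref{propEntr:Degenerate}, so $\Phi_{\mathcal D}(q)=0-(-\Phi(q))=\Phi(q)$. The paper likewise flags the need to read $\max$ as $\sup$ and defers the boundary/regularity edge cases to an exercise, matching your caveats.
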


This result follows from the subsequent lemma, which is of independent interest.

\begin{lemma} \label{lemm:Support} Let $\Theta$ be a finite set. Then every convex function $V: \Delta(\Theta) \rightarrow \mathbb{R}$ can be represented as
\begin{equation} \label{eq:VDecision}
V(q) = \sup_{a \in A} \mathbb{E}_{q}[u(a,\theta)] \quad \forall q \in \Delta(\Theta)
\end{equation}
for some decision problem $(A,u)$, where $A$ is a set (not necessarily finite) and $u$ is a map $u: \Theta \times A \rightarrow [-\infty,+\infty]$.
\end{lemma}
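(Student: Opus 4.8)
The plan is to realize each convex function $V$ on the simplex as an upper envelope of affine functions, and then to interpret each affine function as the expected payoff of an action. The starting point is the standard fact that a convex function $V:\Delta(\Theta)\to\mathbb{R}$ on a finite-dimensional convex set equals the supremum of all the affine minorants that lie weakly below it; equivalently, at every $q$ in the (relative) interior there is a subgradient, and the graph of $V$ is the upper boundary of the convex hull of these supporting hyperplanes. So I would take $A$ to be the set of all affine functions $\ell:\Delta(\Theta)\to\mathbb{R}$ with $\ell\le V$ pointwise, and for each such $\ell\in A$ define the payoff profile $u(\theta,\ell)$ to be the value $\ell(\delta_\theta)$ that $\ell$ assigns to the vertex $\delta_\theta$ of the simplex. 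Since $\ell$ is affine, $\ell(q)=\sum_{\theta}q(\theta)\,\ell(\delta_\theta)=\mathbb{E}_q[u(\cdot,\ell)]$, so that $\sup_{\ell\in A}\mathbb{E}_q[u(\theta,\ell)]=\sup_{\ell\le V}\ell(q)=V(q)$, which is exactly \eqref{eq:VDecision}.

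The steps, in order, are: (i) recall/verify that a finite-dimensional convex function is the pointwise supremum of its affine minorants — this is where I would be slightly careful, because $\Delta(\Theta)$ is not open, so at relative-boundary points one needs either to restrict to the affine hull of the simplex and use that $V$ is convex and finite there (hence continuous on the relative interior and lower semicontinuous, or bounded, on the whole simplex by the assumption that it is real-valued) or simply to note that the supremum of affine minorants is itself convex, dominated by $V$, and agrees with $V$ on the relative interior, and then extend by lower semicontinuity / take the closure; (ii) identify each affine function on $\Delta(\Theta)$ with its restriction to the vertices, giving the formula $\ell(q)=\mathbb{E}_q[\ell(\delta_\theta)]$; (iii) set $A=\{\ell \text{ affine}: \ell\le V\}$, $u(\theta,\ell)=\ell(\delta_\theta)$, and conclude. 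Allowing $u$ to take values in $[-\infty,+\infty]$ is a harmless safety valve in case one wants to include degenerate or limiting actions, but with $V$ finite-valued the construction above already uses only finite payoffs.

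The main obstacle is purely the boundary behavior of $V$ on $\Delta(\Theta)$: an arbitrary convex function that is merely finite on a closed convex set need not be continuous at the boundary, so one cannot blithely say "$V$ equals the sup of its affine minorants everywhere." The clean fix is to observe that the right-hand side of \eqref{eq:VDecision}, being a supremum of affine (hence continuous) functions, is automatically convex and lower semicontinuous, it is $\le V$ by construction, and it equals $V$ on $\mathrm{relint}\,\Delta(\Theta)$ because $V$ is continuous there and has a subgradient at each such point; if $V$ happens to be continuous on all of $\Delta(\Theta)$ (the case of interest for $\Phi_{\mathcal D}$, which is continuous) then equality holds everywhere, and in general one can either state the lemma for continuous convex $V$ or pass to the lsc hull. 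I would flag this in one sentence and otherwise keep the argument at the level above. Everything else — checking that $A$ is nonempty (take $\ell\equiv \inf V$, or any supporting hyperplane at an interior point), and that the identification in step (ii) is a bijection between affine functions on the simplex and their vertex values — is routine.
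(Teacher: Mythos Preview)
Your proposal is correct and follows essentially the same route as the paper: represent the convex $V$ as the pointwise supremum of its affine minorants, then read each affine function off its vertex values to get a utility profile so that $\ell(q)=\mathbb{E}_q[u(\ell,\theta)]$. The only cosmetic difference is that the paper indexes actions by points of the simplex (taking one supporting hyperplane $U_{a_q}$ at each $q$) whereas you index by the set of all affine minorants; these yield the same supremum. Both you and the paper flag the boundary issue and defer the non-continuous case --- the paper leaves it as an exercise, and you correctly note that the extended-real range of $u$ is what ultimately handles it (via ``vertical'' actions that are $-\infty$ off a face), though neither you nor the paper spells that step out.
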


\noindent The key points in the proof of this lemma are that $\mathbb{E}_q(u(a,\theta))$ is affine in $q$, and that every convex function is the supremum of affine functions lying below it. We'll prove this lemma assuming that $V$ is continuous and has a nonvertical supporting hyperplane at every point $ q\in \Delta(\Theta)$, leaving the completion of the proof when these assumptions fail as Exercise \ref{ex:Vertical}.\footnote{Under these assumptions,  the supremum in (\ref{eq:VDecision}) can be replaced with maximum, as the following proof demonstrates.}

\medskip

\begin{proof} Our approach is to construct a set of actions indexed to beliefs, $A = \{a_q \, : \, q \in \Delta^{n}\}$, and to construct a utility function such that each action $a_{q}$ is optimal at the belief $q$. To do this, define a family of affine functions $(U_{a_{q}})_{q \in \Delta^{n}}$, where each $U_{a_{q}}: \Delta^{n} \rightarrow \mathbb{R}$ is a supporting hyperplane of the epigraph of $V$ at $q$, as depicted below in Figure \ref{fig:Support}.\footnote{Recall that the epigraph of $V$ is $\{(q,v): v \geq V(q)\}$, the set of points lying on or above $V$.} 

\begin{figure}[h]
\begin{center}
\includegraphics[scale=0.65]{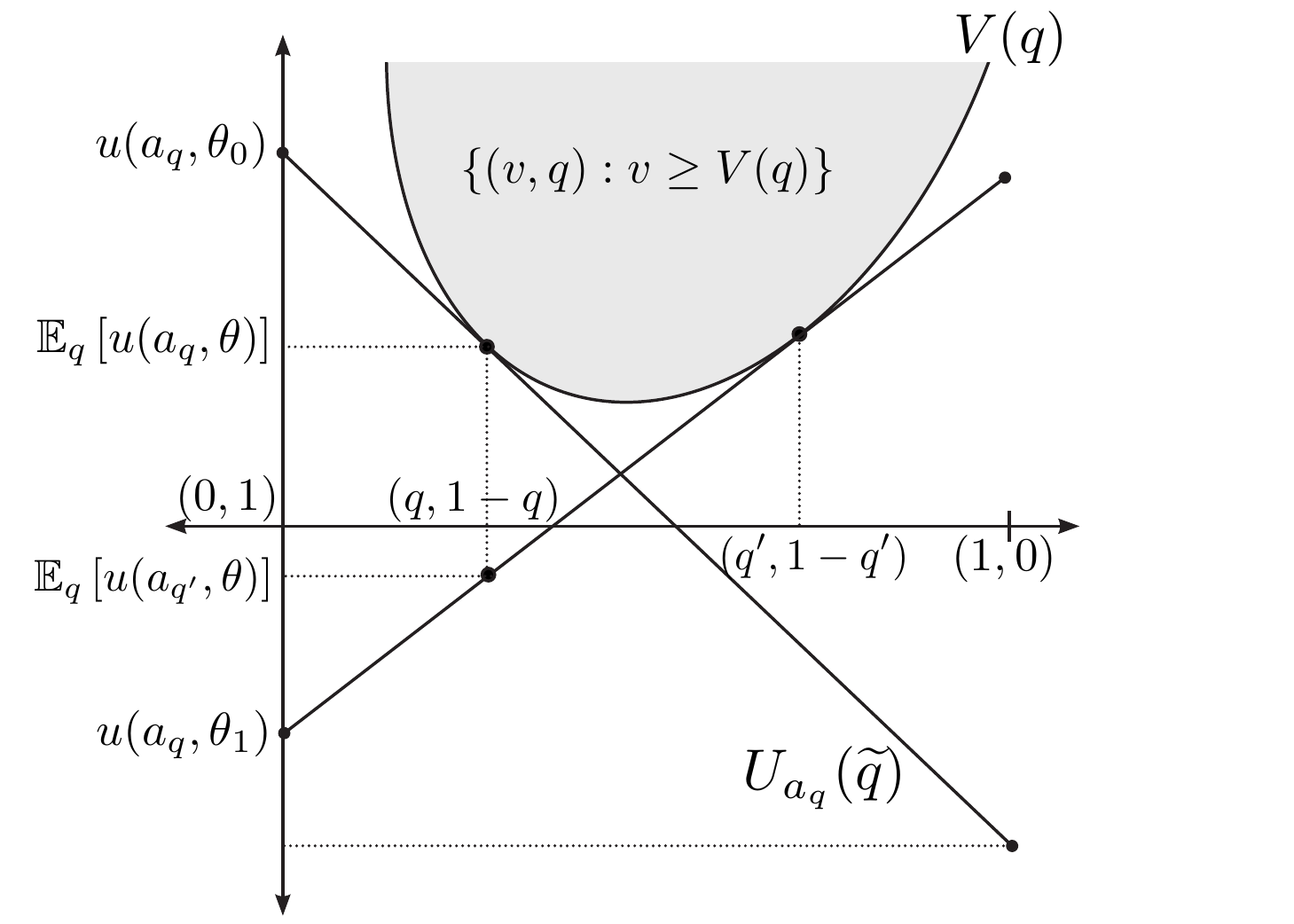}
\end{center}
\caption{Example construction for a binary state space $\Theta = \{\theta_0,\theta_1\}$. The action $a_q$ is optimal at belief $q$; that is, for every other belief $q'$ we have $\mathbb{E}_q[u(a_q,\theta)]\geq \mathbb{E}_q[u(a_{q'},\theta)]$, as depicted here.} \label{fig:Support}
\end{figure}

Since $V$ is continuous and convex, it can be represented on its domain as the supremum of all affine functions lying below it. Since each $U_{a_{q}}$ is affine and lies below $V$, we have that
\[V(q) \geq U_a(q) \quad \forall a \in A,q\in \Delta^{n}.\]
Moreover (by definition) $U_{a_{q}}$ supports $V$ at $q$, so $U_{a_{q}}(q) = V(q)$. This implies that 
\begin{equation} \label{eq:V}
U_{a_{q}}(q) = \max_{a \in A} U_a(q) \quad \forall q \in \Delta(\Theta)
\end{equation}
We now need to express $U_{a_q}$ as an expected utility function. Since each belief $q'$ is a convex combination of the degenerate beliefs $(\delta_{\theta})_{\theta \in \Theta}$ (with weights given by $q'(\theta)$), and $U_{a_q}$ is affine, it follows that
\begin{equation} \label{eq:SetU}
U_{a_{q}}(q') = \sum_{\theta \in \Theta} q'(\theta) U_{a_q}(\delta_\theta) \quad \forall q' \in \Delta(\Theta)
\end{equation}
Now define the utility function $u: \mathbb{R}^n \rightarrow \mathbb{R}$ to satisfy $u(a, \theta) = U_a(\delta_\theta)$ for every $a \in A$ and $\theta \in \Theta$.
Then from (\ref{eq:SetU}),
\[U_{a_q}(q') = \sum_{\theta \in \Theta} q'(\theta) u(a_q,\theta)\]
and so (\ref{eq:V}) implies that
\[\mathbb{E}_q[u(a_q,\theta)] \geq \mathbb{E}_q[u(a,\theta)]\]
 for every $q \in \Delta(\Theta)$ and $a \in A$. Thus each action $a_q$ is optimal at belief $q$, and achieves the expected utility $U_{a_q}(q) = V(q)$ as desired.
\end{proof}

\begin{exercise}[G] \label{ex:Vertical} Complete the proof by showing that the statement of Lemma \ref{lemm:Support} continues to hold when $V$ is discontinuous and/or there exists a belief $q$ at which every supporting hyperplane of $V$ is vertical.

\begin{hint} Observe that vertical supporting hyperplanes can only exist on the boundary of $\Delta(\Theta)$, and that discontinuities can only occur at degenerate beliefs.
\end{hint}
\end{exercise}

We'll now use this lemma to prove Proposition \ref{prop:FK}. 

\medskip

\begin{proof} Suppose $\Phi$ satisfies Assumptions \ref{propEntr:Degenerate} and \ref{propEntr:Concave}. Then $-\Phi$ is convex, so by Lemma \ref{lemm:Support}, there is a set of actions $A$ and a utility function $u:A \times \Theta \rightarrow \mathbb{R}$ such that 
\begin{equation} \label{eq:ApplyLemma}
-\Phi(q) = \max_{a \in A} \mathbb{E}_q[u(a,\theta)] \quad \forall q\in \Delta(\Theta).
\end{equation}

\noindent We need to verify that
\begin{equation} \label{eq:ToShowPhi}
\Phi(q) = \mathbb{E}_q\left[\max_{a \in A} u(a,\theta)\right] - \max_{a\in A} \mathbb{E}_q[u(a,\theta)]
\end{equation}
for every $q \in \Delta(\Theta)$. Again index the states by $\theta_1, \dots, \theta_n$ (where $n \equiv \vert \Theta \vert$), and define $\delta_{\theta_i}$ to be the belief that is degenerate at state $\theta_i$. Then for any $\theta_i \in \Theta$
\begin{align*}
\max_{a \in A} u(a,\theta_i) & = \max_{a \in A}  \mathbb{E}_{\delta_{\theta_i}}[u(a,\theta)] \\
& = - \Phi(\delta_{\theta_i}) && \mbox{by (\ref{eq:ApplyLemma})} \\
& = 0 && \mbox{by Assumption \ref{propEntr:Degenerate}}
\end{align*}
Thus also $\mathbb{E}_q\left[\max_{a \in A} u(a,\theta)\right]=0$ for any belief $q$, which together with (\ref{eq:ApplyLemma}) implies that (\ref{eq:ToShowPhi}) reduces to $\Phi(a) = 0 - (-\Phi(a))$ and is thus true.

In the other direction, 
 \[\Phi(\delta_{\theta}) = \max_{a \in A} u(a,\theta) - \max_{a \in A} u(a,\theta)=0 \quad \forall \theta \in \Theta\]
implying Property \ref{propEntr:Degenerate}. Concavity of $\Phi$ (Property \ref{propEntr:Concave}) follows by construction of $\Phi$ since $\mathbb{E}_q\left[\max_{a \in A} u(a,\theta)\right]$ is affine while $\sup_{a\in A} \mathbb{E}_q[u(a,\theta)]$ is a pointwise supremum of affine functions, and thus convex. \end{proof}

\bigskip
By Proposition \ref{prop:FK}, the two example cost functions from the previous section, $C_{Ent}$ and $C_{Var}$, can be microfounded using decision problems. These decision problems are given below. 

\begin{example}[Microfoundation for Entropy Cost] Set $A = \Delta(\Theta)$ and $u(a,\theta) =  \ln(a(\theta))$, where $\ln0=-\infty$. Then the cost of uncertainty is
\[\Phi_{\mathcal{D}}(q) = \mathbb{E}_q\left[\max_a\left[ \ln(a(\theta))\right]\right] - \max_a \mathbb{E}_q\left[\ln(a(\theta))\right] = H(q).\]
\end{example}

\begin{example}[Microfoundation for Variance Cost] Set $A = \Theta \subseteq \mathbb{R}$ and $u(a, \theta) = -(a-\theta)^2$. Then
\[\Phi_{\mathcal{D}}(q) = \mathbb{E}_q\left[\max_a\left[ - (a- \theta)^2\right]\right] - \max_a \mathbb{E}_q\left[-(a-\theta)^2\right] = Var_q(\theta)\]
\end{example}

\subsection{Posterior Separability} \label{sec:PosteriorSeparable}

A weaker requirement than uniform posterior separability is that the cost of $\tau$ can be written in a way that is separable in the realized posteriors. 

\begin{definition}[\citet{CaplinDean2013,CaplinDeanLeahy2022}] \label{def:PosteriorSeparable} The cost function $C: S \rightarrow \mathbb{R}$ is \emph{posterior separable} if
\[C(p,\tau) = \mathbb{E}[\Phi_p(q)]\]
for some family of convex functions $(\Phi_p)_{p \in \Delta(\Theta)}$ where each $\Phi_p: \Delta(\Theta) \rightarrow \mathbb{R}$ is everywhere weakly positive, and $\Phi_p(p)=0$ for every $p$.
\end{definition}

\begin{remark} When the cost function is posterior separable but not uniformly posterior separable, the cost of acquiring two signals in sequence may depend on the order in which these signals are acquired. This is not true for for UPS cost functions \citep{FrankelKamenica,BloedelZhong}.
\end{remark}

When the cost function is posterior separable, then the agent's payoff from choosing signal $\sigma:\Theta \rightarrow \Delta(S)$ and strategy $\alpha: S \rightarrow \Delta(A)$ is
\[\int_{\Delta(\Theta)}  \int_{a \in A} \alpha(a \mid q) \mathbb{E}_q[u(a,\theta)] d\tau_\sigma(q) - C(p,\tau_\sigma),\]
and can be rewritten as
\[\int_{\Delta(\Theta)} \int_{a \in A} \alpha(a \mid q) \left( \mathbb{E}_q[u(a,\theta)] - \Phi_p(q)\right) d\tau_\sigma(q)\]
where the concave function $\mathbb{E}_q[u(a,\theta)] - \Phi_p(q)$ is the ``net utility" of action $a$ under posterior $q$. So maximizing the value function is equivalent to maximizing the expected net utility over all Bayes-plausible distributions and strategies,  which is an optimization problem that can be solved using standard methods. This tractability is a part of the appeal of this family of cost functions.

A closely related concept appears in \citet{FrankelKamenica}, where $\Phi_p(q)$ is interpreted as the  amount of information in news that moves an agent's belief from $p$ to $q$. \citet{FrankelKamenica} define the pair $(\Phi_p,\Phi)$ as \emph{coupled} if $\mathbb{E}[\Phi_p(q)] = \mathbb{E}[\Phi(p)-\Phi(q)]$, in which case the cost function is not only posterior separable but also uniformly posterior separable.

That uniform posterior separability is strictly stronger than posterior separability is nearly immediate, except for the requirement in the definition of posterior separable cost functions that $\Phi_p(q)$ is everywhere positive. We cannot therefore simply convert a UPS cost function $C(p,\tau) = \Phi(p) - \mathbb{E}_{q \sim \tau}[\Phi(q)]$ into a posterior separable cost function $C(p,\tau) = \mathbb{E}[\Phi_p(q)]$ by setting  $\Phi_p(q) \equiv \Phi(p ) - \Phi(q)$, as this quantity may be negative for some posterior beliefs $q$. The correct construction is instead to choose  $\Phi_p$ to be a \emph{Bregman divergence} of $\Phi$ \citep{FrankelKamenica,CaplinDeanLeahy2022}.

\begin{definition} Let $\Phi: \Delta(\Theta) \rightarrow \mathbb{R}$ be a concave function. A \emph{supergradient} of $\Phi$ at $p \in \Delta(\Theta)$ is any vector $\nabla \Phi(p)$ such that
\[\Phi(p) + \nabla  \Phi(p) \cdot (q-p) \geq \Phi(q)\]
for every $q \in \Delta(\Theta)$.
\end{definition}

\begin{remark} When $\Phi$ is concave, then a supergradient $\nabla \Phi(q)$ exists for every $q$. When $\Phi$ is smooth at $q$, then $\nabla \Phi(q)$ is unique and equal to $\Phi'(q)$. 
\end{remark}

\begin{definition}
Let $\Phi: \Delta(\Theta) \rightarrow \mathbb{R}$ be a concave function.  A \emph{Bregman divergence} of $\Phi$ is any map $D_\Phi: \Delta(\Theta) \times \Delta(\Theta) \rightarrow \mathbb{R}$ satisfying
\[D_\Phi(p,q) = \Phi(p) - \Phi(q) + \nabla \Phi(p) \cdot (q-p) \quad \forall (p,q) \in \Delta(\Theta) \times \Delta(\Theta)\]
where $\nabla \Phi(q)$ is a supergradient of $\Phi$ at $q$. \label{def:Bregman}
\end{definition}
\noindent This is the difference between the value of $\Phi$ at $q$ and the value of the first-order Taylor expansion of $\Phi$ around $p$ evaluated at point $q$. 

Setting $\Phi_p(q) = D_\Phi(p,q)$ from Definition \ref{def:Bregman}, we have 
\[\Phi_p(q) = \left(\Phi(p)  + \nabla \Phi(p) \cdot (q-p)\right) - \Phi(q) \geq 0 \quad \forall q \in \Delta(\Theta)\]
since $\nabla \Phi(p)$ is a supergradient of $\Phi$, and also 
\begin{align*}
\mathbb{E}_{q \sim \tau} [\Phi_p(q)] & = \mathbb{E}_{q \sim \tau}[\Phi(p) - \Phi(q) + \nabla \Phi(p) \cdot (q-p)] \\
& = \Phi(p) - \mathbb{E}_{q\sim \tau}[\Phi(q)]
\end{align*}
using in the second inequality that $\mathbb{E}_{q \sim \tau}(q-p)=0$. The relationship between $\Phi_p$ and $\Phi$ is depicted in Figure \ref{fig:Bregman}.

\begin{figure}[H]
\begin{center}
\includegraphics[scale=0.65]{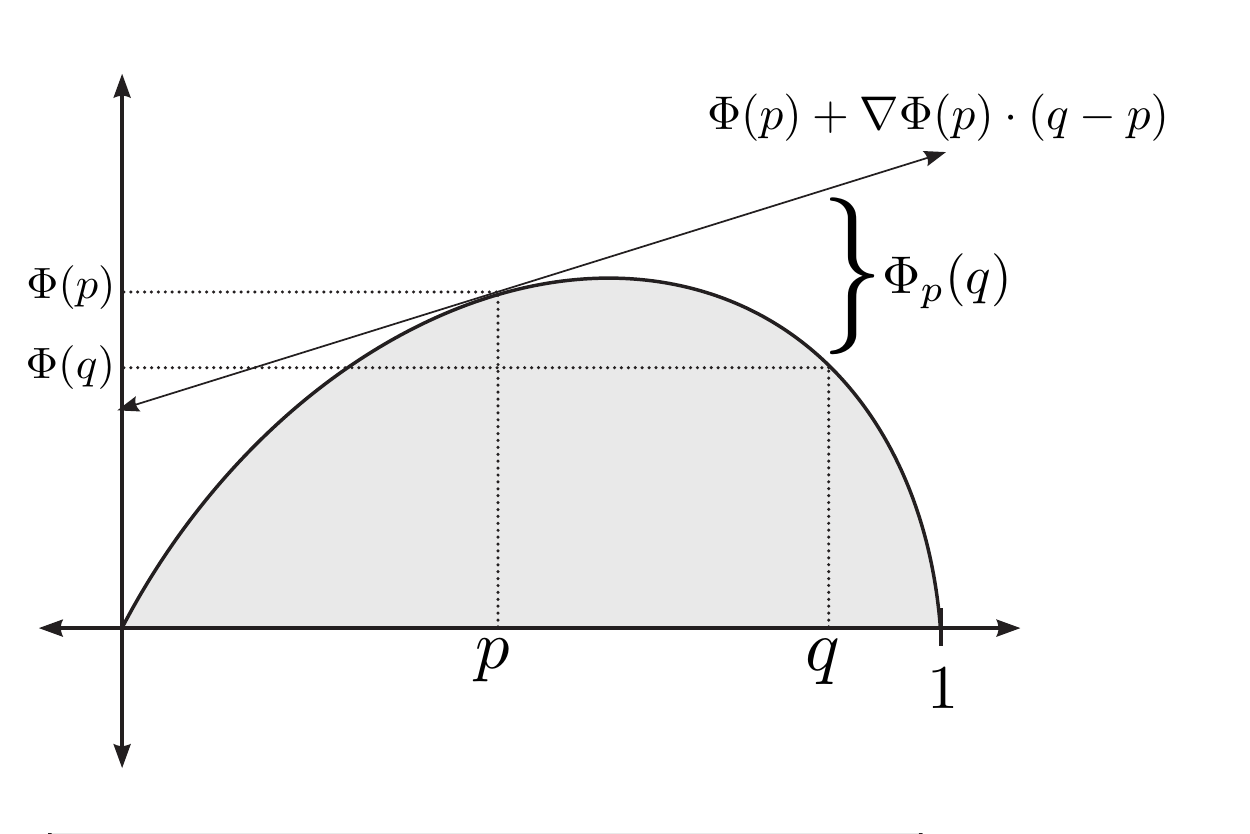}
\end{center}
\caption{Relationship between $\Phi_p$ and $\Phi$.} \label{fig:Bregman}
\end{figure}

\begin{example}
Consider entropy cost $C_{\text{Ent}}(p, \tau) = H(p) - \mathbb{E}_{q \sim \tau}[H(q)]$. The Bregman divergence of entropy is KL divergence \citep{Bregman1967}, so 
\[C_{\text{Ent}}(p, \tau) = H(p) - \mathbb{E}_{q \sim \tau}[H(q)] = \mathbb{E}[D(p \| q)].\]
Thus we can view the cost of a signal that generates the distribution of beliefs $\tau$ either as the expected reduction in the entropy of the agent's belief, or as the expected KL divergence from the agent's prior to the realized posterior belief.
\end{example}

\section{Prior-Independent Costs} \label{sec:PriorIndependent}

We now turn to cost functions that do not depend on the agent's prior belief. If the cost of information is exogenous to the agent---for example, a price determined within a market, or a physical cost of producing information---then we may expect the cost of acquiring information to be the same for all consumers regardless of their beliefs or expertise in the area, and thus prior independent.

One common cost specification is the following.

\begin{example} \label{ex:GaussianPrecision} In the setting of Example \ref{ex:GaussianCost}, let
\begin{equation} \label{cost:Precision}
C(\sigma_\eps^2) = \frac{\kappa}{\sigma_\eps^2}
\end{equation}
Then the cost of the signal scales linearly with the precision of the signal, $1/\sigma_{\eps}^2$. This formulation of the cost is especially sensible if we interpret $\theta$ as an unknown population parameter (for instance, the average height in a population) and the signal as a sample of individuals from this population. Modeling each observation as $X_i = \theta + \eps_i$ with  $\eps_i \sim \mathcal{N}(0,\sigma^2)$ independent of $\theta$ and independent across agents, the conditional distribution of $\theta$ given the sample $(X_1, \dots, X_n)$ is the same as the conditional distribution of $\theta$ given the signal $X = \theta + \delta, \,\, \delta \sim \mathcal{N}(0, \sigma^2/n)$ (see Exercise \ref{ex:Average}). So (\ref{cost:Precision}) corresponds to a fixed cost of $\kappa/\sigma^2$ for each individual in the sample. This cost function is used in Wald's classic model of sequential sampling \citep{Wald,ArrowBlackwellGirshick}, and is a common modeling choice in continuous-time sequential sampling problems where the signal corresponds to observation of a Brownian motion \citep{FudenbergStrackStrzalecki,LiangMuSyrgkanis}.
\end{example}

We now present a generalization of the above cost function due to \citet{PomattoStrackTamuz}. Let $\Theta$ be a finite set and $S$ be a set of signal realizations equipped with $\sigma$-algebra $\Sigma$, with $\Delta(S)$ denoting the set of measurable probability distributions on $S$. A signal is a mapping $ \sigma: \Theta \to \Delta(S)$, and we use $\sigma_\theta \equiv \sigma( \cdot \mid \theta) \in \Delta(S)$ to denote the conditional distribution over signal realizations when the state is $\theta$. 

\begin{definition} The log-likelihood ratio between states $\theta$ and $\theta'$ at signal realization $s$ is
\[\ell^\sigma_{\theta,\theta'}(s) = \ln\left(\frac{d\sigma_\theta(s)}{d\sigma_{\theta'}(s)}\right)\]
\end{definition} 

\begin{definition} For any state $\theta \in \Theta$ and map $\alpha: \Theta \rightarrow \mathbb{N}$, define
\[M_\theta^\sigma(\alpha) = \int_S \left\vert \prod_{\theta' \neq \theta} \left(\ell_{\theta,\theta'}^\sigma(s)\right)^{\alpha(\theta')} \right\vert d\sigma_\theta\] 
\end{definition}
 
\begin{assumption} \label{assp:FiniteMoment} The expectation $M_\theta^\sigma(\alpha)$ is finite for every $\theta$ and every $\alpha: \Theta \rightarrow \mathbb{N}$.
\end{assumption}

This assumption says that the log-likelihood ratios have finite moments, ruling out for example the signal structure
\[\begin{array}{ccc}
& s_1 & s_2 \\ 
\theta_1 & 0 & 1 \\
\theta_2 & \frac12 & \frac12 
\end{array}\]
where the signal realization $s_1$ is perfectly revealing of the state $\theta_2$.

Let $\mathcal{E}$ be the class of all signals satisfying Assumption \ref{assp:FiniteMoment}. An \emph{information cost function} is any map $C: \mathcal{E} \rightarrow [0,\infty)$. \citet{PomattoStrackTamuz} propose four axioms that such a cost function should further satisfy.

\begin{axiom}[Consistency with the Blackwell order] \label{axiom:Blackwell} If $\sigma$ dominates $\sigma'$ in the Blackwell order, then $C(\sigma)\geq C(\sigma')$. 
\end{axiom}

That is, more informative signals are more costly to acquire.

\begin{definition}[Combining Independent Signals] For any two signals $\sigma: \Theta \rightarrow \Delta(S)$ and $\sigma': \Theta \rightarrow \Delta(S')$, let $\sigma \otimes \sigma'$ denote the product signal 
\[\sigma \otimes \sigma': \Theta \rightarrow \Delta(S \times S')\]
where $\sigma \otimes \sigma'(s,s' \mid \theta)=\sigma(s\mid \theta)\sigma(s'\mid \theta)$.
\end{definition}

\begin{axiom}[Additivity with respect to independent experiments] \label{axiom:AdditiveCost} For any two signals $\sigma$ and $\sigma$', $C(\sigma \otimes \sigma')= C(\sigma) + C(\sigma')$.
\end{axiom}

That is, the cost of acquiring two (conditionally) independent signals is equal to the sum of their costs. This axiom imposes a constant marginal cost on information similar to the one used to motivate Example \ref{ex:GaussianPrecision}.

\begin{definition}[Diluting Signals] For any signal $\sigma$, the $\alpha$-\emph{dilution} of $\sigma$, denoted $\alpha \cdot \sigma$, is a signal where with probability  $\alpha$ the realization of $\sigma$ is observed, and otherwise a completely uninformative signal is observed. Formally, $\alpha \cdot \sigma$ is a map from $\Theta$ to $S \cup \{\emptyset\}$ where the signal outcome $\emptyset$ has a constant $1-\alpha$ probability at every state $\theta \in \Theta$, and the remaining probability is assigned to $S$ in proportion to $\sigma$. 
\end{definition}

\begin{axiom}[Linearity in the ``dilution" of the experiment] \label{axiom:LinearDilution} $C(\alpha \cdot \sigma) = \alpha \cdot C(\sigma)$ for every signal $\sigma$ and weight $\alpha \in [0,1]$. 
\end{axiom}

That is, the cost of a signal is linear in the probability that it generates information. 

\begin{remark} Every posterior separable cost function $C(p,\tau) = \mathbb{E}_{q \sim \tau}[\Phi_p(q)]$ satisfies Axiom \ref{axiom:LinearDilution}. To see this, observe that the distribution over posterior beliefs given the diluted signal $\alpha \cdot \sigma$, denoted $\tau_{\alpha \cdot \sigma}$, is the convex combination that puts weight $\alpha$ on the distribution $\tau_\sigma$ generated by $\sigma$, and weight $1-\alpha$ on the prior. So
\begin{align*}
C(p,\tau_{\alpha \cdot \sigma}) & = \mathbb{E}_{q \sim \alpha \tau_\sigma + (1-\alpha) \delta_p}[\Phi_p(q)] \\
& = \alpha \mathbb{E}_{q \sim \tau_\sigma}[\Phi_p(q)] + (1-\alpha) \Phi_p(p)\\
& = \alpha \cdot C(p,\tau_\sigma)
\end{align*}
where the second equality uses that $C$ is affine in $\tau$ and the third uses that $\Phi_p(p)=0$ in the definition of a posterior separable cost function.
\end{remark}

The final axiom imposes continuity of the cost function with respect to a nonstandard (pseudo)-metric given below.\footnote{This is a pseudometric rather than a metric, since  $d_N(\sigma,\sigma')$ is equal to zero for $\sigma \neq \sigma'$ if they induce the same distribution over posterior beliefs.} 

\begin{definition} Given an upper bound $N \geq 1$, define 
\[d_N(\sigma,\sigma') = \max_{\theta \in \Theta} d_{TV}(\sigma_\theta,\sigma'_\theta) + \max_{\theta \in \Theta} \max_{\alpha \in \{0, \dots, N\}^n} \vert M_\theta^\sigma(\alpha) - M_\theta^{\sigma'}(\alpha)\vert\]
where $d_{TV}$ denotes the total variation distance.
\end{definition}

Two signals $\sigma$ and $\sigma'$ are close under this pseudo-metric if for every state $\theta$, the induced distributions of log-likelihood ratios are close in total-variation distance and additionally have similar moments, for any vector of moments lower or equal to $(N,\dots, N)$.  

\begin{axiom}[Continuity.] \label{axiom:Continuity} The function $C$ is uniformly continuous with respect to $d_{N}$.
\end{axiom}

\begin{remark} The topology of weak convergence of likelihood ratios and the topology of convergence of likelihood ratios in total variation distance are both more standard. But no cost function which satisfies Axioms \ref{axiom:Blackwell}-\ref{axiom:LinearDilution} is continuous in these alternative topologies. To see this, let $\theta$ be the unknown bias of a coin, and let $\sigma_n$ be the signal where with probability $1/n$ the outcome of $n$ independent flips of this coin is observed, and otherwise no information is revealed. Axioms \ref{axiom:Blackwell}-\ref{axiom:LinearDilution} imply that $C(\sigma_n) = C(\sigma_{n'})$ for all finite $n,n'$. But the likelihood ratios of these signals converge in the weak topology (and in the total variation topology) to those of the signal that produces no information, and thus a stronger form of Axiom \ref{axiom:Continuity} based on either of these alternative topologies would require these signals to all have zero cost.  

\end{remark}

\begin{proposition} \label{prop:PST} The cost function $C: \mathcal{E} \rightarrow \mathbb{R}$ satisfies Axioms \ref{axiom:Blackwell}-\ref{axiom:Continuity} if and only if there exists a unique collection of $\mathbb{R}_+$-valued parameters $(\beta_{\theta,\theta'})_{\theta, \theta' \in \Theta}$ such that
\begin{equation} \label{eq:PST}
C(\sigma) = \sum_{\theta,\theta' \in \Theta} \beta_{\theta,\theta'} \times  \underbrace{\int_S \ln \frac{d\sigma_\theta(s)}{d\sigma_{\theta'}(s)} d\sigma_\theta(s)}_{\mbox{KL-divergence from $\sigma(\cdot \mid \theta')$ to $\sigma(\cdot \mid \theta)$}}
\end{equation}
\end{proposition}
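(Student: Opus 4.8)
The plan is to prove both directions of the equivalence; the ``if'' direction is a routine verification and the ``only if'' direction is the substance. For ``if'', I would check each axiom for $C(\sigma) = \sum_{\theta,\theta'}\beta_{\theta,\theta'}D(\sigma_\theta\|\sigma_{\theta'})$ (with $D$ the KL divergence, whose value is the displayed integral in \eqref{eq:PST} since $\mathbb{E}_{\sigma_\theta}[\ell^\sigma_{\theta,\theta'}]=D(\sigma_\theta\|\sigma_{\theta'})$): monotonicity with respect to the Blackwell order is the data-processing inequality for KL divergence; additivity under $\otimes$ holds because the log-likelihood ratio of a product experiment is the sum of the independent log-likelihood ratios, so $D((\sigma\otimes\sigma')_\theta\|(\sigma\otimes\sigma')_{\theta'}) = D(\sigma_\theta\|\sigma_{\theta'}) + D(\sigma'_\theta\|\sigma'_{\theta'})$; linearity under dilution holds because the outcome $\emptyset$ carries log-likelihood ratio zero between every pair of states and occurs with probability $1-\alpha$ regardless of $\theta$, whence $D((\alpha\cdot\sigma)_\theta\|(\alpha\cdot\sigma)_{\theta'}) = \alpha\, D(\sigma_\theta\|\sigma_{\theta'})$; and $d_N$-continuity holds because $D(\sigma_\theta\|\sigma_{\theta'})$ is controlled by the total-variation distances and the moments $M^\sigma_\theta(\alpha)$ with $\alpha$ equal to the indicator of a single state (these appear in $d_N$ already for $N=1$).

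For ``only if'', the first move is to pass from the experiment $\sigma$ to the joint law, under each base state $\theta$, of the vector of log-likelihood ratios $L^\sigma_\theta = (\ell^\sigma_{\theta,\theta'})_{\theta'\neq\theta}$ evaluated at $s\sim\sigma_\theta$. Up to Blackwell equivalence (hence up to cost) $\sigma$ is identified with this consistent family of measures; the product $\sigma\otimes\sigma'$ corresponds to convolving the laws state by state, and the $\alpha$-dilution $\alpha\cdot\sigma$ corresponds to replacing each law $\mu$ by $\alpha\mu + (1-\alpha)\delta_0$. The key first step is to upgrade Axioms \ref{axiom:AdditiveCost} and \ref{axiom:LinearDilution} to full linearity of $C$ as a functional of these measures. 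I would do this by a law-of-rare-events argument: for a measure $\mu$, the $n$-fold convolution $(\tfrac1n\mu + (1-\tfrac1n)\delta_0)^{*n}$ converges, in the pseudometric $d_N$, to the compound Poisson law with Lévy measure $\mu$ (this is exactly where the nonstandard topology of Axiom \ref{axiom:Continuity} is needed, since it controls all moments up to order $N$ together with total-variation distance, and where Assumption \ref{assp:FiniteMoment} guarantees these quantities are finite). By Axioms \ref{axiom:AdditiveCost}, \ref{axiom:LinearDilution} and \ref{axiom:Continuity} the cost of the left-hand side equals $n\cdot\tfrac1n C(\mu)=C(\mu)$, so $C$ of the compound Poisson with Lévy measure $\mu$ equals $C(\mu)$; since compound Poisson laws add their Lévy measures under convolution, Axiom \ref{axiom:AdditiveCost} then yields $C(a\mu_1+(1-a)\mu_2) = aC(\mu_1)+(1-a)C(\mu_2)$ for every mixture, and with $C(\delta_0)=0$ this makes $C$ a continuous linear functional. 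Hence $C(\sigma) = \sum_\theta \int f_\theta\, dL^\sigma_\theta$ for some integrable functions $f_\theta$ with $f_\theta(0)=0$.

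The second step identifies the $f_\theta$. Additivity of $C$ under convolution of the base-state laws forces each $f_\theta$ to satisfy $f_\theta(x+y)=f_\theta(x)+f_\theta(y)$ on the relevant domain: any non-additive component (say a quadratic one) would, under convolution, generate cross terms proportional to $\mathbb{E}_{\sigma_\theta}[\ell^\sigma_{\theta,\theta'}] = D(\sigma_\theta\|\sigma_{\theta'})$, which is generically strictly positive, contradicting additivity; together with continuity this makes $f_\theta$ linear, $f_\theta(x)=\sum_{\theta'\neq\theta}\beta_{\theta,\theta'}x_{\theta'}$. Therefore $C(\sigma) = \sum_{\theta,\theta'}\beta_{\theta,\theta'}\mathbb{E}_{\sigma_\theta}[\ell^\sigma_{\theta,\theta'}] = \sum_{\theta,\theta'}\beta_{\theta,\theta'}D(\sigma_\theta\|\sigma_{\theta'})$. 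Nonnegativity of the $\beta_{\theta,\theta'}$ follows from Axiom \ref{axiom:Blackwell}, since making an experiment slightly more informative about a single ordered pair $(\theta,\theta')$ strictly increases $D(\sigma_\theta\|\sigma_{\theta'})$ and leaves cost weakly higher; uniqueness of the collection $(\beta_{\theta,\theta'})$ follows from linear independence of the functionals $\{\sigma\mapsto D(\sigma_\theta\|\sigma_{\theta'})\}$, exhibited by choosing, for each ordered pair, an experiment on which only that divergence is nonzero.

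I expect the main obstacle to be the first step: rigorously establishing the compound-Poisson convergence in the pseudometric $d_N$ (controlling the convergence of all moments $M^{(\cdot)}_\theta(\alpha)$ simultaneously, which is precisely what the unusual topology and Assumption \ref{assp:FiniteMoment} are designed to make possible), and carefully handling the consistency constraints linking the base-state log-likelihood-ratio laws for different $\theta$, so that the piecewise linear functionals $f_\theta$ assemble into a single well-defined cost on experiments without circularity. The measure-theoretic bookkeeping for general (non-finite) signal spaces $S$ adds friction but is not a conceptual barrier.
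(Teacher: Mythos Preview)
The paper does not contain a proof of this proposition. It states the result, attributes it to \citet{PomattoStrackTamuz}, and then moves on to remarks and examples; there is no proof environment following the statement. So there is no ``paper's own proof'' against which to compare your proposal.

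That said, a brief comment on your sketch. The overall architecture---pass to the joint law of log-likelihood ratios, use a Poissonization/law-of-rare-events argument together with Axioms~\ref{axiom:AdditiveCost}--\ref{axiom:Continuity} to promote dilution-linearity to full linearity of $C$ as a functional of those laws, then identify the linear functional via a Cauchy-type argument---is the right shape and matches the strategy in the original \citet{PomattoStrackTamuz} paper. Two places deserve more care. First, the inference ``$C$ is a continuous linear functional, hence $C(\sigma)=\sum_\theta\int f_\theta\,dL^\sigma_\theta$'' is not automatic: a continuous linear functional on a space of measures need not be given by integration against a function unless you pin down the dual space, and here the relevant measures for different base states $\theta$ are tied together by consistency constraints (change-of-measure identities), so the representation must respect those. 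Second, your argument that non-additive components of $f_\theta$ ``generate cross terms proportional to $D(\sigma_\theta\|\sigma_{\theta'})$'' is heuristic; the clean route is to test additivity of $C$ under convolution against experiments concentrated on a single pair $(\theta,\theta')$, which reduces to the one-dimensional Cauchy functional equation for $f_\theta$ in the $\theta'$ coordinate, and then invoke continuity to get linearity. Your identification of the compound-Poisson step as the main technical obstacle is accurate.
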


As discussed in Section \ref{sec:KL}, the KL-divergence from $\sigma( \cdot \mid \theta')$ to $\sigma(\cdot \mid \theta)$ is a measure of how different the distributions are. The larger this divergence is, the easier it is to reject the hypothesis that the state is $\theta'$ when it truly is $\theta$.

\begin{remark} Axiom \ref{axiom:Continuity} can be dispensed with if $\Theta = \{\theta_0,\theta_1\}$, in which case Proposition \ref{prop:PST} simplifies to the statement that $C$ satisfies Axioms \ref{axiom:Blackwell}-\ref{axiom:LinearDilution} if and only if there exist parameters $\beta_{01},\beta_{10} \geq 0$ such that
\[C(\sigma) = \beta_{01} D(\sigma(\cdot \mid \theta_0) \| \sigma(\cdot  \mid \theta_1)) + \beta_{10} D(\sigma(\cdot \mid \theta_1) \| \sigma(\cdot  \mid \theta_0)).\]
\end{remark} 

A notable contrast with entropy cost is that this cost function permits differentiation between states.

\begin{example}[\citet{PomattoStrackTamuz}] \label{ex:DistinguishStates} Suppose the unknown state $\theta$ is the US GDP per capita, and the agent holds a uniform prior over $\Theta = \{20,000, \dots, 80,000\}$. Then under entropy cost $C_{Ent}$, it is equally costly to acquire the signal that reveals whether $\theta$ is above or below \$50,000, or the signal that reveals whether $\theta$ is even or odd.
\end{example}

\noindent The free parameters $\beta_{\theta,\theta'}$ in the  representation in (\ref{eq:PST})  reflect potentially different costs to distinguishing between different pairs of states. Specifically, we can interpret each $\beta_{\theta,\theta'}$ as the marginal cost of increasing the expected log-likelihood ratio of a signal with respect to states $\theta$ and $\theta'$ (when $\theta$ is the true state). Thus in Example \ref{ex:DistinguishStates}, we may specify (for example) that it is easier to distinguish between states that are far apart than those that are nearby, i.e., if GDP is in fact 80,000 then it is easier to rule out that GDP is 20,000 than it is to rule out that it is 79,999.  In the special case where no pair of states is a priori harder to distinguish than another, then all coefficients are equal to one another.

\begin{example} Returning to the setting of Example \ref{ex:GaussianCost}, where we now use $C(\sigma_\eps^2)$ to mean the cost of acquiring the signal $X=\theta +\eps$, $\eps \sim \mathcal{N}(0,\sigma_\eps^2)$, we have
\[C(\sigma_\eps^2) = \sum_{\theta,\theta' \in \Theta} \beta_{\theta,\theta'} \frac{(\theta - \theta')^2}{2\sigma_\eps^2}.\]
This nests the precision of the signal $(1/\sigma_\eps^2)$ as a special case when 
$\beta_{\theta,\theta'} = \frac{1}{(\theta - \theta')^2}$, with the  interpretation that  states that are closer (in squared distance) are harder to distinguish.
\end{example}

\begin{remark} The class of cost functions identified in Proposition \ref{prop:PST} does not presuppose that the agent is Bayesian and has a prior belief over the state space. But if the agent does have a prior $p$, then the cost of the signal that induces distribution $\tau$ over posterior beliefs can be restated as
\begin{equation} \label{eq:PSVersion}
\mathbb{E}_{q \sim \tau}[\Phi_p(q)]
\end{equation}
where 
\begin{equation} \label{eq:Phi}
\Phi_p(q) = \Phi(p) - \sum_{\theta,\theta'} \beta_{\theta,\theta'} \frac{q_\theta}{p_\theta} \ln \left(\frac{q_\theta}{q_{\theta'}}\right)
\end{equation}
so this family of cost functions belongs to the class of posterior-separable cost functions (Definition \ref{def:PosteriorSeparable}), although not to the class of uniform posterior separable cost functions (Definition \ref{def:UPS}).\footnote{\citet{PomattoStrackTamuz} show that a generalization of the representation in (\ref{eq:PST}), which permits the parameters $\beta_{\theta,\theta'}$ to depend on the prior, can accommodate uniformly posterior separable cost functions.}

\begin{exercise}[G] Verify that (\ref{eq:PSVersion}) is equivalent to the original representation in (\ref{eq:PST}) when $\Phi$ is defined according to (\ref{eq:Phi}).
\begin{hint} Recall from Section \ref{sec:Bayes} that the prior $p$ and posterior $q$ at signal realization $s$ are related by $\log\left(\frac{q(\theta)}{q(\theta')}\right) = \log\left(\frac{p(\theta)}{p(\theta')}\right) + \log\left(\frac{d\sigma_\theta}{d\sigma_{\theta'}}(s)\right)$. 
\end{hint}
\end{exercise} 
\end{remark}

\section{Additional Exercises}

\begin{exercise}[G] Suppose $p,q \in \Delta(\mathcal{X} \times \mathcal{Y})$ with $p_X$ and $q_X$ denoting the marginal distributions on $\mathcal{X}$, and $p_{Y \mid X}$ and $q_{Y \mid X}$ denoting the respective conditional distributions. Prove that
\[D(p \| q) = D(p_X \| q_X) + D(p_{Y \mid X} \| q_{Y \mid X}).\]
This is known as the chain rule for KL divergence.

\end{exercise}

\begin{exercise}[G] Prove that the entropy cost function in Definiton \ref{def:EntropyCost} fails \citet{PomattoStrackTamuz}'s Axiom \ref{axiom:AdditiveCost}.
\end{exercise}

\part{\sc{Learning}}

\chapter{Learning} \label{sec:Learning}

We now extend the Bayesian framework described in \ref{sec:Preliminaries} to accommodate learning from a sequence of signals. Section \ref{sec:Doob} asks whether an agent will eventually learn the state. Section \ref{sec:Merging} asks whether agents with different prior beliefs will eventually hold similar beliefs. Section \ref{sec:KLS} asks whether agents with different priors expect their disagreement to reduce given information (thus studying a second-order belief). Section \ref{sec:CommonLearning} asks whether agents will commonly learn, i.e., whether agents will eventually believe that other agents believe that they ... have learned the state.

\section{Preliminaries} \label{sec:LearningFramework}

Let $(\Theta,d_\Theta)$ be a complete separable metric space endowed with its Borel $\sigma$-algebra $\Sigma$, and let $p \in \Delta(\Theta)$ be a ($\Sigma$-measurable) probability measure on $\Theta$. As before, we interpret $\theta \sim p$ as an unknown parameter of interest. 

The space of signal realizations $(\mathcal{X}, d_X)$ is again a complete separable metric space  endowed with its Borel $\sigma$-algebra $\mathcal{B}$.  There is an infinite sequence of signal realizations $X_1, X_2, \dots$ taking values in the set $\mathcal{X}^\infty = \mathcal{X}_1 \times \mathcal{X}_2 \times \dots$ where each $\mathcal{X}_t$ is a copy of $\mathcal{X}$. Conditional on the realized $\theta$, signals $X_1, X_2, \dots$ are generated iid according to a conditional density $f_\theta$, and we refer to each $X_t$ as the period-$t$ signal.

The full state space is $\Omega = \Theta \times \mathcal{X}^\infty = \Theta \times \mathcal{X}_1 \times \mathcal{X}_2 \times \dots$ and it is equipped with the product $\sigma$-algebra $\Sigma \times \mathcal{B}_1 \times \mathcal{B}_2 \times \dots$ where each $\mathcal{B}_t$ is a copy of $\mathcal{B}$. Throughout, we use $P$ to denote the measure on $\Omega$ induced by $p$ and the family $(f_\theta)_{\theta \in \Theta}$, and we use $P_\theta$ to denote the conditional measure on $\mathcal{X}^\infty$ when the parameter is $\theta$.

\section{Binary Example} \label{sec:LearningExample}

First consider a single-agent environment with  two possible parameter values $\theta \in \{A,B\}$. Each period $t\in \mathbb{Z}_+$ a signal realization from $\{a,b\}$ is generated iid according to
\[\begin{array}{ccc}
& a & b\\
A & q& 1-q \\
B & 1-q & q
\end{array}\]
where $q>1/2$. Will an agent who holds a prior belief that the probability of $A$ is $p\in (0,1)$ eventually learn the value of the parameter?

Suppose first that the parameter is $\theta=A$, in which case signals are drawn iid according to $f_A = (q,1-q)$. For any infinite sequence $\bold{x} \in \{a,b\}^\infty$ and any $t \in \mathbb{Z}_+$, let
\[n_t(\bold{x}) \equiv \#\{ 1 \leq t' \leq t : x_{t'} =a\}\]
denote the number of $a$-realizations among the first $t$ realizations of $\bold{x}$. By the strong law of large numbers, there is a set $\mathcal{X}_0^\infty \subseteq \mathcal{X}^\infty$ of $P_A$-measure 1 such that 
\begin{align*} 
\lim_{t \rightarrow \infty} \frac{n_t(\bold{x})}{t} = q \quad \forall \bold{x} \in \mathcal{X}_0^\infty.
\end{align*}
That is, the limiting fraction of $a$-realizations is $q$ along each sequence in $\mathcal{X}_0^\infty$.

Since signals are assumed to be conditionally independent, the agent's posterior belief about $A$ following any sequence $(x_1, \dots, x_t)$ depends only on the count of $a$ and $b$-realizations. Let $n$ denote the number of $a$-realizations. Then applying Bayes' rule (Section \ref{sec:BayesRule}), the agent's posterior belief is \begin{align}
P(\theta = A \mid x_1, \dots, x_t) & = \frac{p q^n (1-q)^{t-n}}{p q^n (1-q)^{t-n} + (1-p) (1-q)^n q^{t-n}} \nonumber \\
& = \frac{1}{1 + \frac{1-p}{p} \left(\frac{1-q}{q}\right)^{2n-t}} \label{eq:PosteriorBelief}
\end{align}
Along any $\bold{x} \in \mathcal{X}_0^\infty$ we have 
\[\lim_{t \rightarrow \infty} P(\theta=A \mid x_1, \dots, x_t) = \lim_{t \rightarrow \infty} \left(1+\frac{1-p}{p} \left[\left(\frac{1-q}{q}\right)^{2\frac{n_t(\bold{x})}{t} - 1}\right]^t \right)^{-1} = 1\]
recalling that $q>1/2$ by assumption.

So the agent's posterior belief $P_A$-almost surely converges to certainty of the correct value of the parameter, $A$. An identical argument shows that when the parameter is $B$ then the agent's posterior belief $P_B$-almost surely converges to certainty of $B$. Thus the agent (eventually) learns the parameter.

\section{Doob's Consistency Theorem} \label{sec:Doob}

A classic result due to \citet{Doob} generalizes the individual learning result from the previous section.\footnote{Our presentation of this material follows \citet{Miller2018}.}

\begin{assumption}[Identifiability] \label{assp:Identifiability}
If $\theta \neq \theta'$, then $P_\theta \neq P_{\theta'}$.
\end{assumption}

In words, Assumption \ref{assp:Identifiability} is satisfied if no pair of parameter values induce the same distribution over signals, meaning the parameter is identifiable from its observable implications.

\begin{proposition} \label{prop:Doob1} Suppose Assumption \ref{assp:Identifiability} is satisfied, and let $g:\Theta \rightarrow \mathbb{R}$ be any measurable function satisfying $\mathbb{E}\vert g(\theta)\vert < \infty$. Then 
\[\lim_{t \rightarrow \infty} \mathbb{E}(g(\theta) \mid X_1, X_2, \dots, X_t) = g(\theta) \quad P\text{-a.s.}\]
\end{proposition}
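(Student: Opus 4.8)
The plan is to prove Doob's theorem via a martingale convergence argument combined with an identifiability-driven exhaustion argument. First I would note that, by the law of iterated expectations, the process $M_t := \mathbb{E}(g(\theta) \mid X_1, \dots, X_t)$ is a martingale with respect to the filtration $\mathcal{F}_t := \sigma(X_1, \dots, X_t)$, and since $\mathbb{E}|g(\theta)| < \infty$ this martingale is uniformly integrable (it is a Doob martingale, i.e.\ successive conditional expectations of a fixed integrable random variable). By the martingale convergence theorem, $M_t$ converges $P$-a.s.\ and in $L^1$ to $\mathbb{E}(g(\theta) \mid \mathcal{F}_\infty)$, where $\mathcal{F}_\infty = \sigma(X_1, X_2, \dots)$ is the $\sigma$-algebra generated by the whole signal sequence. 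So the entire content of the theorem reduces to showing that $\mathbb{E}(g(\theta) \mid \mathcal{F}_\infty) = g(\theta)$ almost surely, which holds as soon as $\theta$ itself is $\mathcal{F}_\infty$-measurable (up to a $P$-null set); more precisely, it suffices that there is an $\mathcal{F}_\infty$-measurable map $\hat\theta: \mathcal{X}^\infty \to \Theta$ with $\hat\theta = \theta$ $P$-a.s., since then $g(\theta) = g(\hat\theta)$ is $\mathcal{F}_\infty$-measurable.

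Next I would construct this estimator $\hat\theta$ from the signal sequence, using identifiability (Assumption \ref{assp:Identifiability}). The idea is that, conditional on $\theta$, the signals are i.i.d.\ $f_\theta$, so by the strong law of large numbers the empirical distribution of $(X_1, \dots, X_t)$ converges $P_\theta$-a.s.\ to the true marginal law $P_\theta$ (more carefully: for a countable convergence-determining class of test functions $\phi$, the time-averages $\frac1t\sum_{s\le t}\phi(X_s)$ converge to $\mathbb{E}_\theta\phi$ on a $P_\theta$-full set). Because $\Theta$ is a complete separable metric space and $\theta \mapsto P_\theta$ is injective (identifiability), one can recover $\theta$ from this limiting empirical law in a measurable way. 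Concretely, I would either (i) invoke a measurable-selection / Borel-isomorphism argument: the map $\theta \mapsto P_\theta$ is a Borel injection between standard Borel spaces, hence its image is Borel and it has a Borel measurable inverse on that image, so $\hat\theta := (\theta \mapsto P_\theta)^{-1}$ applied to the (measurably constructed) limiting empirical measure does the job; or (ii) give a more hands-on construction using a countable dense subset of $\Theta$ and consistency of (a Bayesian or frequentist) posterior concentration. Approach (i) is cleaner. Assembling: on a $P$-full set, the limiting empirical law of the $X$'s equals $P_\theta$, and applying the measurable inverse returns $\theta$; this exhibits $\theta$ as $\mathcal{F}_\infty$-measurable mod $P$-null sets.

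Finally I would combine the two pieces: $M_t \to \mathbb{E}(g(\theta)\mid\mathcal{F}_\infty)$ a.s.\ by martingale convergence, and $\mathbb{E}(g(\theta)\mid\mathcal{F}_\infty) = g(\theta)$ a.s.\ because $g(\theta)$ is $\mathcal{F}_\infty$-measurable by the construction above; hence $M_t \to g(\theta)$ $P$-a.s., which is the claim.

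The main obstacle I expect is the measurability step — carefully justifying that $\theta$ (equivalently, the map $\theta \mapsto P_\theta$ and its inverse) is Borel measurable and that the limiting empirical distribution is a genuinely $\mathcal{F}_\infty$-measurable $\Delta(\mathcal{X})$-valued random element, so that the composition recovering $\hat\theta$ is legitimate. This is where the standing assumptions that $\Theta$ and $\mathcal{X}$ are complete separable metric (Polish) spaces are essential: they let me appeal to the fact that a Borel-measurable injection between standard Borel spaces has Borel range and a Borel inverse, and to the existence of a countable convergence-determining family of bounded continuous functions on $\mathcal{X}$. The martingale part, by contrast, is entirely routine given uniform integrability of a Doob martingale.
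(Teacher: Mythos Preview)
The paper does not actually prove this proposition: it merely states the result, attributes it to \citet{Doob}, and points the reader to \citet{Miller2018} for the argument. Your proposal is correct and is precisely the standard proof found in that reference: (i) the Doob martingale $M_t=\mathbb{E}(g(\theta)\mid\mathcal{F}_t)$ converges a.s.\ to $\mathbb{E}(g(\theta)\mid\mathcal{F}_\infty)$ by uniform integrability and martingale convergence, and (ii) identifiability plus the SLLN on empirical frequencies lets you build an $\mathcal{F}_\infty$-measurable estimator $\hat\theta$ with $\hat\theta=\theta$ $P$-a.s., the Polish-space hypotheses supplying the measurable-inverse step via the Lusin--Souslin/Kuratowski theorem on Borel injections between standard Borel spaces. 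Your identification of the measurability of $\theta\mapsto P_\theta$ and of the limiting empirical law as the only delicate point is exactly right; once those are in place, the rest is routine. In short, there is nothing in the paper to compare against, and your write-up would faithfully fill the gap the paper leaves.
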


In the special case where $g(\theta) = \theta$, the result implies that the posterior expectation of $\theta$ converges to its true value almost surely. The following proposition is a Bayesian analogue of the above result, and says that posterior beliefs converge almost surely to a degenerate measure at the true state.

\begin{proposition}[Posterior Consistency] \label{prop:PosteriorConsistency} Suppose Assumption \ref{assp:Identifiability} holds. Then, there exists a set $\Theta' \subseteq \Theta$ with $p(\Theta')=1$ such that for every $\theta_0 \in \Theta'$ and every neighborhood $B$ of $\theta_0$,
\[\lim_{t \rightarrow \infty} \mathbb{P}(\theta \in B \mid X_1, X_2, \dots, X_t) =  1 \quad P_{\theta_0} \text{-a.s.}\]
\end{proposition}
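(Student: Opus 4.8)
The plan is to derive the Bayesian statement from Proposition \ref{prop:Doob1} applied to a suitable collection of indicator functions. First I would note that since $(\Theta, d_\Theta)$ is a complete separable metric space, its topology has a countable base; in particular there is a countable family $\{B_n\}_{n \geq 1}$ of open balls such that every open set is a union of members of this family, and such that every neighborhood of every point $\theta_0$ contains some $B_n$ with $\theta_0 \in B_n$. For each $n$, apply Proposition \ref{prop:Doob1} with $g = \mathbbm{1}_{B_n}$ (which is bounded, hence integrable): there is a set $\Omega_n \subseteq \Omega$ with $P(\Omega_n) = 1$ on which $\mathbb{P}(\theta \in B_n \mid X_1, \dots, X_t) \to \mathbbm{1}_{B_n}(\theta)$. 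Intersecting over the countably many $n$ gives a single event $\Omega^* = \bigcap_n \Omega_n$ with $P(\Omega^*) = 1$ on which this convergence holds simultaneously for all $n$.

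Next I would push this down to a statement about $\theta$ alone. Let $\Theta'$ be the set of $\theta_0$ such that $P_{\theta_0}(\{ \mathbf{x} : (\theta_0, \mathbf{x}) \in \Omega^*\}) = 1$. Since $P(\Omega^*) = 1$ and $P$ disintegrates as $p$ over $\Theta$ together with the kernels $P_\theta$ on $\mathcal{X}^\infty$, Fubini/the disintegration theorem gives $p(\Theta') = 1$. Now fix $\theta_0 \in \Theta'$ and a neighborhood $B$ of $\theta_0$. Choose $n$ with $\theta_0 \in B_n \subseteq B$. For $P_{\theta_0}$-almost every realization $\mathbf{x}$ we have $(\theta_0,\mathbf{x}) \in \Omega^*$, hence
\[
\liminf_{t \to \infty} \mathbb{P}(\theta \in B \mid X_1, \dots, X_t) \geq \lim_{t \to \infty} \mathbb{P}(\theta \in B_n \mid X_1, \dots, X_t) = \mathbbm{1}_{B_n}(\theta_0) = 1,
\]
and since probabilities are bounded by $1$ the limit equals $1$, which is exactly the claim.

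The main obstacle — and the step that deserves the most care — is the measure-theoretic bookkeeping in passing from the joint event $\Omega^* \subseteq \Omega$ of full $P$-measure to the set $\Theta' \subseteq \Theta$ of full $p$-measure: one must be careful that the map $\theta_0 \mapsto P_{\theta_0}(\{\mathbf{x} : (\theta_0,\mathbf{x}) \in \Omega^*\})$ is measurable and that its integral against $p$ equals $P(\Omega^*) = 1$, forcing it to equal $1$ for $p$-a.e. $\theta_0$. This is exactly where the standing assumption that $\Theta$ and $\mathcal{X}^\infty$ are complete separable metric spaces (so that regular conditional probabilities exist and behave well) is used; everything else is a routine application of the countable base of the topology together with Proposition \ref{prop:Doob1}. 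One minor subtlety worth flagging in the write-up: Proposition \ref{prop:Doob1} is stated as a $P$-a.s. statement, and I am quietly using that a $P$-a.s. statement restricted to $P_{\theta_0}$-sections holds $P_{\theta_0}$-a.s. for $p$-a.e. $\theta_0$ — which is again just the disintegration argument above.
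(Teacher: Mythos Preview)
Your argument is correct and is precisely the standard derivation of posterior consistency from Proposition~\ref{prop:Doob1}: apply that result to indicators of a countable base, intersect the resulting full-measure events, and push down to a $p$-full set of parameters via disintegration. The paper itself does not supply a proof of this proposition (it is stated without proof, with a pointer to \citet{Miller2018}), but its framing of the result as ``a Bayesian analogue'' of Proposition~\ref{prop:Doob1} is exactly in line with your approach, and the measure-theoretic care you flag---measurability of the section map and the Fubini/disintegration step needed to extract $\Theta'$ from the joint event $\Omega^*$---is the right place to put emphasis.
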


That is, for any prior distribution, the posterior belief is guaranteed to concentrate in a neighborhood of the true parameter $\theta$---except possibly on a set of parameter values that has measure zero under the agent's prior.

\begin{remark} The qualification that learning occurs except on a set of ``measure zero under the agent's prior" is less harmless than it might initially seem. Consider $\Theta = \mathbb{R}$ where the agent's prior $p \in \Delta(\Theta)$ is a point mass at $\theta=0$. Then the posterior is also a point mass at zero, so the agent will fail to learn any parameter which is different from $0$. But because the set $\mathbb{R} \backslash \{0\}$ has measure zero under the agent's prior, the statement of the result holds in a trivial sense. See also the subsequent discussion in Section \ref{sec:Berk}.
\end{remark} 

\begin{remark} Proposition \ref{prop:PosteriorConsistency} implies that the agent's posterior belief converges almost surely to a point mass on the true parameter in the topology of weak convergence, i.e., there is a $P_\theta$-measure 1 set of sequences of signal realizations such that 
\[d(P^t,\delta_\theta) \rightarrow 0\]
 along each of these sequences, where $d$ denotes the Levy-Prokhorov metric and $P^t \in \Delta(\Theta)$ denotes the posterior belief after observing the first $t$ coordinates of the sequence. Since $d$ is a metric, we also have that for any alternative prior $\widetilde{p} \in \Delta(\Theta)$ and corresponding posterior belief $\widetilde{P}^t \in \Delta(\Theta)$ (updating to the same $t$ realizations),
\[d(P^t,\widetilde{P}^t) \leq d(P^t,\delta_\theta) + d(\delta_\theta,\widetilde{P}^t).\]
Since the RHS converges to zero almost surely (by Proposition \ref{prop:PosteriorConsistency}), the two agents' posterior beliefs converge to one another almost surely in the topology of weak convergence. The subsequent section provides an even stronger version of this result.
\end{remark}

\section{Merging of Beliefs} \label{sec:Merging}

Assume that for each $t \geq 1$, a unique conditional probability distribution $P^t(x_1, \dots, x_t)(C)$  exists for all realized sequences $x_1, \dots, x_t \in \mathcal{X}_1 \times \dots \times \mathcal{X}_t$ and unknown events $C \in \mathcal{B}_{t+1} \times \mathcal{B}_{t+2} \times \dots $.\footnote{\citet{BlackwellDubins} work with the more general notion of ``predictive probabilities" $P$ where conditional probabilities can be defined.}  \citet{BlackwellDubins} show that even if players start out with different prior beliefs, their conditional beliefs will merge to one another in a strong sense. 

To state the result formally, recall that for any two probability measures $\mu_1,\mu_2$ defined on the same $\sigma$-algebra $\mathcal{F}$,  \emph{total variation distance} and \emph{absolute continuity} are defined as follows.

\begin{definition} The \emph{total variation distance} between $\mu_1$ and $\mu_2$ is
		\[d_{TV}(\mu_1, \mu_2) = \sup_{ D \in \mathcal{F}} \vert \mu_1(D) - \mu_2(D) \vert\]
\end{definition}

\begin{definition} If $\mu_2(D)=0$ implies $\mu_1(D)=0$ for every $D \in \mathcal{F}$, then $\mu_1$ is \emph{absolutely continuous} with respect to $\mu_2$, denoted $\mu_1 \ll \mu_2$. \end{definition}

Now we are ready to state the main result:
\begin{proposition} \label{prop:BlackwellDubins} Suppose $p,\widetilde{p} \in \Delta(\Theta)$ are absolutely continuous with respect to one another, and define $P$, $\widetilde{P}$ to be the measures on $\Omega$ induced by the respective priors $p,\tilde{p}$, and the family $(P_\theta)_{\theta \in \Theta}$. Then \[ \lim_{t \rightarrow \infty} d_{TV}(P^t(x_1, \dots, x_t), \widetilde{P}^t(x_1, \dots, x_t))=0  \quad P\mbox{-almost surely}\]
\end{proposition}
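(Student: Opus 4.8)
The plan is to reduce the statement to a fact about Radon–Nikodym derivatives along the observation filtration and then invoke martingale convergence. Write $\mathcal{F}_t = \sigma(X_1,\dots,X_t)$ and $\mathcal{F}_\infty = \bigvee_t \mathcal{F}_t = \sigma(X_1,X_2,\dots)$. First I would pass from the priors on $\Theta$ to the induced measures on sequences: since $P(\cdot) = \int_\Theta P_\theta(\cdot)\,dp(\theta)$ and $\widetilde P(\cdot) = \int_\Theta P_\theta(\cdot)\,d\widetilde p(\theta)$, the hypothesis $p \ll \widetilde p$ gives $P|_{\mathcal{F}_\infty} \ll \widetilde P|_{\mathcal{F}_\infty}$ (if $\widetilde P(D)=0$ then $P_\theta(D)=0$ for $\widetilde p$-a.e.\ $\theta$, hence for $p$-a.e.\ $\theta$, hence $P(D)=0$). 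Let $M = dP|_{\mathcal{F}_\infty}/d\widetilde P|_{\mathcal{F}_\infty}$, which is $\mathcal{F}_\infty$-measurable because we are comparing measures restricted to the tail and not to $\sigma(\theta)$, and set $M_t = \mathbb{E}_{\widetilde P}[M \mid \mathcal{F}_t]$. A routine check shows $M_t$ is the density of $P$ with respect to $\widetilde P$ on $\mathcal{F}_t$, and $(M_t)_t$ is a nonnegative $\widetilde P$-martingale with $M_t > 0$ $P$-a.s.

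Next I would prove the pointwise bound
\[
d_{TV}\big(P^t(x_1,\dots,x_t),\,\widetilde P^t(x_1,\dots,x_t)\big) \;\le\; \frac{1}{M_t}\,\mathbb{E}_{\widetilde P}\big[\,|M - M_t|\,\big|\,\mathcal{F}_t\,\big] \qquad P\text{-a.s.}
\]
This comes from the identity $M_t\,P^t(\cdot)(C) = \mathbb{E}_{\widetilde P}[M\mathbbm{1}_C \mid \mathcal{F}_t]$ valid for $C$ in the tail $\sigma$-algebra, subtracting $\widetilde P^t(\cdot)(C) = \mathbb{E}_{\widetilde P}[\mathbbm{1}_C \mid \mathcal{F}_t]$, pulling the $\mathcal{F}_t$-measurable factor $M_t$ inside, and taking the supremum over $C$.

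It then remains to show the right-hand side tends to $0$ $P$-a.s. By the martingale convergence / L\'evy theorem, $M_t \to M$ $\widetilde P$-a.s.\ (hence $P$-a.s.) and $M>0$ $P$-a.s., so it suffices to show $\mathbb{E}_{\widetilde P}[\,|M - M_t|\mid\mathcal{F}_t\,] \to 0$ $\widetilde P$-a.s. I expect this to be the main obstacle: the maximal function $\sup_t M_t$ need not be $\widetilde P$-integrable, so there is no direct dominating argument. The remedy I would use is truncation. For fixed $K>0$ put $N = M \wedge K$ and $N_t = \mathbb{E}_{\widetilde P}[N\mid\mathcal{F}_t]$; then $|N-N_t|\le 2K$ and $N_t\to N$ a.s., and writing $W_s = \sup_{m\ge s}|N-N_m|$ (bounded, decreasing to $0$, $\mathcal{F}_\infty$-measurable) one gets $\mathbb{E}_{\widetilde P}[\,|N-N_t|\mid\mathcal{F}_t\,] \le \mathbb{E}_{\widetilde P}[W_s\mid\mathcal{F}_t] \to W_s$ as $t\to\infty$ for each $s$, and $W_s\downarrow 0$, so this conditional expectation vanishes a.s. The truncation error is controlled by $|M_t - N_t| \le \mathbb{E}_{\widetilde P}[(M-K)^+\mid\mathcal{F}_t]$, a martingale converging a.s.\ to $(M-K)^+$; hence $\limsup_t \mathbb{E}_{\widetilde P}[\,|M-M_t|\mid\mathcal{F}_t\,] \le 2(M-K)^+$ a.s., and letting $K\to\infty$ (using $M<\infty$ a.s.) gives the claim. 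Combining the three steps yields the proposition, and since only $P \ll \widetilde P$ was used, reversing roles gives the symmetric $\widetilde P$-a.s.\ statement for free.
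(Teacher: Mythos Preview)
The paper does not actually prove this proposition: it states the result, attributes it to \citet{BlackwellDubins}, and moves on to an illustrative example. So there is no in-text proof to compare against.

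Your argument is correct and is, in fact, essentially the original Blackwell--Dubins proof. The three ingredients you identify are exactly the right ones: (i) lifting $p\ll\widetilde p$ to $P|_{\mathcal F_\infty}\ll\widetilde P|_{\mathcal F_\infty}$; (ii) the pointwise bound $d_{TV}(P^t,\widetilde P^t)\le M_t^{-1}\,\mathbb E_{\widetilde P}[\,|M-M_t|\mid\mathcal F_t\,]$, which follows from the change-of-measure identity $M_t\,P(C\mid\mathcal F_t)=\mathbb E_{\widetilde P}[M\mathbbm 1_C\mid\mathcal F_t]$; and (iii) showing $\mathbb E_{\widetilde P}[\,|M-M_t|\mid\mathcal F_t\,]\to 0$ almost surely. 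Your truncation argument for (iii) is clean and correct; the decomposition via $N=M\wedge K$ and the use of L\'evy's upward theorem on both the bounded piece and the tail $(M-K)^+$ is exactly what is needed to avoid assuming integrability of the maximal function. One minor terminological slip: you refer to ``the tail $\sigma$-algebra,'' but you mean the future $\sigma$-algebra $\sigma(X_{t+1},X_{t+2},\dots)$ (the tail $\sigma$-algebra is $\bigcap_t\sigma(X_t,X_{t+1},\dots)$, which is much smaller). This does not affect the argument, since your bound holds for all $C\in\mathcal F_\infty$.
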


That is, if two agents hold different prior beliefs about the parameter but agree on the set of measure-0 events, then their conditional beliefs merge in a strong sense: For \emph{all} measurable future events, agents eventually assign similar probabilities.

\begin{example} To clarify the difference between this result and the one examined in the previous section, consider the problem of learning the unknown bias of a coin, which is parametrized to $p \in [0,1]$. A coin whose bias is $p$ lands on Heads with probability  $p$ and lands on Tails with probability $1-p$. Two agents have different prior beliefs on $[0,1]$ and each observe $t$ independent flips of this coin.  

Proposition \ref{prop:PosteriorConsistency} says that the two agents will eventually learn the bias of the coin as $t$ grows large. Proposition \ref{prop:BlackwellDubins} says instead: Suppose the two agents have observed $t$ independent flips of the coin; then, their beliefs over all events regarding the future---e.g., that over half of the remaining coin flips will turn up Heads, or that the limiting fraction of Heads realizations is 1/2---must eventually become close (uniformly across such events).
\end{example}

\section{(Expected) Disagreement} \label{sec:KLS}

We now turn to the impact of information on agents' second-order beliefs---i.e., what they think about what others think. \citet{KartikLeeSuen} show that when signals satisfy an MLRP condition, then agents with different beliefs expect information to reduce the extent of disagreement. 

Here we assume the set of parameters $\Theta \subseteq \mathbb{R}$ is finite and ordered. Two signals $X$ and $\widetilde{X}$ respectively take values in $\mathcal{X}$ and $\mathcal{\widetilde{X}}$, and we assume that $X$ is Blackwell more informative than $\widetilde{X}$. There are two agents, Ann and Bob, who have common knowledge of  the conditional distributions $\{f_{X \mid \theta}(x \mid \theta)\}_{\theta \in \Theta}$ and $\{f_{\widetilde{X}\mid \theta}(\widetilde{x}\mid \theta)\}_{\theta \in \Theta}$. But Ann and Bob hold different prior beliefs $f_\theta^A,f_\theta^B \in \Delta(\Theta)$ about the parameter. We use $F^A$ and $F^B$ to denote their perceived joint distributions of $(\theta,X,\widetilde{X})$ (induced by the respective priors and the common knowledge signal distributions), and $\mathbb{E}_A$ and $\mathbb{E}_B$ to denote expectations with respect to these distributions.

\begin{assumption} \label{assp:MLRP} There is an order $\succ$ on $\mathcal{X}$ and an order $\widetilde{\succ}$ on $\mathcal{\widetilde{X}}$ such that the families $\{f_{X\mid \theta}(\cdot \mid \theta)\}_{\theta \in \Theta}$ and $\{f_{\widetilde{X}\mid \theta}(\cdot \mid \theta)\}_{\theta \in \Theta}$ each have MLRP (see Definition \ref{def:MLRP}).
\end{assumption}

\begin{assumption} \label{assp:LR} Bob's prior $f_\theta^B$ likelihood-ratio dominates Ann's prior $f_\theta^A$ (see Definition \ref{def:LRDominance}).
\end{assumption}

The agents' prior expectations of the parameter are $\mu_A \equiv \mathbb{E}_A(\theta)$ and $\mu_B \equiv \mathbb{E}_B(\theta)$. We are interested in Ann's prior expectation of Bob's posterior expectation (updated to $X$), and Bob's prior expectation of Ann's posterior expectation (updated to $X$), respectively denoted by 
\begin{align*}
\mu_{AB}(X) &\equiv \mathbb{E}_A[\mathbb{E}_B(\theta \mid X)] \\
\mu_{BA}(X) &\equiv \mathbb{E}_B[\mathbb{E}_A(\theta \mid X)]
 \end{align*}

\begin{proposition} \label{prop:KLS} Suppose Assumptions \ref{assp:MLRP} and \ref{assp:LR} are satisfied. If $X$ is Blackwell more informative than $\widetilde{X}$, then
\[\mu_A \leq \mu_{AB}(X) \leq \mu_{AB}(\widetilde{X}) \leq \mu_B\]
\[\mu_A \leq \mu_{BA}(\widetilde{X}) \leq \mu_{BA}(X) \leq \mu_B\]
\end{proposition}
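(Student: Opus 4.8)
The plan is to prove the first chain $\mu_A \leq \mu_{AB}(X) \leq \mu_{AB}(\widetilde{X}) \leq \mu_B$; the second chain follows by the symmetric argument with the roles of Ann and Bob reversed (and the Blackwell/MLRP orientation handled accordingly). I would organize the argument around two ingredients: (i) a monotonicity fact saying that under MLRP, the posterior expectation $\mathbb{E}_B(\theta \mid X = x)$ is nondecreasing in $x$ (with respect to the order $\succ$), and likewise for $\widetilde{X}$; and (ii) the observation that Assumption \ref{assp:LR} makes Ann's predictive distribution over signal realizations dominated, in an appropriate stochastic order, by Bob's.

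First I would establish the two outer inequalities $\mu_A \leq \mu_{AB}(X)$ and $\mu_{AB}(\widetilde X) \leq \mu_B$. For the left one: write $\mu_{AB}(X) = \sum_x f^A_X(x)\, \mathbb{E}_B(\theta \mid X = x)$, where $f^A_X$ is Ann's marginal on $\mathcal{X}$. Since Bob's prior likelihood-ratio dominates Ann's, Bob's posterior at every signal realization likelihood-ratio dominates (hence FOSD-dominates) Ann's posterior at the same realization, so $\mathbb{E}_B(\theta \mid X=x) \geq \mathbb{E}_A(\theta \mid X=x)$ for every $x$; taking Ann's expectation over $x$ and using the martingale property (beliefs are a martingale, Fact \ref{fact:Martingale}) gives $\mu_{AB}(X) \geq \mathbb{E}_A[\mathbb{E}_A(\theta\mid X)] = \mu_A$. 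For the right inequality $\mu_{AB}(\widetilde X) \leq \mu_B$: now average $\mathbb{E}_B(\theta \mid \widetilde X = \widetilde x)$ over \emph{Ann's} predictive distribution $f^A_{\widetilde X}$ rather than Bob's. By Assumption \ref{assp:MLRP} the map $\widetilde x \mapsto \mathbb{E}_B(\theta\mid \widetilde X=\widetilde x)$ is nondecreasing, and by Assumption \ref{assp:LR} together with MLRP, Ann's predictive $f^A_{\widetilde X}$ is FOSD-dominated by Bob's predictive $f^B_{\widetilde X}$ (mixing an MLRP-dominated family of conditionals against an LR-dominated prior yields an FOSD-dominated mixture — this is essentially the integration argument in the proof of \citet{Milgrom}'s proposition). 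Hence $\mu_{AB}(\widetilde X) = \mathbb{E}_A[\mathbb{E}_B(\theta\mid\widetilde X)] \leq \mathbb{E}_B[\mathbb{E}_B(\theta\mid\widetilde X)] = \mu_B$ by the martingale property for Bob.

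The heart of the matter is the middle inequality $\mu_{AB}(X) \leq \mu_{AB}(\widetilde X)$, which says more information (about which Ann is uncertain) moves Ann's forecast of Bob's posterior expectation \emph{back toward Ann} — i.e., disagreement is expected to shrink. Here I would use that $X$ Blackwell-dominates $\widetilde X$, so $\widetilde X$ is a garbling of $X$: there is a Markov kernel $\gamma$ with $\widetilde X \sim \gamma(\cdot \mid X)$, and crucially this garbling can be taken to hold simultaneously under both agents' models since the conditional signal distributions are common knowledge (only the priors differ). Then $\mathbb{E}_B(\theta \mid \widetilde X) = \mathbb{E}_B[\mathbb{E}_B(\theta\mid X)\mid \widetilde X]$ under Bob's model. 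The quantity I want to compare is $\mathbb{E}_A[\mathbb{E}_B(\theta\mid X)]$ versus $\mathbb{E}_A[\mathbb{E}_B[\mathbb{E}_B(\theta\mid X)\mid \widetilde X]]$; the difference between these is driven by the discrepancy between Ann's and Bob's predictive laws, which the garbling smooths out. I expect the cleanest route is: define $g(x) = \mathbb{E}_B(\theta \mid X=x)$, which is nondecreasing by MLRP; show that the ``Ann-reweighted'' version $\mathbb{E}_A[g(X)\mid \widetilde X = \widetilde x]$ versus $\mathbb{E}_B[g(X)\mid \widetilde X=\widetilde x]$ compare because, conditional on $\widetilde x$, the garbling kernel pushes Ann's and Bob's conditional laws of $X$ closer; then average against the appropriate FOSD-ordered predictive on $\widetilde X$. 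Making this comparison rigorous — keeping track of which agent's measure each expectation is taken under, and checking the required MLRP is preserved under the garbling — is the main obstacle, and I would lean on Proposition \ref{prop:A_MLRP} (affiliation $\Leftrightarrow$ MLRP) and the monotone-function-preservation results for affiliated variables to push it through.
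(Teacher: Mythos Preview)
Your treatment of the two outer inequalities is essentially the paper's argument: use that Bob's posterior FOSD-dominates Ann's at every signal realization (this is Part (a) of Exercise \ref{exercise:KLS}) to get $\mu_A \leq \mu_{AB}(X)$ via the martingale property, and use that Bob's signal predictive FOSD-dominates Ann's together with monotonicity of $\widetilde x\mapsto\mathbb{E}_B(\theta\mid \widetilde X=\widetilde x)$ to get $\mu_{AB}(\widetilde X) \leq \mu_B$.

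Where the proposal goes astray is the middle inequality. You have the right objects in hand --- set $g(x)=\mathbb{E}_B(\theta\mid X=x)$ (nondecreasing by MLRP), use the garbling coupling to write $\mathbb{E}_B(\theta\mid\widetilde X)=\mathbb{E}_B[g(X)\mid\widetilde X]$, and compare $\mathbb{E}_A[g(X)\mid\widetilde X=\widetilde x]$ with $\mathbb{E}_B[g(X)\mid\widetilde X=\widetilde x]$ --- but you misidentify both \emph{why} the comparison holds and \emph{how} it closes. The reason $\mathbb{E}_A[g(X)\mid\widetilde X=\widetilde x]\leq \mathbb{E}_B[g(X)\mid\widetilde X=\widetilde x]$ is not that the garbling ``pushes the conditional laws closer'': it is that $F^B_{X\mid\widetilde X}(\cdot\mid\widetilde x)$ continues to FOSD-dominate $F^A_{X\mid\widetilde X}(\cdot\mid\widetilde x)$ for every $\widetilde x$. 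This conditional FOSD is exactly Part (b) of Exercise \ref{exercise:KLS}, and it is where Assumptions \ref{assp:MLRP} and \ref{assp:LR} do their real work; you do not need any affiliation-preservation-under-garbling machinery. And once you have this pointwise inequality in $\widetilde x$, the final step is \emph{not} to ``average against the appropriate FOSD-ordered predictive on $\widetilde X$.'' You simply integrate both sides against Ann's marginal on $\widetilde X$: the smaller side becomes $\mathbb{E}_A[\mathbb{E}_A(g(X)\mid\widetilde X)]=\mathbb{E}_A[g(X)]=\mu_{AB}(X)$ by the law of iterated expectations under Ann's measure, while the larger side is $\mathbb{E}_A[\mathbb{E}_B(\theta\mid\widetilde X)]=\mu_{AB}(\widetilde X)$. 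No second FOSD comparison on $\widetilde X$ is needed.
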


\noindent That is, Ann expects that a more informative experiment will, in expectation, bring Bob's posterior mean closer to Ann's prior, and vice versa. These are both subjective statements, and indeed only one of Ann and Bob can be correct.

We'll prove this proposition using the following relationships, which are left as an exercise.

\begin{exercise}[G] \label{exercise:KLS} Prove the following statements:
\begin{itemize}
\item[(a)] $F^B_{\theta \mid X}(\theta \mid X=x)$ first-order stochastically dominates $F^A_{\theta \mid X}(\theta \mid X=x)$ for every signal realization $x \in \mathcal{X}$
\item[(b)] $F^B_{X\mid \widetilde{X}}(X \mid \widetilde{X}=\tilde{x})$ first-order stochastically dominates $F^A_{X \mid \widetilde{X}}(X \mid \tilde{X}=\tilde{x})$ for every signal realization $\tilde{x} \in \widetilde{\mathcal{X}}$
\end{itemize}
\end{exercise}

\begin{proof} Part (a) of Exercise \ref{exercise:KLS} implies $\int \theta dF^A_{\theta \mid X}(\theta \mid x)  \leq \int \theta dF^B_{\theta \mid X}(\theta \mid x)$ for every realization $x$, so also
\begin{equation}\label{eq:KLS1}
\int \int \theta dF^A_{\theta \mid X}(\theta \mid x) dF^A_{X}(x) \leq \int \int \theta dF^B_{\theta \mid X}(\theta \mid x) dF^A_X(x).
\end{equation}
By assumption that $\{f_{X\mid\theta}(\cdot \mid \theta)\}_{\theta \in \Theta}$ has MLRP, the integral $\int \theta dF^B_{\theta \mid X}(\theta \mid x)$ is an increasing function of $x$.  Moreover, Part (b) of Exercise \ref{exercise:KLS} says that $F_X^B$ first-order stochastically dominates $F_X^A$ (taking $\widetilde{X}$ to be any constant signal). Thus
\begin{equation} \label{eq:KLS2}
\int \int \theta dF^B_{\theta \mid X}(\theta \mid x) dF^A_X(x) \leq  \int \int \theta dF^B_{\theta \mid X}(\theta \mid x) dF^B_X(x).
\end{equation}
Together, (\ref{eq:KLS1}) and (\ref{eq:KLS2}) imply
\[\int \int \theta dF^A_{\theta \mid X}(\theta \mid x) dF^A_{X}(x) \leq \int \int \theta dF^B_{\theta \mid X}(\theta \mid x) dF^A_X(x) \leq  \int \int \theta dF^B_{\theta \mid X}(\theta \mid x) dF^B_X(x)\]
which is precisely the desired inequality $\mu_A \leq \mu_{AB}(X) \leq \mu_B.$ It follows by identical arguments that $\mu_A \leq \mu_{BA}(X) \leq \mu_B$. 

To show that $\mu_{AB}(\widetilde{X}) \geq \mu_{AB}(X)$, we use the fact that (since $X$ Blackwell-dominates $\widetilde{X}$) we can generate the two variables in such a way that $\widetilde{X}$ is conditionally independent of $\theta$ conditional on $X$.\footnote{See Remark \ref{remark:Garbling} for further detail. Note also that the correlation between $X$ and $\widetilde{X}$ is irrelevant for the comparison of $\mu_{AB}(X)$ and $\mu_{AB}(\widetilde{X})$.} Then on this probability space
\begin{align*}
\mu_{AB}(\widetilde{X}) &= \mathbb{E}_A \left[\mathbb{E}_B\left(\theta \mid \widetilde{X}\right) \right] \\[2mm]
& = \mathbb{E}_A\left[\mathbb{E}_B\left(\mathbb{E}_B\left(\theta \mid X, \widetilde{X}\right) \mid \widetilde{X}\right)\right] && \mbox{by L.I.E.} \\[2mm]
& = \mathbb{E}_A\left[\mathbb{E}_B\left(\mathbb{E}_B\left(\theta \mid X\right) \mid \widetilde{X}\right) \right] && \mbox{since $\widetilde{X}\indep \theta \mid X$} \\[2mm]
 &= \int \int \mathbb{E}_B(\theta \mid x) dF^B_{X \mid \widetilde{X}}(x \mid \widetilde{x}) dF_A(\widetilde{x}) \\[2mm]
 & \geq \int \int \mathbb{E}_B(\theta \mid x) dF^A_{X \mid \widetilde{X}} (x \mid \widetilde{x}) dF_A(\widetilde{x}) \\[2mm]
 & = \mathbb{E}_A\left[\mathbb{E}_A\left(\mathbb{E}_B\left(\theta \mid X\right) \mid \widetilde{X}\right)\right] \\
 & = \mathbb{E}_A\left[\mathbb{E}_B\left(\theta \mid X\right)\right]  && \mbox{by L.I.E.} \\
 & = \mu_{AB}(X)
\end{align*}
where the crucial inequality follows by observing that $\mathbb{E}_B(\theta \mid x)$ is an increasing function of $x$ (by Assumption \ref{assp:MLRP}) while $F_{X\mid \widetilde{X}}^B(\cdot \mid \widetilde{x})$ first-order stochastically dominates $F_{X\mid \widetilde{X}}^A(\cdot \mid \widetilde{x})$ for every realization of $\widetilde{x}$ (by Part (b) of Exercise \ref{exercise:KLS}).

Since the previous arguments apply to show also that $\mu_{AB}(\widetilde{X}) \leq \mu_B$, we are done. \end{proof}

\section{Common Learning} \label{sec:CommonLearning}

Suppose Assumption \ref{assp:Identifiability} (Identifiability) holds, so that agents eventually learn the true parameter. Does this imply that agents will eventually have \emph{common knowledge} of the true parameter? \cite{CEMS2008} adapt \citet{MondererSamet}'s definition of common $q$-belief for the present learning environment, and show that individual learning does imply common learning when the set of signal realizations is finite, but that this implication may otherwise fail.

In what follows recall that each state $\omega \in \Omega = \Theta \times \mathcal{X}^\infty$ describes both the value of the parameter and the infinite sequence of signal profiles. As before, $P_\theta$ denotes the measure on $\mathcal{X}^\infty$ conditional on parameter $\theta$, and again assume that $\Theta$ is finite. There are two agents $i=1,2$, and (different from the previous sections) we decompose $\mathcal{X} = \mathcal{X}^1 \times \mathcal{X}^2$ where $\mathcal{X}^i$ denotes the set of agent $i$ signal realizations. Each agent privately observes their own signal each period.  We use $h_{it}(\omega)=(x^i_{1}(\omega), \dots, x^i_{t}(\omega))$ for agent $i$'s history at time $t$ when $\omega$ is the realized state, and $\mathcal{H}_{it}$ to denote the filtration induced by agent $i$'s histories. 

\begin{definition}
 For any $q \in [0,1]$ and (measurable) event $F$,  agent $i$ \emph{$q$-believes} in $F$ at time $t$ on
\[B_{it}^q(F) = \{\omega \in \Omega \mid P(F \mid h_{it}(\omega)) \geq q\}\]
\end{definition}

\begin{definition} 
 For any $q \in [0,1]$, there is \emph{common $q$-belief} in $F$ at time $t$ on
\[C_t^q(F) = \bigcap_{n \geq 1} [B_t^q]^n(F)\]
where $B_t^q(F) = B_{1t}^q(F) \bigcap B_{2t}^q(F)$.
\end{definition}

\begin{definition}[Individual Learning] Agent $i$ \emph{learns} $\theta$ if for each $q \in (0,1)$ there exists $T<\infty$ such that
\[P_\theta(B_{it}^q(\{\theta\} \times \mathcal{X}^\infty)) > q \quad \forall t>T\]
Equivalently: $\lim_{t \rightarrow \infty} P_\theta(B_{it}^q(\{\theta\} \times \mathcal{X}^\infty))=1$ for all $q\in (0,1)$. Agent $i$ \emph{individually learns} if the agent learns each $\theta \in \Theta$.

\end{definition}

\begin{definition}[Common Learning] Agents \emph{commonly learn} $\theta$ if for each $q \in (0,1)$ there exists $T<\infty$ such that
\[P_\theta(C_{t}^q(\{\theta\} \times \mathcal{X}^\infty)) > q \quad \forall t>T\]
Equivalently: $\lim_{t \rightarrow \infty} P_\theta(C_{t}^q(\{\theta\} \times \mathcal{X}^\infty))=1$ for all $q\in (0,1)$. Agents \emph{commonly learn} if they commonly learn each $\theta \in \Theta$.
\end{definition}

Clearly if signals are perfectly correlated (or public), so that $P(\theta \mid \mathcal{H}_{1t}) = P(\theta \mid \mathcal{H}_{2t})$ for all $\theta$ and $t$, then individual learning implies common learning. This result also holds at the other extreme of independent signals.

\begin{proposition} \label{prop:CommonLearning} Suppose agents individually learn, and their signals are conditionally independent given the parameter. That is, there exist families $(P^i_\theta)_{\theta \in \Theta}$, with each $P_\theta^i \in \Delta(\mathcal{X}^i)$, such that $P_\theta(A \times B)=P_\theta^1(A)P_\theta^2(B)$ for each $\theta \in \Theta$ and measurable $A \subseteq \mathcal{X}^1$, $B \subseteq \mathcal{X}^2$. Then, agents commonly learn.
\end{proposition}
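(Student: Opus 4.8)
\textbf{Proof strategy for Proposition \ref{prop:CommonLearning}.} The plan is to show that for each $\theta$ and each $q \in (0,1)$, there is a single event $E_t$ that is (i) a high-probability event under $P_\theta$ for large $t$, (ii) contained in $B_{it}^q(\{\theta\} \times \mathcal{X}^\infty)$ for both agents, and (iii) an ``evident $q$-belief'' event in the sense that at every state in $E_t$, each agent $q$-believes $E_t$. Once such an $E_t$ is constructed, the analogue of Definition \ref{def:CommonpBelief2} (adapted to the learning setting in \citet{CEMS2008}) gives that $E_t \subseteq C_t^q(\{\theta\}\times \mathcal{X}^\infty)$, and then $P_\theta(C_t^q(\{\theta\}\times\mathcal{X}^\infty)) \geq P_\theta(E_t) \to 1$, which is exactly common learning.

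The natural candidate for $E_t$ is the set of states at which \emph{both} agents simultaneously place high posterior probability on $\theta$: roughly $E_t = B_{1t}^{q'}(\{\theta\}\times\mathcal{X}^\infty) \cap B_{2t}^{q'}(\{\theta\}\times\mathcal{X}^\infty)$ for a suitable $q' = q'(q) < 1$ close to $1$. First I would check property (i): since each agent individually learns, $P_\theta(B_{it}^{q'}) \to 1$ for each $i$, so $P_\theta(E_t) \geq 1 - \sum_i (1 - P_\theta(B_{it}^{q'})) \to 1$. Property (ii) is immediate if $q' \geq q$. The crux is property (iii): I need that at every $\omega \in E_t$, agent $i$ assigns probability at least $q$ to the event $E_t$ itself, given the history $h_{it}(\omega)$. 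This is where conditional independence enters. Agent $i$, knowing the parameter is very likely $\theta$, must forecast that agent $j$'s \emph{private} history also lands in the set where $j$ $q'$-believes $\theta$. Under conditional independence given the parameter, once agent $i$ conditions on (a neighborhood of) $\theta$, agent $j$'s history is distributed according to $P_\theta^j$, which by individual learning puts mass $\to 1$ on $B_{jt}^{q'}$. The subtlety is that agent $i$ is \emph{not} certain of $\theta$; I would handle the residual probability that the parameter is some $\theta' \neq \theta$ by a union bound, using that $q'$ can be taken uniformly close to $1$ and that $\Theta$ is finite, so the ``contamination'' from wrong parameters is controllable. Concretely, one shows $P(E_t \mid h_{it}(\omega)) \geq P(\theta \mid h_{it}(\omega)) \cdot P_\theta(B_{jt}^{q'}) - (\text{error}) \geq q' \cdot P_\theta(B_{jt}^{q'}) - (\text{small})$, which exceeds $q$ for $t$ large once $q'$ is chosen appropriately and $t$ large enough that $P_\theta(B_{jt}^{q'})$ is close to $1$.

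The main obstacle — and the place requiring care — is making the conditional-independence argument in property (iii) rigorous when agent $i$'s posterior on the parameter is not degenerate. The clean way is: fix $q$, pick $q_1$ with $q < q_1 < 1$, then pick $q_2$ with $q_1 < q_2 < 1$ close enough to $1$ that $q_2 \cdot q_1 > q$ (possible since $q_1 < 1$); take the defining threshold of $E_t$ to be $q_2$. Then for $\omega \in E_t$, agent $i$ has $P(\theta \mid h_{it}(\omega)) \geq q_2$, and I want $P(B_{jt}^{q_2}(\{\theta\}\times\mathcal{X}^\infty) \mid h_{it}(\omega)) \geq q_1$. Decompose over the parameter: the contribution from the true $\theta$ is $P(\theta \mid h_{it}(\omega))\cdot P_\theta(B_{jt}^{q_2})$, which is $\geq q_2 \cdot P_\theta(B_{jt}^{q_2})$ by conditional independence; choosing $T$ so that $P_\theta(B_{jt}^{q_2}) \geq q_1/q_2$ for all $t > T$ (valid since $P_\theta(B_{jt}^{q_2}) \to 1 > q_1/q_2$) already delivers the bound, without even needing to control the wrong-parameter terms (they only help). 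Hence $P(E_t \mid h_{it}(\omega)) \geq P(\theta \mid h_{it}(\omega)) \wedge P(B_{jt}^{q_2} \mid h_{it}(\omega)) \geq q_1 > q$ — so $E_t$ is common $q$-belief on $E_t$, completing the argument. I would close by noting the symmetric role of the two agents makes the two-index bookkeeping routine, and that finiteness of $\Theta$ lets all the $T$'s be chosen uniformly over the (finitely many) parameters.
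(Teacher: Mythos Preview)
Your proposal is correct and follows the same approach as the paper: invoke the Monderer--Samet evident-$q$-belief characterization (the paper's Lemma~\ref{lemm:MS}), take as candidate event the set where both agents assign high posterior probability to $\theta$, and use conditional independence to verify the evident-belief property by decomposing agent $i$'s conditional probability over the parameter. The paper's bookkeeping is marginally cleaner---it sets $F_t=\{\theta\}\cap B_t^{\sqrt{q}}(\theta)$ with the single threshold $\sqrt{q}$, so the key bound reads $P(B_{2t}^{\sqrt{q}}(\theta)\cap\{\theta\}\mid\mathcal{H}_{1t})=P_\theta(B_{2t}^{\sqrt{q}}\mid\mathcal{H}_{1t})\cdot P(\theta\mid\mathcal{H}_{1t})>\sqrt{q}\cdot\sqrt{q}=q$---and your final line with the $\wedge$ is garbled (what you actually need and have already shown is $P(E_t\mid h_{it})=P(B_{jt}^{q_2}\mid h_{it})\geq P(\theta\mid h_{it})\cdot P_\theta(B_{jt}^{q_2})\geq q_2\cdot(q_1/q_2)=q_1>q$), but the substance is identical.
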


\cite{CEMS2008} proves this proposition using a result from \citet{MondererSamet} (adapted to the present learning context).

\begin{lemma} \label{lemm:MS} Agents commonly learn if and only if for every $\theta \in \Theta$ and $q \in (0,1)$, there is a sequence of events $F_t$ and a period $T$ such that for all $t>T$,
\begin{enumerate}
\item[(a)] $F_t \subseteq B_t^q(\theta)$ (``$\theta$ is $q$-believed on $F_t$ at time $t$'')
\item[(b)] $P_\theta(F_t)>q$ (``probability of $F_t$ is sufficiently high")
\item[(c)] $F_t \subseteq  B_{it}^q(F_t)$ for $i=1,2$ (``$F_t$ is evident $q$-belief at time $t$")
\end{enumerate}
\end{lemma}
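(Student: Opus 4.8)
The plan is to recognize this as the learning‑context incarnation of \citet{MondererSamet}'s characterization of common $q$-belief, established at each fixed time $t$ for the information structures $(\mathcal{H}_{1t},\mathcal{H}_{2t})$, and then to bolt on the probabilistic conditions. Throughout I write $\theta$ for the event $\{\theta\}\times\mathcal{X}^\infty$ and $B_t^q(\cdot)=B_{1t}^q(\cdot)\cap B_{2t}^q(\cdot)$, so that $C_t^q(\cdot)=\bigcap_{n\ge 1}[B_t^q]^n(\cdot)$. First I would record three elementary properties of the operators $B_{it}^q$ at a fixed $t$. (i) \emph{Monotonicity}: $G\subseteq G'$ implies $B_{it}^q(G)\subseteq B_{it}^q(G')$, since $P(\cdot\mid h_{it})$ is monotone in its argument. (ii) \emph{Idempotence on own-measurable sets}: if $G$ is $\mathcal{H}_{it}$-measurable then $P(G\mid\mathcal{H}_{it})=\mathbbm{1}_G$ a.s., so $B_{it}^q(G)=G$ for every $q\in(0,1)$. (iii) Because each $B_{it}^q(G)$ is $\mathcal{H}_{it}$-measurable, properties (i) and (ii) give $B_{it}^q(B_t^q(F))\subseteq B_{it}^q(B_{it}^q(F))=B_{it}^q(F)$, whence $B_t^q(B_t^q(F))\subseteq B_t^q(F)$; monotone iteration then shows that $A_n:=[B_t^q]^n(F)$ is a \emph{decreasing} sequence, so $A_n\downarrow C_t^q(F)$.

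The crux is the static claim that $C_t^q(F)$ is itself evident $q$-belief, i.e.\ $C_t^q(F)\subseteq B_{it}^q(C_t^q(F))$ for each $i$ (up to a $P$-null set). I would prove this from the monotonicity just established: for $\omega\in C_t^q(F)$ we have $\omega\in A_{n+1}=B_t^q(A_n)\subseteq B_{it}^q(A_n)$, so $P(A_n\mid h_{it}(\omega))\ge q$ for every $n$; since $A_n\downarrow C_t^q(F)$, downward continuity of conditional probability gives $P(A_n\mid\mathcal{H}_{it})\to P(C_t^q(F)\mid\mathcal{H}_{it})$ a.s., whence $P(C_t^q(F)\mid h_{it}(\omega))\ge q$. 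The companion \emph{absorption} fact is an easy induction: if $E$ is evident $q$-belief with $E\subseteq B_t^q(F)$, then $E\subseteq A_n$ for all $n$ (the base case is the hypothesis; the step uses $E\subseteq B_t^q(E)\subseteq B_t^q(A_n)=A_{n+1}$ by monotonicity), so $E\subseteq C_t^q(F)$.

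With these two static facts the equivalence is bookkeeping. For the forward direction, assume common learning, fix $q\in(0,1)$, and take $T$ with $P_\theta(C_t^q(\theta))>q$ for $t>T$; setting $F_t:=C_t^q(\theta)$ yields (a) from $C_t^q(\theta)\subseteq [B_t^q]^1(\theta)=B_t^q(\theta)$, (b) directly, and (c) from the evidence claim. For the converse, given $F_t$ satisfying (a)--(c), condition (a) says $F_t\subseteq B_t^q(\theta)$ and (c) says $F_t$ is evident $q$-belief, so the absorption fact gives $F_t\subseteq C_t^q(\theta)$ and hence $P_\theta(C_t^q(\theta))\ge P_\theta(F_t)>q$ for all $t>T$, which is exactly common learning of $\theta$.

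The main obstacle is the evidence claim for $C_t^q(F)$: everything hinges on first establishing that the iteration is decreasing via property (ii), after which the downward-continuity argument closes it. The only care needed is that (ii) and the evidence claim hold merely up to $P$-null sets, but since conditions (a)--(c) and the definitions of individual and common learning are all phrased through the measures $P_\theta$, discarding null sets is harmless. One could alternatively cite the stated equivalence of \citet{MondererSamet}'s two definitions of common $q$-belief wholesale, but the self-contained argument above is short and makes the role of the learning structure (the filtrations $\mathcal{H}_{it}$ and property (ii)) transparent.
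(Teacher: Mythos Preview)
The paper does not actually prove this lemma; it is stated and attributed to \citet{MondererSamet} (adapted to the learning context), and then used as a black box in the proof of Proposition~\ref{prop:CommonLearning}. So there is no ``paper's proof'' to compare against.

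Your self-contained argument is correct and is essentially the Monderer--Samet proof specialized to the filtrations $(\mathcal{H}_{it})$. The three building blocks---monotonicity of $B_{it}^q$, idempotence on $\mathcal{H}_{it}$-measurable sets, and the resulting fact that $[B_t^q]^n(F)$ is decreasing---are exactly what is needed to show that $C_t^q(F)$ is itself evident $q$-belief (the only nontrivial step), and your absorption lemma handles the converse cleanly. Two minor remarks: your step (iii) uses monotonicity in the form $B_t^q(F)\subseteq B_{it}^q(F)$, which is correct but worth stating explicitly since the inequality runs in the direction $B_{it}^q(B_t^q(F))\subseteq B_{it}^q(B_{it}^q(F))$; and the downward continuity of $P(A_n\mid\mathcal{H}_{it})$ follows from conditional dominated convergence applied to $\mathbbm{1}_{A_n}\downarrow\mathbbm{1}_{C_t^q(F)}$, so no appeal to regular conditional probabilities is needed. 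The null-set caveat you flag is handled correctly.
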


We'll now prove Proposition \ref{prop:CommonLearning}.

\begin{proof} Henceforth write $\{\theta\}$ for the event $\{\theta\} \times \mathcal{X}^\infty$. Define $F_t = \{\theta\} \cap B_t^{\sqrt{q}}(\theta)$ to be the set of states at which $\theta$ is true and both agents $\sqrt{q}$-believe it. We'll verify that the conditions of Lemma \ref{lemm:MS} hold for the sequence of events $(F_t)_{t=1}^\infty$, from which Proposition \ref{prop:CommonLearning} follows.

First observe that
\begin{align*}
F_t & \subseteq B_t^{\sqrt{q}}(\theta) && \mbox{by definition of $F_t$}\\
& \subseteq B_t^q(\theta) && \mbox{since $q <\sqrt{q}$}
\end{align*}
yielding Part (a) of Lemma \ref{lemm:MS}. Part (b) holds since individual learning implies that there exists $T<\infty$ such that for both agents $i=1,2$,
\[P_\theta\left(B_{it}^{\sqrt{q}}(\theta)\right)>\sqrt{q} \quad \forall t>T\]
and thus
\[P_\theta(F_t) = P_\theta\left(B_{1t}^{\sqrt{q}}(\theta)\right)P_\theta\left(B_{2t}^{\sqrt{q}}(\theta)\right) > q \quad \forall t>T\]
 from the assumption of conditional independence. 

It remains to show Part (c). First rewrite the set $B_{1t}^q(F_t) $ as follows:
\begin{align*}
B_{1t}^q(F_t) & =  \left\{\omega \mid \mathbb{E}\left[\mathbbm{1}_{F_t} \mid \mathcal{H}_{1t}\right] \geq q) \right\} && \mbox{by definition of $B_{1t}^q$}\\
& = \left\{\omega \mid \mathbb{E}\left[\mathbbm{1}_{B_{1t}^{\sqrt{q}}(\theta)} \mathbbm{1}_{B_{2t}^{\sqrt{q}}(\theta) \cap \{\theta\}} \mid \mathcal{H}_{1t}\right]\geq q\right\} && \mbox{by definition of $F_t$}\\
& = \left\{\omega \mid \mathbbm{1}_{B_{1t}^{\sqrt{q}}(\theta)} \mathbb{E}\left[ \mathbbm{1}_{B_{2t}^{\sqrt{q}}(\theta) \cap \{\theta\}} \mid \mathcal{H}_{1t}\right] \geq q \right\} && \mbox{since } B_{1t}^{\sqrt{q}}(\theta) \in \mathcal{H}_{1t} \\
& = B_{1t}^{\sqrt{q}}(\theta) \cap B_{1t}^q \left(B_{2t}^{\sqrt{q}}(\theta) \cap \{\theta\}\right)
\end{align*} 
By definition we have that $F_t \subseteq B_{1t}^{\sqrt{q}}(\theta)$. As above, individual learning implies existence of $T$ sufficiently large that $P_\theta\left(B_{2t}^{\sqrt{q}}(\theta)\right)>\sqrt{q}$ for all $t>T$. Since signals are conditionally independent, agent 1's history is uninformative about agent 2's history, implying that
\begin{align}
P_\theta\left(B_{2t}^{\sqrt{q}}(\theta) \mid \mathcal{H}_{1t}\right) \geq \sqrt{q} \label{eq:qBound}
\end{align}
holds uniformly across agent 1 histories (for all $t>T$). So on $F_t$ (for $t>T$) we have
\[P(B_{2t}^{\sqrt{q}}(\theta) \cap \{\theta\} \mid \mathcal{H}_{1t}) = \underbrace{P_\theta(B_{2t}^{\sqrt{q}}(\theta) \mid \mathcal{H}_{1t} )}_{>\sqrt{q} \text{ by } (\ref{eq:qBound})} \underbrace{P(\theta \mid \mathcal{H}_{1t}) }_{>\sqrt{q} \text{ since } F_t \subseteq B_{1t}^{\sqrt{q}}(\theta)} >q.\]
Apply Lemma \ref{lemm:MS} and we are done.
\end{proof}

\begin{remark} This proof extends for arbitrary finite numbers of agents, setting $F_t = \{\theta\} \cap B_t^{\sqrt[n]{q}}(\theta)$.
\end{remark}

\bigskip

Although common learning is implied by individual learning when agents have either perfect information or no information about the other agent's history, intermediate cases of correlation can break this result. \\

\begin{example} (Twist on Rubinstein (1989)'s email game.) \label{ex:EmailGameTwist} The unknown parameter is 
 $\theta \in \{\theta', \theta"\}$, where $0\leq \theta' < \theta'' \leq 1$. Suppose that every period a signal profile is independently drawn according to:
\[ \begin{array}{ccc}
 \mbox{Probability} & \mbox{Agent-1 Signal} & \mbox{Agent-2 Signal} \\
 \theta & 0 & 0  \\[-1mm]
 \eps(1-\theta) & 1 & 0 \\[-1mm]
 (1-\eps)\eps(1-\theta) & 1 & 1  \\[-1mm]
 (1-\eps)^2 \eps(1-\theta) & 2 & 1 \\[-1mm]
 (1-\eps)^3 \eps(1-\theta) & 2 & 2 \\[-1mm]
 (1-\eps)^4 \eps (1-\theta) & 3 & 2 \\[-1mm]
 (1-\eps)^5 \eps (1-\theta) & 3  & 3 \\[-1mm]
 \vdots & \vdots & \vdots
 \end{array}\]

\noindent This signal structure generalizes the information structure  in the email game from Section \ref{sec:emailgame}, where $\theta=1$ corresponds to state $a$ in the email game and $\theta=0$ corresponds to state $b$.

Agents observe repeated independent realizations of the signal. Will they commonly learn the game parameter? When $\theta$ is restricted to values 0 and 1 (as per \citet{Rubinstein}'s email game), the answer is yes.

\begin{exercise}[G] Prove that common learning occurs if $\theta \in \{\theta',\theta''\} \equiv \{0,1\}$.
\end{exercise}

But common learning fails whenever $0 <\theta'<\theta" < 1$ as agents cannot commonly learn $\theta''$, the parameter placing more weight on the lower signal realizations. Intuitively, when 1 sees the signal $k$, then he believes with some probability (that can be uniformly lower bounded across histories) that 2 has also observed at least $k$. And if 2 observes $k$, then he believes with some probability (that again can be uniformly lower bounded) that 1 observed $k+1$. Since the number of signal realizations is infinite, there is unbounded contagion upwards: The agent always believes with some probability that the other agent believes with some probability that he has observed\dots such a large signal that he believes that the state is (very likely to be) $\theta'$. And thus we cannot establish common $q$-belief of $\theta''$ for large $q$.

\end{example}

The main result in \citet{CEMS2008} establishes that infinite signal realizations are critical to the previous counterexample. When the number of signal realizations is finite, then individual learning always implies common learning.

\begin{assumption}[Finite Signal Sets] $\vert \mathcal{X}^1 \vert, \vert \mathcal{X}^2 \vert <\infty$ \label{assp:FiniteSignal}
\end{assumption}

\begin{proposition} If Assumption \ref{assp:FiniteSignal} is satisfied, then individual learning implies common learning.
\end{proposition}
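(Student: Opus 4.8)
The plan is to verify the three conditions of Lemma \ref{lemm:MS} using a construction analogous to the conditionally-independent case, but replacing the ``multiply two uniform lower bounds'' step (which fails without independence) with a quantitative estimate that the conditional probability that agent $j$ $q'$-believes $\theta$, given agent $i$'s history, is bounded away from zero uniformly. Fix $\theta \in \Theta$ and $q \in (0,1)$; write $\{\theta\}$ as shorthand for $\{\theta\} \times \mathcal{X}^\infty$. For a threshold $q' \in (q,1)$ to be chosen, set $F_t = \{\theta\} \cap B_t^{q'}(\theta)$, the set of states where $\theta$ holds and both agents $q'$-believe it. Parts (a) and (b) of Lemma \ref{lemm:MS} are then easy: (a) is immediate since $q<q'$; (b) follows from individual learning, which gives $P_\theta(B_{it}^{q'}(\theta)) \to 1$ for each $i$, hence $P_\theta(F_t) = P_\theta(B_{1t}^{q'}(\theta) \cap B_{2t}^{q'}(\theta)) \to 1$ by inclusion-exclusion (no independence needed here, only that both individual events have probability close to $1$).

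The substantive work is Part (c): $F_t \subseteq B_{it}^q(F_t)$ for $i=1,2$. Expanding exactly as in the proof of Proposition \ref{prop:CommonLearning}, on $F_t$ agent $1$ already knows $B_{1t}^{q'}(\theta)$ holds, so it suffices to show that on $F_t$,
\[
P\bigl(B_{2t}^{q'}(\theta) \cap \{\theta\} \;\big|\; \mathcal{H}_{1t}\bigr) \;\geq\; q.
\]
We have $P(B_{2t}^{q'}(\theta) \cap \{\theta\} \mid \mathcal{H}_{1t}) = P_\theta(B_{2t}^{q'}(\theta) \mid \mathcal{H}_{1t}) \cdot P(\theta \mid \mathcal{H}_{1t})$, and the second factor exceeds $q'$ on $F_t \subseteq B_{1t}^{q'}(\theta)$. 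So the crux reduces to showing that for $t$ large,
\[
P_\theta\bigl(B_{2t}^{q'}(\theta) \;\big|\; \mathcal{H}_{1t}\bigr) \;\geq\; \frac{q}{q'} \qquad \text{(uniformly across agent-1 histories, } P_\theta\text{-a.s.)}.
\]
This is where finiteness of $\mathcal{X}^2$ (Assumption \ref{assp:FiniteSignal}) enters. The idea: by individual learning, $P_\theta(B_{2t}^{q'}(\theta))$ is close to $1$, but we need this conditional on agent $1$'s history, which is where correlation could in principle cause trouble. I expect the key lemma (this is the main obstacle) to be a statement of the form: because $\mathcal{X}^2$ is finite, the per-period information that agent $2$'s signal conveys about $\theta$ is bounded, so the likelihood-ratio process driving agent $2$'s posterior grows at a controlled rate, and agent $1$'s conditioning information cannot concentrate probability on the event that agent $2$ is ``badly informed'' ($B_{2t}^{q'}(\theta)^c$) without itself being very informative about $\theta$ in a way that contradicts the structure — or more directly, one uses a large-deviations / Azuma-type bound on the finitely-many-valued increments of agent $2$'s log-likelihood ratio to get a uniform (over histories of agent $1$) lower bound on $P_\theta(B_{2t}^{q'}(\theta) \mid \mathcal{H}_{1t})$. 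Concretely I would: (i) write agent $2$'s log-posterior-odds for $\theta$ against each alternative $\theta''$ as a sum of per-period increments $\ln(P^{?}_\theta(x_s^2)/P^{?}_{\theta''}(x_s^2))$; under $P_\theta$ these have positive mean by identifiability, and they are bounded (finitely many values); (ii) show that for any agent-1 history $h_{1t}$, the conditional law of $(x_1^2,\dots,x_t^2)$ under $P_\theta(\cdot\mid h_{1t})$ still makes agent $2$'s log-odds drift to $+\infty$ at a linear rate with a uniform exponential tail — using that $\mathcal{X}$ finite means the family of conditional laws $\{P_\theta(\cdot \mid h_{1t})\}$ lives in a compact set and each is mutually absolutely continuous with $P_\theta^2$ with densities bounded away from $0$ and $\infty$; (iii) conclude $P_\theta(B_{2t}^{q'}(\theta) \mid \mathcal{H}_{1t}) \to 1$ uniformly, so it exceeds $q/q'$ for $t$ large.

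Granting that uniform estimate, choose $q'$ with $q < (q')^2$ (e.g. $q' = \sqrt{q}$ as in the independent case), so that $P(B_{2t}^{q'}(\theta)\cap\{\theta\}\mid \mathcal{H}_{1t}) > q'\cdot(q/q') $ — adjusting constants, one picks $q'$ so the product of the two lower bounds exceeds $q$; taking $q' = \sqrt{q}$ and the uniform bound $P_\theta(B_{2t}^{q'}(\theta)\mid\mathcal{H}_{1t}) \geq \sqrt{q}$ for large $t$ gives the product $\geq q$ exactly as before. Symmetrically the same holds with roles of $1$ and $2$ swapped, establishing Part (c). Then Lemma \ref{lemm:MS} yields common learning of $\theta$, and since $\theta \in \Theta$ was arbitrary, agents commonly learn. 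The only place the argument genuinely departs from the proof of Proposition \ref{prop:CommonLearning} is the replacement of ``conditional independence makes $P_\theta(B_{2t}^{q'}(\theta)\mid\mathcal{H}_{1t}) = P_\theta(B_{2t}^{q'}(\theta))$'' by the uniform lower bound on this conditional probability, and it is precisely this uniform bound — which genuinely uses $\lvert \mathcal{X}^2 \rvert < \infty$ — that I expect to be the technical heart of the proof, following the argument in \citet{CEMS2008}.
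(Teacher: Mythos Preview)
Your framework via Lemma~\ref{lemm:MS} is right, and Parts (a) and (b) go through as you say. The gap is in your argument for Part (c). The claim that $P_\theta(B_{2t}^{q'}(\theta)\mid h_{1t})\to 1$ \emph{uniformly} across agent-1 histories does not follow from the compactness/bounded-density sketch. Under $P_\theta$, conditional on agent~1's period-$s$ signal being $i$, agent~2's period-$s$ signal has distribution $\pi^\theta(i,\cdot)/\phi^\theta(i)$ (row $i$ of $M_1^\theta$). Hence the conditional expected per-period increment of agent~2's log-odds for $\theta$ versus $\theta'$ is $\sum_j [\pi^\theta(i,j)/\phi^\theta(i)]\ln(\psi^\theta(j)/\psi^{\theta'}(j))$, which need not be positive for every $i$. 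Summing over periods, the conditional drift given $h_{1t}$ equals $\sum_j(\hat\phi_t M_1^\theta)_j\ln(\psi^\theta(j)/\psi^{\theta'}(j))$, which depends on agent~1's empirical frequency $\hat\phi_t$ and can be negative for atypical $\hat\phi_t$. Restricting to $F_t$ does not fix this: $P(\theta\mid h_{1t})\geq q'$ only places $\hat\phi_t$ in the KL-``Voronoi cell'' of $\phi^\theta$, not in a small neighborhood, and there is no reason $M_1^\theta$ maps that cell into the corresponding cell for $\psi^\theta$. Your bounded-density remark also fails: per-period density bounds compound multiplicatively over $t$ periods and become exponential in $t$, swamping any tail estimate.

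The paper (following \citet{CEMS2008}) takes a genuinely different route. Rather than defining the evident event through posterior beliefs, the argument works with the empirical signal frequencies $\hat\phi_t,\hat\psi_t$ as the relevant statistics. Agent~1's conditional expectation of agent~2's frequency (given $\theta$) is $\hat\phi_t M_1^\theta$, and iterated expectations evolve under the Markov matrix $M_{12}^\theta\equiv M_1^\theta M_2^\theta$, which has $\phi^\theta$ as its stationary distribution and is a \emph{contraction} on $\Delta(\mathcal X^1)$; finiteness of $\mathcal X^1,\mathcal X^2$ is exactly what makes this a map on a finite-dimensional simplex with a contraction property. The contraction guarantees that higher-order beliefs about empirical frequencies cannot drift away from $(\phi^\theta,\psi^\theta)$, blocking the unbounded ``contagion upward'' that drives the infinite-signal counterexample. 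This Markov-contraction argument on frequencies, rather than a direct uniform bound on conditional posterior-belief events, is the technical heart of the proof.
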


A brief idea of the proof follows. Define $\pi^\theta(ij)$ to be the probability of realization $(x^1_{t},x^2_{t})=(i,j)$ when the parameter is $\theta$, and define 
\[\phi^\theta(i) = \sum_{j \in \mathcal{X}^2} \pi^\theta (ij)\]
to be the marginal probability of signal $i$, with $\phi^\theta \equiv (\phi^\theta(i))_{i \in \mathcal{X}^1}$. Likewise define
\[\psi^\theta(j) = \sum_{i\in \mathcal{X}^1} \pi^\theta(ij)\]
to be the marginal probability of signal $j$, with $\psi^\theta \equiv (\psi^\theta(j))_{j \in \mathcal{X}^2}$. Then (by the results in Section \ref{sec:Doob}), individual learning follows whenever $\phi^\theta \neq \phi^{\theta'}$ and $\psi^\theta \neq \psi^{\theta'}$ for every $\theta \neq \theta'$. 

Define $\hat{\phi}_t$ to be the empirical frequency of agent $1$ signals
and $\hat{\psi}_t$ to be the empirical frequency of agent $2$ signals. Under the assumption of individual learning, empirical frequencies must converge to the theoretical frequencies, i.e., for each parameter $\theta$, $\hat{\phi}_t \rightarrow \phi^\theta$ and $\hat{\psi}_t \rightarrow \phi^\theta$ $P_\theta$-almost surely. Thus each agent eventually assigns a high probability to true $\theta$.

The crucial next step is establishing that when agent 1 assigns a high probability to $\theta$, he believes that agent 2 does as well (and vice versa). To see why this might be the case, let $M_1^\theta$ be the $\vert \mathcal{X}^1 \vert \times \vert \mathcal{X}^2 \vert$ matrix whose $(i,j)$-th entry is $\frac{\pi^\theta(ij)}{\phi^\theta(i)}$, i.e. the conditional probability (under $\theta$) that agent 2 observes $j$ given that agent 1 observed $i$, and define $M_2^\theta$ analogously. Then $\hat{\phi}_t M_1^\theta$ is agent 1's expectation of agent 2's realized frequencies (conditional on $\theta$), and $\hat{\phi}_t M_1^\theta M_2^\theta $ is agent 1's expectation of agent 2's expectation of agent 1's realized frequencies (again conditional on $\theta$). Observe (by algebra) that
\begin{align*}
\phi^\theta M_1^\theta &= \psi^\theta \\
\psi^\theta M_2^\theta &= \phi^\theta
\end{align*}
so $\phi^\theta M_1^\theta M_2^\theta = \phi^\theta$. Indeed the matrix $M_{12}^\theta \equiv M_1^\theta M_2^\theta$ is a Markov transition matrix on $\mathcal{X}^1$ with stationary distribution $\phi^\theta$, and  it is moreover a contraction mapping on $\Delta(\mathcal{X}^1)$. These properties together imply that the higher order beliefs cannot run away from the agent's first-order belief as they did in Example \ref{ex:EmailGameTwist}.

\section{Additional Exercises}

\begin{exercise}[G$^*$]
Let $\theta \sim \mathcal{N}(0,1)$ be an unknown parameter. Each agent $i=1,2$ observes $n$ signals $X_1^i, \dots, X_n^i$ where each 
\[X_m^i = \theta + \eps^i_m\]
with $\eps^i_m \sim \mathcal{N}(0,1)$ independent of $\theta$, independent across agents, and independent across signals. Suppose that the true value of $\theta$ is strictly positive, and let $E_p$ be the event that the two agents have common $p$-belief that $\theta$ is positive, where $p>1/2$. What is the probability of $E_p$ under the actual data-generating process?
\end{exercise}

\chapter{Model Uncertainty and Misspecification}

We have so far assumed that agents' model of the world is \emph{correctly specified}: Their prior belief over $\Theta$ assigns positive probability to the true parameter $\theta$ and they update to information correctly, i.e. with knowledge of the true signal generating distribution $(P_\theta)_{\theta \in \Theta}$. Some reasons to question this model of learning include:

\begin{itemize}
\item We see substantial and persistent disagreement between individuals, but Sections \ref{sec:Doob} and \ref{sec:Merging} imply that agents will eventually hold similar beliefs. 
\item It is unclear how agents came to know $(P_\theta)_{\theta \in \Theta}$.
\item The assumption that agents perceive only one signal-generating distribution $(P_\theta)_{\theta \in \Theta}$ as possible means that agents  never abandon their model, even as evidence accumulates against it. As we discuss  in Section \ref{sec:BinaryACY}, this dogmatism has some strange implications.
\end{itemize}

This section relaxes the standard learning model by allowing for \emph{model uncertainty} (Section \ref{sec:ModelUncertainty}) and \emph{model misspecification} (Section \ref{sec:Misspecification}). In the former class of models, agents hold non-degenerate beliefs over the signal generating distribution. In the second, agents assign probability zero to the true parameter.

\section{Model Uncertainty} \label{sec:ModelUncertainty}

\subsection{Motivation} \label{sec:BinaryACY}

Recall the binary setting from Section \ref{sec:LearningExample}: There is an unknown parameter $\theta \in \{A,B\}$, and each period $t\in \mathbb{Z}_+$ a signal is generated iid according to
\[\begin{array}{ccc}
& a & b\\
A & q& 1-q \\
B & 1-q & q
\end{array}\]
where $q>1/2$. Agents may hold different (non-degenerate) prior beliefs $\pi_i \in \Delta(\Theta)$ about the parameter, but the value of $q$ is common knowledge.

In Section \ref{sec:LearningExample}, we observed that these agents almost surely learn the true parameter as the sample size grows large, and moreover their disagreement about the parameter vanishes. This is because (1) agents assign probability 1 to the event in which the limiting fraction of $a$-realizations is either $q$ or $1-q$, and (2)  the parameter is identified, so for either of these limiting frequencies agents (eventually) assign probability 1 to the correct parameter value. 

What happens along sequences in which the limiting frequency is neither $(q,1-q)$ nor $(1-q,q)$? Although agents assign probability zero to this event, sampling variation can explain any empirical frequency of $a$ and $b$ realizations (however surprising) in finite sequences. Thus Bayes' rule yields well-defined posterior beliefs.

For example, suppose $q \in (1/2, 1)$ and let $\bold{x}$ be the (infinite) sequence of $a$-realizations.  For any $t$, the unconditional probability of the event that all $t$ realizations are $a$ is 
\[\pi^i_A \cdot q^t + (1-\pi^i_A)\cdot (1-q)^t\]
where $\pi^i_A$ denotes the prior probability of $A$. This expression converges to zero as $t$ grows large but is strictly positive for every $t$. The agent's limiting belief along $\bold{x}$ can thus be computed to be 
\[\lim_{t \rightarrow \infty} P^i(\theta = A \mid \bold{x}_t) = \lim_{t \rightarrow \infty} \frac{1}{1 + \frac{1-\pi^i_A}{\pi^i_A} \left(\frac{1-q}{q}\right)^{t}} = 1\]

So the agent is increasingly convinced that the state is $A$, even as the observed sequence grows increasingly unlikely under the agent's model.  Even more striking, as signals accumulate in the frequency $(1,0)$,  the agent becomes increasingly confident that future signals will appear in the frequency $(q,1-q)$! These conclusions are a consequence of the agent's dogmatic view of the signal generating distribution---he is unwilling to abandon this model even as mounting evidence points to its error.

\subsection{Expanded Framework}

We can introduce \emph{model uncertainty} into this learning model by expanding the state space to $\Omega = \Theta \times \Gamma \times \mathcal{X}^\infty$ where the new parameter $\gamma$ indexes the signal-generating distribution, and the parameters $\theta$ and $\gamma$ jointly determine a family $(P_{\theta,\gamma})_{\theta \in \Theta, \gamma \in \Gamma}$ of conditional distributions over signals. The key distinction between $\theta$ and $\gamma$ is that only $\theta$ is payoff-relevant. We'll use $P^i$ to denote agent $i$'s subjective prior belief on $\Omega$, which is common knowledge to all agents.

If people do not in fact have dogmatic beliefs about the signal-generating distribution, a natural question is whether modeling agents in this way is still a good abstraction, in the sense that the qualitative insights of this model are robust to introduction of a small amount of model uncertainty. \citet{ACY2015} demonstrate one important sense in which this is not so.

\subsection{Failure of Asymptotic Agreement}

For any infinite sequence $\bold{x} \in \mathcal{X}^\infty$, write
\[\phi_{\theta,t}^i \equiv P^i(\theta \mid x_1, \dots x_t )\]
for the posterior probability that agent $i$ assigns to $\theta$ following the first $t$ realizations of the sequence $\bold{x}$. Further define
\begin{equation} \label{eq:AsymptoticBelief}
\phi_{\theta,\infty}^i(\bold{x}) = \lim_{t \rightarrow \infty} \phi^i_{\theta,t}(\bold{x})
\end{equation}
to be the asymptotic posterior probability that agent $i$ assigns to $\theta$ along sequence $\bold{x}$. 

\begin{definition} Say that \emph{asymptotic agreement} occurs if for each agent $i$,
\[P^i(\phi^1_{\theta,\infty} = \phi^2_{\theta,\infty}) =1 \quad \forall \theta\in \Theta\]
\end{definition}
\noindent That is, both agents believe their asymptotic beliefs will be identical.

When agents hold a dogmatic belief about the signal-generating distribution, asymptotic agreement occurs whenever the parameter is identified (Proposition \ref{prop:PosteriorConsistency}). But \citet{ACY2015} show that asymptotic agreement can fail when an arbitrarily small amount of model uncertainty is introduced. The basic idea behind this fragility can be seen through this following example from their paper. 

Let $\Theta = \{A,B\}$, with each agent $i$'s prior about the parameter denoted by $\pi^i \equiv (\pi^i_A,\pi^i_B)$. Agent $i$ believes that signals are generated iid from the set $\{a,b\}$ with state-dependent distribution
\[\begin{array}{ccc}
& a & b\\
A & \gamma & 1-\gamma \\
B & 1-\gamma & \gamma
\end{array}\]
where $\gamma$ is unknown and distributed  according to $G^i$ with density
\[g^i(\gamma) = \left\{ \begin{array}{cl}
\eps + \frac{1-\eps}{\lambda} & \mbox{if } \gamma \in (\gamma^i - \lambda/2, \gamma^i + \lambda/2) \\
\eps & \mbox{otherwise}
\end{array} \right.\]
for some $\gamma^i >1/2$. Assume that  $\gamma^1$ and $\gamma^2$ are different from one another. This density is depicted in Figure \ref{fig:DensityACY}.

 \begin{figure}[h]
				\centering
				\includegraphics[scale=.7]{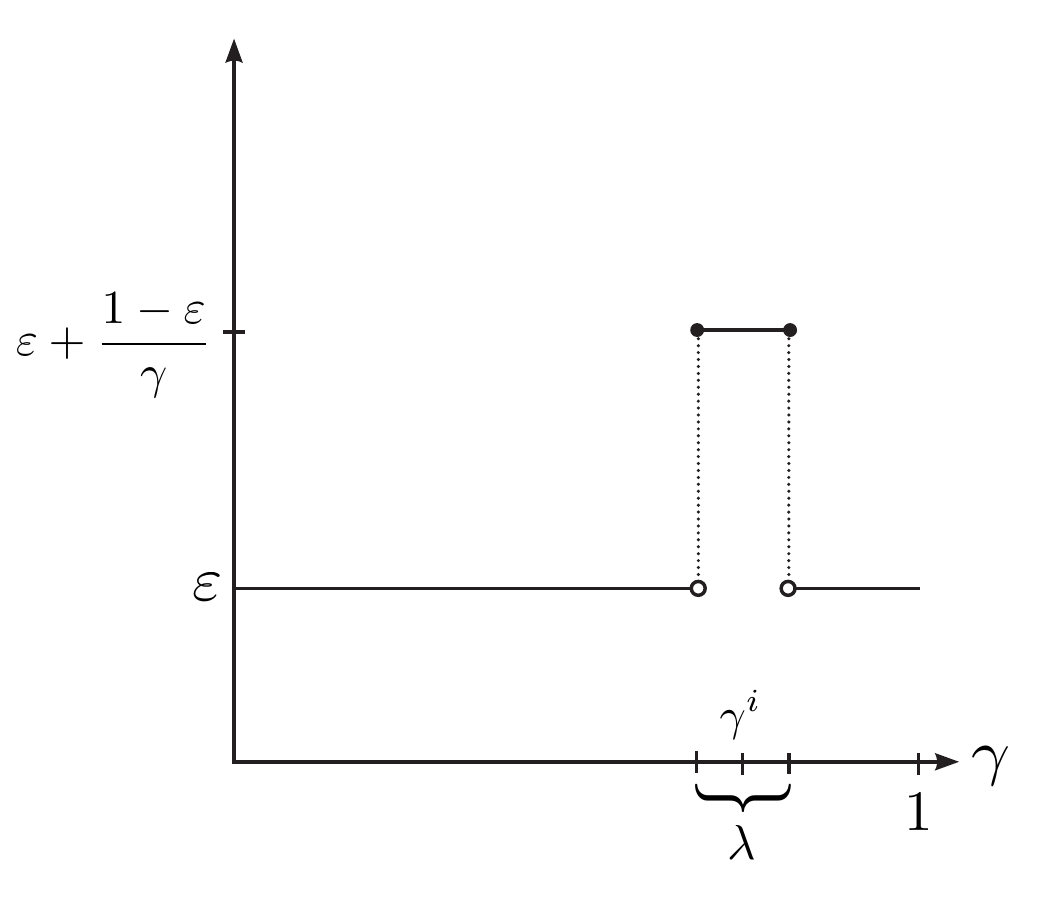}
				\caption{Depiction of $g^i$.}	 \label{fig:DensityACY}			
\end{figure}

The limit as $\eps \rightarrow 0$ and $\lambda \rightarrow 0$ returns the model in which each agent $i$ dogmatically believes the signal structure to be given by
\[\begin{array}{ccc}
& a & b\\
A & \gamma^i & 1-\gamma^i \\
B & 1-\gamma^i & \gamma^i
\end{array}\]
At this limit, asymptotic agreement holds.

Now suppose $\eps$ and $\lambda$ are strictly positive and $\lambda$ is small (specifically, let $\lambda < \vert \gamma^1 - \gamma^2 \vert$ and suppose $\gamma^i - \frac{\lambda}{2} > \frac12$ for each agent $i$). As in Section \ref{sec:LearningExample}, define
\[n_t(\bold{x}) \equiv \#\{ 1 \leq t' \leq t : \bold{x}_{t'} =a\} \quad \forall \bold{x} \in \mathcal{X}^\infty \]
to be the count of $a$-realizations among the first $t$ realizations of $\bold{x}$, and let 
\[\rho(\bold{x})= \lim_{t \rightarrow \infty} n_t(\bold{x})/t \quad \forall \bold{x} \in \mathcal{X}^\infty\]
be the asymptotic frequency of $a$-realizations along $\bold{x}$.

The following lemma provides a simple expression for the agent's asymptotic belief (\ref{eq:AsymptoticBelief}) on the set of sequences $\widetilde{\mathcal{X}}^\infty \subseteq \mathcal{X}^\infty$ where the limiting frequency $\rho(\bold{x})$ exists.

\begin{lemma}[\citet{ACY2015}] \label{lemm:Asymptotic} For every sequence $\bold{x} \in \widetilde{\mathcal{X}}^\infty$, 
\[
\phi^i_{A, \infty}(\bold{x})  = \left(1 + \frac{1-\pi^i_A}{\pi^i_A} \cdot \frac{f^i_B(\rho(\bold{x}),1-\rho(\bold{x}))}{f^i_A(\rho(\bold{x}),1-\rho(\bold{x}))}\right)^{-1}
\]
where $\frac{f^i_B(\rho(\bold{x}),1-\rho(\bold{x}))}{f^i_A(\rho(\bold{x}),1-\rho(\bold{x}))}$ is the asymptotic likelihood ratio under agent $i$'s subjective model.
\end{lemma}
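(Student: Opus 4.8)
The plan is to derive the asymptotic likelihood ratio along a sequence $\bold{x} \in \widetilde{\mathcal{X}}^\infty$ and then plug it into Bayes' rule. Since signals are conditionally iid given $\theta$ (after integrating out $\gamma$), the posterior odds after $t$ periods depend on the observed history only through the count $n_t(\bold{x})$ of $a$-realizations. Writing $f^i_A(a,b) \equiv \int_\Gamma \gamma\, dG^i(\gamma)$-type marginal likelihoods more carefully: the marginal probability that agent $i$ assigns to a history with $n$ copies of $a$ and $t-n$ copies of $b$ given $\theta=A$ is $\int_\Gamma \gamma^n (1-\gamma)^{t-n} dG^i(\gamma)$, and given $\theta=B$ it is $\int_\Gamma (1-\gamma)^n \gamma^{t-n} dG^i(\gamma)$. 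So by Bayes' rule,
\[
\phi^i_{A,t}(\bold{x}) = \left(1 + \frac{1-\pi^i_A}{\pi^i_A} \cdot \frac{\int_\Gamma \gamma^{t-n_t}(1-\gamma)^{n_t} dG^i(\gamma)}{\int_\Gamma \gamma^{n_t}(1-\gamma)^{t-n_t} dG^i(\gamma)}\right)^{-1}.
\]

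First I would take logs and write each integrand as $\exp\{t \cdot \psi_t(\gamma)\}$ where, for the denominator, $\psi_t(\gamma) = \frac{n_t}{t}\ln\gamma + (1-\frac{n_t}{t})\ln(1-\gamma)$. Along a sequence in $\widetilde{\mathcal{X}}^\infty$ we have $n_t/t \to \rho := \rho(\bold{x})$, so $\psi_t(\gamma) \to \psi(\gamma) := \rho \ln\gamma + (1-\rho)\ln(1-\gamma)$ uniformly on $[\gamma^i - \lambda/2, \gamma^i + \lambda/2]$ (and more generally on compact subsets of $(0,1)$). The function $\psi$ is strictly concave and maximized at $\gamma = \rho$; the key analytic step is a Laplace/Varadhan-type estimate showing that
\[
\frac{1}{t}\ln \int_\Gamma \exp\{t\, \psi_t(\gamma)\}\, dG^i(\gamma) \longrightarrow \sup_{\gamma \in \operatorname{supp}(G^i)} \psi(\gamma),
\]
and similarly for the numerator with $\rho$ replaced by $1-\rho$. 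Here I would use that $G^i$ has a density bounded below by $\eps>0$ on all of $(0,1)$ and bounded above, so the integral is sandwiched between $\eps$ times and $(\eps + \frac{1-\eps}{\lambda})$ times the corresponding Lebesgue integrals, and the exponential rate is unaffected by these polynomial-in-$t$ prefactors. Taking the difference of the numerator and denominator rates gives the limit of $\frac1t \ln(\text{likelihood ratio})$.

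The subtlety — and the main obstacle — is that the statement of the lemma does not claim the likelihood ratio blows up exponentially; it claims it converges to the finite ratio $f^i_B(\rho,1-\rho)/f^i_A(\rho,1-\rho)$, where $f^i_\theta(\rho,1-\rho)$ should be read as the value of agent $i$'s marginal (over $\gamma$) density of a single signal evaluated at the empirical frequency $\rho$, i.e. $f^i_A(\rho,1-\rho) = \int_\Gamma \gamma^{\rho}(1-\gamma)^{1-\rho}$-weighted object — more precisely one must identify the right notion so that the \emph{ratio} of the two Laplace integrals converges rather than each one separately. The clean way is: both integrals concentrate, as $t\to\infty$, on the same maximizer $\hat\gamma(\rho) = \arg\max_{\gamma\in\operatorname{supp}G^i}\psi(\gamma)$ (which is $\rho$ if $\rho$ lies in $\operatorname{supp}(G^i)$, and otherwise the boundary point of $\operatorname{supp}(G^i)$ closest to $\rho$), so the exponential rates \emph{cancel} in the ratio and one is left with the ratio of the subexponential (Gaussian-integral or density-value) prefactors — which is exactly $f^i_B/f^i_A$ evaluated appropriately at $\rho$. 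So the concrete steps are: (i) reduce the posterior to the likelihood-ratio expression via Bayes' rule and conditional-iid; (ii) rewrite as a ratio of Laplace integrals; (iii) prove concentration at $\hat\gamma(\rho)$ using strict concavity of $\psi$ and the two-sided density bounds on $G^i$; (iv) extract the cancellation of exponential rates and identify the surviving ratio with $f^i_B(\rho,1-\rho)/f^i_A(\rho,1-\rho)$; (v) substitute back into the Bayes expression and take $t\to\infty$, using continuity to pass the limit through, which yields the claimed formula for $\phi^i_{A,\infty}(\bold{x})$.
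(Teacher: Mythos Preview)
The paper does not actually prove this lemma; it is cited from \citet{ACY2015}, and the sentence immediately following the statement tells you what the notation means in the running example: $f^i_A(\rho,1-\rho)=g^i(\rho)$ and $f^i_B(\rho,1-\rho)=g^i(1-\rho)$. Your uncertainty about what $f^i_\theta$ denotes is therefore the real gap --- it is neither a single-signal marginal density nor a Laplace prefactor, but the subjective density (under state $\theta$, in agent $i$'s model) of the \emph{asymptotic empirical frequency}. Conditional on $\theta=A$ and on $\gamma$, the strong law gives $n_t/t \to \gamma$ a.s., so the limiting frequency is distributed as $\gamma \sim G^i$; conditional on $\theta=B$ it is distributed as $1-\gamma$. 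That is the whole content of the $f^i_\theta$ notation here.

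Your Laplace machinery would eventually work, but you are missing a one-line reduction that makes steps (iii)--(iv) nearly trivial. Substitute $\gamma \mapsto 1-\gamma$ in the $B$-integral; then
\[
\frac{P^i(\bold{x}_t \mid B)}{P^i(\bold{x}_t \mid A)} \;=\; \frac{\displaystyle\int_0^1 \gamma^{n_t}(1-\gamma)^{t-n_t}\, g^i(1-\gamma)\, d\gamma}{\displaystyle\int_0^1 \gamma^{n_t}(1-\gamma)^{t-n_t}\, g^i(\gamma)\, d\gamma}.
\]
Numerator and denominator now share the \emph{same} kernel, which after normalization is a $\mathrm{Beta}(n_t+1,\,t-n_t+1)$ density concentrating at $n_t/t \to \rho$. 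Since $g^i$ is bounded above by $\eps + (1-\eps)/\lambda$, bounded convergence gives the ratio $\to g^i(1-\rho)/g^i(\rho)$ at continuity points of $g^i$. There is no need to track exponential rates or match Gaussian prefactors: the ``cancellation'' you anticipated is exact once the change of variable is made, and your identification step becomes a direct read-off rather than an asymptotic matching.
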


In the running example of this section, the asymptotic likelihood ratio can be simplified to
\begin{align*}
\frac{f_B^i(\rho, 1-\rho)}{f_A^i(\rho,1-\rho)} = \frac{g^i(1-\rho)}{g^i(\rho)}
\end{align*}
This ratio takes on either of three possible values. For any $\rho \in (\gamma^i - \lambda/2, \gamma^i + \lambda/2)$, 
\[\frac{g^i(1-\rho)}{g^i(\rho)} = \frac{\eps \lambda}{1-\eps(1-\lambda)} \]
which converges to zero as $\eps$ and $\lambda$ grow small (implying $\phi^i_{A,\infty} \rightarrow 1$). By a mirror argument, if the limiting frequency of $a$-realizations is some $\rho \in (1-\gamma^i - \lambda/2, 1-\gamma^i + \lambda/2)$, then 
\[\frac{g^i(1-\rho)}{g^i(\rho)} = \frac{1-\eps(1-\lambda)}{\eps \lambda} \]
which converges to $\infty$ as $\eps$ and $\lambda$ grow small (implying $\phi_{A,\infty} \rightarrow 0$). 
For all other limiting frequencies, the asymptotic likelihood ratio is simply $\frac{g^i(1-\rho)}{g^i(\rho)}
=1$. These unlikely signal sequences are considered possible but uninformative about the parameter.

 Applying Lemma \ref{lemm:Asymptotic}, Figure \ref{fig:AsymptoticPosterior} depicts agent $i$'s asymptotic posterior as a function of the limiting signal frequency.

\begin{figure}[H]
				\centering
				\includegraphics[scale=0.75]{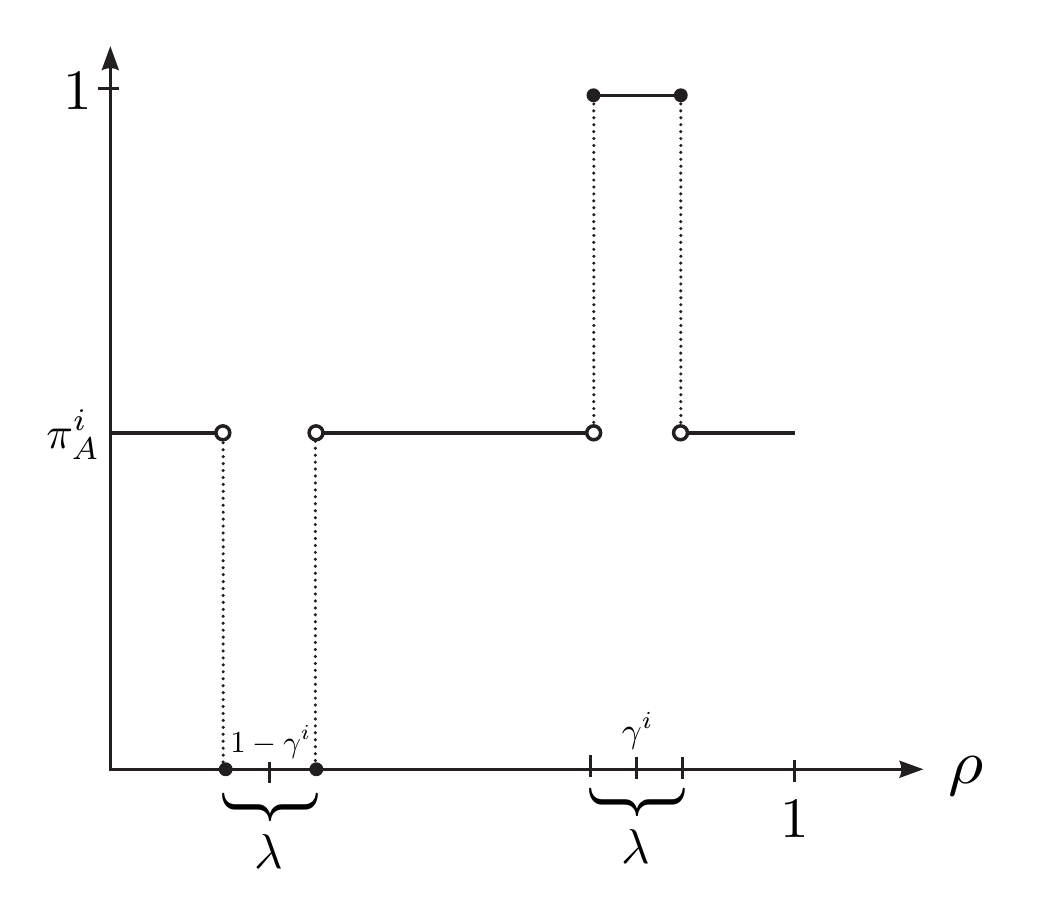}
				\caption{Agent $i$'s asymptotic posterior in the limit as $\eps \rightarrow 0$.}	 \label{fig:AsymptoticPosterior}			
\end{figure}

 In the limit as $\eps \rightarrow 0$ and $\lambda \rightarrow 0$, each agent $i$ is increasingly sure that the limiting frequency $\rho$ will either be close to $\gamma^i$ or $1-\gamma^i$, so he believes that he will (approximately) learn the parameter.  But when a sequence of signals has a long-run frequency that leads agent 1 to learn $\theta=A$ or $\theta=B$, agent 1 knows that this sequence has led agent 2 to consider the signal uninformative, in which case agent 2's limiting belief is the same as his prior.  Likewise whenever agent 2 believes the signal sequence to be informative about $\theta$, he knows that agent 1 considers the signal sequence to be uninformative. So not only does asymptotic agreement fail, but we have the stronger conclusion that the limiting beliefs $\phi^1_\infty$ and $\phi^2_\infty$ are different on \emph{all} sample paths. Figure \ref{fig:AsymptoticDisagreement} depicts $\vert \phi^1_{A,\infty} - \phi^2_{A,\infty}\vert$ as a function of the limiting signal frequency.
 
 To summarize, asymptotic agreement holds in the limiting model $\eps=0,\lambda=0$ (with no model uncertainty), but fails when the model is perturbed to include an arbitrarily small amount of model uncertainty via $\eps>0,\lambda>0$. 

\begin{remark} As in Section \ref{sec:Learning}, there is no ground truth---whether asymptotic agreement does or doesn't hold is determined solely with respect to the agents' subjective beliefs.
\end{remark}

\begin{remark} In this example, the two agents' prior beliefs on $\Theta \times \Gamma$ are absolutely continuous with respect to one another.  So Proposition \ref{prop:BlackwellDubins} tells us that their beliefs about future signal realizations will eventually merge. But $(\theta,\gamma)$ is not identified: For example, $(A,1)$ and $(B,0)$ identically lead to a degenerate distribution on the infinite sequence of $a$-realizations. Thus asymptotic agreement about the expanded parameter $(\theta,\gamma)$ is not guaranteed from the results of Sections \ref{sec:Doob} and \ref{sec:Merging}.
\end{remark}

\begin{figure}[h]
				\centering
				\includegraphics[scale=.8]{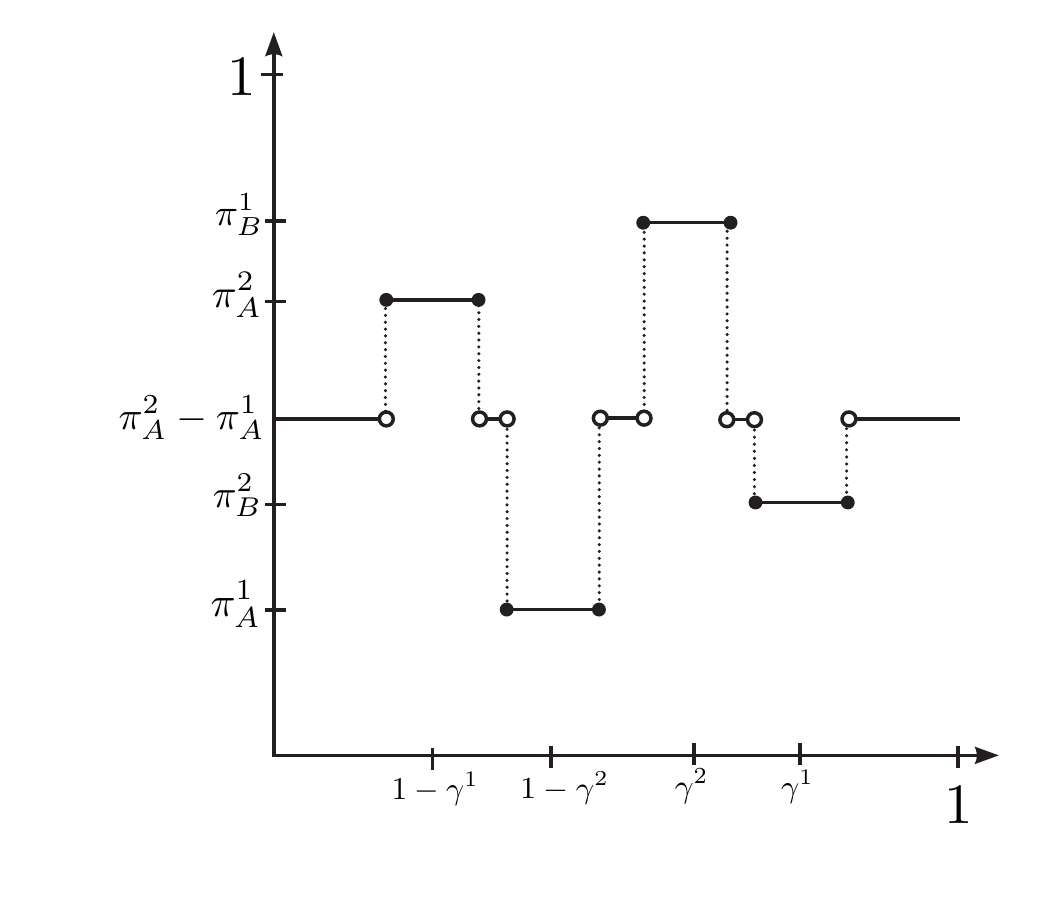}
				\caption{Asymptotic disagreement $\vert \phi^1_{A,\infty} - \phi^2_{A,\infty}\vert$ in the limit as $\eps \rightarrow 0$, for parameter values $ \pi_B^1 > \pi_A^2 > \pi_B^2 > \pi_A^1$.}	 \label{fig:AsymptoticDisagreement}					
\end{figure}

\section{Misspecified Learning} \label{sec:Misspecification}

Next suppose the agent is not simply uncertain about the signal-generating distribution, but in fact rules out the true distribution. 

\begin{example} \label{ex:Misspecified} Let $\Theta = \{A,B,C\}$ where the conditional distributions over signal realizations $\{a,b\}$ are given as follows:
\[\begin{array}{ccc}
& a & b\\
A & 4/5 & 1/5 \\
B & 1/2 & 1/2 \\
C & 2/3 & 1/3
\end{array}\]
The agent has a uniform prior on $\{A,B\}$, but the true parameter is $C$. Given repeated independent observations from the distribution $(2/3,1/3)$, will the agent's beliefs converge and if so to what limiting belief? \end{example}

\subsection{Role of KL Divergence} \label{sec:Berk}
Intuitively, we may expect that the agent's beliefs converge to certainty of the parameter whose distribution is ``closer" to the true distribution. The right notion of closeness here turns out to be KL Divergence (Section \ref{sec:KL}).

Here is a heuristic argument for how KL divergence emerges. Suppose the agent only considers parameter values  $\theta =A $ and $\theta=B$ to be possible, where the prior probability of $\theta=A$ is $\pi \in (0,1)$. We'll use $f_\theta(x)$ to denote the conditional probability of signal realization $x$ when the parameter is $\theta$. The agent observes a sequence of signals drawn iid according to $f_{\theta^*}$, where the ``true" parameter value $\theta^*$ may be different from both $A$ and $B$.

For any signal sequence $\bold{x}_t = (x_1, \dots, x_t)$, the conditional probability of $A$ can be rewritten
\begin{align*}
\mathbb{P}&(\theta = A  \mid \bold{x}_t)  \\
& =  \left(1 +  \frac{1-\pi}{\pi}  \left(\prod_{i=1}^t  \frac{f_B(x_i)}{f_A(x_i)}\right)\right)^{-1}  \\
& = \left(1 + \frac{1-\pi}{\pi} \left(\prod_{i=1}^t \frac{f_B(x_i)/f_{\theta^*}(x_i)}{f_A(x_i)/f_{\theta^*}(x_i)}\right)\right)^{-1}  \\
&= \left(1 + \frac{1-\pi}{\pi} \exp\left(  - \log \left(\prod_{i=1}^t  \frac{f_{\theta^*}(x_i)}{f_B(x_i)}\right) +  \log \left(\prod_{i=1}^t  \frac{f_{\theta^*}(x_i)}{f_A(x_i)}\right)\right)\right)^{-1}  \\
 &= \left(1 + \frac{1-\pi}{\pi} \exp\left(-n \cdot \left(\frac1t \sum_{i=1}^t \log \left(\frac{f_{\theta^*}(x_i)}{f_B(x_i)}\right) -  \frac1t \sum_{i=1}^t\log \left(\frac{f_{\theta^*}(x_i)}{f_A(x_i)}\right)\right)\right)\right)^{-1}  
 \end{align*}
 and for large $t$ this final display is approximately equal to
 \begin{equation}
  \left(1 + \frac{1-\pi}{\pi} \exp\left(-t \cdot \left(D(f_{\theta^*} \| f_B) - D(f_{\theta^*} \| f_A)\right)\right)\right)^{-1} \label{eq:KLBerk}
\end{equation}
If $\theta^* \in \{A,B\}$, then either $D(f_{\theta^*}\| f_A) =0 < D(f_{\theta^*}\| f_B)$ (in which case the expression in (\ref{eq:KLBerk}) converges to 1) or $D(f_{\theta^*}\| f_B)=0 < D(f_{\theta^*}\| f_A)$ (in which case the expression in (\ref{eq:KLBerk}) converges to 0). In either case beliefs converge to certainty of the true parameter, as previously implied by Proposition \ref{prop:PosteriorConsistency} (Section \ref{sec:Doob}).

Suppose now that $\theta^* \notin \{A,B\}$. Proposition \ref{prop:PosteriorConsistency} no longer applies:  \citet{Doob}'s consistency result is with respect to a $P$-measure 1 set of sequences, (where $P$ is the agent's prior on $\Theta \times \mathcal{X}^\infty$), but in this example $\theta^*$ falls in the $P$-measure zero set on which consistency is not guaranteed. Indeed, in Section \ref{sec:Doob} we made no reference to a ``true" distribution---consistency was demonstrated within the agent's subjective model.

But (\ref{eq:KLBerk}) is useful even when $\theta^*$ has zero probability under the agent's prior. Specifically, when $D(f_{\theta^*}\| f_A) < D(f_{\theta^*}\| f_B)$, then (\ref{eq:KLBerk}) converges to 1 as $t\rightarrow \infty$, yielding certainty of  $\theta=A$, and when $D(f_{\theta^*}\| f_A) > D(f_{\theta^*}\| f_B)$, then (\ref{eq:KLBerk}) converges to zero as $t \rightarrow \infty$. So the agent's beliefs concentrate on the parameter that induces a distribution over signals that is closest in Kullback-Liebler divergence to the true distribution.

\citet{Berk1966} establishes this result more generally. We'll use the notation of Section \ref{sec:LearningFramework}, introducing $\theta^*$ as new notation for the true parameter, and assuming that the observed signals are drawn iid according to the density $f_{\theta^*}$ (with $P_{\theta^*}$ denoting the induced measure on $\mathcal{X}^\infty$). To simplify exposition, assume that $\Theta$ is finite.

\begin{proposition}[\citet{Berk1966}] \label{prop:Berk} 
Let
\[A \equiv \argmin_{\theta \in Supp(P)} D(f_{\theta^*} \| f_\theta)  \]
be the set of parameters in the support of the agent's prior that minimize KL divergence to the true distribution.
Then
\[\lim_{t \rightarrow \infty} P(A \mid X_1, \dots, X_t) =1 \quad P_{\theta^*}\mbox{-a.s.}\]
\end{proposition}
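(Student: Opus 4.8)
\textbf{Proof proposal for Proposition \ref{prop:Berk} (Berk's theorem).}

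The plan is to mimic the heuristic computation that precedes the statement, but carried out for an arbitrary finite support rather than just two points. First I would reduce to comparing each non-minimizing parameter against a fixed minimizer. Let $\theta_0 \in A$ be any element of the argmin set, write $D_\theta \equiv D(f_{\theta^*} \| f_\theta)$, and let $D^* = D_{\theta_0}$ be the minimal value. For each $\theta \in Supp(P)$ define the empirical log-likelihood-ratio process
\[
L_t(\theta) \equiv \frac{1}{t}\sum_{i=1}^t \log\frac{f_{\theta^*}(X_i)}{f_\theta(X_i)}.
\]
Because the $X_i$ are iid $f_{\theta^*}$ and (invoking an integrability hypothesis, which I would note is needed — e.g.\ $D_\theta < \infty$ and $\mathbb{E}_{\theta^*}|\log(f_{\theta^*}/f_\theta)| < \infty$) the strong law of large numbers gives a $P_{\theta^*}$-measure-one set on which $L_t(\theta) \to D_\theta$ simultaneously for all $\theta$ in the (finite) support. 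Fix attention to that set for the remainder of the argument.

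Next I would write Bayes' rule in the form used for (\ref{eq:KLBerk}). For any measurable $B \subseteq Supp(P)$,
\[
P(B \mid X_1,\dots,X_t) = \frac{\sum_{\theta \in B} P(\theta)\, e^{-t L_t(\theta)}}{\sum_{\theta \in Supp(P)} P(\theta)\, e^{-t L_t(\theta)}},
\]
after multiplying numerator and denominator by $\prod_{i=1}^t f_{\theta^*}(X_i)$, which cancels. Then I would show $P(A^c \mid X_1,\dots,X_t) \to 0$. Bounding the numerator of $P(A^c\mid\cdot)$ by $|Supp(P)| \cdot \max_{\theta \notin A} P(\theta) e^{-tL_t(\theta)}$ and the denominator below by $P(\theta_0) e^{-tL_t(\theta_0)}$, we get
\[
P(A^c \mid X_1,\dots,X_t) \le \frac{|Supp(P)|}{P(\theta_0)} \max_{\theta \notin A} e^{-t(L_t(\theta) - L_t(\theta_0))}.
\]
For each $\theta \notin A$ we have $L_t(\theta) - L_t(\theta_0) \to D_\theta - D^* > 0$, so for $t$ large the exponent is at most $-t\delta$ with $\delta = \tfrac12 \min_{\theta\notin A}(D_\theta - D^*) > 0$ (here finiteness of the support and strictness of the minimum over $A^c$ are essential); hence the bound $\to 0$. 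Therefore $P(A \mid X_1,\dots,X_t) = 1 - P(A^c\mid\cdot) \to 1$ on the full-measure event, which is the claim.

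The main obstacle I expect is not the algebra but the integrability/finiteness bookkeeping: the SLLN applied to $\log(f_{\theta^*}/f_\theta)$ requires $\mathbb{E}_{\theta^*}[\log^+(f_{\theta^*}/f_\theta)]<\infty$ (the negative part is handled by $D_\theta$ being well defined, possibly $+\infty$), and one must handle the case $D_\theta = +\infty$ for some $\theta \notin A$ separately — there $e^{-tL_t(\theta)}/e^{-tL_t(\theta_0)} \to 0$ still holds because $L_t(\theta)\to+\infty$, so such $\theta$ only help. I would also remark that the finite-$\Theta$ assumption is what lets me pass from "each non-minimizer is eventually beaten" to "the max over non-minimizers is eventually beaten" without a uniformity argument; the general case (mentioned in Berk's original paper) needs compactness and semicontinuity of $\theta \mapsto D_\theta$, which I would flag but not pursue since the proposition as stated assumes $\Theta$ finite.
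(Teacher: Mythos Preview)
The paper does not actually prove this proposition; it states it (attributing it to \citet{Berk1966}) immediately after the heuristic two-parameter computation culminating in display (\ref{eq:KLBerk}), and then moves on to an example. Your proposal is a correct formalization of that heuristic for the finite-support case the paper assumes: the Bayes-rule rewriting, the SLLN step, and the exponential-rate comparison are all sound, and your handling of the $D_\theta = +\infty$ edge case and the integrability caveat are appropriate remarks. Since the paper offers no proof to compare against, your argument stands on its own as a faithful extension of the paper's heuristic from two parameters to finitely many.
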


\begin{example} Returning to Example \ref{ex:Misspecified}, since  
\begin{align*}
D(f_C \| f_A ) = (2/3) \cdot \log\left(\frac{2/3}{4/5}\right) +(1/3) \cdot \log\left(\frac{1/3}{1/5}\right) \approx 0.021 \\ D(f_C \| f_B) = (2/3) \cdot \log\left(\frac{2/3}{1/2}\right) +(1/3) \cdot \log\left(\frac{1/3}{1/2}\right) \approx 0.025 \end{align*}
Proposition \ref{prop:Berk} implies that the agent's beliefs converge to certainty of $\theta=A$.
\end{example}

\subsection{Berk Nash Equilibrium} 
Standard equilibrium concepts in game theory assume that players best-respond to correct and common beliefs. \citet{EspondaPouzo} proposes a new equilibrium concept (modifying Nash equilibrium) that allows players to be misspecified.  As this definition can be applied also within a single-agent setting, and as the notation is substantially lighter in this case, we start by defining Berk Nash equilibrium with one agent.

 \subsubsection{Single Agent Settings} 
 There is a finite set of payoff-relevant states $\Omega$, a finite set of signal realizations $\mathbb{S}$, and a finite set of actions $\mathbb{A}$. The agent holds a prior $p$ over $\Omega \times \mathbb{S}$. Additionally, there is a finite set of consequences $\mathbb{Y}$, which are determined by the agent's action and the state via a feedback function $f : \mathbb{A} \times \Omega \rightarrow \mathbb{Y}$. The agent's payoff function is $u : \mathbb{A} \times \mathbb{Y} \rightarrow \mathbb{R}$.

The timing is as follows. First the agent chooses a strategy $\sigma: \mathbb{S} \rightarrow \Delta(\mathbb{A})$ mapping the observed signal into a distribution over actions. Then, the state and signal $(\omega, s)$ are drawn according to $p$, and the action $\sigma(s)$ is implemented. Finally, the consequence $y$ is determined given the action and state $(a,\omega)$, and the agent obtains payoff $u(a,y)$.

There is an  \emph{objective} mapping $Q: \mathbb{S} \times \mathbb{A} \rightarrow \Delta(\mathbb{Y})$ from actions and signals into distributions over consequences, where
\[Q(y \mid s,a) = \sum_{\omega : f(\omega,a) =y} p(\omega \mid s) \quad \forall (y,s,a).\]
This is the conditional distribution over consequences that a Bayesian agent with knowledge of $f$, the action $a$, and the signal realization $s$ would expect.

The agent does not know $Q$ (or $f$). His \emph{subjective model} $\mathcal{Q}= \langle \Theta, (Q_\theta)_{\theta \in \Theta} \rangle$ is a parametrized family of mappings $Q_{\theta} : \mathbb{S} \times \mathbb{A} \rightarrow \Delta(\mathbb{Y})$. 

\begin{definition} The agent is \emph{correctly specified} if there exists $\theta \in \Theta$ such that $Q_{\theta}(\cdot \mid s,a) = Q(\cdot \mid s,a)$ for all $(s,a) \in \mathbb{S} \times \mathbb{A}$; otherwise the agent is \emph{misspecified}. 
\end{definition}

The following example is adapted from \citet{EspondaPouzo}:

\begin{example} A monopolist  chooses a price $a$, which together with a random shock $\omega \sim \mathcal{N}(0,1)$ determines demand
\[y = f(a,\omega) = \phi(a) + \omega.\]
The monopolist's payoff is $u(a,y) = a\cdot y$. Under the objective mapping $f$ , the conditional distribution $Q(\cdot \mid a)$ is normal with mean $\phi(a)$ and variance 1. The monopolist's subjective model is instead the family $Q_\theta(\cdot \mid a)$ of normal distributions indexed to $\theta =(\theta_0,\theta_1) \in \mathbb{R} \times \mathbb{R}$, where each $Q_\theta(\cdot \mid a)$ is normal with mean $\theta_0 + \theta_1 a$  and variance 1, corresponding to a perceived feedback function 
\[f_\theta(a,\omega) = \theta_0 + \theta_1 a.\]
If $\phi$ is not in fact affine in $a$, then the monopolist is misspecified. (This example did not include a signal.)
\end{example}

For any agent strategy $\sigma: \mathbb{S} \rightarrow \Delta(\mathbb{A})$, define 
\[q_\sigma(s,a) \equiv p_S(s) \sigma(a\mid s)\]
to be the distribution on $\mathbb{S} \times \mathbb{A}$ induced by the strategy $\sigma$ and the agent's prior $p$. Further define 
\[K(\sigma, \theta) = \sum_{(s,a) \in \mathbb{S} \times \mathbb{A}} \left(\mathbb{E}_{Q(Y \mid s, a)} \left[ \ln \frac{Q(Y \mid s, a)}{Q_{\theta}(Y \mid s, a)} \right]\right) q_\sigma(a,s)\]
to be the expected Kullback-Leibler divergence between $Q_\theta(\cdot \mid s,a)$ and the objective distribution $Q(\cdot \mid s,a)$, weighted by $q_\sigma \in \Delta(\mathbb{S} \times \mathbb{A})$. 

Given the agent's strategy $\sigma$, the set of closest parameters (in weighted KL divergence) is
\[\Theta^*(\sigma) = \arg \min_{\theta \in \Theta} K(\sigma,\theta) \]

\begin{definition} \label{def:BerkNash} A strategy profile $\sigma$ is a \emph{Berk-Nash equilibrium} if there exists a $\mu \in \Delta(\Theta)$ such that
\begin{itemize}
\item[(a)] $\mu \in \Delta(\Theta^*(\sigma))$; i.e., $\mu$ has support on the set of KL-minimizers.
\item[(b)] $\sigma$ is optimal given $\mu$; namely, $\sigma(a \mid s)>0$ implies that
\[a \in \arg \max_{a' \in \mathbb{A}} \mathbb{E}_{\overline{Q}_{\mu}( y \mid s, a')} [u(a', y)]\]
where $\overline{Q}_{\mu}(y \mid s, a) = \int_{\Theta} Q_{\theta}(y \mid s, a) \mu(\theta) d\theta$ is the conditional distribution over consequences that is induced by $\mu$. 
\end{itemize}
\end{definition}

\begin{example} \label{ex:BerkNash} A researcher's project is either good or bad, $\Omega = \{g,b\}$. The researcher observes a reaction to the project, which is either positive or negative, $\mathbb{S} = \{+,-\}$ where $(\omega,s)$ are jointly distributed according to:
\[\begin{array}{ccc}
& s=+ & s=-\\
\omega = g & 1/3 & 1/6 \\
\omega = b & 1/6 & 1/3 \\
\end{array}\]
The researcher observes the signal $s \in \mathbb{S}$ and decides whether to exert high or low effort towards developing the project, $A = \{H, L\}$. The unknown true quality of the project, and the researcher's effort, jointly determine a journal outcome in $\mathbb{Y} = \{A,R\}$ (accept or reject) according to the following function
\[f(a,\omega) =  \left\{ \begin{array}{cc}
A & (a,\omega)=(H,g) \\
R & otherwise
\end{array}\right.\]
That is, the project is accepted if it is good and also the researcher's effort is high, and it is rejected otherwise. The researcher's payoff is
\[u(a,y) = \left\{ \begin{array}{cc}
1 & (a,y)=(H, A) \\
-1 & (a,y)=(H, R)\\
2 & (a,y)=(L,A) \\
0 & (a,y)=(L,R)
\end{array} \right.\]
The true distribution $Q(y \mid a,s)$ is described by $Q(A \mid +,L)= Q(A \mid -, L) = 0$ (since the paper will not be accepted if effort is low) and
\begin{align*}
Q(A \mid +,H) & =  p(\{\omega : f(H,\omega) =A\} \mid +)  = p(g \mid +) = 2/3 \\
Q(A \mid -,H) & =  p(\{\omega : f(H,\omega) =A\}  \mid -) = p(g \mid -) = 1/3
\end{align*} 
since conditional on high effort, the probability of acceptance is equal to the probability that the paper is good. These conditional distributions are summarized as follows:
\[\begin{array}{ccc}
& A & R\\
(+,H) & 2/3 & 1/3 \\
(-,H)& 1/3 & 2/3 \\
(+,L)& 0 & 1 \\
(-,L) & 0 & 1
\end{array}\]
Suppose the researcher's subjective model allows only for the parameters $\theta_1$ and $\theta_2$ which are indexed to the following conditional distributions:
\[\begin{array}{ccc}
& A & R\\
(+,H) & 3/4 & 1/4 \\
(-,H)& 1/2 & 1/2 \\
(+,L)& 0 & 1 \\
(-,L) & 0 & 1
\end{array} \quad \quad \begin{array}{ccc}
& A & R\\
(+,H) & 2/3 & 1/3 \\
(-,H)& 1/3 & 2/3 \\
(+,L)& 1/10 & 9/10 \\
(-,L) & 1/10 & 9/10
\end{array}\]
The distribution on the left, $Q_{\theta_1}$, overestimates the value of hard work, and the distribution on the right, $Q_{\theta_2}$, is overly optimistic about the probability of acceptance given low effort. Is the strategy profile $\sigma(+)=H$, $\sigma(-)=L$ (in which the research exerts high effort after a positive signal and low effort after a low signal) a Berk Nash equilibrium?

The distribution $q_\sigma$ assigns probability $1/2$ to $(+,H)$ and to $(-,L)$. So
\begin{align*}
K(\sigma, \theta) & = \frac12 \left(\sum_{y \in \{A,R\}} Q(y \mid +,H) \cdot \ln\left( \frac{Q(y\mid +,H)}{Q_{\theta}(y\mid +,H) } \right)\right) \\
& \quad \quad \quad + \frac12 \left(\sum_{y \in \{A,R\}} Q(y \mid -,L) \cdot \ln\left( \frac{Q(y\mid -,L)}{Q_{\theta}(y\mid -,L) } \right)\right)
\end{align*}
and thus
\begin{align*}
K(\sigma, \theta_1) & = \frac12 \cdot \left( \frac{2}{3} \ln\left(\frac{2/3}{3/4}\right) +  \frac{1}{3} \ln\left(\frac{1/3}{1/4}\right) \right)  \approx 0.0038 \\
K(\sigma, \theta_2) & = \frac12 \cdot  \ln\left(\frac{1}{9/10}\right)   \approx 0.02 \end{align*}
Hence $\theta_1$ is the unique minimizer of KL divergence to the true distribution, i.e.,  $\Theta^*(\sigma) = \{\theta_1\}$. 

Only $\mu = \delta_{\theta_1}$ (a point mass at $\theta_1$) satisfies Part (a) of Definition \ref{def:BerkNash}, and the  distribution $\overline{Q}_\mu$ in Part (b) of Definition \ref{def:BerkNash} simplifies to $Q_{\theta_1}$. To determine whether $\sigma$ is a Berk Nash equilibrium, it remains to verify that $\sigma$ satisfies the optimality condition in Part (b) of Definition \ref{def:BerkNash}.

Suppose the signal realization is $s=+$. Then the action $H$ yields an expected payoff of 
\[\mathbb{E}_{Q_{\theta_1}(y \mid +,H)}[u(H,y)] = 1 \cdot \frac{3}{4} - 1 \cdot \frac{1}{4} = \frac{1}{2}\]
while the action $L$ yields an expected payoff of
\[\mathbb{E}_{Q_{\theta_1}(y \mid +,L)}[u(L,y)] =  0\]
so $a=H$ is indeed optimal.

Suppose the signal realization is $s=-$. Then the action $H$ yields an expected payoff of 
\[\mathbb{E}_{Q_{\theta_1}(y \mid -,H)}[u(H,y)] = 1 \cdot \frac{1}{2} - 1 \cdot \frac{1}{2} = 0\]
while the action $L$ yields an expected payoff of
\[\mathbb{E}_{Q_{\theta_1}(y \mid -,L)}[u(L,y)] =  0.\]
So $a=L$ is a best reply, and we conclude that $\sigma$ is a Berk Nash equilibrium.

In sum, we have shown that the strategy $\sigma$ is a best reply to a point mass on the unique parameter that minimizes KL divergence to the distribution over consequences induced by $\sigma$. In this sense the strategy $\sigma$ is internally consistent with respect to the agent's misspecified model. 
\end{example}

\begin{exercise}[G] Solve for all remaining pure-strategy Berk Nash equilibria in Example \ref{ex:BerkNash}, or prove that there are none other.
\end{exercise}

\subsubsection{Simultaneous-Move Games}

We turn now to the definition of Berk Nash equilibrium in simultaneous-move games. There is a set of players $I$, a set of payoff-relevant states $\Omega$, a set of signal profiles $\mathbb{S} = \times_i \mathbb{S}_i$, and a probability distribution $p$ over $\Omega \times \mathbb{S}$ whose marginals have full-support. There is a set of action profiles $\mathbb{A} = \times_i \mathbb{A}_i$, a set of \emph{consequence} profiles $\mathbb{Y} = \times_i \mathbb{Y}_i$, and a profile of \emph{feedback functions} $f = (f_i)_{i \in \mathcal{I}}$ where each $f_i : \mathbb{A} \times \Omega \rightarrow \mathbb{Y}_i$ maps outcomes in $\Omega \times \mathbb{A}$ into consequences for player $i$. Agents have payoff functions $u_i : \mathbb{A}_i \times \mathbb{Y}_i \rightarrow \mathbb{R}$.

The timing of the game is as follows: First, the state and signal $(\omega, s)$ are drawn according to $p$. Then each player $i$ privately observes his own signal $s_i$ and chooses an action $a_i$. The profile of consequences is determined via $f$ as a function of the action profile and the state, and payoffs are realized.

For any player $i$, action $a_i \in \mathbb{A}_i$, and consequence $y_i \in \mathbb{Y}_i$, let
\[\Lambda^i(a_i,y_i) = \{(\omega,a_{-i}): f_i(a_i,a_{-i},\omega) = y_i\}\]
be the state and opponent action profiles that induce consequence $y_i$ given player $i$'s choice of $a_i$. 
 The \emph{objective distribution} over player $i$'s consequences is $Q_\sigma^i : \mathbb{S}_i \times \mathbb{A}_i \rightarrow \Delta(\mathbb{Y}_i)$, where
\[Q_\sigma^i(y_i \mid s_i, a_i) = \sum_{(\omega,a_{-i}) \in \Lambda^i(a_i,y_i) } \sum_{s_{-i} \in S_{-i}} \left(\prod_{j \neq i} \sigma^j (a^j \mid s^j)\right) \cdot p_{\Omega \times S_{-i} \mid S_i} (\omega, s_{-i} \mid s_i) \]
for all $(s_i, a_i, y_i) \in \mathbb{S}_i \times \mathbb{A}_i \times \mathbb{Y}_i$. This is the conditional distribution over consequences that a Bayesian agent with knowledge of $f$, the strategy profile $\sigma$, and the signal realization $s_i$ would expect.

\bigskip

 The subjective model $\mathcal{Q}= \langle \Theta, (Q_\theta)_{\theta \in \Theta} \rangle$, with $\Theta = \prod_{i \in \mathcal{I}} \Theta^i$ and $Q_\theta = (Q^i_{\theta_i})_{i \in \mathcal{I}}$, describes the set of distributions over consequences that each player considers possible.
 Each player's parameter set $\Theta_i$ indexes distributions $Q^i_{\theta_i} : \mathbb{S}_i \times \mathbb{A}_i \rightarrow \Delta(\mathbb{Y}_i)$.

\begin{definition} A game is \emph{correctly specified given $\sigma$} if for all players $i$, there exists $\theta_i \in \Theta_i$ such that $Q_{\theta_i}^i(\cdot \mid s_i,a_i) = Q_\sigma^i(\cdot \mid s_i,a_i)$ for all $(s_i,a_i) \in \mathbb{S}_i \times \mathbb{A}_i$; otherwise the game is \emph{misspecified given $\sigma$}. A game is \emph{correctly specified} if it is correctly specified for all $\sigma$; otherwise it is \emph{misspecified}.
\end{definition}

For any strategy profile $\sigma$, define 
\[q_{\sigma_i}(s_i,a_i) \equiv \sigma_i(a_i \mid s_i) p_{S_i}(s_i)\]
For any strategy profile $\sigma$, define
\[K_i(\sigma, \theta_i) = \sum_{(s_i,a_i) \in \mathbb{S}_i \times \mathbb{A}_i} \left(\mathbb{E}_{Q^i_\sigma(\cdot \mid s_i, a_i)} \left[ \ln \frac{Q_\sigma^i(Y_i \mid s_i, a_i)}{Q_{\theta_i}^i(Y_i \mid s_i, a_i)} \right]\right) q_{\sigma_i}(s_i,a_i)\]
to be the expected Kullback-Leibler divergence between $Q_{\theta_i}(\cdot \mid s_i,a_i)$ and the objective distribution $Q_\sigma^i(\cdot \mid s_i,a_i)$, weighting $(s_i,a_i)$ pairs according to $q_{\sigma_i}(s_i,a_i)$. 

The set of closest parameters is
\[\Theta_i(\sigma) = \arg \min_{\theta_i \in \Theta_i} K_i(\sigma,\theta_i) \]

\begin{definition} A strategy profile $\sigma$ is a \emph{Berk-Nash equilibrium} if for all players $i$, there exists a $\mu_i \in \Delta(\Theta_i)$ such that
\begin{itemize}
\item[(a)] $\mu_i \in \Delta(\Theta_i(\sigma))$; i.e., $\mu$ has support on the set of KL-minimizers.

\item[(b)] $\sigma_i$ is optimal given $\mu_i$; namely, $\sigma_i(a_i \mid s_i)>0$ implies that
\[a_i \in \arg \max_{\overline{a}_i \in \mathbb{A}_i} \mathbb{E}_{\overline{Q}^i_{\mu_i}(\cdot \mid s_i, \overline{a}_i)} [u_i(\overline{a}_i, Y_i)]\]
where $\overline{Q}^i_{\mu_i}(\cdot \mid s_i, \overline{a}_i) = \int_{\Theta_i} Q^i_{\theta_i}(\cdot \mid s_i, a_i) \mu_i(\theta_i)d\theta_i$ is the distribution over consequences of player $i$, conditional on $(s_i,a_i) \in \mathbb{S}_i \times \mathbb{A}_i$, induced by $\mu_i$. 
\end{itemize}
\end{definition}

\begin{remark} This definition is equivalent to Nash equilibrium when (a) is replaced with the condition that players have correct beliefs; i.e., $\overline{Q}^i_{\mu_i} = Q_\sigma^i$.
\end{remark}

\begin{proposition}[\citet{EspondaPouzo}] A Berk-Nash equilibrium exists.
\end{proposition}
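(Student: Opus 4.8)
The plan is to prove existence of a Berk-Nash equilibrium by a fixed-point argument, following the standard recipe for showing existence of Nash-type equilibria but with the belief component replaced by the KL-minimizing correspondence. The candidate object is a map whose fixed points are exactly Berk-Nash equilibria, and the goal is to verify the hypotheses of Kakutani's fixed-point theorem: nonempty, convex, compact domain; nonempty, convex values; and a closed graph (upper hemicontinuity).

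\medskip

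\textbf{Step 1: Set up the domain.} For each player $i$, let $\Sigma_i = \{\sigma_i : \mathbb{S}_i \to \Delta(\mathbb{A}_i)\}$ with the product (Euclidean) topology; since $\mathbb{S}_i$ and $\mathbb{A}_i$ are finite, $\Sigma_i$ is a nonempty compact convex subset of a Euclidean space, and so is $\Sigma = \prod_i \Sigma_i$. Similarly $\Delta(\Theta_i)$ is compact and convex (finite $\Theta_i$), as is $\Delta(\Theta) \equiv \prod_i \Delta(\Theta_i)$. The fixed-point will live on $\Sigma \times \Delta(\Theta)$.

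\textbf{Step 2: Define the two correspondences.} First, the \emph{belief correspondence}: for each player $i$ and strategy profile $\sigma$, let $B_i(\sigma) = \Delta(\Theta_i(\sigma)) = \Delta(\arg\min_{\theta_i \in \Theta_i} K_i(\sigma,\theta_i))$. This is nonempty (the argmin over a finite set is nonempty) and convex (it is the set of distributions supported on a fixed finite set). Second, the \emph{best-reply correspondence}: for $(\sigma,\mu)$, let $BR_i(\mu_i)$ be the set of $\sigma_i' \in \Sigma_i$ that put weight only on actions $a_i$ maximizing $\mathbb{E}_{\overline{Q}^i_{\mu_i}(\cdot \mid s_i,a_i)}[u_i(a_i,Y_i)]$ for each $s_i$ in the support of $p_{S_i}$ (which is all of $\mathbb{S}_i$ by the full-support assumption). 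This is nonempty (finite action set) and convex (mixing over maximizers stays a maximizer). Then define $\Psi(\sigma,\mu) = \left(\prod_i BR_i(\mu_i)\right) \times \left(\prod_i B_i(\sigma)\right) \subseteq \Sigma \times \Delta(\Theta)$. By construction, $(\sigma,\mu)$ is a fixed point of $\Psi$ if and only if $\sigma$ is a Berk-Nash equilibrium witnessed by $\mu$.

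\textbf{Step 3: Verify upper hemicontinuity.} This is the crux. For $BR_i$: the map $\mu_i \mapsto \overline{Q}^i_{\mu_i}(\cdot \mid s_i,a_i) = \sum_{\theta_i} Q^i_{\theta_i}(\cdot \mid s_i,a_i)\mu_i(\theta_i)$ is continuous (linear in $\mu_i$, finite sum), hence the expected payoff $\mathbb{E}_{\overline{Q}^i_{\mu_i}(\cdot\mid s_i,a_i)}[u_i(a_i,Y_i)]$ is continuous in $\mu_i$; by Berge's maximum theorem the argmax correspondence over the finite set $\mathbb{A}_i$ is upper hemicontinuous, and passing to mixtures preserves this. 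For $B_i$: one must check that $\sigma \mapsto K_i(\sigma,\theta_i)$ is continuous. Here I would note that $Q^i_\sigma(y_i \mid s_i,a_i)$ is a polynomial (hence continuous) function of $\sigma$, and the term $\mathbb{E}_{Q^i_\sigma(\cdot\mid s_i,a_i)}[\ln(Q^i_\sigma/Q^i_{\theta_i})]$ together with the weight $q_{\sigma_i}(s_i,a_i) = \sigma_i(a_i\mid s_i)p_{S_i}(s_i)$ is continuous \emph{provided} $K_i(\sigma,\theta_i)$ is finite where it matters; the potential pitfall is $Q^i_\sigma$ assigning zero probability to a $y_i$ that $Q^i_{\theta_i}$ gives positive weight (giving $+\infty$). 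The clean way around this is to observe that $\arg\min_{\theta_i} K_i(\sigma,\theta_i)$ only selects $\theta_i$ for which $K_i(\sigma,\theta_i)<\infty$ whenever \emph{some} parameter achieves a finite value at $\sigma$ (one should argue finiteness holds for at least one $\theta_i$, or restrict $\Theta_i$ appropriately, as \citet{EspondaPouzo} do), and then $K_i$ is lower-semicontinuous in general and continuous on the relevant region; this is enough for the argmin correspondence $\sigma \mapsto \Theta_i(\sigma)$ to have closed graph, hence $B_i$ to be upper hemicontinuous with closed graph. (If $\Theta_i$ is a compact subset of Euclidean space rather than finite, as in some of their applications, one additionally invokes continuity of $(\theta_i,\sigma)\mapsto K_i$ plus compactness; but under the finite-$\Theta$ simplification stated in the excerpt, the finite-set argmin argument suffices.)

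\medskip

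\textbf{Step 4: Conclude.} Having shown $\Psi: \Sigma\times\Delta(\Theta) \rightrightarrows \Sigma\times\Delta(\Theta)$ has nonempty convex values and closed graph on a nonempty compact convex domain, Kakutani's theorem yields a fixed point $(\sigma^*,\mu^*)$, which is a Berk-Nash equilibrium by Step 2.

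\medskip

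\textbf{The main obstacle} I anticipate is precisely the continuity/closed-graph verification for the KL-minimization correspondence in Step 3 --- specifically handling the possible unboundedness of $K_i(\sigma,\theta_i)$ when $Q^i_\sigma$ is not absolutely continuous with respect to $Q^i_{\theta_i}$ for some $(s_i,a_i)$. The resolution is to establish that lower semicontinuity of $K_i$ plus the fact that the minimum is attained at a finite value (for which one needs a mild regularity hypothesis ensuring some $\theta_i$ is "close enough" to the objective distribution along every $\sigma$, or simply that $\Theta_i(\sigma)$ never forces an infinite-divergence selection) is enough to get a closed-graph argmin correspondence; this is the one place where the argument is more than bookkeeping and where \citet{EspondaPouzo}'s precise assumptions on the subjective model do real work. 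Everything else --- convexity of best replies and of belief sets, compactness, and the reduction of fixed points to equilibria --- is routine.
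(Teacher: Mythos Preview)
The paper does not include its own proof of this proposition; it simply states the result and cites \citet{EspondaPouzo}. Your Kakutani-based argument is indeed the approach used in the original Esponda--Pouzo paper, and your diagnosis of the main technical hurdle---upper hemicontinuity of the KL-minimizing belief correspondence $\sigma \mapsto \Delta(\Theta_i(\sigma))$, with the attendant worry about $K_i(\sigma,\theta_i)$ blowing up when $Q^i_\sigma$ is not absolutely continuous with respect to $Q^i_{\theta_i}$---is exactly right. In their paper this is handled by explicit regularity assumptions on the subjective model (compactness of $\Theta_i$, continuity of $\theta_i \mapsto Q^i_{\theta_i}$, and a condition guaranteeing that $\Theta_i(\sigma)$ is nonempty and that $K_i$ is sufficiently well-behaved near the minimizers), which do the work you anticipate in Step 3.
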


Building on Proposition \ref{prop:Berk}, several authors have examined convergence of  misspecified learning processes where---different from \citet{Berk1966}'s setting---signals are endogenous to actions chosen by agents  \citep{Nyarko,FudenbergRomanyukStrack,HeidhuesKoszegiStrack2021}. The stable outcomes under many of these processes turn out to correspond to Berk Nash equilibria or a refinement of this set. Some recent works on this topic include \citet{EspondaPouzo}, \citet{EspondaPouzoYamamoto}, \citet{BohrenHauser}, \citet{FudenbergLanzaniStrack}, \citet{EspondaPouzoYamamoto} and \citet{FrickIijimaIshii}.

\section{Additional Exercises}

\begin{exercise}[G] There are two states of the world, $\theta \in \{A,B\}$. A news source receives an infinite sequence of signals about this state of the world drawn iid according to the following signal structure
\[\begin{array}{ccc}
& a & b \\
\theta = A & 3/4 & 1/4 \\
\theta = B & 1/4 & 3/4
\end{array}\]
This news source is biased. When it observes the signal realization $a$, it  reports $a$, but conditional on observing the signal realization $b$, it reports this $b$ with probability $1-\lambda$ and otherwise falsely reports $a$ (where $\lambda$ is constant across time). You are aware that the news source is biased and dogmatically believe that $\lambda = 1/2$. 

Suppose the true state is $\theta=B$, and you observe the infinite sequence of news reports. Provide a condition (potentially empty) on the true value of $\lambda$ such that your asymptotic belief is that the state is $\theta=A$. Interpret this result.
\end{exercise}

\chapter{Information Design}

\section{Bayesian Persuasion}

\subsection{Example} \label{sec:BPexample}
A judge and a prosecutor are involved in a court case. The unknown payoff-relevant state is whether the defendant in this case (who will not take an action) is  \emph{innocent} ($I$) or \emph{guilty} ($G$). The judge and the prosecutor share  a common prior that the defendant is guilty with probability 0.3. 

The prosecutor cannot falsify or distort evidence, but can selectively choose what kind of information to present to the court (e.g., deciding who to subpoena or which forensic tests to conduct). Formally, the prosecutor chooses an information structure $\sigma: \{G,I\} \rightarrow \Delta(S)$ for some set of signal realizations $S$.
The judge observes the outcome of the signal $\sigma$, updates his beliefs, and chooses whether to \emph{acquit} or \emph{convict} the defendant.

The judge and prosecutor's payoffs are determined by the judge's action and by the unknown state. The judge receives a payoff of 1 from convicting a guilty defendant or from acquitting an innocent defendant, and otherwise receives a payoff of zero. The prosecutor receives a payoff of 1 if the judge convicts the defendant and a payoff of 0 if the judge acquits the defendant, independent of the defendant's guilt. What information structure should the prosecutor choose, and what is the best expected payoff he can achieve?

Let's start with some benchmarks. One possibility is to send a completely uninformative signal.  Since innocence is more likely than guilt under the judge's prior, the judge chooses to acquit given no information, yielding a payoff of zero for the prosecutor. Alternatively, the prosecutor can choose a perfectly informative signal that reveals the defendant's guilt. The judge convicts precisely when the defendant is guilty, yielding an expected payoff (under the prior) of 0.3 for the prosecutor. 

Can the prosecutor do better? The perfectly revealing signal splits defendants into two bins---one labeled ``convict" and one labelled ``acquit" (Figure \ref{fig:BPreveal}).

\begin{figure}[H]
\begin{center}
\includegraphics[scale=0.25]{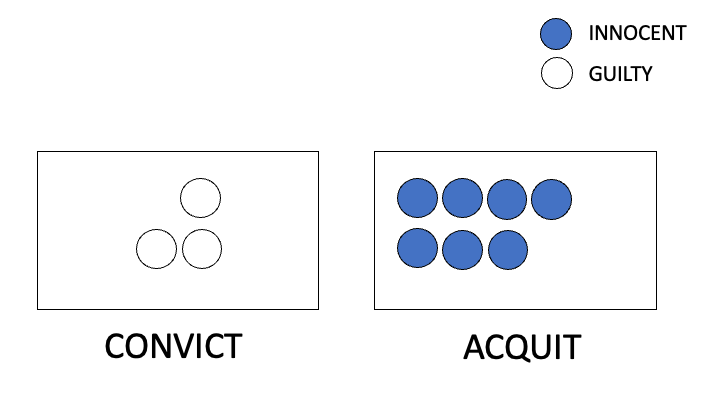}
\end{center}
\caption{Depiction of the perfectly revealing signal, where each circle represents $1/10$ of the population.} \label{fig:BPreveal}
\end{figure}

The judge's posterior for individuals labeled ``convict" is that they are guilty with probability 1, so he optimally convicts any individual with this label. Likewise his posterior for individuals labeled ``acquit" is that they are innocent with probability 1, so he acquits any individual with this label.

Now consider moving one unit of innocent individuals from the acquit bin to the convict bin (Figure \ref{fig:BPdeviate}).

\begin{figure}[H]
\begin{center}
\includegraphics[scale=0.25]{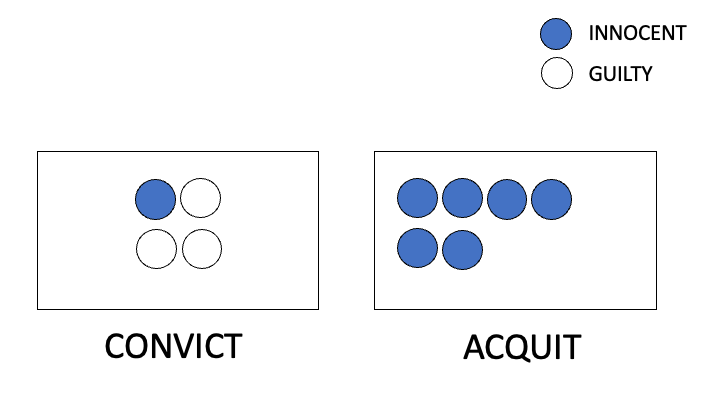}
\end{center}
\caption{Deviation from the perfectly revealing signal.} \label{fig:BPdeviate}
\end{figure}

 \begin{remark} Every ``bin representation" as shown in Figures \ref{fig:BPreveal} and \ref{fig:BPdeviate} corresponds to a unique signal. For each $\theta \in \Theta$ and $s \in \{\mbox{convict}, \mbox{acquit}\}$, let $P(\theta,s)$ be the mass of $\theta$-type units in bin $s$ (interpreting each circle as $1/10$ of the population). Then $P$ is a probability measure on $\Theta \times S$, and the corresponding signal $\sigma: \Theta \rightarrow \Delta(S)$ can be derived by Bayes' rule. As we see in the proof of Proposition \ref{prop:BP}, every signal also admits a bin representation.\footnote{In particular, every signal admits a ``bin representation" that consists of two bins---a convict bin, and an acquit bin---where the judge optimally convicts all individuals in the convict bin and acquits all individuals in the acquit bin.} 
\end{remark}

Following this modification on the perfectly revealing signal, the posterior probability of guilt in the acquit bin is unchanged. The posterior probability of guilt for individuals labeled ``convict" drops to $3/4$---but crucially, the judge's optimal action remains the same. Intuitively, by pooling innocent defendants with guilty defendants (but maintaining sufficiently guilty defendants that the judge still wants to convict), the prosecutor is able to induce the judge to wrongly convict a larger number of defendants.

Iterating this logic, we can continue moving units of innocent individuals into the convict bin, up until the judge is indifferent between convicting and acquitting (Figure \ref{fig:BPoptimal}).

\begin{figure}[h]
\begin{center}
\includegraphics[scale=0.25]{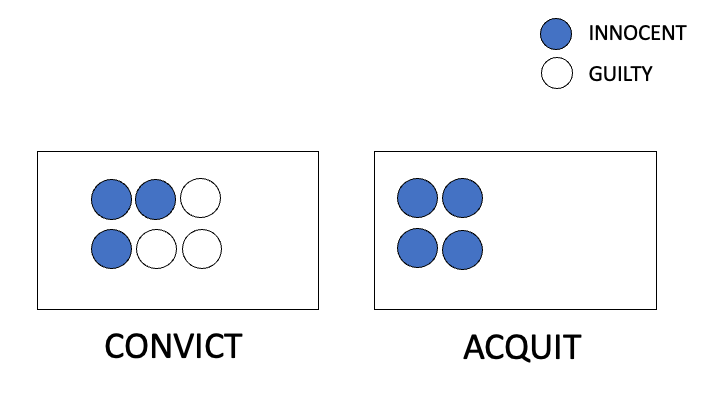}
\end{center}
\caption{Depiction of the prosecutor-optimal signal structure.} \label{fig:BPoptimal}
\end{figure}

These bins correspond to the following signal structure:
\begin{equation} \label{eq:OptimalSignal}
\begin{array}{ccc}
& \emph{convict} & \emph{acquit} \\
G & 1 & 0 \\
I & 3/7 & 4/7
\end{array}
\end{equation}

\noindent That this signal structure is optimal will follow from the results in the subsequent section. Strikingly, although the judge knows that only 30\% of defendants are guilty, he ends up convicting 60\% of them.

\subsection{Model} 

There are two agents, a Sender and a Receiver. The unknown parameter $\theta$ takes values in the finite set $\Theta$, and agents share a common prior $\mu_0 \in \Delta(\Theta)$. A signal is any mapping $\sigma: \Theta \rightarrow \Delta(S)$ from the set of states into distributions over a finite set of signal realizations $S$.

The Receiver chooses from a compact set of actions $A$. Both agents' payoffs depend on the Receiver's action and the unknown state. We'll denote the Receiver's utility function by  $u_R: A \times \Theta \rightarrow \mathbb{R}$ and the Sender's utility function by $u_S: A \times \Theta \rightarrow \mathbb{R}$, where both are assumed to be continuous.

The timeline is as follows: First, the Sender chooses a signal $\sigma$. The realization of this signal is then observed by the Receiver, who updates his beliefs and chooses an action $a \in A$. Finally payoffs are realized. The solution concept is Sender-Preferred subgame perfect equilibrium; that is, the Receiver chooses an action to maximize his expected payoffs, breaking ties between optimal actions by maximizing Sender's payoffs.\footnote{If there are multiple such actions, the Receiver chooses any action between them.}

\subsection{Solution and Geometric Representation}

Consider any Sender-Preferred subgame perfect equilibrium, and let $\hat{a}(\mu)$ denote the Receiver's action given belief $\mu \in \Delta(\Theta)$ in this equilibrium. That is, 
\begin{equation} \label{eq:ReceiverAction}
\hat{a}(\mu) \in \argmax_{a \in A(\mu)} \mathbb{E}_\mu \left[u_S(a, \theta)\right]
\end{equation}
where
\[A(\mu) = \argmax_{a \in A} \mathbb{E}_\mu \left[u_R(a, \theta)\right]\]
is the set of actions that maximize the Receiver's expected payoff given belief $\mu$. (If the RHS of (\ref{eq:ReceiverAction}) is non-empty, set $\hat{a}(\mu)$ to be any action in this set.) Let
\begin{equation} \label{hatv}
\hat{v}(\mu) := \mathbb{E}_\mu \left[u_S(\hat{a}(\mu), \theta)\right]
\end{equation}
be the Sender's expected payoff given belief $\mu$ and Receiver-action $\hat{a}(\mu)$.  A signal's \emph{value} is the Sender's (ex-ante) expected payoff given choice of that signal.

\begin{proposition}[\citet{KamenicaGentzkow}] The following are equivalent:
\begin{itemize}
\item[(i)] There exists a (finite-valued) signal with value $v^*$.
\item[(ii)] There exists a (finite-valued) signal taking realizations in $S\subseteq A$ with value $v^*$.
\item[(iii)] There exists a Bayes-plausible distribution over posterior beliefs, $\tau \in \Delta(\Delta(\Theta))$, such that $\mathbb{E}_\tau\left[\hat{v}(\mu)\right] = v^*$.
\end{itemize} \label{prop:BP}
\end{proposition}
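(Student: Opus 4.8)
The plan is to prove the cycle of implications $(ii) \Rightarrow (i) \Rightarrow (iii) \Rightarrow (ii)$, since $(ii) \Rightarrow (i)$ is trivial (a signal taking realizations in $S \subseteq A$ is in particular a signal), so the real content is in the equivalence $(i) \Leftrightarrow (iii)$ together with the refinement that the posterior realizations can be identified with actions.

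First I would establish $(i) \Rightarrow (iii)$. Given a signal $\sigma: \Theta \to \Delta(S)$ with value $v^*$, the Receiver, upon observing realization $s$, forms a posterior belief $\mu_s \in \Delta(\Theta)$ by Bayes' rule (using the common prior $\mu_0$ and $\sigma$). This induces a distribution $\tau \in \Delta(\Delta(\Theta))$ over posteriors, where $\tau$ assigns to each belief $\mu$ the total probability of realizations $s$ with $\mu_s = \mu$ — exactly the construction $\tau_\sigma$ of Section \ref{sec:Bayes}. By Fact \ref{fact:Martingale}, $\tau$ is Bayes-plausible, i.e. $\tau \in \mathcal{T}(\mu_0)$. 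In the Sender-preferred equilibrium, after observing $s$ the Receiver takes action $\hat{a}(\mu_s)$, so the Sender's conditional expected payoff given realization $s$ is $\mathbb{E}_{\mu_s}[u_S(\hat{a}(\mu_s),\theta)] = \hat{v}(\mu_s)$. Taking the ex-ante expectation over $s$ (equivalently over $\mu \sim \tau$) gives the signal's value as $\mathbb{E}_\tau[\hat{v}(\mu)] = v^*$, which is (iii).

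Next, $(iii) \Rightarrow (ii)$. Given a Bayes-plausible $\tau$ with $\mathbb{E}_\tau[\hat{v}(\mu)] = v^*$, I would invoke Proposition \ref{prop:BayesPlausible}: every Bayes-plausible distribution is induced by some signal. (If $\mu_0$ is on the boundary of $\Delta(\Theta)$ one works with the restriction to the face of $\Delta(\Theta)$ spanned by the support of $\mu_0$, where $\mu_0$ is interior; the construction in \eqref{eq:ConstructSignal} still applies with states of zero prior probability ignored.) This produces a signal $\sigma$ whose realizations are identified with the posteriors $\mu$ in the support of $\tau$. Now I relabel: replace each realization/posterior $\mu$ by the action $\hat{a}(\mu) \in A$. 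Formally, define $\sigma': \Theta \to \Delta(A)$ by $\sigma'(a \mid \theta) = \sum_{\mu : \hat{a}(\mu) = a} \sigma(\mu \mid \theta)$. The key point is that merging two realizations that induce the \emph{same} Receiver action does not change the Receiver's behavior — if $\hat a(\mu) = \hat a(\mu') = a$, then on the pooled event the Receiver's posterior is a convex combination of $\mu$ and $\mu'$, but since the Sender-preferred equilibrium action is pinned down only by the Sender's value and I must check the Receiver's incentives are preserved. Here is the subtlety I would handle carefully: after pooling, the Receiver's posterior on the event ``$a$ was recommended'' is a mixture of the $\mu$'s with $\hat a(\mu) = a$, which need not equal any individual $\mu$, so I cannot directly claim $a$ remains optimal. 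The clean fix is to \emph{not} pool: keep $\sigma$ with realizations in bijection with $\mathrm{supp}(\tau) \subseteq \Delta(\Theta)$, and simply note that $|\mathrm{supp}(\tau)|$ can be taken finite and that each realization $\mu$, being a belief, can be encoded by a distinct label; then one separately argues (as in \citet{KamenicaGentzkow}) that it is without loss to take $S \subseteq A$ by associating to each posterior $\mu$ the action $\hat a(\mu)$ and checking that a ``recommendation'' signal where the Receiver is told $\hat a(\mu)$ and Bayes-updates to the conditional belief — which is precisely $\mu$ when realizations are not pooled across distinct posteriors — leaves $\hat a(\mu)$ optimal. So the resolution is: realizations in $S \subseteq A$ but with possibly $|S| \le |A|$ distinct \emph{labels}, and when two posteriors happen to share the same optimal action one keeps them as the same label only if they are literally equal, else one needs $|\Theta|$-type splitting arguments. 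Given the exposition level of these notes, I would present the straightforward version: take $\sigma$ from Proposition \ref{prop:BayesPlausible}, observe its realizations are posteriors, and map each posterior $\mu$ to $\hat a(\mu) \in A$, pooling only identical posteriors, so that the Receiver's posterior given label $a$ is exactly some $\mu$ with $\hat a(\mu) = a$ — hence $a$ is optimal and Sender-preferred — giving value $\mathbb{E}_\tau[\hat v(\mu)] = v^*$, which is (ii).

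Finally $(ii) \Rightarrow (i)$ is immediate. The main obstacle, as flagged above, is the bookkeeping in $(iii) \Rightarrow (ii)$: ensuring that when one relabels posteriors by actions, the resulting object is still a signal for which the Sender-preferred equilibrium continuation yields action $\hat a(\mu)$ after the corresponding realization. The cleanest route avoids pooling non-identical posteriors, so that every realization corresponds to a unique posterior $\mu$ in $\mathrm{supp}(\tau)$, the Receiver's Bayes update recovers exactly that $\mu$, and optimality of $\hat a(\mu)$ (with Sender-preferred tie-breaking) holds by definition of $\hat a$; the claim ``$S \subseteq A$'' is then read as ``realizations can be taken to be actions,'' using that distinct posteriors in the support can be assigned distinct actions whenever the relevant tie-breaking is resolved appropriately, exactly as in \citet{KamenicaGentzkow}. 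I would also remark that the equivalence reduces the Sender's problem to $\sup_{\tau \in \mathcal{T}(\mu_0)} \mathbb{E}_\tau[\hat v(\mu)]$, which is the concavification of $\hat v$ evaluated at $\mu_0$ and underlies the geometric/``bin'' picture of Section \ref{sec:BPexample}.
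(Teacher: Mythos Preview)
Your cycle $(ii)\Rightarrow(i)\Rightarrow(iii)\Rightarrow(ii)$ and your argument for $(i)\Rightarrow(iii)$ are fine and match the paper. The problems are all in your $(iii)\Rightarrow(ii)$ step, where there is one genuine gap and one unnecessary retreat.

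\textbf{The genuine gap: finiteness.} The proposition requires a \emph{finite-valued} signal. You write ``simply note that $|\mathrm{supp}(\tau)|$ can be taken finite,'' but this is exactly the nontrivial content of the implication and you never justify it. The Bayes-plausible $\tau$ handed to you in (iii) may have infinite support, and Proposition~\ref{prop:BayesPlausible} as written in these notes produces a signal whose realization set is $\mathrm{supp}(\tau)$, which is then infinite. The paper closes this gap with Carath\'eodory's theorem: since $(\mu_0,v^*)$ lies in the convex hull of the graph $C=\{(\mu,\hat v(\mu))\}\subseteq\mathbb{R}^{|\Theta|}$, it can be written as a convex combination of at most $|\Theta|+1$ points of $C$, yielding a finite-support Bayes-plausible $\tau^*$ with $\mathbb{E}_{\tau^*}[\hat v(\mu)]=v^*$. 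Only then does the construction in \eqref{eq:ConstructSignal} give a finite signal. Without this step your proof of $(iii)\Rightarrow(ii)$ (and hence of $(iii)\Rightarrow(i)$) is incomplete.

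\textbf{The unnecessary retreat: pooling is fine.} You correctly flag that when realizations with $\hat a(\mu)=\hat a(\mu')=a$ are pooled, the Receiver's posterior becomes a mixture $\bar\mu$, and you worry $a$ may no longer be the Sender-preferred optimal action there. But it is. Since $a\in A(\mu_s)$ for every pooled $\mu_s$, linearity gives $a\in A(\bar\mu)$; moreover equality in the Receiver's comparison forces $A(\bar\mu)=\bigcap_s A(\mu_s)$. For any $a'\in A(\bar\mu)$ we then have $a'\in A(\mu_s)$ for every $s$, so $\mathbb{E}_{\mu_s}[u_S(a,\theta)]\ge\mathbb{E}_{\mu_s}[u_S(a',\theta)]$ by Sender-preferred tie-breaking at each $\mu_s$; averaging gives $\mathbb{E}_{\bar\mu}[u_S(a,\theta)]\ge\mathbb{E}_{\bar\mu}[u_S(a',\theta)]$. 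Hence $\hat a(\bar\mu)=a$ and the value is preserved exactly. This is what the paper does in its $(i)\Rightarrow(ii)$ step via the recommendation signal $\widetilde\sigma(a\mid\theta)=\sum_{s:\hat a(\mu_s)=a}\sigma(s\mid\theta)$. Your ``clean fix'' of not pooling does not actually deliver $S\subseteq A$: if several distinct posteriors share the same $\hat a(\mu)$, keeping them separate requires more labels than actions, and your appeal to ``assigning distinct actions whenever tie-breaking is resolved appropriately'' is not a valid argument.

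In short: add the Carath\'eodory reduction to obtain a finite-support $\tau$, and replace your hedged discussion of pooling with the two-line argument above (or, as the paper does, route through $(iii)\Rightarrow(i)$ via Carath\'eodory and then $(i)\Rightarrow(ii)$ via the recommendation signal).
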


\begin{proof}
The implication (ii) $\Rightarrow$ (i) is immediate.  The implication (ii) $\Rightarrow$ (iii) follows from Fact \ref{fact:Martingale} (every signal induces a Bayes-plausible distribution over posterior beliefs).

To show (i) $\Rightarrow$ (ii), observe that for any signal $\sigma: \Theta \rightarrow \Delta(S)$ with value $v^*$, we can define a new signal $\widetilde{\sigma}: \Theta \rightarrow \Delta(A)$ that maps types into the recommended action under $\sigma$. That is,
\[\widetilde{\sigma}(a \mid \theta) = \sum_{s :  \hat{a}(\mu_s) = a} \sigma(s \mid \theta)\]
for every $a \in A$ and $\theta \in \Theta$, where $\mu_s$ denotes the Receiver's posterior given signal realization $s$ under $\sigma$. (The number of distinct action recommendations cannot exceed the size of $S$ and so is finite.) Clearly the optimal action given recommendation of $a$ remains the action $a$, so the distribution of optimal actions induced by $\widetilde{\sigma}$ and $\sigma$ are the same. 

The direction (iii) $\Rightarrow$ (i) is nearly immediate from Proposition \ref{prop:BayesPlausible} (every  Bayes-plausible distribution over posterior beliefs can be induced by a signal), but we need to show that it is possible to construct a \emph{finite-valued} signal for arbitrary $\tau$ (even ones with infinite support).\footnote{The construction in Section \ref{sec:BayesPlausibility} chooses $S$ to be the set of all beliefs in the support of $\tau$, which need not be finite.}

We'll use the following result from convex analysis.

\begin{proposition}[Caratheodory's Theorem] Let $X \subseteq \mathbb{R}^n$ be a nonempty subset of finite-dimensional Euclidean space. Let $conv(X)$ denote the convex hull of $X$. Then every vector in $conv(X)$ can be represented as a convex combination of at most $n+1$ vectors from $X$.
\end{proposition}

Fix any $v^*$ and Bayes-plausible $\tau$ such that $\mathbb{E}_\tau [\hat{\nu}(\mu)] = v^*$. Define
\[C = \{(\mu, \hat{v}(\mu)) \mid \mu \in \Delta(\Theta)\}\]
to be the set of all beliefs and valuations of those beliefs, noting that $C \subseteq \mathbb{R}^n$ where $n \equiv \vert \Theta \vert$.\footnote{The simplex $\Delta(\Theta)$ is a subset of $\mathbb{R}^{n-1}$ and the valuation belongs to $\mathbb{R}$, hence $C \subseteq \mathbb{R}^n$.} Moreover, by assumption that $v^* = \mathbb{E}_\tau [\hat{\nu}(\mu)]$ for some Bayes-plausible distribution $\tau$ over posterior beliefs, the vector $(\mu_0, v^*)$ belongs to the convex hull of $C$.

Then by Caratheodory's Theorem, there exists a sequence of beliefs $(\mu_i)_{i=1}^{n+1}$ and a sequence of nonnegative weights $(\alpha_i)_{i=1}^{n+1}$ summing to 1, such that
\[(\mu_0,v^*)  = \sum_{i=1}^{n+1} \alpha_i \cdot (\mu_i,\hat{v}(\mu_i))\] 
Let $\tau^*$ be the distribution over posterior beliefs that assigns probability $\alpha_i$ to each belief $\mu_i$, $1\leq i \leq n+1$. Then
\[\mathbb{E}_{\tau^*}[\hat{v}(\mu)] = \sum_{i=1}^{n+1} \alpha_i \cdot \hat{v}(\mu_i) = v^*\]
as desired. Follow the construction in Section \ref{sec:BayesPlausibility} (setting the set of signal realizations $S$ to be the posterior beliefs in the support of $\tau^*$) to complete the proof.
\end{proof}

\bigskip

Proposition \ref{prop:BP} tells us that we can determine when the Sender benefits from persuasion by studying how $\mathbb{E}_\tau\left[\hat{\nu}(\mu)\right]$ varies over the set of Bayes-plausible distributions.

\begin{corollary} The Sender benefits from persuasion if and only if there exists a Bayes-plausible distribution $\tau$ such that $\mathbb{E}_\tau \left[\hat{\nu} (\mu)\right] > \hat{\nu} (\mu_0)$.
\end{corollary}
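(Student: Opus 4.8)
The claim is that the Sender benefits from persuasion if and only if there is a Bayes-plausible $\tau$ with $\mathbb{E}_\tau[\hat{v}(\mu)] > \hat{v}(\mu_0)$. The plan is to reduce everything to Proposition \ref{prop:BP}, which characterizes exactly which values $v^*$ are attainable by some signal: $v^*$ is attainable if and only if there is a Bayes-plausible $\tau$ with $\mathbb{E}_\tau[\hat{v}(\mu)] = v^*$. So the argument is almost bookkeeping once we pin down the meaning of ``benefits from persuasion.''

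First I would fix the definition: the Sender ``benefits from persuasion'' means there exists some signal whose value strictly exceeds the value of the uninformative signal. The uninformative signal induces the degenerate distribution $\delta_{\mu_0}$ over posteriors, so its value is $\mathbb{E}_{\delta_{\mu_0}}[\hat{v}(\mu)] = \hat{v}(\mu_0)$ (using that $\delta_{\mu_0}$ is trivially Bayes-plausible, via Fact \ref{fact:Martingale} applied to null information, and the implication (iii) $\Rightarrow$ (i) of Proposition \ref{prop:BP}). Hence ``benefits from persuasion'' is equivalent to: there exists an attainable value $v^* > \hat{v}(\mu_0)$.

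Then the two directions are direct. ($\Leftarrow$) If some Bayes-plausible $\tau$ has $\mathbb{E}_\tau[\hat{v}(\mu)] > \hat{v}(\mu_0)$, set $v^* := \mathbb{E}_\tau[\hat{v}(\mu)]$; by Proposition \ref{prop:BP} (iii) $\Rightarrow$ (i) there is a signal with value $v^*$, and $v^* > \hat{v}(\mu_0)$ is the value of the uninformative signal, so the Sender strictly benefits. ($\Rightarrow$) If the Sender benefits, there is a signal $\sigma$ with value $v^* > \hat{v}(\mu_0)$; by Proposition \ref{prop:BP} (i) $\Rightarrow$ (iii) there is a Bayes-plausible $\tau$ with $\mathbb{E}_\tau[\hat{v}(\mu)] = v^* > \hat{v}(\mu_0)$, which is the desired $\tau$.

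The only point requiring any care — and the one I'd flag as the ``obstacle,'' though it is minor — is making the identification ``value of the uninformative signal $= \hat{v}(\mu_0)$'' airtight: one should note that the uninformative signal is a legitimate finite-valued signal ($|S|=1$), that the Receiver's posterior after its realization is exactly the prior $\mu_0$, and that the Sender's ex-ante payoff is then $\hat{v}(\mu_0)$ by definition \eqref{hatv}. Everything else is a restatement of Proposition \ref{prop:BP} with $v^*$ instantiated appropriately, so no genuine calculation is involved.
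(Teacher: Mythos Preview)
Your proposal is correct and matches the paper's approach: the paper states the corollary as an immediate consequence of Proposition~\ref{prop:BP}, and your write-up simply makes explicit the bookkeeping (identifying the no-information benchmark with $\hat v(\mu_0)$ via the degenerate $\delta_{\mu_0}$, then applying (i)$\Leftrightarrow$(iii)). There is nothing to add.
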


\begin{corollary}
The value of an optimal signal is
\begin{align*}
\max_{\tau \in \Delta(\Theta)} \mathbb{E}_\tau \left[\hat{\nu}(\mu)\right] \quad  \mbox{s.t. } \int \mu d\tau(\mu) = \mu_0
 \end{align*}
\end{corollary}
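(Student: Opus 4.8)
The plan is to read this corollary off Proposition \ref{prop:BP} almost immediately, with the one extra ingredient being that the supremum over Bayes-plausible distributions is actually attained. First I would note that the equivalence of (i) and (iii) in Proposition \ref{prop:BP} says precisely that a number $v^*$ is the value of some finite-valued signal if and only if $v^* = \mathbb{E}_\tau[\hat{v}(\mu)]$ for some Bayes-plausible $\tau \in \mathcal{T}(\mu_0)$, where Bayes-plausibility is exactly the barycenter constraint $\int \mu\, d\tau(\mu) = \mu_0$. Hence the set of attainable signal values coincides with $\{\mathbb{E}_\tau[\hat{v}(\mu)] : \tau \in \mathcal{T}(\mu_0)\}$, so $\sup$ over signals of the value equals $\sup_{\tau \in \mathcal{T}(\mu_0)} \mathbb{E}_\tau[\hat{v}(\mu)]$.

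The remaining work is to upgrade this $\sup$ to a $\max$, i.e., to show an optimal signal exists. The key step is that $\hat{v}$ is upper semicontinuous and bounded on $\Delta(\Theta)$. For boundedness, $\Theta$ is finite and $u_S$ is continuous on the compact set $A \times \Theta$. For upper semicontinuity, write $A(\mu) = \argmax_{a \in A} \mathbb{E}_\mu[u_R(a,\theta)]$; compactness of $A$ and continuity of $u_R$ make $A(\cdot)$ a nonempty, compact-valued, upper hemicontinuous correspondence (Berge), and then applying Berge's maximum theorem to $\hat{v}(\mu) = \max_{a \in A(\mu)} \mathbb{E}_\mu[u_S(a,\theta)]$, with $\mathbb{E}_\mu[u_S(a,\theta)]$ jointly continuous in $(a,\mu)$, gives that $\hat{v}$ is upper semicontinuous. (It need not be continuous, since $A(\mu)$ can shrink discontinuously as $\mu$ moves onto a lower-dimensional face where the Receiver becomes indifferent.)

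Next I would cut the optimization down to a compact domain. By the Caratheodory argument already carried out in the proof of Proposition \ref{prop:BP}, for any Bayes-plausible $\tau$ there is a Bayes-plausible $\tau^*$ supported on at most $n+1$ beliefs (with $n \equiv \vert \Theta \vert$) and with $\mathbb{E}_{\tau^*}[\hat{v}] = \mathbb{E}_\tau[\hat{v}]$; so it is without loss to maximize over distributions of the form $\sum_{i=1}^{n+1} \alpha_i \delta_{\mu_i}$ with $(\mu_i) \in \Delta(\Theta)^{n+1}$, $(\alpha_i) \in \Delta(\{1,\dots,n+1\})$, and $\sum_i \alpha_i \mu_i = \mu_0$. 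This constraint set is closed, hence compact, and the objective $((\mu_i),(\alpha_i)) \mapsto \sum_i \alpha_i \hat{v}(\mu_i)$ is upper semicontinuous, being a nonnegatively-weighted finite sum of the bounded upper semicontinuous function $\hat{v}$ with continuous weights. An upper semicontinuous function on a compact set attains its maximum, say at $\tau^{**}$; by the implication (iii) $\Rightarrow$ (i) of Proposition \ref{prop:BP} there is a finite-valued signal with value $\mathbb{E}_{\tau^{**}}[\hat{v}]$, and no signal beats it because every signal's value equals $\mathbb{E}_\tau[\hat{v}]$ for some Bayes-plausible $\tau$. This identifies the value of an optimal signal with $\max_\tau \mathbb{E}_\tau[\hat{v}(\mu)]$ subject to $\int \mu\, d\tau(\mu) = \mu_0$.

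The main obstacle I expect is exactly this attainment step: Proposition \ref{prop:BP} hands over the equality of value sets essentially for free, but it is stated value-by-value, whereas the corollary asserts a genuine maximum, so one must verify the regularity of $\hat{v}$ and the compactness needed for attainment. If one is content to interpret the corollary as describing only the value of the persuasion problem (tolerating the possibility that the supremum is not attained), the upper-semicontinuity and Caratheodory-compactness steps can be dropped and the statement is immediate; I would add a short remark noting this.
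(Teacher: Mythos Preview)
Your proposal is correct and follows the same route as the paper: both derive the corollary directly from the equivalence (i)$\Leftrightarrow$(iii) in Proposition~\ref{prop:BP}, identifying the set of attainable signal values with $\{\mathbb{E}_\tau[\hat v(\mu)]:\tau\in\mathcal{T}(\mu_0)\}$. The paper in fact treats the corollary as immediate and does not supply a separate proof; your attainment argument (upper semicontinuity of $\hat v$ via Berge, plus the Caratheodory reduction to a compact finite-dimensional domain) goes beyond what the paper writes out and is a welcome addition, since the corollary is stated with a $\max$ rather than a $\sup$.
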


The value of information for the Sender at any prior $\mu$ can be represented geometrically using the upper concave envelope of $\hat{\nu}$.

\begin{definition} \label{def:ConcaveClosure} Define
\[V(\mu) \equiv \sup \{z \mid (\mu,z) \in Conv(\hat{\nu})\} \quad \forall \mu \in \Delta(\Theta)\]
where $Conv(\hat{\nu})$ denotes the convex hull of the graph $\hat{\nu}$. That is, $V$ is the smallest concave function that is everywhere weakly greater than $\hat{\nu}$. 
\end{definition}

\begin{figure}[H]
				\centering
				\includegraphics[scale=1]{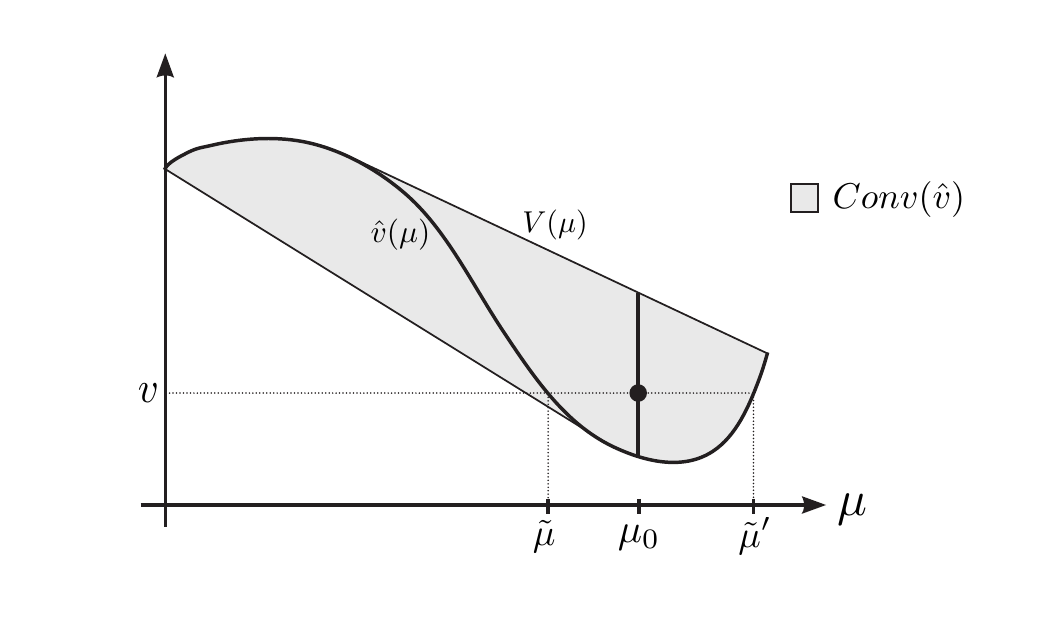}
				\caption{Illustration of Definition \ref{def:ConcaveClosure}.}
			\end{figure}

By Proposition \ref{prop:BP}, the set $\{ z \mid (\mu_0, z) \in Conv(\hat{\nu})\}$ is precisely those expected payoffs that the Sender can achieve when the prior $\mu_0$. For example, in Figure \ref{def:ConcaveClosure}, the value $v$ is achievable from the prior $\mu_0$ via a signal that splits the prior into two posterior $\tilde{\mu}$ and $\tilde{\mu}'$ (setting the weights so that the expected posterior equals the prior). So $V(\mu_0) = \sup \{z \mid (\mu_0,z) \in Conv(\hat{\nu})\}$ is the largest payoff Sender can achieve when the prior is $\mu_0$, and the Sender strictly benefits from persuasion if and only if $V(\mu_0) > \hat{\nu}(\mu_0)$.

The following corollary is immediate from the previous analysis.
 
\begin{corollary} If $\hat{v}$ is concave, then the Sender does not benefit from persuasion for any prior. If $\hat{v}$ is strictly convex, the Sender benefits from persuasion for every prior. 
\end{corollary}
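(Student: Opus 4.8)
The plan is to read both claims directly off the geometric characterization already established: by the Corollary preceding this one (``the Sender benefits from persuasion if and only if there exists a Bayes-plausible $\tau$ such that $\mathbb{E}_\tau[\hat{\nu}(\mu)] > \hat{\nu}(\mu_0)$''), together with the observation that $V(\mu_0) = \sup\{z : (\mu_0,z)\in Conv(\hat{\nu})\}$ equals the largest payoff the Sender can achieve from prior $\mu_0$. So it suffices to compare $V$ with $\hat{v}$ under the two hypotheses.

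For the first claim, suppose $\hat{v}$ is concave. I would show $V = \hat{v}$ pointwise. Since $V$ is by Definition \ref{def:ConcaveClosure} the smallest concave function weakly above $\hat{v}$, and $\hat{v}$ itself is concave and weakly above itself, we get $V \leq \hat{v}$; the reverse inequality $V \geq \hat{v}$ is immediate from the definition. Hence $V(\mu_0) = \hat{v}(\mu_0)$ for every prior $\mu_0$, so by the geometric corollary the Sender cannot do strictly better than the no-information payoff, i.e. does not benefit from persuasion. Equivalently, one can argue directly: for any Bayes-plausible $\tau$, Jensen's inequality gives $\mathbb{E}_\tau[\hat{v}(\mu)] \leq \hat{v}(\mathbb{E}_\tau[\mu]) = \hat{v}(\mu_0)$, so no $\tau$ strictly beats $\hat{v}(\mu_0)$.

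For the second claim, suppose $\hat{v}$ is strictly convex. Fix any prior $\mu_0$. Since $\mu_0$ lies in the interior of a face of the simplex (or, more carefully, since $\Delta(\Theta)$ is nondegenerate and $\mu_0$ can be written nontrivially as a convex combination of other beliefs), choose two distinct beliefs $\mu', \mu'' \in \Delta(\Theta)$ and a weight $\lambda \in (0,1)$ with $\lambda \mu' + (1-\lambda)\mu'' = \mu_0$; this is possible because $\Theta$ has at least two elements (otherwise ``strictly convex'' is vacuous and the statement is trivially true on a singleton domain). Let $\tau$ put mass $\lambda$ on $\mu'$ and $1-\lambda$ on $\mu''$; this $\tau$ is Bayes-plausible. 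Strict convexity of $\hat{v}$ then yields
\[
\mathbb{E}_\tau[\hat{v}(\mu)] = \lambda \hat{v}(\mu') + (1-\lambda)\hat{v}(\mu'') > \hat{v}\big(\lambda \mu' + (1-\lambda)\mu''\big) = \hat{v}(\mu_0).
\]
By the geometric corollary, the Sender strictly benefits from persuasion at $\mu_0$. Since $\mu_0$ was arbitrary, this holds for every prior.

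The only real subtlety — and the step I'd be most careful about — is the edge case $\vert\Theta\vert = 1$, where $\Delta(\Theta)$ is a single point and the notions of concave/strictly convex degenerate; there the first claim is trivially true and the second is vacuously true. A secondary point worth a sentence is verifying that the two-point $\tau$ constructed in the convex case is genuinely Bayes-plausible (immediate, since its barycenter is $\mu_0$ by construction) and that it is supported on legitimate beliefs in $\Delta(\Theta)$ (also immediate, since any $\mu_0$ in a simplex of dimension $\geq 1$ admits such a nontrivial splitting). No heavy machinery is needed beyond Jensen's inequality and the already-proved geometric corollary.
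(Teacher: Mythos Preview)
Your approach is correct and is exactly what the paper intends: the paper does not write out a proof, stating only that the corollary is ``immediate from the previous analysis,'' and your Jensen/concave-closure argument is precisely that immediate reading.

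One small technical slip: your assertion that ``any $\mu_0$ in a simplex of dimension $\geq 1$ admits such a nontrivial splitting'' is false at the vertices. If $\mu_0 = \delta_\theta$ is a degenerate prior, it is an extreme point of $\Delta(\Theta)$ and cannot be written as $\lambda\mu' + (1-\lambda)\mu''$ with $\mu' \neq \mu''$ and $\lambda \in (0,1)$; correspondingly, the only Bayes-plausible $\tau$ is the point mass at $\mu_0$, and the Sender cannot benefit from persuasion there regardless of the shape of $\hat{v}$. This is really a wrinkle in the corollary's statement (which should read ``every non-degenerate prior'' for the strictly convex claim) rather than in your method, but since you explicitly asserted the splitting always exists, it is worth correcting.
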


\subsection{Back to the Example} \label{solutionBP}
Returning to the setting of Section \ref{sec:BPexample}, observe that in any Sender-preferred subgame equilibrium, the judge's action  given probability of guilt $\mu$ is
\[\hat{a}(\mu) = \left\{\begin{array}{cc}
\mbox{\emph{convict}} & \mbox{if } \mu \geq 0.5 \\
\mbox{\emph{acquit}} & \mbox{if } \mu < 0.5 \\
\end{array}\right.\]
where the tie at $\mu=0.5$ is broken in favor of the prosecutor. So the prosecutor's expected payoff is
\[\hat{v}(\mu) = \left\{\begin{array}{cc}
1 & \mbox{if } \mu \geq 0.5 \\
0 & \mbox{if } \mu < 0.5 \\
\end{array}\right.\]
as depicted in Panel (a) of Figure \ref{fig:v}.

\begin{figure}[h]
				\centering
				\includegraphics[scale=0.9]{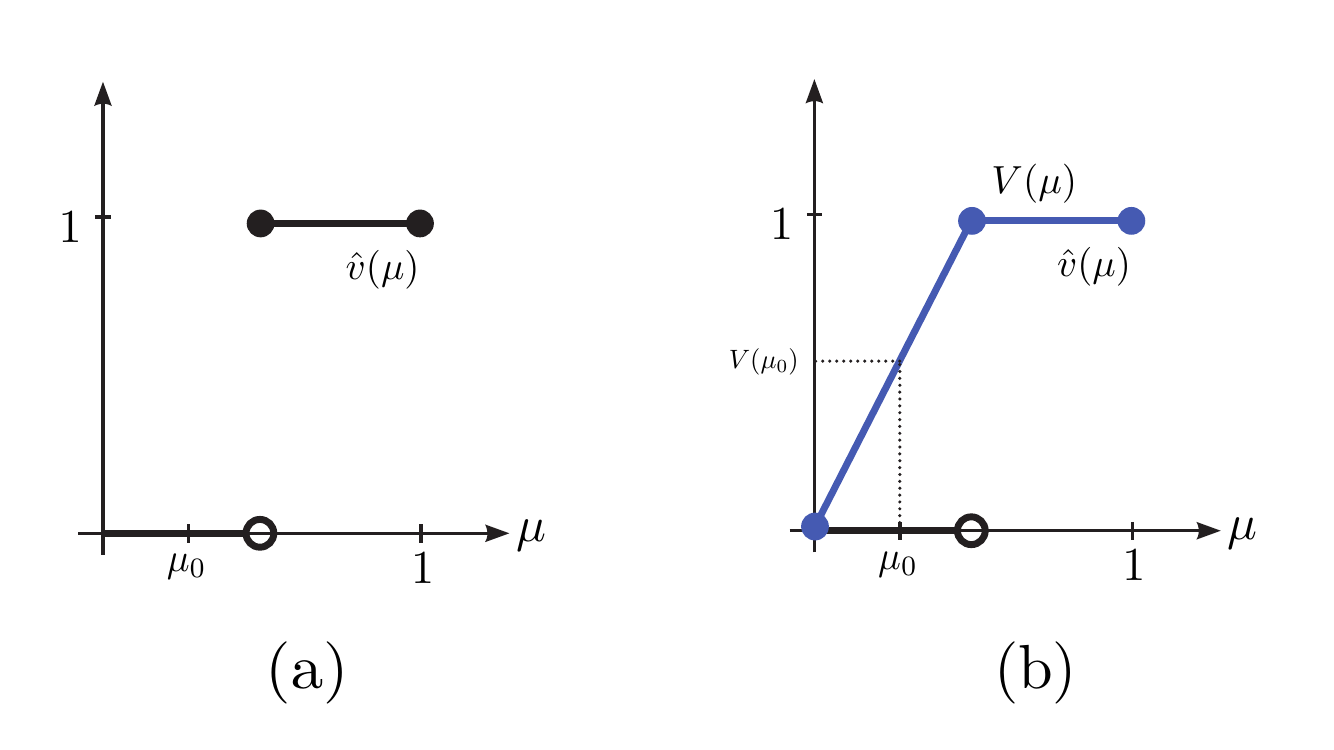}
				\caption{Depiction of $\hat{v}(\mu)$ in the prosecutor-judge example.} \label{fig:v}
			\end{figure}

The upper concave envelope of $\hat{v}$ is
\[V(\mu) = \left\{\begin{array}{cc}
1 & \mbox{if } \mu \geq 0.5 \\
2\mu & \mbox{if } \mu < 0.5 \\
\end{array}\right.\]
as depicted in Panel (b) of Figure \ref{fig:v}. At the prior belief of $\mu_0 = 0.3$, we have $V(0.3)=0.6$, confirming that the signal structure in (\ref{eq:OptimalSignal}) delivers the best possible expected payoff for the prosecutor. We see moreover that the prosecutor benefits from persuasion whenever $\mu_0 <0.5$ (i.e., whenever the judge would optimally acquit under the prior), but cannot improve his expected payoff through choice of any signal structure when $\mu_0 \geq 0.5$.

\section{Additional Exercises}

\begin{exercise}[U] A student (Sender)'s quality is $\theta \in \{L,H\}$. The employer chooses an action from $A = \{l,m,h\}$ where $l$ is a low-responsibility position, $m$ is a medium-responsibility position, and $h$ is a high-responsibility position. The employer's payoffs are:
\[u_E(a,\theta) = \left\{ \begin{array}{cc}
1 & \mbox{ if } (a,\theta) = (H,h) \\
0 & \mbox{ if } (a,\theta) \in \{(H,m), (H,l),(L,l)\} \\
-1 & \mbox{ if } (a,\theta) \in \{(L,m),(L,h)\}
\end{array}\right.\]
The student's (state-independent) payoff function $u_S$ takes value 1 if $a=h$, 0 if $a=m$, and $-1$ if $a=l$.
\begin{itemize}
\item[(a)] Suppose the employer's beliefs are described as $(p,1-p)$, where $p$ is the probability of $\theta=L$. Let
\[\hat{a}(p) = \arg \max_{a\in \{l,m,h\}} \mathbb{E}_{(p,1-p)}[u_E(a,\theta)].\]
(This is the same as in (\ref{eq:ReceiverAction}), except we simplify notation by writing $\hat{a}(p)$ instead of $\hat{a}(p,1-p)$.) Solve for $\hat{a}(p)$ on the domain $p\in [0,1]$, assuming that the employer breaks ties in favor of the action that maximizes the student's payoffs. 
\item[(b)] Suppose the student's beliefs are described as $(p,1-p)$, where $p$ is the probability of $\theta=L$, and the student knows that the employer chooses action $\hat{a}(p)$.  Let 
\[\hat{v}(p) =  \mathbb{E}_{(p,1-p)}[u_S(\hat{a}(p),\theta)]\]
denote the student's expected payoff at this belief. Solve for $\hat{v}(p)$ on the domain $p\in [0,1]$ and plot it. 
\item[(e)] Let $V(p)$ be the smallest concave function that is everywhere above $\hat{v}(p)$. Reproduce your plot from part (d) with $V(p)$ and $\hat{v}(p)$ depicted in the same figure. Clearly label $V(p)$ and $\hat{v}(p)$. 
\item[(f)] Identify all $p\in [0,1]$ such that $V(p) > \hat{v}(p)$. These are the prior beliefs at which the student can strictly benefit from design of the signal structure. 

\end{itemize}

\end{exercise}

\begin{exercise}[G] Fix an arbitrary finite set of states $\Theta$ and finite set of actions $A$. Suppose that the Sender and Receiver's payoff functions satisfy
\[u_S(a,\theta) = -u_R(a,\theta)\]
for every $a\in A$ and $\theta \in \Theta$. Prove that $V(\mu)=\hat{v}(\mu)$ for every belief $\mu$, where $\hat{v}$ is as defined in (\ref{hatv}) and $V$ is as given in Definition \ref{def:ConcaveClosure}. Interpret this result.
\end{exercise}


\end{document}